\DeclareFontFamily{OT1}{pzc}{}
\DeclareFontShape{OT1}{pzc}{m}{it}{<-> s * [1.10] pzcmi7t}{}
\DeclareMathAlphabet{\mathpzc}{OT1}{pzc}{m}{it}
\newtheoremstyle{exampstyle}
{4pt} 
{4pt} 
{\itshape} 
{} 
{\bfseries} 
{.} 
{.75em} 
{} 
\theoremstyle{exampstyle}
\newtheorem{remark}{Remark}[section]
\numberwithin{table}{section}
\numberwithin{figure}{section}
\newtheorem{theorem}{Theorem}[section]
\newtheorem{lemma}{Lemma}[section]
\newtheorem{corollary}{Corollary}[section]
\newtheorem{example}{Example}[section]
\newtheorem{assumption}{Assumption}[section]
\newtheorem{definition}{Definition}[section]
\def\beq{\begin{equation}}
\def\eeq{\end{equation}}
\def\bals{\begin{align*}}
\def\eals{\end{align*}}
\def\bal{\begin{align}}
\def\eal{\end{align}}
\numberwithin{equation}{section}
\numberwithin{theorem}{section}
\numberwithin{corollary}{section}
\let\pdfoutput=\undefined\fi
\chardef\@x10\chardef\@xv60
\def\tcitime{
\def\@time{%
  \@minute\time\@hour\@minute\divide\@hour\@xv
  \ifnum\@hour<\@x 0\fi\the\@hour:%
  \multiply\@hour\@xv\advance\@minute-\@hour
  \ifnum\@minute<\@x 0\fi\the\@minute
  }}%
\def\x@hyperref#1#2#3{%
   % Turn off various catcodes before reading parameter 4
   \catcode`\~ = 12
   \catcode`\$ = 12
   \catcode`\_ = 12
   \catcode`\# = 12
   \catcode`\& = 12
   \catcode`\% = 12
   \y@hyperref{#1}{#2}{#3}%
}
\def\y@hyperref#1#2#3#4{%
   #2\ref{#4}#3
   \catcode`\~ = 13
   \catcode`\$ = 3
   \catcode`\_ = 8
   \catcode`\# = 6
   \catcode`\& = 4
   \catcode`\% = 14
}
\def\QCTOpt[#1]#2{%
  \def\QCTOptB{#1}
  \def\QCTOptA{#2}
}
\def\QCTNOpt#1{%
  \def\QCTOptA{#1}
  \let\QCTOptB\empty
}
\def\Qct{%
  \@ifnextchar[{%
    \QCTOpt}{\QCTNOpt}
}
\def\QCBOpt[#1]#2{%
  \def\QCBOptB{#1}%
  \def\QCBOptA{#2}%
}
\def\QCBNOpt#1{%
  \def\QCBOptA{#1}%
  \let\QCBOptB\empty
}
\def\Qcb{%
  \@ifnextchar[{%
    \QCBOpt}{\QCBNOpt}%
}
\def\PrepCapArgs{%
  \ifx\QCBOptA\empty
    \ifx\QCTOptA\empty
      {}%
    \else
      \ifx\QCTOptB\empty
        {\QCTOptA}%
      \else
        [\QCTOptB]{\QCTOptA}%
      \fi
    \fi
  \else
    \ifx\QCBOptA\empty
      {}%
    \else
      \ifx\QCBOptB\empty
        {\QCBOptA}%
      \else
        [\QCBOptB]{\QCBOptA}%
      \fi
    \fi
  \fi
}
\def\GRAPHICSPS#1{%
 \ifcase\GRAPHICSTYPE%\GRAPHICSTYPE=0
   \special{ps: #1}%
 \or%\GRAPHICSTYPE=1
   \special{language "PS", include "#1"}%
%%%\or%\GRAPHICSTYPE=2
%%%  #1%
 \fi
}%
\def\graffile#1#2#3#4{%
%%% \ifnum\GRAPHICSTYPE=\tw@
%%%  %Following if using psfig
%%%  \@ifundefined{psfig}{\input psfig.tex}{}%
%%%  \psfig{file=#1, height=#3, width=#2}%
%%% \else
  %Following for all others
  % JCS - added BOXTHEFRAME, see below
    \bgroup
	   \@inlabelfalse
       \leavevmode
       \@ifundefined{bbl@deactivate}{\def~{\string~}}{\activesoff}%
        \raise -#4 \BOXTHEFRAME{%
           \hbox to #2{\raise #3\hbox to #2{\null #1\hfil}}}%
    \egroup
}%
\def\draftbox#1#2#3#4{%
 \leavevmode\raise -#4 \hbox{%
  \frame{\rlap{\protect\tiny #1}\hbox to #2%
   {\vrule height#3 width\z@ depth\z@\hfil}%
  }%
 }%
}%
\let\nographics=\@msidraft
\newif\ifwasdraft
\def\GRAPHIC#1#2#3#4#5{%
   \ifnum\@msidraft=\@ne\draftbox{#2}{#3}{#4}{#5}%
   \else\graffile{#1}{#3}{#4}{#5}%
   \fi
}
\def\addtoLaTeXparams#1{%
    \edef\LaTeXparams{\LaTeXparams #1}}%
\newif\ifBoxFrame \BoxFramefalse
\newif\ifOverFrame \OverFramefalse
\newif\ifUnderFrame \UnderFramefalse
\def\BOXTHEFRAME#1{%
   \hbox{%
      \ifBoxFrame
         \frame{#1}%
      \else
         {#1}%
      \fi
   }%
}
\def\doFRAMEparams#1{\BoxFramefalse\OverFramefalse\UnderFramefalse\readFRAMEparams#1\end}%
\def\readFRAMEparams#1{%
 \ifx#1\end%
  \let\next=\relax
  \else
  \ifx#1i\dispkind=\z@\fi
  \ifx#1d\dispkind=\@ne\fi
  \ifx#1f\dispkind=\tw@\fi
  \ifx#1t\addtoLaTeXparams{t}\fi
  \ifx#1b\addtoLaTeXparams{b}\fi
  \ifx#1p\addtoLaTeXparams{p}\fi
  \ifx#1h\addtoLaTeXparams{h}\fi
  \ifx#1X\BoxFrametrue\fi
  \ifx#1O\OverFrametrue\fi
  \ifx#1U\UnderFrametrue\fi
  \ifx#1w
    \ifnum\@msidraft=1\wasdrafttrue\else\wasdraftfalse\fi
    \@msidraft=\@ne
  \fi
  \let\next=\readFRAMEparams
  \fi
 \next
 }%
\def\IFRAME#1#2#3#4#5#6{%
      \bgroup
      \let\QCTOptA\empty
      \let\QCTOptB\empty
      \let\QCBOptA\empty
      \let\QCBOptB\empty
      #6%
      \parindent=0pt
      \leftskip=0pt
      \rightskip=0pt
      \setbox0=\hbox{\QCBOptA}%
      \@tempdima=#1\relax
      \ifOverFrame
          % Do this later
          \typeout{This is not implemented yet}%
          \show\HELP
      \else
         \ifdim\wd0>\@tempdima
            \advance\@tempdima by \@tempdima
            \ifdim\wd0 >\@tempdima
               \setbox1 =\vbox{%
                  \unskip\hbox to \@tempdima{\hfill\GRAPHIC{#5}{#4}{#1}{#2}{#3}\hfill}%
                  \unskip\hbox to \@tempdima{\parbox[b]{\@tempdima}{\QCBOptA}}%
               }%
               \wd1=\@tempdima
            \else
               \textwidth=\wd0
               \setbox1 =\vbox{%
                 \noindent\hbox to \wd0{\hfill\GRAPHIC{#5}{#4}{#1}{#2}{#3}\hfill}\\%
                 \noindent\hbox{\QCBOptA}%
               }%
               \wd1=\wd0
            \fi
         \else
            \ifdim\wd0>0pt
              \hsize=\@tempdima
              \setbox1=\vbox{%
                \unskip\GRAPHIC{#5}{#4}{#1}{#2}{0pt}%
                \break
                \unskip\hbox to \@tempdima{\hfill \QCBOptA\hfill}%
              }%
              \wd1=\@tempdima
           \else
              \hsize=\@tempdima
              \setbox1=\vbox{%
                \unskip\GRAPHIC{#5}{#4}{#1}{#2}{0pt}%
              }%
              \wd1=\@tempdima
           \fi
         \fi
         \@tempdimb=\ht1
         %\advance\@tempdimb by \dp1
         \advance\@tempdimb by -#2
         \advance\@tempdimb by #3
         \leavevmode
         \raise -\@tempdimb \hbox{\box1}%
      \fi
      \egroup%
}%
\def\DFRAME#1#2#3#4#5{%
  \vspace\topsep
  \hfil\break
  \bgroup
     \leftskip\@flushglue
	 \rightskip\@flushglue
	 \parindent\z@
	 \parfillskip\z@skip
     \let\QCTOptA\empty
     \let\QCTOptB\empty
     \let\QCBOptA\empty
     \let\QCBOptB\empty
	 \vbox\bgroup
        \ifOverFrame 
           #5\QCTOptA\par
        \fi
        \GRAPHIC{#4}{#3}{#1}{#2}{\z@}%
        \ifUnderFrame 
           \break#5\QCBOptA
        \fi
	 \egroup
  \egroup
  \vspace\topsep
  \break
}%
\def\FFRAME#1#2#3#4#5#6#7{%
 %If float.sty loaded and float option is 'h', change to 'H'  (gp) 1998/09/05
  \@ifundefined{floatstyle}
    {%floatstyle undefined (and float.sty not present), no change
     \begin{figure}[#1]%
    }
    {%floatstyle DEFINED
	 \ifx#1h%Only the h parameter, change to H
      \begin{figure}[H]%
	 \else
      \begin{figure}[#1]%
	 \fi
	}
  \let\QCTOptA\empty
  \let\QCTOptB\empty
  \let\QCBOptA\empty
  \let\QCBOptB\empty
  \ifOverFrame
    #4
    \ifx\QCTOptA\empty
    \else
      \ifx\QCTOptB\empty
        \caption{\QCTOptA}%
      \else
        \caption[\QCTOptB]{\QCTOptA}%
      \fi
    \fi
    \ifUnderFrame\else
      \label{#5}%
    \fi
  \else
    \UnderFrametrue%
  \fi
  \begin{center}\GRAPHIC{#7}{#6}{#2}{#3}{\z@}\end{center}%
  \ifUnderFrame
    #4
    \ifx\QCBOptA\empty
      \caption{}%
    \else
      \ifx\QCBOptB\empty
        \caption{\QCBOptA}%
      \else
        \caption[\QCBOptB]{\QCBOptA}%
      \fi
    \fi
    \label{#5}%
  \fi
  \end{figure}%
 }%
\def\makeactives{
  \catcode`\"=\active
  \catcode`\;=\active
  \catcode`\:=\active
  \catcode`\'=\active
  \catcode`\~=\active
}
   \gdef\activesoff{%
      \def"{\string"}%
      \def;{\string;}%
      \def:{\string:}%
      \def'{\string'}%
      \def~{\string~}%
      %\bbl@deactivate{"}%
      %\bbl@deactivate{;}%
      %\bbl@deactivate{:}%
      %\bbl@deactivate{'}%
    }
\def\FRAME#1#2#3#4#5#6#7#8{%
 \bgroup
 \ifnum\@msidraft=\@ne
   \wasdrafttrue
 \else
   \wasdraftfalse%
 \fi
 \def\LaTeXparams{}%
 \dispkind=\z@
 \def\LaTeXparams{}%
 \doFRAMEparams{#1}%
 \ifnum\dispkind=\z@\IFRAME{#2}{#3}{#4}{#7}{#8}{#5}\else
  \ifnum\dispkind=\@ne\DFRAME{#2}{#3}{#7}{#8}{#5}\else
   \ifnum\dispkind=\tw@
    \edef\@tempa{\noexpand\FFRAME{\LaTeXparams}}%
    \@tempa{#2}{#3}{#5}{#6}{#7}{#8}%
    \fi
   \fi
  \fi
  \ifwasdraft\@msidraft=1\else\@msidraft=0\fi{}%
  \egroup
 }%
\def\TEXUX#1{"texux"}
\long\def\QQQ#1#2{%
     \long\expandafter\def\csname#1\endcsname{#2}}%
\long\def\QQA#1#2{}%
\def\QTR#1#2{{\csname#1\endcsname {#2}}}%
\def\EXPAND#1[#2]#3{}%
\def\NOEXPAND#1[#2]#3{}%
\def\LaTeXparent#1{}%
\def\ChildStyles#1{}%
\def\ChildDefaults#1{}%
\def\QTagDef#1#2#3{}%
  \providecommand{\UNICODE}[2][]{\protect\rule{.1in}{.1in}}
  \providecommand{\U}[1]{\protect\rule{.1in}{.1in}}
\def\QQfnmark#1{\footnotemark}
 \def\abstract{%
  \if@twocolumn
   \section*{Abstract (Not appropriate in this style!)}%
   \else \small 
   \begin{center}{\bf Abstract\vspace{-.5em}\vspace{\z@}}\end{center}%
   \quotation 
   \fi
  }%
   \def\registered{\relax\ifmmode{}\r@gistered
                    \else$\m@th\r@gistered$\fi}%
 \def\r@gistered{^{\ooalign
  {\hfil\raise.07ex\hbox{$\scriptstyle\rm\text{R}$}\hfil\crcr
  \mathhexbox20D}}}}{}%
\newdimen\theight
\def\newfmtname{LaTeX2e}
  \DeclareOldFontCommand{\rm}{\normalfont\rmfamily}{\mathrm}
  \DeclareOldFontCommand{\sf}{\normalfont\sffamily}{\mathsf}
  \DeclareOldFontCommand{\tt}{\normalfont\ttfamily}{\mathtt}
  \DeclareOldFontCommand{\bf}{\normalfont\bfseries}{\mathbf}
  \DeclareOldFontCommand{\it}{\normalfont\itshape}{\mathit}
  \DeclareOldFontCommand{\sl}{\normalfont\slshape}{\@nomath\sl}
  \DeclareOldFontCommand{\sc}{\normalfont\scshape}{\@nomath\sc}
\def\alpha{{\Greekmath 010B}}%
\def\beta{{\Greekmath 010C}}%
\def\gamma{{\Greekmath 010D}}%
\def\delta{{\Greekmath 010E}}%
\def\epsilon{{\Greekmath 010F}}%
\def\zeta{{\Greekmath 0110}}%
\def\eta{{\Greekmath 0111}}%
\def\theta{{\Greekmath 0112}}%
\def\iota{{\Greekmath 0113}}%
\def\kappa{{\Greekmath 0114}}%
\def\lambda{{\Greekmath 0115}}%
\def\mu{{\Greekmath 0116}}%
\def\nu{{\Greekmath 0117}}%
\def\xi{{\Greekmath 0118}}%
\def\pi{{\Greekmath 0119}}%
\def\rho{{\Greekmath 011A}}%
\def\sigma{{\Greekmath 011B}}%
\def\tau{{\Greekmath 011C}}%
\def\upsilon{{\Greekmath 011D}}%
\def\phi{{\Greekmath 011E}}%
\def\chi{{\Greekmath 011F}}%
\def\psi{{\Greekmath 0120}}%
\def\omega{{\Greekmath 0121}}%
\def\varepsilon{{\Greekmath 0122}}%
\def\vartheta{{\Greekmath 0123}}%
\def\varpi{{\Greekmath 0124}}%
\def\varrho{{\Greekmath 0125}}%
\def\varsigma{{\Greekmath 0126}}%
\def\varphi{{\Greekmath 0127}}%
\def\nabla{{\Greekmath 0272}}
\def\FindBoldGroup{%
   {\setbox0=\hbox{$\mathbf{x\global\edef\theboldgroup{\the\mathgroup}}$}}%
}
\def\Greekmath#1#2#3#4{%
    \if@compatibility
        \ifnum\mathgroup=\symbold
           \mathchoice{\mbox{\boldmath$\displaystyle\mathchar"#1#2#3#4$}}%
                      {\mbox{\boldmath$\textstyle\mathchar"#1#2#3#4$}}%
                      {\mbox{\boldmath$\scriptstyle\mathchar"#1#2#3#4$}}%
                      {\mbox{\boldmath$\scriptscriptstyle\mathchar"#1#2#3#4$}}%
        \else
           \mathchar"#1#2#3#4% 
        \fi 
    \else 
        \FindBoldGroup
        \ifnum\mathgroup=\theboldgroup % For 2e
           \mathchoice{\mbox{\boldmath$\displaystyle\mathchar"#1#2#3#4$}}%
                      {\mbox{\boldmath$\textstyle\mathchar"#1#2#3#4$}}%
                      {\mbox{\boldmath$\scriptstyle\mathchar"#1#2#3#4$}}%
                      {\mbox{\boldmath$\scriptscriptstyle\mathchar"#1#2#3#4$}}%
        \else
           \mathchar"#1#2#3#4% 
        \fi     	    
	  \fi}
\newif\ifGreekBold  \GreekBoldfalse
\let\SAVEPBF=\pbf
\def\pbf{\GreekBoldtrue\SAVEPBF}%
  \newcounter{equationnumber}  
  \def\mathletters{%
     \addtocounter{equation}{1}
     \edef\@currentlabel{\theequation}%
     \setcounter{equationnumber}{\c@equation}
     \setcounter{equation}{0}%
     \edef\theequation{\@currentlabel\noexpand\alph{equation}}%
  }
    \def\BibTeX{{\rm B\kern-.05em{\sc i\kern-.025em b}\kern-.08em
                 T\kern-.1667em\lower.7ex\hbox{E}\kern-.125emX}}}{}%
\def\AmS{{\protect\usefont{OMS}{cmsy}{m}{n}%
                A\kern-.1667em\lower.5ex\hbox{M}\kern-.125emS}}}{}%
\def\@@eqncr{\let\@tempa\relax
    \ifcase\@eqcnt \def\@tempa{& & &}\or \def\@tempa{& &}%
      \else \def\@tempa{&}\fi
     \@tempa
     \if@eqnsw
        \iftag@
           \@taggnum
        \else
           \@eqnnum\stepcounter{equation}%
        \fi
     \fi
     \global\tag@false
     \global\@eqnswtrue
     \global\@eqcnt\z@\cr}
\def\TCItag{\@ifnextchar*{\@TCItagstar}{\@TCItag}}
\def\@TCItag#1{%
    \global\tag@true
    \global\def\@taggnum{(#1)}%
    \global\def\@currentlabel{#1}}
\def\@TCItagstar*#1{%
    \global\tag@true
    \global\def\@taggnum{#1}%
    \global\def\@currentlabel{#1}}
\def\tint{\msi@int\textstyle\int}%
\def\tiint{\msi@int\textstyle\iint}%
\def\tiiint{\msi@int\textstyle\iiint}%
\def\tiiiint{\msi@int\textstyle\iiiint}%
\def\tidotsint{\msi@int\textstyle\idotsint}%
\def\toint{\msi@int\textstyle\oint}%
\newtoks\temptoksa
\newtoks\temptoksb
\newtoks\temptoksc
\def\msi@int#1#2{%
 \def\@temp{{#1#2\the\temptoksc_{\the\temptoksa}^{\the\temptoksb}}}%   
 \futurelet\@nextcs
 \@int
}
\def\@int{%
   \ifx\@nextcs\limits
      \typeout{Found limits}%
      \temptoksc={\limits}%
	  \let\@next\@intgobble%
   \else\ifx\@nextcs\nolimits
      \typeout{Found nolimits}%
      \temptoksc={\nolimits}%
	  \let\@next\@intgobble%
   \else
      \typeout{Did not find limits or no limits}%
      \temptoksc={}%
      \let\@next\msi@limits%
   \fi\fi
   \@next   
}%
\def\@intgobble#1{%
   \typeout{arg is #1}%
   \msi@limits
}
\def\msi@limits{%
   \temptoksa={}%
   \temptoksb={}%
   \@ifnextchar_{\@limitsa}{\@limitsb}%
}
\def\@limitsa_#1{%
   \temptoksa={#1}%
   \@ifnextchar^{\@limitsc}{\@temp}%
}
\def\@limitsb{%
   \@ifnextchar^{\@limitsc}{\@temp}%
}
\def\@limitsc^#1{%
   \temptoksb={#1}%
   \@ifnextchar_{\@limitsd}{\@temp}%   
}
\def\@limitsd_#1{%
   \temptoksa={#1}%
   \@temp
}
\def\dint{\msi@int\displaystyle\int}%
\def\diint{\msi@int\displaystyle\iint}%
\def\diiint{\msi@int\displaystyle\iiint}%
\def\diiiint{\msi@int\displaystyle\iiiint}%
\def\didotsint{\msi@int\displaystyle\idotsint}%
\def\doint{\msi@int\displaystyle\oint}%
\def\ExitTCILatex{\makeatother }
\if@compatibility\message{amsmath already loaded}\fi\aftergroup\ExitTCILatex}
\if@compatibility\message{amstex already loaded}\fi\aftergroup\ExitTCILatex}
\if@compatibility\message{amsgen already loaded}\fi\aftergroup\ExitTCILatex}
\let\DOTSI\relax
\def\RIfM@{\relax\ifmmode}%
\def\FN@{\futurelet\next}%
\def\iint{\DOTSI\intno@\tw@\FN@\ints@}%
\def\iiint{\DOTSI\intno@\thr@@\FN@\ints@}%
\def\iiiint{\DOTSI\intno@4 \FN@\ints@}%
\def\idotsint{\DOTSI\intno@\z@\FN@\ints@}%
\def\ints@{\findlimits@\ints@@}%
\newif\iflimtoken@
\newif\iflimits@
\def\findlimits@{\limtoken@true\ifx\next\limits\limits@true
 \else\ifx\next\nolimits\limits@false\else
 \limtoken@false\ifx\ilimits@\nolimits\limits@false\else
 \ifinner\limits@false\else\limits@true\fi\fi\fi\fi}%
\def\multint@{\int\ifnum\intno@=\z@\intdots@                          %1
 \else\intkern@\fi                                                    %2
 \ifnum\intno@>\tw@\int\intkern@\fi                                   %3
 \ifnum\intno@>\thr@@\int\intkern@\fi                                 %4
 \int}%                                                               %5
\def\multintlimits@{\intop\ifnum\intno@=\z@\intdots@\else\intkern@\fi
 \ifnum\intno@>\tw@\intop\intkern@\fi
 \ifnum\intno@>\thr@@\intop\intkern@\fi\intop}%
\def\intic@{%
    \mathchoice{\hskip.5em}{\hskip.4em}{\hskip.4em}{\hskip.4em}}%
\def\negintic@{\mathchoice
 {\hskip-.5em}{\hskip-.4em}{\hskip-.4em}{\hskip-.4em}}%
\def\ints@@{\iflimtoken@                                              %1
 \def\ints@@@{\iflimits@\negintic@
   \mathop{\intic@\multintlimits@}\limits                             %2
  \else\multint@\nolimits\fi                                          %3
  \eat@}%                                                             %4
 \else                                                                %5
 \def\ints@@@{\iflimits@\negintic@
  \mathop{\intic@\multintlimits@}\limits\else
  \multint@\nolimits\fi}\fi\ints@@@}%
\def\intkern@{\mathchoice{\!\!\!}{\!\!}{\!\!}{\!\!}}%
\def\plaincdots@{\mathinner{\cdotp\cdotp\cdotp}}%
\def\intdots@{\mathchoice{\plaincdots@}%
 {{\cdotp}\mkern1.5mu{\cdotp}\mkern1.5mu{\cdotp}}%
 {{\cdotp}\mkern1mu{\cdotp}\mkern1mu{\cdotp}}%
 {{\cdotp}\mkern1mu{\cdotp}\mkern1mu{\cdotp}}}%
\def\RIfM@{\relax\protect\ifmmode}
\def\text{\RIfM@\expandafter\text@\else\expandafter\mbox\fi}
\let\nfss@text\text
\def\text@#1{\mathchoice
   {\textdef@\displaystyle\f@size{#1}}%
   {\textdef@\textstyle\tf@size{\firstchoice@false #1}}%
   {\textdef@\textstyle\sf@size{\firstchoice@false #1}}%
   {\textdef@\textstyle \ssf@size{\firstchoice@false #1}}%
   \glb@settings}
\def\textdef@#1#2#3{\hbox{{%
                    \everymath{#1}%
                    \let\f@size#2\selectfont
                    #3}}}
\newif\iffirstchoice@
\def\Let@{\relax\iffalse{\fi\let\\=\cr\iffalse}\fi}%
\def\vspace@{\def\vspace##1{\crcr\noalign{\vskip##1\relax}}}%
\def\multilimits@{\bgroup\vspace@\Let@
 \baselineskip\fontdimen10 \scriptfont\tw@
 \advance\baselineskip\fontdimen12 \scriptfont\tw@
 \lineskip\thr@@\fontdimen8 \scriptfont\thr@@
 \lineskiplimit\lineskip
 \vbox\bgroup\ialign\bgroup\hfil$\m@th\scriptstyle{##}$\hfil\crcr}%
\def\Sb{_\multilimits@}%
\def\endSb{\crcr\egroup\egroup\egroup}%
\def\Sp{^\multilimits@}%
\newdimen\ex@
\def\rightarrowfill@#1{$#1\m@th\mathord-\mkern-6mu\cleaders
 \hbox{$#1\mkern-2mu\mathord-\mkern-2mu$}\hfill
 \mkern-6mu\mathord\rightarrow$}%
\def\leftarrowfill@#1{$#1\m@th\mathord\leftarrow\mkern-6mu\cleaders
 \hbox{$#1\mkern-2mu\mathord-\mkern-2mu$}\hfill\mkern-6mu\mathord-$}%
\def\leftrightarrowfill@#1{$#1\m@th\mathord\leftarrow
\mkern-6mu\cleaders
 \hbox{$#1\mkern-2mu\mathord-\mkern-2mu$}\hfill
 \mkern-6mu\mathord\rightarrow$}%
\def\overrightarrow{\mathpalette\overrightarrow@}%
\def\overrightarrow@#1#2{\vbox{\ialign{##\crcr\rightarrowfill@#1\crcr
 \noalign{\kern-\ex@\nointerlineskip}$\m@th\hfil#1#2\hfil$\crcr}}}%
\def\overleftarrow{\mathpalette\overleftarrow@}%
\def\overleftarrow@#1#2{\vbox{\ialign{##\crcr\leftarrowfill@#1\crcr
 \noalign{\kern-\ex@\nointerlineskip}$\m@th\hfil#1#2\hfil$\crcr}}}%
\def\overleftrightarrow{\mathpalette\overleftrightarrow@}%
\def\overleftrightarrow@#1#2{\vbox{\ialign{##\crcr
   \leftrightarrowfill@#1\crcr
 \noalign{\kern-\ex@\nointerlineskip}$\m@th\hfil#1#2\hfil$\crcr}}}%
\def\underrightarrow{\mathpalette\underrightarrow@}%
\def\underrightarrow@#1#2{\vtop{\ialign{##\crcr$\m@th\hfil#1#2\hfil
  $\crcr\noalign{\nointerlineskip}\rightarrowfill@#1\crcr}}}%
\def\underleftarrow{\mathpalette\underleftarrow@}%
\def\underleftarrow@#1#2{\vtop{\ialign{##\crcr$\m@th\hfil#1#2\hfil
  $\crcr\noalign{\nointerlineskip}\leftarrowfill@#1\crcr}}}%
\def\underleftrightarrow{\mathpalette\underleftrightarrow@}%
\def\underleftrightarrow@#1#2{\vtop{\ialign{##\crcr$\m@th
  \hfil#1#2\hfil$\crcr
 \noalign{\nointerlineskip}\leftrightarrowfill@#1\crcr}}}%
\def\qopnamewl@#1{\mathop{\operator@font#1}\nlimits@}
\let\nlimits@\displaylimits
\def\setboxz@h{\setbox\z@\hbox}
\def\varlim@#1#2{\mathop{\vtop{\ialign{##\crcr
 \hfil$#1\m@th\operator@font lim$\hfil\crcr
 \noalign{\nointerlineskip}#2#1\crcr
 \noalign{\nointerlineskip\kern-\ex@}\crcr}}}}
 \def\rightarrowfill@#1{\m@th\setboxz@h{$#1-$}\ht\z@\z@
  $#1\copy\z@\mkern-6mu\cleaders
  \hbox{$#1\mkern-2mu\box\z@\mkern-2mu$}\hfill
  \mkern-6mu\mathord\rightarrow$}
\def\leftarrowfill@#1{\m@th\setboxz@h{$#1-$}\ht\z@\z@
  $#1\mathord\leftarrow\mkern-6mu\cleaders
  \hbox{$#1\mkern-2mu\copy\z@\mkern-2mu$}\hfill
  \mkern-6mu\box\z@$}
\def\projlim{\qopnamewl@{proj\,lim}}
\def\injlim{\qopnamewl@{inj\,lim}}
\def\varinjlim{\mathpalette\varlim@\rightarrowfill@}
\def\varprojlim{\mathpalette\varlim@\leftarrowfill@}
\def\varliminf{\mathpalette\varliminf@{}}
\def\varliminf@#1{\mathop{\underline{\vrule\@depth.2\ex@\@width\z@
   \hbox{$#1\m@th\operator@font lim$}}}}
\def\varlimsup{\mathpalette\varlimsup@{}}
\def\varlimsup@#1{\mathop{\overline
  {\hbox{$#1\m@th\operator@font lim$}}}}
\def\align{\@verbatim \frenchspacing\@vobeyspaces \@alignverbatim
You are using the "align" environment in a style in which it is not defined.}
\let\csname endalign*\endcsname =\endtrivlist
\def\alignat{\@verbatim \frenchspacing\@vobeyspaces \@alignatverbatim
You are using the "alignat" environment in a style in which it is not defined.}
\let\csname endalignat*\endcsname =\endtrivlist
\def\xalignat{\@verbatim \frenchspacing\@vobeyspaces \@xalignatverbatim
You are using the "xalignat" environment in a style in which it is not defined.}
\let\csname endxalignat*\endcsname =\endtrivlist
\def\gather{\@verbatim \frenchspacing\@vobeyspaces \@gatherverbatim
You are using the "gather" environment in a style in which it is not defined.}
\let\csname endgather*\endcsname =\endtrivlist
\def\multiline{\@verbatim \frenchspacing\@vobeyspaces \@multilineverbatim
You are using the "multiline" environment in a style in which it is not defined.}
\let\csname endmultiline*\endcsname =\endtrivlist
\def\arrax{\@verbatim \frenchspacing\@vobeyspaces \@arraxverbatim
You are using a type of "array" construct that is only allowed in AmS-LaTeX.}
\def\tabulax{\@verbatim \frenchspacing\@vobeyspaces \@tabulaxverbatim
You are using a type of "tabular" construct that is only allowed in AmS-LaTeX.}
\let\csname endarrax*\endcsname =\endtrivlist
\let\csname endtabulax*\endcsname =\endtrivlist
 \def\endequation{%
     \ifmmode\ifinner % FLEQN hack
      \iftag@
        \addtocounter{equation}{-1} % undo the increment made in the begin part
        $\hfil
           \displaywidth\linewidth\@taggnum\egroup \endtrivlist
        \global\tag@false
        \global\@ignoretrue   
      \else
        $\hfil
           \displaywidth\linewidth\@eqnnum\egroup \endtrivlist
        \global\tag@false
        \global\@ignoretrue 
      \fi
     \else   
      \iftag@
        \addtocounter{equation}{-1} % undo the increment made in the begin part
        \eqno \hbox{\@taggnum}
        \global\tag@false%
        $$\global\@ignoretrue
      \else
        \eqno \hbox{\@eqnnum}% $$ BRACE MATCHING HACK
        $$\global\@ignoretrue
      \fi
     \fi\fi
 } 
 \newif\iftag@ \tag@false
 \def\TCItag{\@ifnextchar*{\@TCItagstar}{\@TCItag}}
 \def\@TCItag#1{%
     \global\tag@true
     \global\def\@taggnum{(#1)}%
     \global\def\@currentlabel{#1}}
 \def\@TCItagstar*#1{%
     \global\tag@true
     \global\def\@taggnum{#1}%
     \global\def\@currentlabel{#1}}
     \def\tag{\@ifnextchar*{\@tagstar}{\@tag}}
     \def\@tag#1{%
         \global\tag@true
         \global\def\@taggnum{(#1)}}
     \def\@tagstar*#1{%
         \global\tag@true
         \global\def\@taggnum{#1}}
\begin{document}
\title[Monitoring RCA models]{Real-time monitoring with RCA models}
\author{Lajos Horv\'ath$^1$}
\email{horvath@math.utah.edu}
\author{Lorenzo Trapani$^2$}
\email{lt285@leicester.ac.uk}
\address{$^1$Department of Mathematics, University of Utah, Salt Lake City,
UT 84112--0090 USA }
\address{$^2$School of Business and Economics, University of Leicester,
Leicester, U.K.; Department of Economics and Management, University of
Pavia, Pavia, Italy}
\subjclass[2020]{60F17}
\keywords{Sequential changepoint detection; Random Coefficient
Autoregression; Bubbles; Detection delay}

\begin{abstract}
We propose a family of \textit{weighted} statistics based on the CUSUM
process of the WLS residuals for the online detection of changepoints in a
Random Coefficient Autoregressive model, using both the standard CUSUM and
the Page-CUSUM process. We derive the asymptotics under the null of no
changepoint for all possible weighing schemes, including the case of the
standardised CUSUM, for which we derive a Darling-Erd\H{o}s-type limit
theorem; our results guarantee the procedure-wise size control under both an
open-ended and a closed-ended monitoring. In addition to considering the
standard RCA model with no covariates, we also extend our results to the
case of exogenous regressors. Our results can be applied irrespective of
(and with no prior knowledge required as to) whether the observations are
stationary or not, and irrespective of whether they change into a stationary
or nonstationary regime. Hence, our methodology is particularly suited to
detect the onset, or the collapse, of a bubble or an epidemic. Our
simulations show that our procedures, especially when standardising the
CUSUM process, can ensure very good size control and short detection delays.
We complement our theory by studying the online detection of breaks in
epidemiological and housing prices series.
\end{abstract}

\maketitle

\doublespacing

\section{Introduction\label{intro}}

Economic history is full of events where stationary observations suddenly
become explosively increasing, or - vice versa - where explosive trends
crash. A classical example are financial bubbles, where periods of
\textquotedblleft tame\textquotedblright\ fluctuations are followed by
exuberant growth, in turn then followed by\ a collapse which is typically
modelled as a stationary regime (see e.g. \citealp{harvey2017improving}; and %
\citealp{phillips2018financial}). Historical examples include the
\textquotedblleft Tulipmania\textquotedblright\ in the Netherlands in the
1630's and the \textquotedblleft South Sea\textquotedblright\ bubble of
1720. More recent instances include the Japan's real estate and stock market
bubble of the 1980's, which after the collapse turned into a negative
period, the so called \textquotedblleft Lost Decade\textquotedblright ; and,
in the US, the \textquotedblleft Dotcom\textquotedblright\ bubble of the
1990's and the housing bubble between 1996 and 2006. Similar phenomena, with
observations undergoing a change in persistence or even more radical changes
in their nature (from stationary to explosive and vice versa), are also
encountered in applied sciences. For example, in epidemiology, the onset of
a pandemic is characterised by a sudden, explosive growth in the number of
cases, followed by a return to a stationary regime when the pandemic
subsides. Hence the importance of developing tools allowing to test for
changes in persistence, or a switch between regimes.

In particular, a research area where this issue has been paid particular
attention is financial econometrics, where several methodologies have been
developed to tests (ex-ante or ex-post) for the onset or collapse of a
financial bubble. In the interest of a concise discussion, we refer the
reader to the seminal articles on ex-post detection by \citet{phillips2011},
and \citet{phillips2015testing2}, and also to \citet{skrobotov2023testing}
for a review. In addition, several contributions deal with the real-time
detection of bubble episodes - see, \textit{inter alia}, %
\citet{homm2012testing}, \citet{phillips2015testing}, and a recent
contribution by \citet{whitehouse2023real}. A common trait to this
literature is that it relies on an AutoRegressive (AR) framework as the
workhorse model. Using an AR specification has several advantages: it is a
parsimonious and well-studied set-up, and it naturally lends itself to
modelling both stationary and nonstationary regimes. %
\citet{phillips2011dating} show that an AR model with explosiveness is an
adequate representation for bubble behaviour under very general assumptions.
Moreover, an AR specification lends itself to constructing tests for bubble
behaviour using the Dickey-Fuller test and its variants- e.g., the Augmented
DF test (ADF; see \citealp{diba1988theory}), the sup-ADF test (%
\citealp{phillips2011}; \citealp{phillips2011dating}), and the generalized
sup-ADF test (\citealp{phillips2015testing2}; \citealp{phillips2015testing}%
). From a technical viewpoint, however, using an AR model is fraught with
difficulties when monitoring for changes from an explosive to a stationary
regime, although several promising solutions have been proposed such as the
reverse regression approach by \citet{phillips2018financial}. In addition to
the ability of detecting changes, ensuring a timely detection is important.
To the best of our knowledge, no optimality results exist; indeed, typical
procedures are based on \textit{unweighted} CUSUM (or related) statistics,
which are not designed to ensure optimal detection timings (see e.g. %
\citealp{aue2004delay}).

\medskip

\textit{Main contributions of this paper}

\medskip

We address the general issue of online, real-time detection of changes in
persistence and also between regimes (i.e., from stationarity to
nonstationarity and vice versa) by using a Random Coefficient Autoregressive
(RCA) model, viz. 
\begin{equation}
y_{i}=\left( \beta _{i}+\epsilon _{i,1}\right) y_{i-1}+\epsilon _{i,2},
\label{rca-1}
\end{equation}%
where $y_{0}$ denotes an initial value. The RCA model was firstly studied by %
\citet{andel} and \citet{nichollsquinn}. It belongs in the wider class of
nonlinear models for time series (see \citealp{fanyao}), which have been
proposed \textquotedblleft as a reaction against the supremacy of linear
ones - a situation inherited from strong, though often implicit, Gaussian
assumptions\textquotedblright\ (\citealp{akharif2003}). The RCA\ model is
particularly well-suited to modelling time series with potentially explosive
behaviour; for example, in the seminal paper by \citet{diba1988theory}, a
rational bubble is modelled as having an RCA dynamics (see their equation
(17)). Model (\ref{rca-1}) also nests, as a special case corresponding to
having $\beta _{i}=1$, the so-called Stochastic Unit Root model (STUR; see %
\citealp{granger1997introduction}), and it also allows for the possibility
of (conditional) heteroskedasticity in $y_{i}$; \citet{tsay1987} shows that
the widely popular ARCH\ model by \citet{engle1982} can be cast into the RCA
setup, which therefore can be viewed as a second-order equivalent. Finally,
a major advantage of (\ref{rca-1}) compared to standard AR models is that it
is possible to construct estimators of $\beta _{0}$ that are always
asymptotically normal, irrespective of whether $y_{i}$ is stationary or
nonstationary (\citealp{aue2011}). Although the literature has developed
several contributions on ex-post changepoint detection using an RCA model
(e.g. \citealp{horvath2022changepoint}), to the best of our knowledge there
are still significant limitations in the context of online detection - e.g., %
\citet{na2010}, \citet{li2015} and \citet{li2015b} all consider sequential
changepoint detection using unweighted CUSUM-based statistics, but only
under the maintained assumption of stationarity, which precludes the ability
to detect changes from stationary to exponential behaviour, and vice versa.

In this paper, we fill the gaps discussed above, by proposing a family of
statistics based on the Weighted Least Squares (WLS) estimator, and in
particular on the \textit{weighted} CUSUM\ process of the WLS residuals,
designed to ensure a faster detection than unweighted statistics. We study
the standard\ CUSUM process, and the so-called \textquotedblleft
Page-CUSUM\textquotedblright\ (\citealp{fremdt2015page}); and both
\textquotedblleft open-ended\textquotedblright\ and \textquotedblleft
closed-ended\textquotedblright\ procedures, where sequential monitoring goes
on indefinitely, or stops at a pre-specified point in time, respectively. We
make at least five contributions to the current literature on online
changepoint detection. First, we derive the limiting distribution of our
statistics under the closed-ended case; typically, in the literature, the
critical values obtained under the open-ended case are employed even for the
closed-ended case, as a (conservative) choice, ultimately leading to loss of
power. Second, we study the case of a closed-ended procedure with a
\textquotedblleft very short\textquotedblright\ monitoring horizon, adapting
the boundary function for this case, and deriving the corresponding limiting
distribution; this case has not been considered in the literature before,
but it is of pratical relevance because the researcher may prefer to carry
out monitoring for a short time, and then - in the absence of changepoints -
restart the procedure afresh. Third, in the case of the \textit{standardised}
CUSUM,\footnote{%
We define \textquotedblleft standardised CUSUM\textquotedblright\ the CUSUM\
process weighted using weights proportional to its standard deviation, as
opposed to the \textquotedblleft weighted CUSUM\textquotedblright\ which
uses lighter weights - see also below for a formal definition.} we propose
an approximation to compute critical values which offers a superior
alternative to asymptotic critical values, whose accuracy is marred by the
notoriously weak convergence to the Extreme Value Distribution; simulations
show that our approximation works extremely well even in small samples, and
for all cases considered, offering size control and short detection delays.
Fourth, we derive the limiting distribution of the \textit{weighted}
Page-CUSUM statistics, showing also that such a limit is a well-defined
random variable. Fifth, as well as studying the basic RCA\ model, we also
develop the full-blown theory (irrespective of whether $y_{i}$ is stationary
or not) for the case where there are covariates in (\ref{rca-1}); to the
best of our knowledge, this is the first contribution to deal with this
case. As a final remark, we would like to add that a major advantage of the
RCA\ set-up is that our statistics can be employed under both stationarity
and explosiveness with no modifications required; indeed, no prior knowledge
on the stationarity or not of the observations is required. Hence, our
methodology can be applied to detect changes in the persistence of a
stationary series, or changes from stationarity to a non-stationary regime,
or vice versa from a non-stationary regime to a stationary one. This has
important practical consequences, allowing for e.g. faster responses from
public health authorities in the presence of a pandemic, or from policy
makers in the presence of inflationary shocks, or a more accurate date
stamping of bubble onsets and collapses.

The remainder of the paper is organised as follows. We introduce our main
model, and the test statistics, in Section \ref{model}. The asymptotics
under the null and the alternative is reported in Section \ref{asymptotics}.
We consider the extension of the RCA model to the case of exogenous
regressors in Section \ref{covariates}. Monte Carlo evidence is in Section %
\ref{simulations}; Section \ref{empirics} contains two empirical
applications, to Covid-19 hospitalisation data and house prices. Section \ref%
{conclusions} concludes.

NOTATION. Henceforth, we use: \textquotedblleft $\rightarrow $%
\textquotedblright\ for the ordinary limit; \textquotedblleft $\overset{%
\mathcal{P}}{\rightarrow }$\textquotedblright\ for convergence in
probability; \textquotedblleft a.s.\textquotedblright\ for \textquotedblleft
almost surely\textquotedblright ; \textquotedblleft $\overset{a.s.}{%
\rightarrow }$\textquotedblright\ for almost sure convergence; and
\textquotedblleft $\left\Vert \cdot \right\Vert $\textquotedblright\ for the
Euclidean norm of vectors and matrices. Finally, $\left\{ W\left( t\right)
,0\leq t\leq 1\right\} $ denotes a standard Wiener process. Other notation
is introduced later on in the paper.

\section{Sequential monitoring of RCA\ models\label{model}}

Recall the RCA model%
\begin{equation}
y_{i}=\left( \beta _{i}+\epsilon _{i,1}\right) y_{i-1}+\epsilon _{i,2}.
\label{rca}
\end{equation}%
It is well-known (\citealp{aue2006}) that, in (\ref{rca}), the stationarity
or lack thereof of $y_{i}$ is determined by the value of $E\log \left\vert
\beta _{0}+\epsilon _{0,1}\right\vert $:

\begin{itemize}
\item[-] if $-\infty \leq E\log \left\vert \beta _{0}+\epsilon
_{0,1}\right\vert <0$, then $y_{i}$ converges exponentially fast to a
strictly stationary solution $\left\{ \overline{y}_{i},-\infty <i<\infty
\right\} $ for all initial values $y_{0}$;

\item[-] if $E\log \left\vert \beta _{0}+\epsilon _{0,1}\right\vert >0$,
then $y_{i}$ is nonstationary with $\left\vert y_{i}\right\vert \overset{a.s.%
}{\rightarrow }\infty $ exponentially fast (\citealp{berkes2009});

\item[-] if $E\log \left\vert \beta _{0}+\epsilon _{0,1}\right\vert =0$,
then $\left\vert y_{i}\right\vert \overset{\mathcal{P}}{\rightarrow }\infty $%
, but at a rate slower than exponential (see Lemma A.4 in \citealp{HT2016}).
\end{itemize}

The following assumption must hold under all three cases above.

\begin{assumption}
\label{as-1}$\{\left( \epsilon _{i,1},\epsilon _{i,2}\right) ,-\infty
<i<\infty \}$ are independent and identically distributed random variables
with \textit{(i)} $E\epsilon _{i,1}=E\epsilon _{i,2}=0$; \textit{(ii)} $%
0<E\epsilon _{i,1}^{2}=\sigma _{1}^{2}<\infty $ and $0<E\epsilon
_{i,2}^{2}=\sigma _{2}^{2}<\infty $; \textit{(iii)} $E\epsilon
_{i,1}\epsilon _{i,2}=0$; \textit{(iv)} $E|\epsilon _{i,1}|^{\nu }<\infty $
and $E|\epsilon _{i,2}|^{\nu }<\infty $ for some $\nu >2$.
\end{assumption}

We further assume that the autoregressive parameter $\beta _{i}$ is constant
over the training segment $\left\{ y_{i},1\leq i\leq m\right\} $

\begin{assumption}
\label{non-contamination}$\beta _{i}=\beta _{0}$ for $1\leq i\leq m$.
\end{assumption}

The requirement in Assumption \ref{non-contamination} is known as the 
\textit{non-contamination assumption} (see \citealp{CSW96}). We subsequently
test for the null hypothesis that, as new data come in after $m$, $\beta
_{i} $ remains constant, viz.%
\begin{equation}
H_{0}:\beta _{0}=\beta _{m+1}=\beta _{m+2}=...  \label{null}
\end{equation}%
The WLS\ estimator using the observations on the training segment $\left\{
y_{i},1\leq i\leq m\right\} $ is%
\begin{equation}
\widehat{\beta }_{m}=\left( \sum_{i=2}^{m}\frac{y_{i-1}^{2}}{1+y_{i-1}^{2}}%
\right) ^{-1}\left( \sum_{i=2}^{m}\frac{y_{i}y_{i-1}}{1+y_{i-1}^{2}}\right) .
\label{ols}
\end{equation}

\subsection{Standard weighted CUSUM-based detectors\label{standard}}

We define the CUSUM\ process of the WLS\ residuals as 
\begin{equation}
Z_{m}\left( k\right) =\left\vert \sum_{i=m+1}^{m+k}\frac{\left( y_{i}-%
\widehat{\beta }_{m}y_{i-1}\right) y_{i-1}}{1+y_{i-1}^{2}}\right\vert ,\text{
\ \ }k\geq 1.  \label{detector}
\end{equation}%
Heuristically, under the null of no change, the residuals have zero mean;
hence, the partial sum process $Z_{m}\left( k\right) $ - customarily known
as the \textit{detector} - should also fluctuate around zero with increasing
variance. Conversely, in the presence of a break (at, say, $k^{\ast }$), $%
\widehat{\beta }_{m}$ is a biased estimator for the \textquotedblleft
new\textquotedblright\ autoregressive parameter $\beta _{m+k^{\ast }+1}$;
thus, $Z_{m}\left( k\right) $ should have a drift term. Hence, a break is
flagged if $Z_{m}\left( k\right) $ exceeds a threshold. We call such a
threshold the \textit{boundary function}, and propose the following family%
\begin{equation}
g_{m,\psi }\left( k\right) =c_{\alpha ,\psi }\mathscr{s}m^{1/2}\left( 1+%
\frac{k}{m}\right) \left( \frac{k}{m+k}\right) ^{\psi }.  \label{boundary}
\end{equation}%
In (\ref{boundary}), $0\leq \psi \leq 1/2$, and $\mathscr{s}$ is defined as%
\begin{equation}
\mathscr{s}^{2}=\left\{ 
\begin{array}{l}
a_{1}\sigma _{1}^{2}+a_{2}\sigma _{2}^{2},\text{ if }-\infty \leq E\log
\left\vert \beta _{0}+\epsilon _{0,1}\right\vert <0, \\ 
\sigma _{1}^{2},\text{ if }E\log \left\vert \beta _{0}+\epsilon
_{0,1}\right\vert \geq 0,%
\end{array}%
\right.  \label{sig}
\end{equation}%
with $\sigma _{1}^{2}$ and $\sigma _{2}^{2}$ defined in Assumption \ref{as-1}%
, and%
\begin{equation*}
a_{1}=E\left( \frac{\overline{y}_{0}^{2}}{1+\overline{y}_{0}^{2}}\right) ^{2}%
\text{, \ \ and \ \ }a_{2}=E\left( \frac{\overline{y}_{0}}{1+\overline{y}%
_{0}^{2}}\right) ^{2},
\end{equation*}%
where recall that $\left\{ \overline{y}_{i},-\infty <i<\infty \right\} $ is
the stationary solution of (\ref{rca}).

On account of (\ref{detector}) and (\ref{boundary}), a changepoint is found
at a stopping time $\tau _{m,\psi }$ defined as%
\begin{equation}
\tau _{m,\psi }=%
\begin{cases}
\inf \{k\geq 1:Z_{m}\left( k\right) \geq g_{m,\psi }\left( k\right) \}, \\ 
\infty ,\ \text{if}\ Z_{m}\left( k\right) <g_{m,\psi }\left( k\right) \text{
for all }1\leq k<\infty .%
\end{cases}
\label{decision}
\end{equation}%
The constant $c_{\alpha ,\psi }$ in (\ref{boundary}) is chosen so as to
ensure that: (a) under the null, the procedure-wise probability of Type I\
Error does not exceed a user-chosen value $\alpha $, viz. $%
\lim_{m\rightarrow \infty }P\left\{ \tau _{m,\psi }=\right. $ $\left. \infty
|H_{0}\right\} =\alpha $; and (b) under the alternative, $\lim_{m\rightarrow
\infty }P\left\{ \tau _{m,\psi }<\infty |H_{A}\right\} =1$.

In (\ref{decision}), the monitoring goes on indefinitely (\textquotedblleft
open-ended\textquotedblright ). On the other hand, in some applications it
may be desirable to stop the sequential monitoring procedure after $m^{\ast
} $ period (\textquotedblleft closed-ended\textquotedblright ), e.g. in order to extend or
update the training period. In this case, we use the same detector $Z_{m}\left( k\right) $
defined in (\ref{detector}), but we modify (\ref{decision}) as%
\begin{equation}
\tau _{m,\psi }^{\ast }=%
\begin{cases}
\inf \{1\leq k\leq m^{\ast }:Z_{m}\left( k\right) \geq g_{m,\psi }^{\ast
}\left( k\right) \}, \\ 
m^{\ast },\ \text{if}\ Z_{m}\left( k\right) <g_{m,\psi }^{\ast }\left(
k\right) \text{ for all }1\leq k\leq m^{\ast },%
\end{cases}
\label{decision2}
\end{equation}%
with boundary function 
\begin{equation}
g_{m,\psi }^{\ast }\left( k\right) =c_{\alpha ,\psi }^{\ast }\mathscr{s}%
m^{1/2}\left( 1+k/m\right) \left( k/\left( m+k\right) \right) ^{\psi }.
\label{boundary2}
\end{equation}%
The boundary function $g_{m,\psi }^{\ast }\left( k\right) $ is suitable for
the case where the monitoring horizon $m^{\ast }$ is \textquotedblleft
long\textquotedblright , i.e. when it goes on for a period which is at least
proportional (or more than proportional) to the size of the training sample $%
m$. If a shorter monitoring horizon is considered, where $m^{\ast }=o\left(
m\right) $, then we use the same detector as above, $Z_{m}\left( k\right) $,
but the boundary function needs to be modified as%
\begin{equation}
\overline{g}_{m,\psi }\left( k\right) =\overline{c}_{\alpha ,\psi }%
\mathscr{s}\left( m^{\ast }\right) ^{1/2-\psi }k^{\psi }.  \label{boundary-3}
\end{equation}%
In this case, the stopping rule is defined as%
\begin{equation}
\overline{\tau }_{m,\psi }=%
\begin{cases}
\inf \{1\leq k\leq m^{\ast }:Z_{m}\left( k\right) \geq \overline{g}_{m,\psi
}\left( k\right) \}, \\ 
m^{\ast },\ \text{if}\ Z_{m}\left( k\right) <\overline{g}_{m,\psi }\left(
k\right) \text{ for all }1\leq k\leq m^{\ast }.%
\end{cases}
\label{stop-bar}
\end{equation}

\subsection{Page-CUSUM detectors\label{page-cusum}}

In a series of recent contributions, \citet{fremdt2015page}, %
\citet{kirch2022asymptotic} and \citet{kirch2022sequential} study a
different family of detectors, known as the \textquotedblleft
Page-CUSUM\textquotedblright\ processes, designed to offer a shorter
detection delay: 
\begin{equation}
Z_{m}^{\dagger }\left( k\right) =\max_{1\leq \ell \leq k}\left\vert
\sum_{i=m+\ell }^{m+k}\frac{\left( y_{i}-\widehat{\beta }_{m}y_{i-1}\right)
y_{i-1}}{1+y_{i-1}^{2}}\right\vert ,\text{ \ \ }k\geq 1.  \label{page}
\end{equation}%
Intuitively, this family of detectors searches for the \textquotedblleft
worst-case scenario\textquotedblright\ at each point in time $k$, and
therefore should guarantee faster detection in the presence of a
changepoint. Consistently with the approach studied in this paper, we
consider \textit{weighted} versions of $Z_{m}^{\dagger }\left( k\right) $;
the corresponding stopping rules are defined as:%
\begin{equation}
\tau _{m,\psi }^{\dagger }=%
\begin{cases}
\inf \{k\geq 1:Z_{m}^{\dagger }\left( k\right) \geq g_{m,\psi }\left(
k\right) \}, \\ 
\infty ,\ \text{if}\ Z_{m}^{\dagger }\left( k\right) <g_{m,\psi }\left(
k\right) \text{ for all }1\leq k<\infty ,%
\end{cases}
\label{kirch-1}
\end{equation}%
with $g_{m,\psi }\left( k\right) $ defined in (\ref{boundary}), for an
open-ended procedure (replacing the critical value $c_{\alpha ,\psi }$ with $%
c_{\alpha ,\psi }^{\dagger }$);%
\begin{equation}
\tau _{m,\psi }^{\ast \dagger }=%
\begin{cases}
\inf \{1\leq k\leq m^{\ast }:Z_{m}^{\dagger }\left( k\right) \geq g_{m,\psi
}^{\ast }\left( k\right) \}, \\ 
m^{\ast },\ \text{if}\ Z_{m}^{\dagger }\left( k\right) <g_{m,\psi }^{\ast
}\left( k\right) \text{ for all }1\leq k\leq m^{\ast },%
\end{cases}
\label{kirch-2}
\end{equation}%
with $g_{m,\psi }^{\ast }\left( k\right) $ defined in (\ref{boundary2}), for
the case of a closed-ended procedure with a long monitoring horizon
(replacing the critical value $c_{\alpha ,\psi }^{\ast }$ with $c_{\alpha
,\psi }^{\ast \dagger }$); and 
\begin{equation}
\overline{\tau }_{m,\psi }^{\dagger }=%
\begin{cases}
\inf \{1\leq k\leq m^{\ast }:Z_{m}^{\dagger }\left( k\right) \geq \overline{g%
}_{m,\psi }\left( k\right) \}, \\ 
m^{\ast },\ \text{if}\ Z_{m}^{\dagger }\left( k\right) <\overline{g}_{m,\psi
}\left( k\right) \text{ for all }1\leq k\leq m^{\ast },%
\end{cases}
\label{kirch-3}
\end{equation}%
with $\overline{g}_{m,\psi }\left( k\right) $ defined in (\ref{boundary-3}),
for the case of a closed-ended procedure with a short monitoring horizon
(replacing the critical value $\overline{c}_{\alpha ,\psi }$ with $\overline{%
c}_{\alpha ,\psi }^{\dagger }$).

\section{Asymptotics\label{asymptotics}}

We begin by listing a set of technical assumptions, which complement
Assumption \ref{as-1} and are required in the case of nonstationarity, $%
E\log \left\vert \beta _{0}+\epsilon _{0,1}\right\vert \geq 0$.

\begin{assumption}
\label{as-3}If $E\log \left\vert \beta _{0}+\epsilon _{0,1}\right\vert \geq
0 $, it holds that: \textit{(i)} $\epsilon _{0,2}$ has a bounded density; 
\textit{(ii)} $\{\epsilon _{i,1},-\infty <i<\infty \}$ and $\{\epsilon
_{i,2},-\infty <i<\infty \}$ are independent.
\end{assumption}

\begin{assumption}
\label{as-4}If $E\log \left\vert \beta _{0}+\epsilon _{0,1}\right\vert >0$,
it holds that $P\left\{ \left( \beta _{0}+\epsilon _{0,1}\right)
y_{0}+\epsilon _{0,2}=x\right\} =0$ for all $-\infty <x<\infty $.
\end{assumption}

Assumption \ref{as-3} is required specifically in order to derive the
anti-concentration bound in (\ref{ht2016-x}). Assumption \ref{as-4} ensures
that $\left\vert y_{i}\right\vert \overset{a.s.}{\rightarrow }\infty $,
ruling out that it could be identically equal to zero (see %
\citealp{berkes2009}).

\subsection{Asymptotics under the null\label{null-asy}}

We derive the weak limits of our test statistics. Results differ depending
on the choice of the weight $\psi $, on whether an open-ended or a
closed-ended monitoring scheme is used (and, in the latter case, on the
length of the monitoring horizon), and on whether the detector is
constructed using the standard CUSUM or the Page-CUSUM. In all cases, we
derive nuisance free limiting distributions, from which critical values can
be obtained by simulation for a given desired nominal significance level, $%
\alpha $.

\subsubsection{Weighted standard CUSUM-based detectors\label{std-cusum}}

We begin by studying the standard CUSUM detectors defined in Section \ref%
{standard}, starting with the open-ended case, also studied in %
\citet{lajos04} in the context of a linear regression.

\begin{theorem}
\label{th-1}We assume that Assumptions \ref{as-1} and \ref{non-contamination}
hold, and either: (i) $E\log \left\vert \beta _{0}+\epsilon
_{0,1}\right\vert <0$ holds; or (ii) $E\log \left\vert \beta _{0}+\epsilon
_{0,1}\right\vert =0$ and Assumption \ref{as-3} hold; or (iii) $E\log
\left\vert \beta _{0}+\epsilon _{0,1}\right\vert >0$ and Assumptions \ref%
{as-3}-\ref{as-4} hold. Then, under $H_{0}$ it holds that, for all $\psi
<1/2 $%
\begin{equation*}
\lim_{m\rightarrow \infty }P\left\{ \tau _{m,\psi }=\infty \right\}
=P\left\{ \sup_{0<u\leq 1}\frac{\left\vert W\left( u\right) \right\vert }{%
u^{\psi }}<c_{\alpha ,\psi }\right\} .
\end{equation*}
\end{theorem}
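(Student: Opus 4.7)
The plan is to decompose $Z_m(k)$ into a single centered martingale partial sum plus a negligible drift, invoke a martingale FCLT on the training-plus-monitoring time scale, and apply a change of variables $u = (k/m)/(1+k/m)$ to transport the supremum onto the unit interval.

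\textbf{Decomposition.} Under $H_0$, writing $y_i - \widehat{\beta}_m y_{i-1} = (\beta_0 - \widehat{\beta}_m) y_{i-1} + \epsilon_{i,1} y_{i-1} + \epsilon_{i,2}$ and setting $q_i = y_{i-1}^2/(1+y_{i-1}^2)$, $X_i = (\epsilon_{i,1} y_{i-1}^2 + \epsilon_{i,2} y_{i-1})/(1+y_{i-1}^2)$, one obtains
\begin{equation*}
Z_m(k) = \left| \sum_{i=m+1}^{m+k} X_i - (\widehat{\beta}_m - \beta_0) \sum_{i=m+1}^{m+k} q_i \right|,
\end{equation*}
where $\{X_i\}$ is a martingale difference sequence under Assumption \ref{as-1}, and $\widehat{\beta}_m - \beta_0 = \bigl(\sum_{i=2}^m q_i\bigr)^{-1} \sum_{i=2}^m X_i$. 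Substituting expresses $Z_m(k)$ as a linear combination of two partial sums of $\{X_i\}$ with combining ratio $\rho_{m,k} := \sum_{i=m+1}^{m+k} q_i / \sum_{i=2}^m q_i$.

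\textbf{FCLT and change of variables.} The conditional variance $E[X_i^2 \mid \mathcal{F}_{i-1}] = (\sigma_1^2 y_{i-1}^4 + \sigma_2^2 y_{i-1}^2)/(1+y_{i-1}^2)^2$ has Ces\`aro limit equal to $\mathscr{s}^2$ in all three regimes: in case (i), by ergodicity of $\{\bar{y}_i\}$, it equals $a_1\sigma_1^2 + a_2\sigma_2^2$; in cases (ii)--(iii), the results of \citet{berkes2009} together with Assumptions \ref{as-3}--\ref{as-4} give $|y_i| \overset{a.s.}{\to} \infty$, hence $q_i \overset{a.s.}{\to} 1$ and the limit reduces to $\sigma_1^2$. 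A martingale FCLT then yields $\mathscr{s}^{-1} m^{-1/2} \sum_{i=1}^{\lfloor m s \rfloor} X_i \Rightarrow W(s)$ on any compact $[0, T]$, and a parallel Ces\`aro argument gives $\rho_{m,k} = k/m + o_P(1)$ uniformly for $1 \leq k \leq T m$. Combining these with $t = k/m$,
\begin{equation*}
\mathscr{s}^{-1} m^{-1/2} Z_m(k) \;=\; \bigl| W(1+t) - (1+t) W(1) \bigr| + o_P(1).
\end{equation*}
The substitution $u = t/(1+t)$, combined with the identity that $\tilde{W}(u) := (1-u) W(1/(1-u)) - W(1)$ defines a standard Wiener process on $(0,1)$ (verified by a direct covariance calculation), converts $Z_m(k)/g_{m,\psi}(k)$ into $|\tilde{W}(u)|/(c_{\alpha,\psi} u^\psi)$, matching the target finite-dimensional limit.

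\textbf{Main obstacle: uniformization.} The substantive difficulty is upgrading compact-in-$u$ convergence to convergence of the supremum over all $k \geq 1$, i.e.\ over $u \in (0,1]$. The constraint $\psi < 1/2$ is essential here. Near $u = 0$, the LIL for $\tilde{W}$ yields $|\tilde{W}(u)|/u^\psi \to 0$ a.s., and the corresponding Doob/Burkholder maximal inequalities for $\sum X_i$ control the pre-limit contribution. Near $u = 1$ (i.e.\ $k/m \to \infty$), the boundary grows like $k^{1-\psi}$, while partial sums of $\{X_i\}$ grow at most like $k^{1/2}(\log\log k)^{1/2}$, so for any $\eta > 0$ one can choose $T = T(\eta)$ with $P\{\sup_{k > T m} Z_m(k)/g_{m,\psi}(k) > \eta\} < \eta$ uniformly in $m$. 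The trickiest subcase is the nonstationary regime, where ergodic tools are unavailable; here the a.s.\ limit $q_i \to 1$ together with the anti-concentration bound licensed by Assumptions \ref{as-3}--\ref{as-4} provides the uniform control of both $\rho_{m,k}$ and the drift term $(\widehat{\beta}_m - \beta_0) \sum_{i=m+1}^{m+k} q_i$ over long monitoring horizons. Since $\sup_{0<u\le 1}|W(u)|/u^\psi$ has a continuous distribution, weak convergence of the supremum converts via Portmanteau into the claimed limit for $P\{\tau_{m,\psi} = \infty\}$.
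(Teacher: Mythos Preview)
Your decomposition and the distributional identity are fine, and the near-$u=0$ tightness sketch is workable. The gap is in your treatment of the tail $k/m\to\infty$ (equivalently $u\to 1$). You assert that for large $T$ one has $P\{\sup_{k>Tm}Z_m(k)/g_{m,\psi}(k)>\eta\}<\eta$, but this is false: the drift term does not vanish there. Writing $g_{m,\psi}(k)=c\,\mathscr{s}\,m^{-1/2}(m+k)^{1-\psi}k^{\psi}$, the contribution of $(\widehat\beta_m-\beta_0)\sum_{i=m+1}^{m+k}q_i$ to the ratio is
\[
\frac{O_P(m^{-1/2})\cdot O_P(k)}{m^{-1/2}(m+k)^{1-\psi}k^{\psi}}
=O_P(1)\cdot\Bigl(\frac{k}{m+k}\Bigr)^{1-\psi}\longrightarrow O_P(1)
\]
as $k/m\to\infty$, not $o_P(1)$. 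Your bound on the monitoring partial sum $\sum_{i=m+1}^{m+k}X_i$ (which is indeed $o(k)$) controls only half of $Z_m(k)$; the estimation-error piece remains of the same order as the boundary. Consequently the ``tail sup is small'' argument cannot close the gap between an FCLT on compacts and convergence of the full supremum over $1\le k<\infty$. (Incidentally, for $k\gg m$ the boundary is $\sim m^{-1/2}k$, not $k^{1-\psi}$.)

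The paper circumvents this by replacing the FCLT with a \emph{strong} (almost-sure) approximation: Lemmas~\ref{lemma1}--\ref{lemma3} deliver independent Wiener processes $W_{m,1},W_{m,2}$ and a rate so that, setting $\Gamma_m(k)=\mathscr{s}\,|W_{m,2}(k)-(k/m)W_{m,1}(m)|$, one has $\sup_{1\le k<\infty}|Z_m(k)-\Gamma_m(k)|/g_{m,\psi}(k)=o_P(1)$ directly. The drift is absorbed into $\Gamma_m$ rather than discarded. Then the exact Gaussian identity $\{(W_2(t)-tW_1(1))/(1+t)\}\overset{\mathcal D}{=}\{W(t/(1+t))\}$ maps the supremum over $t\in(0,\infty)$ to one over the compact interval $u\in(0,1]$, where $\sup_u|W(u)|/u^\psi$ is a.s.\ finite for $\psi<1/2$. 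That compactification is what makes the open-ended supremum tractable; a martingale FCLT alone does not supply it. A secondary point: in case~(ii) one only has $|y_i|\overset{\mathcal P}{\to}\infty$, not a.s., so your ``a.s.\ limit $q_i\to 1$'' is unavailable; the paper instead uses the anti-concentration bound $P\{|y_i|\le i^{\overline\kappa}\}\le c\,i^{-\overline\kappa}$ to obtain the needed moment rates.
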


\medskip

We now turn to the closed-ended case. We begin by considering the case of (%
\ref{decision2}), where the monitoring goes on for a sufficient amount of
time:%
\begin{equation}
m^{\ast }=O\left( m^{\lambda }\right) \text{ for some }\lambda \geq 1,\text{
\ \ and \ \ }\lim_{m\rightarrow \infty }\frac{m^{\ast }}{m}=m_{0}\in \left(
0,\infty \right] >0.  \label{horizon}
\end{equation}%
In this case, we use the boundary function $g_{m,\psi }^{\ast }\left(
k\right) $\ defined in (\ref{boundary2}), and the corresponding stopping
rule (\ref{decision2}).\ Define $m_{\ast }=m_{0}/\left( 1+m_{0}\right) $ if $%
m_{0}<\infty $, and $m_{\ast }=1$ if $m_{0}=\infty $.

\begin{theorem}
\label{th-2}We assume that the conditions of Theorem \ref{th-1}, and (\ref%
{horizon}), are satisfied. Then, under $H_{0}$ it holds that, for all $\psi
<1/2$%
\begin{equation}
\lim_{m\rightarrow \infty }P\left\{ \tau _{m,\psi }^{\ast }=\infty \right\}
=P\left\{ \sup_{0<u\leq m_{\ast }}\frac{\left\vert W\left( u\right)
\right\vert }{u^{\psi }}<c_{\alpha ,\psi }^{\ast }\right\} .  \label{cor2}
\end{equation}
\end{theorem}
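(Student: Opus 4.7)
The plan is to extend the invariance-principle arguments underlying Theorem \ref{th-1} to the finite-horizon regime, then read off the limiting distribution via the same time change. First I would decompose the detector under $H_{0}$. Substituting the model equation into the residuals gives
\[
\sum_{i=m+1}^{m+k}\frac{(y_{i}-\widehat{\beta}_{m}y_{i-1})y_{i-1}}{1+y_{i-1}^{2}}=S_{m}(k)-(\widehat{\beta}_{m}-\beta_{0})\sum_{i=m+1}^{m+k}\frac{y_{i-1}^{2}}{1+y_{i-1}^{2}},
\]
where $S_{m}(k)=\sum_{i=m+1}^{m+k}\xi_{i}$ with $\xi_{i}=(\epsilon_{i,1}y_{i-1}^{2}+\epsilon_{i,2}y_{i-1})/(1+y_{i-1}^{2})$ is a martingale whose long-run variance is precisely $\mathscr{s}^{2}$ in both the stationary and nonstationary regimes. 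Combining a martingale invariance principle for $S_{m}$ with the standard expansion of $\widehat{\beta}_{m}-\beta_{0}$ as the ratio of a training-segment CUSUM to $\sum_{i=2}^{m}y_{i-1}^{2}/(1+y_{i-1}^{2})$, and using an ergodic/LLN reduction (with slope $a_{1}$ in the stationary case, and with summands tending to one in the nonstationary case) to approximate $\sum_{i=m+1}^{m+k}y_{i-1}^{2}/(1+y_{i-1}^{2})$ by a linear function of $k$, I expect to obtain, uniformly for $1\leq k\leq m^{\ast}$,
\[
\frac{Z_{m}(k)}{\mathscr{s}m^{1/2}}=\bigl|W_{2}(k/m)-(k/m)W_{1}(1)\bigr|+o_{P}(1),
\]
for two independent standard Wiener processes $W_{1},W_{2}$. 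This is essentially the content (perhaps only implicit) of the proof of Theorem \ref{th-1}.

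Next I would apply the classical time change $u=t/(1+t)$, under which the Gaussian process $B(t):=W_{2}(t)-tW_{1}(1)$ satisfies $B(t)/(1+t)\overset{d}{=}W(u)$ on $u\in(0,1]$. The boundary $g_{m,\psi}^{\ast}(k)=c_{\alpha,\psi}^{\ast}\mathscr{s}m^{1/2}(1+k/m)(k/(m+k))^{\psi}$ is engineered for exactly this substitution: dividing $Z_{m}(k)<g_{m,\psi}^{\ast}(k)$ by $\mathscr{s}m^{1/2}(1+k/m)$ turns the event into $|W(u_{k})|<c_{\alpha,\psi}^{\ast}u_{k}^{\psi}$, with $u_{k}=k/(m+k)$. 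Under (\ref{horizon}) the grid $\{u_{k}:1\leq k\leq m^{\ast}\}$ becomes dense in $(0,m_{\ast}]$ as $m\to\infty$, with $m_{\ast}=m_{0}/(1+m_{0})$ when $m_{0}<\infty$ and $m_{\ast}=1$ when $m_{0}=\infty$. Passing from the discrete supremum to the continuous one, justified by uniform continuity of $W$ on compact sub-intervals of $(0,1]$, yields exactly (\ref{cor2}).

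The main obstacle, as already for Theorem \ref{th-1}, is the unbounded weight $u^{-\psi}$ at $u=0$, corresponding to the small-$k$ regime where the invariance principle is most delicate. For $\psi<1/2$, the law of the iterated logarithm gives $\sup_{0<u\leq\varepsilon}|W(u)|/u^{\psi}\to 0$ a.s.\ as $\varepsilon\downarrow 0$, so the limiting random variable is a.s.\ finite and $c_{\alpha,\psi}^{\ast}$ is well-defined as its upper $\alpha$-quantile. To match this on the discrete side, I would split the supremum over $k$ into a block $k\leq m^{\delta}$ with small $\delta>0$, controlled by a Doob-type maximal inequality for the martingale $S_{m}$ together with the bound $\widehat{\beta}_{m}-\beta_{0}=O_{P}(m^{-1/2})$, and a block $k>m^{\delta}$ on which the functional CLT delivers the uniform approximation on a compact subset of $(0,m_{\ast}]$. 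The only substantive change relative to the open-ended case is that the compact subset now terminates at $m_{\ast}$ rather than at $1$, which produces a different limiting law and hence the distinct critical value $c_{\alpha,\psi}^{\ast}$.
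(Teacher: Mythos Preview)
Your proposal is correct and follows the same overall architecture as the paper: decompose $Z_{m}(k)$ into a post-training martingale partial sum plus the estimation-error term, approximate the result by $\mathscr{s}\,|W_{2}(k/m)-(k/m)W_{1}(1)|$, and then apply the time change $u=t/(1+t)$ (your identity $B(t)/(1+t)\overset{d}{=}W(u)$ is exactly the paper's equation (\ref{revesz})) to read off the limit over $(0,m_{\ast}]$. One minor slip: the slope constant in the stationary case is $a_{3}=E[\overline{y}_{0}^{2}/(1+\overline{y}_{0}^{2})]$, not $a_{1}$.

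The only substantive difference is the tool used to control the weighted supremum near $u=0$. The paper does not use an FCLT-plus-splitting argument; instead it invokes \emph{strong} (Koml\'os--Major--Tusn\'ady type) approximations, collected in Lemma~\ref{lemma3}, which give rates $k^{1/2-\zeta}$ and thereby deliver the uniform bound
\[
\max_{1\leq k\leq m^{\ast }}\frac{\left\vert Z_{m}\left( k\right) -\Gamma_{m}\left( k\right) \right\vert }{m^{1/2}\left( 1+k/m\right) \left( k/(m+k)\right) ^{\psi }}=o_{P}\left(1\right)
\]
in one stroke, with no need to treat $k\leq m^{\delta}$ separately. Your approach---FCLT on $[m^{\delta},m^{\ast}]$ and a Doob-type maximal bound on $[1,m^{\delta}]$---is a legitimate and somewhat more elementary alternative that avoids the heavier embedding machinery, but it only works here because $\psi<1/2$ makes both the small-$u$ limit and the small-$k$ detector negligible; the paper's strong-approximation route also underpins the $\psi=1/2$ Darling--Erd\H{o}s result in Theorem~\ref{de}, where your splitting would not suffice.
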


We now study the case of (\ref{stop-bar}), in a closed-ended set-up where
monitoring stops after very few steps, viz.%
\begin{equation}
m^{\ast }\rightarrow \infty ,\text{ \ \ and \ \ }\lim_{m\rightarrow \infty }%
\frac{m^{\ast }}{m}=0.  \label{horizon-short}
\end{equation}%
In this case, we use the boundary function $\overline{g}_{m,\psi }\left(
k\right) $ defined in (\ref{boundary-3}), and the corresponding stopping
rule defined in (\ref{stop-bar}).

\begin{theorem}
\label{monitor-short}We assume that the conditions of Theorem \ref{th-1} and
(\ref{horizon-short}) are satisfied. Then, under $H_{0}$ it holds that, for
all $\psi <1/2$%
\begin{equation}
\lim_{m\rightarrow \infty }P\left\{ \overline{\tau }_{m,\psi }=\infty
\right\} =P\left\{ \sup_{0<u\leq 1}\frac{\left\vert W\left( u\right)
\right\vert }{u^{\psi }}<\overline{c}_{\alpha ,\psi }\right\} .
\label{weighted-bar}
\end{equation}
\end{theorem}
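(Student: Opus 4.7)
The plan is to adapt the argument used for Theorem~\ref{th-1}, but with a scaling that matches the short monitoring horizon in (\ref{horizon-short}). I would begin by using the null-model identity $y_i - \widehat{\beta}_m y_{i-1} = \epsilon_{i,1} y_{i-1} + \epsilon_{i,2} + (\beta_0 - \widehat{\beta}_m) y_{i-1}$ to decompose
\begin{equation*}
Z_m(k) = \bigl| M_m(k) + (\beta_0 - \widehat{\beta}_m)\, S_m(k) \bigr|,
\end{equation*}
where $M_m(k) = \sum_{i=m+1}^{m+k} (\epsilon_{i,1} y_{i-1} + \epsilon_{i,2}) y_{i-1}/(1+y_{i-1}^2)$ is a sum of (conditional) martingale differences with per-step variance tending to $\mathscr{s}^2$, and $S_m(k) = \sum_{i=m+1}^{m+k} y_{i-1}^2/(1+y_{i-1}^2) \leq k$ deterministically.

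Next, I would invoke the same strong invariance principle used in the proof of Theorem~\ref{th-1} to construct, on a suitable probability space, a standard Brownian motion $W^*$ with $M_m(k) = \mathscr{s} W^*(k) + o_P((m^*)^{1/2-\delta})$ uniformly in $1 \leq k \leq m^*$ for some $\delta > 0$; this is legitimate because the conclusions of Theorem~\ref{th-1} are based on a strong approximation that is valid for the partial sums of the post-training innovations in each of the three regimes (i)--(iii). Changing variable $u = k/m^*$ and exploiting Brownian scaling $W^*(m^* u) \stackrel{\mathcal{D}}{=} (m^*)^{1/2} W(u)$, the main term of the ratio satisfies
\begin{equation*}
\frac{|M_m(m^* u)|}{\overline{g}_{m,\psi}(m^* u)} = \frac{|M_m(m^* u)|}{\overline{c}_{\alpha,\psi}\, \mathscr{s}\, (m^*)^{1/2}\, u^{\psi}} \;\stackrel{\mathcal{D}}{\longrightarrow}\; \frac{|W(u)|}{\overline{c}_{\alpha,\psi}\, u^{\psi}}
\end{equation*}
on compact subsets of $(0,1]$, which gives the right shape for the target limit.

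For the estimation-error term, I would use $\widehat{\beta}_m - \beta_0 = O_P(m^{-1/2})$ (sharper in the explosive case) together with $|S_m(k)| \leq k$ to deduce
\begin{equation*}
\sup_{1 \leq k \leq m^*} \frac{|(\widehat{\beta}_m - \beta_0)\, S_m(k)|}{\overline{g}_{m,\psi}(k)} \leq \frac{O_P(m^{-1/2})}{\overline{c}_{\alpha,\psi}\, \mathscr{s}\, (m^*)^{1/2-\psi}} \cdot \sup_{1 \leq k \leq m^*} k^{1-\psi} = O_P\bigl((m^*/m)^{1/2}\bigr) = o_P(1),
\end{equation*}
where the last identity uses (\ref{horizon-short}). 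Hence the estimation error is uniformly negligible relative to the boundary, and the asymptotic behaviour of $\sup_{1\leq k \leq m^*} Z_m(k)/\overline{g}_{m,\psi}(k)$ is governed entirely by the main term $M_m$.

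The main obstacle will be to control the supremum of the ratio near $k=1$, equivalently $u\to 0^+$ after rescaling, where both numerator and the denominator shrink. For $\psi<1/2$ the limit $\sup_{0<u\leq 1}|W(u)|/u^{\psi}$ is a.s.\ finite by Brownian LIL, but transferring the functional convergence on $[\eta,1]$ into uniform control on $(0,1]$ requires showing that both tails $\sup_{0<u\leq\eta}|W(u)|/u^{\psi}$ and $\sup_{1 \leq k \leq \eta m^*} |M_m(k)|/k^{\psi}$ can be made arbitrarily small with large probability by choosing $\eta$ small; this is typically achieved via a maximal inequality for the martingale $M_m$ together with H\'ajek--R\'enyi or Doob-type bounds weighted by $k^{-\psi}$. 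Once this tightness near zero is established, the continuous mapping theorem delivers
\begin{equation*}
\lim_{m\to\infty} P\Bigl\{\sup_{1\leq k \leq m^*} Z_m(k)/\overline{g}_{m,\psi}(k) < 1\Bigr\} = P\Bigl\{\sup_{0<u\leq 1} |W(u)|/u^{\psi} < \overline{c}_{\alpha,\psi}\Bigr\},
\end{equation*}
which is exactly the no-detection probability in (\ref{weighted-bar}).
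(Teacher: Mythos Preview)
Your proposal is correct and follows essentially the same route as the paper: decompose $Z_m(k)$ into a martingale term plus an estimation-error term, kill the latter via $|\widehat\beta_m-\beta_0|=O_P(m^{-1/2})$ and $S_m(k)\le k$ to obtain $O_P((m^\ast/m)^{1/2})=o_P(1)$ under (\ref{horizon-short}), then approximate the martingale term by a Brownian motion and rescale by $m^\ast$. The only minor technical difference is that the paper controls the small-$k$ behaviour directly through the $k^{-1/2+\zeta}$-weighted strong approximation of Lemma~\ref{lemma3} (so the approximation error divided by $\overline g_{m,\psi}(k)$ is automatically $o_P(1)$ uniformly in $k$), whereas you propose a separate H\'ajek--R\'enyi/tightness argument near $u=0$; both are valid ways to close the same gap.
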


We now turn to the case $\psi =1/2$, studying the closed-ended monitoring
procedure under both (\ref{horizon}) and (\ref{horizon-short}). Define%
\begin{equation}
\gamma \left( x\right) =\sqrt{2\log x}\text{ \ \ and \ \ }\delta \left(
x\right) =2\log x+\frac{1}{2}\log \log x-\frac{1}{2}\log \pi ,
\label{de-norming}
\end{equation}%
and 
\begin{equation}
c_{\alpha ,0.5}^{\ast }=\overline{c}_{\alpha ,0.5}=\frac{x+\delta \left(
\log m^{\ast }\right) }{\gamma \left( \log m^{\ast }\right) }.
\label{asy-crv-de}
\end{equation}

\begin{theorem}
\label{de}We assume that the conditions of Theorem \ref{th-1} are satisfied.
Then, under $H_{0}$ for all $-\infty <x<\infty $:

\begin{itemize}
\item[-] if (\ref{horizon}) holds, then it holds that $\lim_{m\rightarrow
\infty }P\left\{ \tau _{m,0.5}^{\ast }=m^{\ast }\right\} $ $=$ $\exp \left(
-\exp \left( -x\right) \right) $;

\item[-] if (\ref{horizon-short}) holds, then it holds that $%
\lim_{m\rightarrow \infty }P\left\{ \overline{\tau }_{m,0.5}=m^{\ast
}\right\} $ $=$ $\exp \left( -\exp \left( -x\right) \right) $.
\end{itemize}
\end{theorem}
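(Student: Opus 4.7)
The strategy is to reduce the event $\{\tau^{\ast}_{m,0.5} = m^{\ast}\}$ (and, respectively, $\{\overline{\tau}_{m,0.5} = m^{\ast}\}$) to an extreme-value statement for a standard Wiener process and then invoke the classical Darling-Erd\H{o}s limit theorem. The first step will be to sharpen the strong approximation already underlying the proofs of Theorems \ref{th-1}--\ref{monitor-short}: I aim to construct, on an enlarged probability space, a standard Wiener process $\{W_m(t), t \ge 0\}$ such that
\begin{equation*}
\sup_{1 \le k \le m^{\ast}} \left| \frac{Z_m(k)}{\mathscr{s} m^{1/2}(1 + k/m)} - \left| W_m\!\left(\frac{k}{m+k}\right) \right| \right| = o_P\!\left(\frac{1}{\sqrt{\log\log m^{\ast}}}\right).
\end{equation*}
I would obtain this by combining a KMT-type Hungarian embedding for the martingale-difference summands $(y_i - \beta_0 y_{i-1}) y_{i-1}/(1 + y_{i-1}^2)$, whose conditional variance converges to $\mathscr{s}^2$, with the parametric rate $\hat{\beta}_m - \beta_0 = O_P(m^{-1/2})$; the latter contributes a drift proportional to $k/(m+k)$ that is uniformly bounded and easily absorbed.

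Next I would change variable to $u = k/(m+k)$, so that $1 + k/m = 1/(1-u)$ and $(k/(m+k))^{1/2} = u^{1/2}$, and use the definition (\ref{asy-crv-de}) of $c_{\alpha,0.5}^{\ast}$ to rewrite the stopping event as
\begin{equation*}
\left\{ \gamma(\log m^{\ast}) \sup_{u \in I_m} \frac{|W_m(u)|}{u^{1/2}} - \delta(\log m^{\ast}) < x \right\} + o_P(1), \qquad I_m = \left[\tfrac{1}{m+1}, \tfrac{m^{\ast}}{m+m^{\ast}}\right],
\end{equation*}
the $o_P(1)$ error being controlled via Step~1 since $\gamma(\log m^{\ast}) = O(\sqrt{\log\log m^{\ast}})$. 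I would then invoke the classical Darling-Erd\H{o}s theorem for the Wiener process (see, e.g., Theorem~A.4.1 in Cs\"org\H{o} and Horv\'ath, 1997), which gives
\begin{equation*}
\gamma(\log T) \sup_{1/T \le t \le 1} \frac{|W(t)|}{t^{1/2}} - \delta(\log T) \to G \text{ in distribution},
\end{equation*}
with $P\{G \le x\} = \exp(-e^{-x})$ (the definition of $\delta$ in (\ref{de-norming}) is tuned so that the factor of $2$ inherent to the two-sided supremum is absorbed). Applying this with $T = m^{\ast}$, after using Brownian self-similarity to match $I_m$ to $[1/m^{\ast}, 1]$ up to negligible boundary effects, closes the argument.

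Both monitoring horizons are then accommodated simultaneously. Under (\ref{horizon}) the upper endpoint of $I_m$ converges to $m_{\ast} \in (0,1]$, but the divergence of the supremum is driven purely by $u \to 0$, so the $O_P(1)$ bulk contribution is swamped by the centering $\delta(\log m^{\ast})$. Under (\ref{horizon-short}) the interval $I_m$ itself shrinks; the substitution $v = mu$, combined with Brownian self-similarity, identifies $\sup_{u \in I_m} |W_m(u)|/u^{1/2}$ in distribution with $\sup_{v \in [m/(m+1), m^{\ast}]} |\widetilde{W}_m(v)|/\sqrt{v}$, to which Darling-Erd\H{o}s with $T = m^{\ast}$ applies verbatim. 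The hard part will be Step~1: polynomial strong-approximation rates are standard, but one must secure uniformity in $k$ across the three regimes of Assumptions \ref{as-3}--\ref{as-4}. In the explosive case $E\log|\beta_0 + \epsilon_{0,1}| > 0$ the ratio $y_{i-1}^2/(1 + y_{i-1}^2)$ concentrates at $1$ while $y_i$ diverges exponentially, and the anti-concentration property implicit in Assumption \ref{as-3}, together with the a.s.\ divergence $|y_i| \to \infty$ from Assumption \ref{as-4}, is what allows one to justify the variance $\mathscr{s}^2 = \sigma_1^2$ and to validate the martingale CLT uniformly on $\{1,\ldots,m^{\ast}\}$.
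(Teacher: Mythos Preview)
Your high-level architecture (strong approximation by a Wiener functional, then Darling--Erd\H{o}s) is exactly the paper's, but two steps in your execution do not go through as written.

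\textbf{The estimation drift is not ``easily absorbed'' under the long horizon.} In your Step~1 normalization the contribution of $\widehat{\beta}_m-\beta_0$ is, after the algebra, of exact order $u_k\cdot\xi$ with $u_k=k/(m+k)$ and $\xi=O_P(1)$ a (random) multiple of $m^{1/2}(\widehat{\beta}_m-\beta_0)$. For $k$ of order $m$ (which occurs under (\ref{horizon})) this is $O_P(1)$, not $o_P\!\big((\log\log m^{\ast})^{-1/2}\big)$, so a single Wiener process built from the post-$m$ martingale summands alone cannot satisfy your displayed approximation. The paper deals with this by approximating $Z_m(k)$ by $\Gamma_m(k)=\mathscr{s}\,|W_{m,2}(k)-(k/m)W_{m,1}(m)|$ using \emph{two} independent Wiener processes---one for the training sample, one for the monitoring period---and only then collapsing to a single Wiener process via the covariance identity (\ref{revesz}),
\[
\Big\{\tfrac{W_2(t)-tW_1(1)}{1+t},\ t\ge 0\Big\}\ \overset{\mathcal D}{=}\ \Big\{W\!\big(\tfrac{t}{1+t}\big),\ t\ge 0\Big\}.
\]
Your $W_m$ has to be constructed this way; the drift is part of the limiting Gaussian process, not an error term. (Your argument \emph{is} correct under (\ref{horizon-short}), since then $k\le m^{\ast}=o(m)$ and the drift divided by $u_k^{1/2}$ is $O_P((m^{\ast}/m)^{1/2})=o_P(1)$; this is why the paper's proof of the short-horizon case, inside Theorem~\ref{monitor-short}, can discard the drift.)

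\textbf{Dividing by $u^{1/2}$ destroys the uniformity you claim in Step~1.} Even with the correct $W_m$, your Step~1 gives
\[
\sup_{1\le k\le m^{\ast}}\Big|\tfrac{Z_m(k)}{\mathscr{s}\,m^{1/2}(1+k/m)}-|W_m(u_k)|\Big|=o_P\!\big((\log\log m^{\ast})^{-1/2}\big),
\]
but the stopping event involves $Z_m(k)/g_{m,0.5}(k)=u_k^{-1/2}\cdot\big(Z_m(k)/(\mathscr{s}\,m^{1/2}(1+k/m))\big)$. After multiplying by $u_k^{-1/2}$ the error becomes $o_P\!\big(u_k^{-1/2}(\log\log m^{\ast})^{-1/2}\big)$, which for $k=1$ is of order $m^{1/2}$; so $\sup_k|A_k-B_k|\le \varepsilon_m$ does \emph{not} imply $\gamma\,|\sup_k A_k/u_k^{1/2}-\sup_k B_k/u_k^{1/2}|=o_P(1)$. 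The paper closes this gap by a three-region localization: it shows (equations (\ref{de-11})--(\ref{de-22})) that on $\{k\le \log m\}$ and on $\{m/\log m\le k\le m^{\ast}\}$ the weighted Wiener functional is only $O_P(\sqrt{\log\log\log m})$, while on the middle block $\{\log m\le k\le m/\log m\}$ Lemma~\ref{gombay} gives the sharper rate $O_P((\log m)^{\zeta-1/2})$ for the approximation error of $Z_m(k)/g_{m,0.5}(k)$. Only after this localization can Darling--Erd\H{o}s be applied to the middle block; the effective range is then $T\asymp m/(\log m)^2$, and one checks that $\gamma(\log T),\delta(\log T)$ can be replaced by the stated constants. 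Your ``up to negligible boundary effects'' hides precisely this step, and your choice $T=m^{\ast}$ for the Darling--Erd\H{o}s scaling under (\ref{horizon}) is not what the range $I_m=[1/(m{+}1),\,m^{\ast}/(m{+}m^{\ast})]$ actually delivers: the ratio of endpoints is $\asymp m$, not $m^{\ast}$.
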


Upon inspecting the proofs of Theorems \ref{th-1}-\ref{monitor-short} and %
\ref{de}, the limiting distributions of the weighted CUSUM with $\psi <1/2$,
and of the standardised CUSUM with $\psi =1/2$ are asymptotically
independent,\footnote{%
Intuitively, this is because $W\left( \cdot \right) $ has independent
increments, and the limiting distribution, in the case of $0\leq \psi <1/2$,
is determined by the \textquotedblleft central\textquotedblright\ values of $%
W\left( \cdot \right) $; conversely, when $\psi =1/2$, the limiting law is
determined by values at the very beginning of $W\left( \cdot \right) $.} and
therefore the two procedures, in principle, can be combined.

Theorem \ref{de} offers an explicit formula to compute asymptotic critical
values; however, these are bound to be inaccurate due to the slow
convergence to the Extreme Value distribution. In particular, simulations
show that, in finite samples, asymptotic critical values overstate the true
values thus leading to low power. A possible correction can be proposed, similarly to \citet{gombay}, as follows. Define $%
h_{m^{\ast }}=h_{m^{\ast }}\left( m^{\ast }\right) $ such that, as $m^{\ast
}\rightarrow \infty $%
\begin{equation}
h_{m^{\ast }}\rightarrow \infty \text{ \ \ and \ \ }h_{m^{\ast }}/m^{\ast
}\rightarrow 0,  \label{hm}
\end{equation}%
and let $\phi _{m}=\left( m^{\ast }+h_{m^{\ast }}\right) /\left( 2h_{m^{\ast
}}\right) $. Let also $\widehat{c}_{\alpha ,0.5}\rightarrow \infty $ denote
the solution of 
\begin{equation}
\frac{\widehat{c}_{\alpha ,0.5}\exp \left( -\frac{1}{2}\widehat{c}_{\alpha
,0.5}^{2}\right) }{\left( 2\pi \right) ^{1/2}}\left( \log \phi _{m}+\frac{%
4-\log \phi _{m}}{\widehat{c}_{\alpha ,0.5}^{2}}\right) =\alpha .
\label{vostr-cv}
\end{equation}

\begin{theorem}
\label{thgombay}We assume that the conditions of Theorem \ref{de} are
satisfied. Then, replacing $c_{\alpha ,0.5}^{\ast }$\ and $\overline{c}%
_{\alpha ,0.5}$\ with $\widehat{c}_{\alpha ,0.5}$ defined in (\ref{vostr-cv}%
), it holds that $\lim_{m\rightarrow \infty }P\left\{ \tau _{m,0.5}^{\ast
}=m^{\ast }\right\} $ $=$ $\alpha $ and $\lim_{m\rightarrow \infty }P\left\{ 
\overline{\tau }_{m,0.5}=m^{\ast }\right\} =\alpha $ respectively.
\end{theorem}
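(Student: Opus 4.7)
The plan is to sharpen the Darling--Erd\H{o}s--Gumbel asymptotic underlying (\ref{asy-crv-de}) by invoking a Vostrikova-type expansion of the tail of the self-normalised Wiener supremum that retains one extra order of accuracy. The resulting equation for the threshold is exactly (\ref{vostr-cv}), which is why the construction of $\widehat{c}_{\alpha,0.5}$ recovers the target level $\alpha$ in the limit, in contrast to the slow convergence of the thresholds built directly from the de-norming constants in (\ref{de-norming}).

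First I would reuse the invariance principle established in the proofs of Theorems \ref{th-1} and \ref{de} to couple, uniformly on $1 \le k \le m^{\ast}$, the standardised detector
\[
\frac{Z_m(k)}{\mathscr{s}\, m^{1/2}\bigl(1+k/m\bigr)\bigl(k/(m+k)\bigr)^{1/2}}
\]
with the process $|W(u)|/u^{1/2}$ on $u \in [h_{m^{\ast}}/(m+h_{m^{\ast}}),\, m^{\ast}/(m+m^{\ast})]$. The role of the auxiliary sequence $h_{m^{\ast}}$ satisfying (\ref{hm}) is precisely to cut off the left endpoint at a scale such that the ratio of the upper and lower endpoints equals $\phi_m = (m^{\ast}+h_{m^{\ast}})/(2h_{m^{\ast}})$ up to a $1+o(1)$ factor, while the probability of an early crossing on the excised piece is $o(1)$; this ratio is exactly the one entering (\ref{vostr-cv}).

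Next I would invoke the Vostrikova--Siegmund refinement of the Darling--Erd\H{o}s formula: for $c \to \infty$ together with $a \to 0$ and $b$ bounded, satisfying $c^2 - \log(b/a) \to \infty$,
\[
P\left\{\sup_{a \leq u \leq b}\frac{|W(u)|}{u^{1/2}} > c\right\} = \frac{c\exp(-c^2/2)}{(2\pi)^{1/2}}\left(\log\frac{b}{a} + \frac{4-\log(b/a)}{c^2}\right)\bigl(1+o(1)\bigr).
\]
Substituting $b/a = \phi_m\cdot(1+o(1))$ and equating the leading expression to $\alpha$ yields precisely (\ref{vostr-cv}); solving for $c$ gives $\widehat{c}_{\alpha,0.5}$, and the coupling from the previous step delivers $\lim_m P\{\tau_{m,0.5}^{\ast} < m^{\ast}\} = \alpha$, equivalently $\lim_m P\{\tau_{m,0.5}^{\ast} = m^{\ast}\} = \alpha$. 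The short-horizon case in (\ref{horizon-short}) is handled identically, the only change being the upper endpoint in the coupling (which enters only through lower-order terms absorbed into the $1+o(1)$ factor), so the same equation and hence the same $\widehat{c}_{\alpha,0.5}$ yields $\lim_m P\{\overline{\tau}_{m,0.5} = m^{\ast}\} = \alpha$.

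The main obstacle is to justify the Vostrikova expansion with its explicit $1/c^2$ correction uniformly over the window of admissible endpoints. The classical route is a Poisson-clumping/exchangeability argument in the spirit of Aldous, or a direct Laplace-type analysis of the first-passage density as in Siegmund. Two subtler technical points then have to be verified: (i) that the $O(1)$ discrepancy between $\log(b/a)$ and $\log\phi_m$ is of strictly smaller order than the second-order correction $1/c^2$, which is what fixes the specific denominator $2h_{m^{\ast}}$ in the definition of $\phi_m$; and (ii) that the probability that the detector crosses the boundary on the truncated initial segment $[0,\, h_{m^{\ast}}/(m+h_{m^{\ast}})]$ is $o(1)$, which is guaranteed by $h_{m^{\ast}}\to\infty$ in (\ref{hm}) together with a standard maximal inequality applied on that short interval.
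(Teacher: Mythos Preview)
Your proposal is correct and follows essentially the same route as the paper: a strong approximation of the standardised detector by $\lvert W(u)\rvert/u^{1/2}$ on a truncated window, followed by the scale transformation via (\ref{revesz}), and then Vostrikova's tail expansion to identify (\ref{vostr-cv}) as the equation for the critical value. The paper invokes Lemma~\ref{gombay2} for the coupling on $[h_{m^\ast},m^\ast]$, carries out the change of variables so that the range becomes \emph{exactly} $[1,\phi_m]$, and then quotes equation~(18) of \citet{vostrikova} directly; you instead keep a generic window $[a,b]$ and absorb the constant-factor mismatch between $\log(b/a)$ and $\log\phi_m$ into the error analysis of point~(i), which is fine since that $O(1)$ slack is indeed dominated by the retained $c^{-2}$ term. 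Your point~(ii) on the negligibility of the excised initial segment is something the paper does not spell out in this proof (it is implicit from the arguments in the proof of Theorem~\ref{de}), so your treatment is in fact slightly more complete there. One small quibble: the side condition you impose on the Vostrikova expansion, ``$c^2-\log(b/a)\to\infty$'', is not the right regime here (in fact $c^2\sim 2\log\log\phi_m\ll\log\phi_m$); the expansion is valid for $c\to\infty$ with $\log(b/a)\to\infty$ and the product $c\,e^{-c^2/2}\log(b/a)$ bounded, which is the relevant asymptotic window.
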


The choice of $h_{m^{\ast }}$\ is a matter of tuning; qualitatively, as $%
h_{m^{\ast }}$ increases, critical values become smaller (thus making the
procedure more conservative) and vice versa. Our simulations indicate that $%
h_{m^{\ast }}=\left( \log m^{\ast }\right) ^{1/2}$ yields the best results
in terms of size and power.

\subsubsection{Weighted Page-CUSUM detectors\label{kirch-cusum}}

We now consider the use of the Page-CUSUM detector $Z_{m}^{\dagger }\left(
k\right) $ defined in (\ref{page}), using the stopping rules defined in (\ref%
{kirch-1})-(\ref{kirch-3}). Let $\left\{ W_{1}\left( x\right) ,x\geq
0\right\} $\ and $\left\{ W_{2}\left( x\right) ,x\geq 0\right\} $\ denote
two independent standard Wiener processes.

\begin{theorem}
\label{th-kirch}We assume that the conditions of Theorem \ref{th-1} are
satisfied. Then, under $H_{0}$, for all $\psi <1/2$

\begin{itemize}
\item[-] it holds that%
\begin{equation*}
\lim_{m\rightarrow \infty }P\left\{ \tau _{m,\psi }^{\dagger }=\infty
\right\} =P\left\{ \sup_{0<x<\infty }\frac{\sup_{0\leq t\leq x}\left\vert
\left( W_{2}\left( x\right) -W_{2}\left( t\right) \right) -\left( x-t\right)
W_{1}\left( 1\right) \right\vert }{\left( 1+x\right) \left( x/\left(
1+x\right) \right) ^{\psi }}<c_{\alpha ,\psi }^{\dagger }\right\} ;
\end{equation*}

\item[-] if, in addition, (\ref{horizon}) holds, then it holds that%
\begin{equation*}
\lim_{m\rightarrow \infty }P\left\{ \tau _{m,\psi }^{\dagger \ast }=\infty
\right\} =P\left\{ \sup_{0<x\leq m_{0}}\frac{\sup_{0\leq t\leq x}\left\vert
\left( W_{2}\left( x\right) -W_{2}\left( t\right) \right) -\left( x-t\right)
W_{1}\left( 1\right) \right\vert }{\left( 1+x\right) \left( x/\left(
1+x\right) \right) ^{\psi }}<c_{\alpha ,\psi }^{\dagger \ast }\right\} ;
\end{equation*}

\item[-] if, in addition, (\ref{horizon-short}) holds, then it holds that%
\begin{equation*}
\lim_{m\rightarrow \infty }P\left\{ \overline{\tau }_{m,\psi }^{\dagger
}=\infty \right\} =P\left\{ \sup_{0<x\leq 1}\frac{\sup_{0\leq t\leq
x}\left\vert \left( W\left( x\right) -W\left( t\right) \right) \right\vert }{%
x^{\psi }}<\overline{c}_{\alpha ,\psi }^{\dagger }\right\} .
\end{equation*}
\end{itemize}
\end{theorem}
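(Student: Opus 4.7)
The plan is to reduce the three claims to continuous-mapping applications of the strong approximations already developed, in the proofs of Theorems~\ref{th-1}--\ref{monitor-short}, for the partial-sum process of the WLS residuals. Under $H_{0}$, writing $y_{i}-\widehat{\beta}_{m}y_{i-1}=\epsilon _{i,1}y_{i-1}+\epsilon _{i,2}-(\widehat{\beta}_{m}-\beta _{0})y_{i-1}$ leads to the decomposition
\begin{equation*}
\sum_{i=m+\ell }^{m+k}\frac{(y_{i}-\widehat{\beta}_{m}y_{i-1})y_{i-1}}{1+y_{i-1}^{2}}\;=\;\bigl(M_{m}(k)-M_{m}(\ell -1)\bigr)\;-\;(\widehat{\beta}_{m}-\beta _{0})\bigl(T_{m}(k)-T_{m}(\ell -1)\bigr),
\end{equation*}
with $M_{m}(k)=\sum_{i=m+1}^{m+k}(\epsilon _{i,1}y_{i-1}^{2}+\epsilon _{i,2}y_{i-1})/(1+y_{i-1}^{2})$ and $T_{m}(k)=\sum_{i=m+1}^{m+k}y_{i-1}^{2}/(1+y_{i-1}^{2})$. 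Taking the maximum in $\ell$ expresses $Z_{m}^{\dagger }(k)$ as a functional of the bivariate process $(M_{m},T_{m})$ together with the training-sample quantity $\widehat{\beta}_{m}-\beta _{0}$. The invariance principles underlying the earlier theorems supply, on an enlarged probability space, two independent standard Wiener processes $W_{1},W_{2}$ such that $M_{m}$ is approximated (after the appropriate time change) by $\mathscr{s}\,W_{2}$, while a suitably normalized $\widehat{\beta}_{m}-\beta _{0}$ is approximated by a multiple of $W_{1}(1)$; in both the stationary and nonstationary regimes $T_{m}(k)/k$ converges a.s.\ to a strictly positive constant, so that $T_{m}(k)-T_{m}(\ell -1)$ behaves linearly in $k-\ell$.

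Setting $x=k/m$ and $t=(\ell -1)/m$ and dividing both detector and boundary by $\mathscr{s}\,m^{1/2}$, the continuous mapping theorem delivers, uniformly on compact windows $\delta \leq x\leq K$,
\begin{equation*}
\frac{Z_{m}^{\dagger }(k)}{\mathscr{s}\,m^{1/2}}\;\Longrightarrow \;\sup_{0\leq t\leq x}\bigl|(W_{2}(x)-W_{2}(t))-(x-t)W_{1}(1)\bigr|,
\end{equation*}
while $g_{m,\psi }(k)/(\mathscr{s}\,m^{1/2})\to c_{\alpha ,\psi }^{\dagger }(1+x)(x/(1+x))^{\psi }$. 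Letting the truncation tend to infinity produces the open-ended statement; restricting the same argument to $x\in (0,m_{0}]$ and working with $g_{m,\psi }^{\ast }$ yields the closed-ended statement under (\ref{horizon}). For the short-horizon case under (\ref{horizon-short}), I rescale using $u=k/m^{\ast }$ and $s=(\ell -1)/m^{\ast }$; since $(\widehat{\beta}_{m}-\beta _{0})(T_{m}(k)-T_{m}(\ell -1))=O_{P}(m^{\ast }/m^{1/2})=o_{P}((m^{\ast })^{1/2})$, this correction drops out after scaling by $\mathscr{s}(m^{\ast })^{1/2}$, leaving only a single Wiener process $W$; this explains the absence of the $W_{1}(1)$ drift in the third formula, as well as the simpler boundary $x^{\psi }$ inherited from $\overline{g}_{m,\psi }$.

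The main obstacle is verifying that the limiting supremum in the open-ended case is an almost-surely finite random variable and that weak convergence extends from compact windows to the full range $x\in (0,\infty )$. Near $x=0$, the weight $(x/(1+x))^{\psi }$ with $\psi <1/2$ combined with the H\"{o}lder regularity of Brownian motion controls the boundary singularity as in Theorem~\ref{th-1}. For large $x$, the internal maximum is dominated by the drift contribution $(x-t)W_{1}(1)$ at $t=0$, giving an order of $x|W_{1}(1)|+O_{P}(x^{1/2})$, while the denominator $(1+x)(x/(1+x))^{\psi }$ is of order $x$; since $W_{2}(x)/x\to 0$ a.s., the ratio converges a.s.\ to $|W_{1}(1)|$ and is bounded on $[1,\infty )$. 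A standard truncation-plus-tightness argument then upgrades the compact-window convergence to convergence of the full suprema, completing the proof of all three statements.
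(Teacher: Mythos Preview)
Your proposal is essentially correct and follows the same overall strategy as the paper. Both arguments reduce the Page-CUSUM to a functional of $(M_m,T_m,\widehat{\beta}_m-\beta_0)$, replace the pieces by Wiener-process approximations, and then verify that the limiting double supremum is a.s.\ finite by controlling the two endpoints $x\downarrow 0$ and $x\to\infty$ separately.

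There is one methodological difference worth noting. You frame the argument as weak convergence on compact windows $\delta\le x\le K$ via the continuous mapping theorem, followed by a truncation-plus-tightness step to extend to $(0,\infty)$. The paper instead works directly with \emph{strong} approximations over the full weighted range: Lemmas~\ref{lemma1}--\ref{lemma3} give sup-norm $o_P(1)$ bounds (after dividing by the boundary) between the partial sums and $\mathscr{s}(W_{m,2}(k)-W_{m,2}(\ell)-\tfrac{k-\ell}{m}W_{m,1}(m))$, and then the scale transformation plus continuity of the Wiener process yields the limiting distribution in one step. This avoids having to set up a function-space topology in which the double-sup functional is continuous, and it automatically handles the cancellation of the constant $a_3$ (from $T_m(k)-T_m(\ell-1)\approx a_3(k-\ell)$ against $\widehat{\beta}_m-\beta_0\approx (m a_3)^{-1}\sum(\cdot)$) in the stationary case, which your ``$T_m(k)/k\to$ constant'' statement glosses over. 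For the finiteness near $x=0$, the paper invokes Garsia's modulus-of-continuity theorem for Brownian motion (giving $|W_2(x)-W_2(t)|\le |\zeta|((x-t)\log\tfrac{1}{x-t})^{1/2}$ with $\zeta$ having all moments), which is sharper than a generic H\"older bound and makes the control explicit; your large-$x$ argument via $W_2(x)/x\to 0$ and drift dominance matches the paper's use of the Law of the Iterated Logarithm. For parts (ii) and (iii) your sketch is correct and the paper proceeds the same way, noting that under (\ref{horizon-short}) the $W_1$-term is $o_P(1)$ after scaling, so only a single Wiener process survives.
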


\subsubsection{LRV\ estimation\label{lrv}}

In order to apply all the results above, we require an estimate of $%
\mathscr{s}^{2}$ defined in (\ref{sig}). This can be constructed using the
data in the training sample as%
\begin{equation}
\widehat{\mathscr{s}}_{m}^{2}=\frac{1}{m}\sum_{i=2}^{m}\left( \frac{\left(
y_{i}-\widehat{\beta }_{m}y_{i-1}\right) y_{i-1}}{1+y_{i-1}^{2}}\right) ^{2}.
\label{sig-hat}
\end{equation}

\begin{corollary}
\label{sig-est}Under the conditions of Theorem \ref{th-1}, it holds that $%
\widehat{\mathscr{s}}_{m}^{2}\overset{\mathcal{P}}{\rightarrow }\mathscr{s}%
^{2}$.
\end{corollary}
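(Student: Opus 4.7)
The plan is to substitute the model equation (\ref{rca}) into the residual. Under $H_{0}$, for $i\geq 2$ we have
\begin{equation*}
y_{i}-\widehat{\beta}_{m}y_{i-1}=(\beta_{0}-\widehat{\beta}_{m})y_{i-1}+\epsilon_{i,1}y_{i-1}+\epsilon_{i,2},
\end{equation*}
so expanding the square in (\ref{sig-hat}) yields six terms (three pure squares and three cross products). I would first argue that every term containing the factor $(\widehat{\beta}_{m}-\beta_{0})$ is $\op{1}$: the asymptotic normality of $\widehat{\beta}_{m}$ under both the stationary and nonstationary regime (by the WLS results of \citet{aue2011}) gives $\widehat{\beta}_{m}-\beta_{0}=\Op{m^{-1/2}}$, while the averages of the weights $y_{i-1}^{2}/(1+y_{i-1}^{2})^{2}$ and $|y_{i-1}|/(1+y_{i-1}^{2})^{2}$ are bounded by $1$, so the pure $(\widehat{\beta}_{m}-\beta_{0})^{2}$ term is $\Op{m^{-1}}$ and the two cross products are $\op{1}$ by Cauchy--Schwarz combined with the $L^{2}$-boundedness of the relevant martingale averages.

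The remaining leading terms are
\begin{equation*}
A_{m}=\frac{1}{m}\sum_{i=2}^{m}\left(\frac{y_{i-1}^{2}}{1+y_{i-1}^{2}}\right)^{2}\epsilon_{i,1}^{2},\quad B_{m}=\frac{1}{m}\sum_{i=2}^{m}\left(\frac{y_{i-1}}{1+y_{i-1}^{2}}\right)^{2}\epsilon_{i,2}^{2},\quad C_{m}=\frac{2}{m}\sum_{i=2}^{m}\frac{y_{i-1}^{3}}{(1+y_{i-1}^{2})^{2}}\epsilon_{i,1}\epsilon_{i,2}.
\end{equation*}
In the stationary case $E\log|\beta_{0}+\epsilon_{0,1}|<0$, I would use that $y_{i}$ converges exponentially fast to the strictly stationary, ergodic solution $\overline{y}_{i}$, so that the averages above can be replaced by the same averages built on $\{\overline{y}_{i}\}$ up to an $\op{1}$ error. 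Since $(\epsilon_{i,1},\epsilon_{i,2})$ is independent of $\mathcal{F}_{i-1}$ and $E\epsilon_{i,1}\epsilon_{i,2}=0$, Birkhoff's ergodic theorem (applied to the bounded weights together with the $L^{\nu/2}$ moments from Assumption \ref{as-1}(iv)) yields $A_{m}\overset{\mathcal{P}}{\to}a_{1}\sigma_{1}^{2}$, $B_{m}\overset{\mathcal{P}}{\to}a_{2}\sigma_{2}^{2}$, and $C_{m}\overset{\mathcal{P}}{\to}0$ as a martingale-difference average.

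In the nonstationary case $E\log|\beta_{0}+\epsilon_{0,1}|\geq 0$, the key observation is that $|y_{i}|\overset{a.s.}{\to}\infty$ (using Assumption \ref{as-4}, or Lemma A.4 of \citet{HT2016} for the boundary case, together with the anti-concentration bound (\ref{ht2016-x}) supplied by Assumption \ref{as-3}(i)). Consequently $y_{i-1}^{2}/(1+y_{i-1}^{2})\overset{a.s.}{\to}1$ and $y_{i-1}/(1+y_{i-1}^{2})\overset{a.s.}{\to}0$. I would then bound $|A_{m}-m^{-1}\sum_{i=2}^{m}\epsilon_{i,1}^{2}|$ by a telescoping argument combined with $\max_{i\leq m}(1+y_{i-1}^{2})^{-1}=\op{1}$ (which follows from the rate at which $|y_{i}|$ diverges and the moment condition), and apply the classical LLN to $m^{-1}\sum\epsilon_{i,1}^{2}\to\sigma_{1}^{2}$. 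The terms $B_{m}$ and $C_{m}$ are handled analogously: both weight sequences tend to zero a.s., and the averages are controlled through the independence of $\epsilon_{i,j}$ from the past and Assumption \ref{as-3}(ii), giving $B_{m}=\op{1}$ and $C_{m}=\op{1}$.

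The step I expect to be the main obstacle is the nonstationary case: making rigorous the informal substitution $y_{i-1}^{2}/(1+y_{i-1}^{2})\mapsto 1$ inside a Cesàro average requires a quantitative lower bound on $|y_{i}|$ uniformly over the training segment, rather than merely an a.s.\ divergence. I would obtain this by combining the geometric lower bound $\log|y_{i}|\geq \sum_{j\leq i}\log|\beta_{0}+\epsilon_{j,1}|+\op{i}$ (valid under Assumption \ref{as-4}) with a union bound on the small set $\{|y_{i}|<\delta\}$ controlled via the anti-concentration bound for $\epsilon_{i,2}$, mirroring the technique used to establish Lemma A.4 in \citet{HT2016}; the stationary case, by comparison, reduces cleanly to ergodic theory.
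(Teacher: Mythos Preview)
Your approach is correct and, in fact, more informative than the paper's own proof, which simply cites the result as already established in \citet{HT2016} (proof of Theorem 3.4) and \citet{horvath2022changepoint} (Corollary 3.1). Your decomposition into $A_m$, $B_m$, $C_m$ and the separate treatment of the stationary and nonstationary regimes is essentially the argument those references contain, so there is no substantive divergence.

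Two small corrections are worth noting. First, in the boundary case $E\log|\beta_0+\epsilon_{0,1}|=0$ you do not have $|y_i|\overset{a.s.}{\to}\infty$; the paper records only $|y_i|\overset{\mathcal{P}}{\to}\infty$, and the workhorse is the anti-concentration bound $P\{|y_i|\leq i^{\overline{\kappa}}\}\leq c\,i^{-\overline{\kappa}}$ (equation (\ref{lammeht2016}) in this paper, from Lemma A.4 of \citealp{HT2016}). This is what you need to make the Ces\`aro substitution rigorous, and it suffices: summing $E(1+y_{i-1}^2)^{-1}$ over $i\leq m$ gives $O(m^{1-\overline{\kappa}})$, which handles $B_m$ and the difference $A_m-m^{-1}\sum\epsilon_{i,1}^2$ directly via first moments, without needing the geometric lower bound on $\log|y_i|$ that you propose. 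Second, your reference to (\ref{ht2016-x}) as the anti-concentration bound is a mislabel; that equation is the Bernoulli-shift approximation in the covariate model, and the relevant bound is (\ref{lammeht2016}).
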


We note that $\widehat{\mathscr{s}}_{m}^{2}$ does not depend on $\psi $, so
the result in Corollary \ref{sig-est} can be applied for all $0\leq \psi
\leq 1/2$, and for both open-ended and closed-ended procedures.

\subsection{Asymptotics under the alternative\label{alternative-asy}}

We consider the following alternative, where the deterministic part of the
autoregressive coefficient of (\ref{rca}) undergoes a change 
\begin{equation}
y_{i}=\left\{ 
\begin{array}{ll}
\left( \beta _{0}+\epsilon _{i,1}\right) y_{i-1}+\epsilon _{i,2} & 1\leq
i\leq m+k^{\ast }, \\ 
\left( \beta _{A}+\epsilon _{i,1}\right) y_{i-1}+\epsilon _{i,2} & 
i>m+k^{\ast },%
\end{array}%
\right.  \label{alternative}
\end{equation}%
where $\beta _{0}\neq \beta _{A}$ and $k^{\ast }$ is the time of change. In (%
\ref{alternative}), we do not put any constraints on the values of $\beta
_{0}$ and $\beta _{A}$. Hence, under (\ref{alternative}), the observations
could transition from a stationary regime to another stationary regime (if
both $E\log \left\vert \beta _{0}+\epsilon _{0,1}\right\vert <0$ and $E\log
\left\vert \beta _{A}+\epsilon _{0,1}\right\vert <0$); from a nonstationary
regime to another nonstationary regime (if both $E\log \left\vert \beta
_{0}+\epsilon _{0,1}\right\vert \geq 0$ and $E\log \left\vert \beta
_{A}+\epsilon _{0,1}\right\vert \geq 0$); or either regime can be stationary
and the other one nonstationary (which arises if $E\log \left\vert \beta
_{0}+\epsilon _{0,1}\right\vert <0$ and $E\log \left\vert \beta
_{A}+\epsilon _{0,1}\right\vert \geq 0$, implying a switch from stationarity
to nonstationarity, or if $E\log \left\vert \beta _{0}+\epsilon
_{0,1}\right\vert \geq 0$ and $E\log \left\vert \beta _{A}+\epsilon
_{0,1}\right\vert <0$, implying a switch from nonstationarity to
stationarity). We entertain the possibility that the amplitude of change may
depend on the (training) sample size $m$, thus defining%
\begin{equation}
\Delta _{m}=\beta _{A}-\beta _{0}.  \label{delta}
\end{equation}

\begin{theorem}
\label{power}We assume that (\ref{alternative}) holds with $k^{\ast
}=O\left( m\right) $. Under the conditions of Theorem \ref{th-1}, if it
holds that%
\begin{equation}
\lim_{m\rightarrow \infty }m^{1/2}\left\vert \Delta _{m}\right\vert =\infty ,
\label{power-weighted}
\end{equation}%
then it holds that $\lim_{m\rightarrow \infty }P\left\{ \tau _{m,\psi
}<\infty |H_{A}\right\} =1$, for all $\psi <1/2$. Under the conditions of
Theorem \ref{de}, if it holds that%
\begin{equation}
\lim_{m\rightarrow \infty }\frac{m^{1/2}\left\vert \Delta _{m}\right\vert }{%
\sqrt{\log \log m}}=\infty ,  \label{power-de}
\end{equation}%
then it holds that $\lim_{m\rightarrow \infty }P\left\{ \tau _{m,0.5}^{\ast
}<\infty |H_{A}\right\} =1$. The same results hold under the conditions of
Theorem \ref{th-2}; and under the conditions of Theorem \ref{monitor-short}
upon replacing $m$ with $m^{\ast }$ in (\ref{power-weighted}) and (\ref%
{power-de}). The same results also hold, for $\psi <1/2$, using the
procedures based on the Page-CUSUM detector $Z_{m}^{\dagger }\left( k\right) 
$ defined in (\ref{kirch-1})-(\ref{kirch-3}).
\end{theorem}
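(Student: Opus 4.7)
The strategy is to decompose the detector $Z_m(k)$ under (\ref{alternative}) into a deterministic drift produced by the post-change coefficient shift plus a mean-zero stochastic remainder that behaves as under $H_0$, and then to evaluate both at an index $k = k^\ast + \lfloor c m\rfloor$, for some fixed $c>0$, where the drift dominates the boundary function.

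Carrying this out, for $i>m+k^\ast$ I would write
\begin{equation*}
y_i-\widehat{\beta}_m y_{i-1} = \Delta_m\, y_{i-1} + (\beta_0-\widehat{\beta}_m) y_{i-1} + \epsilon_{i,1} y_{i-1} + \epsilon_{i,2},
\end{equation*}
so that after dividing by $1+y_{i-1}^2$ and summing from $m+k^\ast+1$ to $m+k$, $Z_m(k)$ becomes, up to the pre-change partial sum (itself $O_P(m^{1/2})$ by the null analysis in Theorems \ref{th-1}--\ref{monitor-short}), the algebraic sum of: a drift $\Delta_m \sum_{i=m+k^\ast+1}^{m+k} y_{i-1}^2/(1+y_{i-1}^2)$; a WLS-estimation contribution of order $O_P(m^{-1/2})\cdot(k-k^\ast)=O_P(m^{1/2})$ at $k=O(m)$, since $\widehat{\beta}_m-\beta_0=O_P(m^{-1/2})$ and the summands are bounded by one; and an innovation martingale whose CUSUM is $O_P(m^{1/2})$ by the same arguments. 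The drift factor behaves like $a_0(k-k^\ast)$ with $a_0=E[\overline{y}_0^2/(1+\overline{y}_0^2)]>0$ in a post-change stationary regime (by the ergodic theorem for $\{\overline{y}_i\}$) and like $(k-k^\ast)$ in a post-change nonstationary regime, since Assumption \ref{as-4} yields $|y_i|\as\infty$ and hence $y_{i-1}^2/(1+y_{i-1}^2)\to 1$. Evaluating at $k=k^\ast+\lfloor cm\rfloor$, the drift is of order $|\Delta_m|\,m$ while the boundary $g_{m,\psi}(k)$ in (\ref{boundary}) is of order $m^{1/2}$, so condition (\ref{power-weighted}) forces $Z_m(k)\geq g_{m,\psi}(k)$ with probability tending to one, yielding $\tau_{m,\psi}<\infty$. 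The closed-ended long-horizon and short-horizon cases follow from the same decomposition with $g^\ast_{m,\psi}$ and $\overline{g}_{m,\psi}$ (and $m^\ast$ in place of $m$ in the short-horizon case).

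For the standardised case $\psi=1/2$ in Theorems \ref{de} and \ref{thgombay}, the Darling--Erd\H{o}s critical value grows like $\sqrt{2\log\log m^\ast}$ through (\ref{asy-crv-de}), so the boundary at $k=O(m)$ is of order $(m\log\log m)^{1/2}$, whence the sharper rate (\ref{power-de}). For the Page-CUSUM procedures (\ref{kirch-1})--(\ref{kirch-3}), the choice $\ell=k^\ast+1$ in (\ref{page}) yields the lower bound
\begin{equation*}
Z_m^\dagger(k) \geq \left|\sum_{i=m+k^\ast+1}^{m+k} \frac{(y_i-\widehat{\beta}_m y_{i-1}) y_{i-1}}{1+y_{i-1}^2}\right|,
\end{equation*}
so the same drift-plus-noise analysis (and the same rate) applies for $\psi<1/2$. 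The main technical obstacle is controlling the drift factor $\sum y_{i-1}^2/(1+y_{i-1}^2)$ in regime-switching alternatives: when a stationary training segment is followed by an explosive post-change segment, one needs Assumption \ref{as-4} to propagate $|y_i|\as\infty$ from the possibly atypical initial value $y_{m+k^\ast}$, while in the reverse direction one has to argue that the ergodic theorem for the new stationary regime kicks in sufficiently fast in spite of an explosive pre-change segment, which relies on the exponential contraction inherent in $E\log|\beta_A+\epsilon_{0,1}|<0$.
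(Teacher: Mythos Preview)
Your proposal is correct and follows essentially the same approach as the paper: the paper writes out exactly the decomposition you describe (as equation (\ref{z-delay})) and then dispatches the rest in one line, saying ``the theorem follows from standard arguments --- see e.g.\ Theorem 4.1 in \citet{horvath2022changepoint}.'' Your write-up is in fact more detailed than the paper's own proof, spelling out the drift-versus-noise comparison at $k=k^\ast+\lfloor cm\rfloor$, the $\sqrt{\log\log m}$ inflation in the $\psi=1/2$ case, the $\ell=k^\ast+1$ lower bound for the Page-CUSUM, and the regime-switching subtleties in the post-change drift factor --- all of which the paper leaves implicit.
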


Condition (\ref{power-weighted}) entails that $\left\vert \Delta
_{m}\right\vert $ can drift to zero, but not too fast (i.e., power is
ensured as long as breaks are \textquotedblleft not too
small\textquotedblright ), for all $\psi <1/2$. When using $\psi =1/2$,
condition (\ref{power-de}) suggests that there is a (minor) loss of power. A
straightforward extension of the proofs in \citet{aue2004delay} and %
\citet{aue2008monitoring} would yield that the detection delay is
proportional to $m^{\left( 1-2\psi \right) /\left( 2\left( 1-\psi \right)
\right) }$ when $\psi <1/2$, and to $\log \log m$ when $\psi =1/2$, thus
indicating that heavier weights ensure faster detection.

\section{Monitoring the RCA\ model with covariates{\label{covariates}}}

In a recent contribution, \citet{astill2023using} argue in favour of adding
covariates to the basic AR specification, showing theoretically and
empirically that this results in better (and quicker) detection of bubble
episodes. Hence, we modify (\ref{rca}) as%
\begin{equation}
y_{i}=\left( \beta _{i}+\epsilon _{i,1}\right) y_{i-1}+\mathbf{\lambda }%
_{0}^{\intercal }\mathbf{x}_{i}+\epsilon _{i,2},  \label{rca-x}
\end{equation}%
where $y_{0}$ is an initial value and $\mathbf{x}_{i}\in \mathbb{R}^{p}$.
Equation (\ref{rca-x}) is, essentially, a dynamic model with exogenous
covariates, with $\mathbf{\lambda }_{0}$ constant over time.

In order to monitor for the stability of the autoregressive coefficient, we
propose again a detector based on the WLS loss function%
\begin{equation}
\mathcal{G}_{m}\left( \beta ,\mathbf{\lambda }\right) =\sum_{i=2}^{m}\frac{%
\left( y_{i}-\beta y_{i-1}-\mathbf{\lambda }^{\intercal }\mathbf{x}%
_{i}\right) ^{2}}{1+y_{i-1}^{2}}.  \label{loss-x}
\end{equation}%
The estimators of $\beta _{0}$ and $\mathbf{\lambda }_{0}$ are defined as $%
\left( \widehat{\beta }_{m},\widehat{\mathbf{\lambda }}_{m}\right) =\arg
\min_{\beta ,\mathbf{\lambda }}\mathcal{G}_{m}\left( \beta ,\mathbf{\lambda }%
\right) $, and satisfy%
\begin{equation}
\frac{\partial }{\partial \beta }\mathcal{G}_{m}\left( \widehat{\beta }_{m},%
\widehat{\mathbf{\lambda }}_{m}\right) =-2\sum_{i=2}^{m}\frac{\left( y_{i}-%
\widehat{\beta }_{m}y_{i-1}-\widehat{\mathbf{\lambda }}_{m}^{\intercal }%
\mathbf{x}_{i}\right) y_{i-1}}{1+y_{i-1}^{2}}=0,  \label{foc-x}
\end{equation}%
which suggests the following detector%
\begin{equation}
Z_{m}^{X}\left( k\right) =\left\vert \sum_{i=m+1}^{m+k}\frac{\left( y_{i}-%
\widehat{\beta }_{m}y_{i-1}-\widehat{\mathbf{\lambda }}_{m}^{\intercal }%
\mathbf{x}_{i}\right) y_{i-1}}{1+y_{i-1}^{2}}\right\vert .
\label{detector-x}
\end{equation}

The following assumptions complement Assumptions \ref{as-1}-\ref{as-4}.

\begin{assumption}
\label{as-x-1}\textit{(i)} $E\left\Vert \mathbf{x}_{i}\right\Vert ^{\kappa
_{1}}<\infty $ for some $\kappa _{1}>4$; \textit{(ii)} $\mathbf{x}_{i}=%
\mathbf{g}\left( \eta _{i},\eta _{i-1},...\right) $, where $\mathbf{g}:%
\mathcal{S}^{\infty }\rightarrow \mathbb{R}^{p}$ is a non-random, measurable
function and $\left\{ \eta _{i},-\infty <i<\infty \right\} $ are \textit{%
i.i.d.} random variables with values in the measurable space $\mathcal{S}$
and $\left( E\left\Vert \mathbf{x}_{i}-\mathbf{x}_{i,j}^{\ast }\right\Vert
^{\kappa _{1}}\right) ^{1/\kappa _{1}}\leq c_{0}j^{-\kappa _{2}}$, with some 
$c_{0}>0$ and $\kappa _{2}>2$, where $\mathbf{x}_{i,j}^{\ast }=\mathbf{g}%
\left( \eta _{i},...,\eta _{i-j+1},\eta _{i-j,i,j}^{\ast },\eta
_{i-j-1,i,j}^{\ast }...\right) $ where $\left\{ \eta _{\ell ,i,j}^{\ast
},-\infty <\ell ,i,j<\infty \right\} $ are \textit{i.i.d.} random copies of $%
\eta _{0}$, independent of $\left\{ \eta _{i},-\infty <i<\infty \right\} $.
\end{assumption}

\begin{assumption}
\label{as-x-2}$\left\{ \left( \epsilon _{i,1},\epsilon _{i,2}\right)
,-\infty <i<\infty \right\} $ and $\left\{ \eta _{i},-\infty <i<\infty
\right\} $ are independent.
\end{assumption}

\begin{assumption}
\label{as-x-3}If $E\log \left\vert \beta _{0}+\epsilon _{0,1}\right\vert >0$%
, it holds that $P\left\{ \left( \beta _{0}+\epsilon _{0,1}\right) y_{0}+%
\mathbf{\lambda }^{\intercal }\mathbf{x}_{0}+\epsilon _{0,2}=x\right\} =0$
for all $-\infty<x<\infty$.
\end{assumption}

Assumption \ref{as-x-1} states that the regressors $\mathbf{x}_{i}$ form a
decomposable Bernoulli shift - i.e., a weakly dependent, stationary process
which can be well-approximated by an $m$-dependent sequence. The concepts of
Bernoulli shift and decomposability appeared first in %
\citet{ibragimov1962some}; see also \citet{wu2005} and \citet{berkeshormann}%
. Bernoulli shifts have proven a convenient way to model dependent time
series, mainly due to their generality and to the fact that they are much
easier to verify than e.g. mixing conditions: \citet{aue09} and %
\citet{linliu}, \textit{inter alia}, provide numerous examples of such DGPs,
which include ARMA models, ARCH/GARCH sequences, and other nonlinear time
series models (such as e.g. random coefficient autoregressive models and
threshold models). Indeed, under stationarity, $y_{i}$ itself can be
approximated by a decomposable Bernoulli shift (%
\citealp{horvath2022changepoint}). Assumption \ref{as-x-2} states that the
exogenous variables $\mathbf{x}_{i}$ are independent of the innovations $%
\epsilon _{i,1}$ and $\epsilon _{i,2}$, and Assumption \ref{as-x-3} is
similar to Assumption \ref{as-4}, ensuring $\left\vert y_{i}\right\vert 
\overset{a.s.}{\rightarrow }\infty $.

Finally, we note that, in the presence of covariates, we need to exclude the
boundary case $E\log \left\vert \beta _{0}+\epsilon _{0,1}\right\vert =0$;
this is because we would need an exact (and large enough) rate of divergence
for $\left\vert y_{i}\right\vert $ as $i\rightarrow \infty $, but this
result is not available in the case $E\log |\beta _{0}+\epsilon _{0,1}|=0$
(see also Theorem 4 in \citealp{HT2019}, and the discussion thereafter). 
\newline
The boundary function is defined as%
\begin{equation}
g_{m,\psi }^{\left( x\right) }\left( k\right) =c_{\alpha ,\psi }^{\left(
x\right) }\mathscr{s}_{x}^{2}m^{1/2}\left( 1+\frac{k}{\mathscr{s}_{x,d}^{2}m}%
\right) \left( \frac{k}{\mathscr{s}_{x,d}^{2}m+k}\right) ^{\psi },
\label{boundary-x}
\end{equation}%
where $c_{\alpha ,\psi }^{\left( x\right) }$ is a critical value, and 
\begin{equation}
\mathscr{s}_{x}^{2}=\left\{ 
\begin{array}{l}
\mathscr{s}_{x,2}^{2}/\mathscr{s}_{x,1},\text{ if }-\infty \leq E\log
\left\vert \beta _{0}+\epsilon _{0,1}\right\vert <0, \\ 
\sigma _{1},\text{ if }E\log \left\vert \beta _{0}+\epsilon
_{0,1}\right\vert >0,%
\end{array}%
\right.  \label{var-x}
\end{equation}%
\begin{equation*}
\mathscr{s}_{x,d}^{2}=\left\{ 
\begin{array}{l}
\mathscr{s}_{x,2}^{2}/\mathscr{s}_{x,1}^{2},\text{ if }-\infty \leq E\log
\left\vert \beta _{0}+\epsilon _{0,1}\right\vert <0, \\ 
1,\text{ if }E\log \left\vert \beta _{0}+\epsilon _{0,1}\right\vert >0,%
\end{array}%
\right.
\end{equation*}%
with 
\begin{equation}
\mathscr{s}_{x,1}^{2}=\mathbf{a}^{\intercal }\mathbf{QCQa,}\text{ \ \ and \
\ }\mathscr{s}_{x,2}^{2}=\sigma _{1}^{2}E\left( \frac{\overline{y}_{0}^{2}}{%
1+\overline{y}_{0}^{2}}\right) ^{2}+\sigma _{2}^{2}E\left( \frac{\overline{y}%
_{0}}{1+\overline{y}_{0}^{2}}\right) ^{2},  \label{sig-x-2}
\end{equation}%
where $\mathbf{a}$, $\mathbf{Q}$ and $\mathbf{C}$ are defined in (\ref{q})-(%
\ref{a}) in the Supplement. The stopping rule is%
\begin{equation}
\tau _{m,\psi }^{\left( x\right) }=%
\begin{cases}
\inf \{k\geq 1:Z_{m}^{X}\left( k\right) \geq g_{m,\psi }^{\left( x\right)
}\left( k\right) \}, \\ 
\infty ,\ \text{if}\ Z_{m}^{X}\left( k\right) <g_{m,\psi }^{\left( x\right)
}\left( k\right) \text{ for all }1\leq k<\infty ,%
\end{cases}
\label{decision-x-1}
\end{equation}%
and%
\begin{equation}
\tau _{m,\psi }^{\ast \left( x\right) }=%
\begin{cases}
\inf \{k\geq 1:Z_{m}^{X}\left( k\right) \geq g_{m,\psi }^{\left( x\right)
}\left( k\right) \}, \\ 
m^{\ast },\ \text{if}\ Z_{m}^{X}\left( k\right) <g_{m,\psi }^{\left(
x\right) }\left( k\right) \text{ for all }1\leq k\leq m^{\ast },%
\end{cases}
\label{decision-x-2}
\end{equation}%
for an open-ended and a closed-ended monitoring procedure respectively.

\begin{theorem}
\label{th-x-1}We assume that Assumptions \ref{as-1}, \ref{as-x-1}, and \ref%
{as-x-2} are satisfied, and either (i) $E\log \left\vert \beta _{0}+\epsilon
_{0,1}\right\vert <0$, or (ii) $E\log \left\vert \beta _{0}+\epsilon
_{0,1}\right\vert >0$ and Assumption \ref{as-x-3} hold. Then, under $H_{0}$,
the results of Theorems \ref{th-1}, \ref{th-2}\ and \ref{monitor-short} hold
for $\tau _{m,\psi }^{\left( x\right) }$, $\tau _{m,\psi }^{\ast \left(
x\right) }$ and $\overline{\tau }_{m,\psi }^{\left( x\right) }$\
respectively.
\end{theorem}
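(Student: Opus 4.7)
The overall strategy is to reduce the analysis of $Z_m^X(k)$ to the analysis of the covariate-free detector $Z_m(k)$ via an algebraic decomposition, and then invoke Theorems \ref{th-1}--\ref{monitor-short} after accounting for the extra fluctuations induced by the joint estimation of $(\beta_0,\mathbf{\lambda}_0)$. Writing $\delta_m=\widehat\beta_m-\beta_0$ and $\mathbf{\Delta}_m=\widehat{\mathbf{\lambda}}_m-\mathbf{\lambda}_0$, the residual in $Z_m^X(k)$ equals $\epsilon_{i,1}y_{i-1}+\epsilon_{i,2}-\delta_m y_{i-1}-\mathbf{\Delta}_m^\intercal\mathbf{x}_i$, so that
\begin{equation*}
Z_m^X(k)=\Big|S_m(k)-\delta_m\,R_m(k)-\mathbf{\Delta}_m^\intercal\mathbf{T}_m(k)\Big|,
\end{equation*}
where $S_m(k)=\sum_{i=m+1}^{m+k}(\epsilon_{i,1}y_{i-1}+\epsilon_{i,2})y_{i-1}/(1+y_{i-1}^2)$ is the leading martingale-type partial sum, $R_m(k)=\sum_{i=m+1}^{m+k} y_{i-1}^2/(1+y_{i-1}^2)$ and $\mathbf{T}_m(k)=\sum_{i=m+1}^{m+k}\mathbf{x}_i y_{i-1}/(1+y_{i-1}^2)$.

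The plan is then to carry out three reductions. First, using the joint first-order conditions (\ref{foc-x}) together with Assumptions \ref{as-x-1}--\ref{as-x-2} and the Bernoulli-shift structure of $\mathbf{x}_i$, I would show that $(\delta_m,\mathbf{\Delta}_m)=O_P(m^{-1/2})$, by arguing that the sample Gram matrix associated with $\mathcal{G}_m(\beta,\mathbf{\lambda})$ converges in probability to a positive definite limit (which under stationarity is built from $\mathscr{s}_{x,1}$, $\mathbf{Q}$ and $\mathbf{C}$ of (\ref{q})--(\ref{a}), and under nonstationarity is dominated by the $y_{i-1}^2/(1+y_{i-1}^2)$ block since $y_{i-1}^2/(1+y_{i-1}^2)\to 1$ a.s.). Second, in the stationary case a law of large numbers applied to the ergodic sequences gives $R_m(k)=k\,\mathscr{s}_{x,1}(1+o_P(1))$ uniformly in $k$ on the relevant scale, while in the nonstationary case $R_m(k)=k(1+o_P(1))$ since $y_i\to\infty$ exponentially; this is precisely the reason for the $\mathscr{s}_{x,d}^2$ rescaling of $k$ in the boundary $g_{m,\psi}^{(x)}$, which effectively translates the variance-inflated detector into the canonical form appearing in Theorems \ref{th-1}--\ref{monitor-short}. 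Third, I would apply the same strong approximation argument used for $Z_m(k)$ to the leading term $S_m(k)$, deducing that $S_m(k)/\mathscr{s}_{x,2}$ is approximated by a Wiener process with the same boundary-crossing behaviour as in the covariate-free setting.

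Having isolated these pieces, the final step is to combine them, using Corollary \ref{sig-est}'s analogue for the new normalisers, and to verify that the cross-terms $\delta_m R_m(k)$ and $\mathbf{\Delta}_m^\intercal\mathbf{T}_m(k)$ remain negligible relative to $g_{m,\psi}^{(x)}(k)$. Because both estimation errors are $O_P(m^{-1/2})$ while $R_m(k)$ and $\|\mathbf{T}_m(k)\|$ grow at most linearly in $k$, these cross-terms are of order $O_P(m^{-1/2}k)$, which has to be compared with $g_{m,\psi}^{(x)}(k)\asymp m^{1/2}(1+k/m)(k/(m+k))^\psi$; a uniform bound of Hájek--Rényi type, combined with the convergence of the inner-product $\sum x_i y_{i-1}/(1+y_{i-1}^2)$ to its stationary or nonstationary limit, is what controls these terms uniformly in $k$.

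The main obstacle, in my view, is not the decomposition itself but the uniform control of the covariate partial sum $\mathbf{T}_m(k)$. Under stationarity, $\mathbf{x}_i y_{i-1}/(1+y_{i-1}^2)$ is a weakly dependent Bernoulli-shift-type process whose treatment relies on the approximation scheme in Assumption \ref{as-x-1} and on delicate moment bounds; under nonstationarity, the key technical difficulty is that $y_i$ grows exponentially while $\mathbf{x}_i$ is $L^{\kappa_1}$-bounded, so the ratio $\mathbf{x}_i y_{i-1}/(1+y_{i-1}^2)$ behaves like $\mathbf{x}_i/y_{i-1}$ and its partial sums must be shown to be negligible uniformly over the post-break window. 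This is also why the boundary case $E\log|\beta_0+\epsilon_{0,1}|=0$ had to be excluded: the rate of growth of $|y_i|$ is then not sharp enough to control $\mathbf{T}_m(k)$ against the boundary, mirroring the obstruction already mentioned after Assumption \ref{as-x-3}.
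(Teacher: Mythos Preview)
Your decomposition and general plan are close to the paper's, but there is a genuine gap in the ``final step''. You propose to show that the cross-terms $\delta_m R_m(k)$ and $\mathbf{\Delta}_m^\intercal\mathbf{T}_m(k)$ are \emph{negligible} relative to $g_{m,\psi}^{(x)}(k)$, on the grounds that they are $O_P(m^{-1/2}k)$. But compare this to the boundary: for $k$ of order $m$ (which is precisely the range that determines the limiting distribution in the open-ended and long closed-ended cases), the cross-term is $O_P(m^{1/2})$ while $g_{m,\psi}^{(x)}(k)\asymp m^{1/2}$ as well. The estimation-error contribution is therefore of the \emph{same} order as the boundary and is not a nuisance term. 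It is, in fact, the source of the $W_1(1)$ component in the limit: the limiting process is of the form $|-\mathscr{s}_{x,1}\,t\,W_1(1)+\mathscr{s}_{x,2}\,W_2(t)|$, and only the combination of both pieces has the covariance structure that reduces (via the identity (\ref{revesz})) to a single Wiener process on $[0,1]$. If you drop the cross-term, you would obtain $\sup_t |W_2(t)|/t^\psi$ rather than $\sup_u |W(u)|/u^\psi$ after the change of variables, which is the wrong distribution.

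The paper's route is to keep the full vector product $(\mathbf{b}_0-\widehat{\mathbf{b}}_m)^\intercal\sum_{i=m+1}^{m+k}(y_{i-1},\mathbf{x}_i^\intercal)^\intercal y_{i-1}/(1+y_{i-1}^2)$ and to strongly approximate it, \emph{jointly} with $S_m(k)$, by two independent Wiener processes: one for the training-sample score $\mathbf{a}^\intercal\mathbf{Q}^{-1}\sum_{i=2}^m\mathbf{\eta}_i$ (variance $\mathscr{s}_{x,1}^2$) and one for the monitoring partial sum (variance $\mathscr{s}_{x,2}^2$). This is exactly why the boundary carries the extra factor $\mathscr{s}_{x,d}^2=\mathscr{s}_{x,2}^2/\mathscr{s}_{x,1}^2$: it is the ratio needed so that the two-scale process $-\mathscr{s}_{x,1}tW_1(1)+\mathscr{s}_{x,2}W_2(t)$ can be rewritten, after rescaling time, as a multiple of $W_2(t)-tW_1(1)$. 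Your interpretation of $\mathscr{s}_{x,d}^2$ as a law-of-large-numbers constant for $R_m(k)/k$ is therefore also off; $\mathscr{s}_{x,1}^2=\mathbf{a}^\intercal\mathbf{QCQa}$ is a variance, not a mean. Note, incidentally, that your negligibility argument \emph{would} go through in the short-horizon case $m^\ast=o(m)$ of Theorem~\ref{monitor-short}, since then $m^{-1/2}k\le m^{-1/2}m^\ast=o((m^\ast)^{1/2})$; but it fails for Theorems~\ref{th-1} and~\ref{th-2}.
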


\begin{theorem}
\label{th-x-2}We assume that the conditions of Theorem \ref{th-x-1} are
satisfied. Then, for $\psi =1/2$, under $H_{0}$, the same results as in
Theorem \ref{de} hold.
\end{theorem}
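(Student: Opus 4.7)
The plan is to follow the same strategy as in the proof of Theorem \ref{de} for the covariate-free case, reducing the problem to an application of the classical Darling--Erd\H{o}s theorem for the standardised Wiener process, while piggy-backing on the invariance principle already established for Theorem \ref{th-x-1} in order to handle the additional estimation error introduced by $\widehat{\mathbf{\lambda}}_m$.

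First, I would decompose the residual as
\begin{equation*}
y_i - \widehat{\beta}_m y_{i-1} - \widehat{\mathbf{\lambda}}_m^{\intercal} \mathbf{x}_i = \epsilon_{i,1} y_{i-1} + \epsilon_{i,2} - (\widehat{\beta}_m - \beta_0) y_{i-1} - (\widehat{\mathbf{\lambda}}_m - \mathbf{\lambda}_0)^{\intercal} \mathbf{x}_i,
\end{equation*}
so that $Z_m^X(k) = |S_m(k) - (\widehat{\beta}_m - \beta_0) T_{1,m}(k) - (\widehat{\mathbf{\lambda}}_m - \mathbf{\lambda}_0)^{\intercal} \mathbf{T}_{2,m}(k)|$, where $S_m(k) = \sum_{i=m+1}^{m+k}(\epsilon_{i,1} y_{i-1} + \epsilon_{i,2}) y_{i-1}/(1 + y_{i-1}^2)$, $T_{1,m}(k) = \sum_{i=m+1}^{m+k} y_{i-1}^2/(1+y_{i-1}^2)$ and $\mathbf{T}_{2,m}(k) = \sum_{i=m+1}^{m+k} \mathbf{x}_i y_{i-1}/(1 + y_{i-1}^2)$. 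By the invariance principle underpinning Theorem \ref{th-x-1}, the noise component $S_m(k)$ admits an almost sure coupling with $\mathscr{s}_{x,2} W(k)$ for a standard Wiener process $W$, uniformly in $1 \leq k \leq m^*$. The constants $\mathscr{s}_x^2$ and $\mathscr{s}_{x,d}^2$ in (\ref{var-x})--(\ref{sig-x-2}) are precisely calibrated so that, after the time change $u = k/(\mathscr{s}_{x,d}^2 m)$, the contributions of $\widehat{\beta}_m - \beta_0$ and $\widehat{\mathbf{\lambda}}_m - \mathbf{\lambda}_0$ combine with the compensators $T_{1,m}(k)$ and $\mathbf{T}_{2,m}(k)$ to yield
\begin{equation*}
\sup_{1 \leq k \leq m^*} \frac{Z_m^X(k)}{\mathscr{s}_x m^{1/2} (1+k/(\mathscr{s}_{x,d}^2 m))(k/(\mathscr{s}_{x,d}^2 m + k))^{1/2}} = \sup_{0 < u \leq u_m}\frac{|W(u)|}{\sqrt{u}} + o_P(1),
\end{equation*}
with $u_m \to \infty$ in both the long-horizon regime (\ref{horizon}) and the short-horizon regime (\ref{horizon-short}).

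Finally, I would invoke the Darling--Erd\H{o}s theorem for $\sup_{0<u\leq u_m}|W(u)|/\sqrt{u}$ with the de-norming $\gamma(\log m^*)$ and $\delta(\log m^*)$ from (\ref{de-norming}), yielding the Gumbel limit $\exp(-\exp(-x))$; crucially the normalisation depends only on $\log m^*$, so both horizon regimes produce the same extreme-value distribution. The main obstacle lies in the calibration step: in the stationary regime the asymptotic expansion of $(\widehat{\beta}_m - \beta_0, \widehat{\mathbf{\lambda}}_m - \mathbf{\lambda}_0)$ involves the inverse of a weighted $(p+1)\times(p+1)$ Gram-type matrix built from $y_{i-1}$ and $\mathbf{x}_i$, and one must verify -- via the block-inversion applied to the WLS first-order conditions (\ref{foc-x}) -- that the contribution from estimating $\mathbf{\lambda}_0$ produces exactly the scalar factor $\mathscr{s}_{x,2}^2/\mathscr{s}_{x,1}$ appearing in (\ref{var-x}). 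The nonstationary case is simpler because $y_{i-1}^2/(1+y_{i-1}^2) \to 1$ and $y_{i-1}/(1+y_{i-1}^2) \to 0$ almost surely, so the covariate cross-terms are asymptotically negligible and the normalisation collapses to $\sigma_1$; this is also the reason why the boundary case $E\log|\beta_0 + \epsilon_{0,1}| = 0$ must be excluded, since no sharp divergence rate for $|y_i|$ is available in that regime.
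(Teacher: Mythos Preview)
Your strategy is exactly the paper's: lift the strong approximations already established in the proof of Theorem \ref{th-x-1} (Lemmas \ref{beta-x}--\ref{beta-x-4} in place of Lemmas \ref{lemma1}--\ref{lemma3}) and then rerun the Darling--Erd\H{o}s argument of Theorem \ref{de}. The paper's own proof is in fact a single sentence to this effect, and your identification of the block-inversion step producing the factor $\mathscr{s}_{x,2}^{2}/\mathscr{s}_{x,1}$ in the stationary case, together with the collapse to $\sigma_{1}$ in the explosive case, is precisely the content of those lemmas.

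One technical point in your sketch deserves care: the displayed claim
\[
\sup_{1\leq k\leq m^{\ast}}\frac{Z_{m}^{X}(k)}{g_{m,0.5}^{(x)}(k)}=\sup_{0<u\leq u_{m}}\frac{|W(u)|}{\sqrt{u}}+o_{P}(1)
\]
is stronger than what the approximations give directly. For $\psi=1/2$ the analogue of (\ref{str-approx-3}) yields only an $O_{P}(1)$ remainder uniformly over $1\leq k\leq m^{\ast}$, which is \emph{not} sufficient for Darling--Erd\H{o}s (you need $o_{P}((\log\log m^{\ast})^{-1/2})$ after the norming). The resolution, as in the proof of Theorem \ref{de}, is to split the range: on $\{\log m\leq k\leq m/\log m\}$ the error sharpens to $o_{P}(1)$ via Lemma \ref{gombay}, while on the two tail pieces both $Z_{m}^{X}/g$ and its Gaussian counterpart are only $O_{P}(\sqrt{\log\log\log m})$, hence negligible after centering by $\delta(\log m^{\ast})$. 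Provided you carry that range-splitting through, the argument closes exactly as in Theorem \ref{de}.
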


Along the same lines as in Section \ref{kirch-cusum}, it is possible to
construct weighted monitoring schemes based on the detector%
\begin{equation}
Z_{m}^{\dagger \left( X\right) }\left( k\right) =\max_{1\leq \ell \leq
k}\left\vert \sum_{i=m+\ell }^{m+k}\frac{\left( y_{i}-\widehat{\beta }%
_{m}y_{i-1}-\widehat{\mathbf{\lambda }}_{m}^{\intercal }\mathbf{x}%
_{i}\right) y_{i-1}}{1+y_{i-1}^{2}}\right\vert ,\text{ \ \ }k\geq 1.
\label{kirch-x}
\end{equation}

\begin{theorem}
\label{kirch-covar}We assume that Assumptions \ref{as-1}, \ref{as-x-1}, and %
\ref{as-x-2}, and (\ref{horizon}), are satisfied, and either (i) $E\log
\left\vert \beta _{0}+\epsilon _{0,1}\right\vert <0$, or (ii) $E\log
\left\vert \beta _{0}+\epsilon _{0,1}\right\vert >0$ and Assumption \ref%
{as-x-3} hold. Then, under $H_{0}$, the results of Theorem \ref{th-kirch}
hold.
\end{theorem}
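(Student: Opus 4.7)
The plan is to leverage Theorems \ref{th-x-1}--\ref{th-x-2}, which supply the weak approximations needed for the WLS-residual partial-sum process in the presence of covariates, and to extend these from the singly-indexed CUSUM process to the doubly-indexed object underlying the Page--CUSUM detector $Z_m^{\dagger (X)}(k)$. The guiding principle is to reduce to Theorem \ref{th-kirch} by writing the detector as a continuous functional of two asymptotically independent Wiener processes, up to remainders that are uniformly negligible when divided by the boundary function $g_{m,\psi}^{(x)}(k)$.

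First, I would write $Z_m^{\dagger (X)}(k)=\max_{1\leq \ell\leq k}|S_m(k)-S_m(\ell-1)|$ with $S_m(k)=\sum_{i=m+1}^{m+k}(y_i-\widehat\beta_m y_{i-1}-\widehat{\mathbf{\lambda}}_m^{\intercal}\mathbf{x}_i)y_{i-1}/(1+y_{i-1}^2)$, and, under $H_0$, decompose
\[
S_m(k)=M_m(k)-(\widehat\beta_m-\beta_0)A_m(k)-(\widehat{\mathbf{\lambda}}_m-\mathbf{\lambda}_0)^{\intercal}\mathbf{B}_m(k),
\]
with $M_m(k)=\sum_{i=m+1}^{m+k}(\epsilon_{i,1}y_{i-1}+\epsilon_{i,2})y_{i-1}/(1+y_{i-1}^2)$, $A_m(k)=\sum_{i=m+1}^{m+k}y_{i-1}^2/(1+y_{i-1}^2)$ and $\mathbf{B}_m(k)=\sum_{i=m+1}^{m+k}\mathbf{x}_i y_{i-1}/(1+y_{i-1}^2)$. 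Because the training segment and the monitoring segment are disjoint and $(\epsilon_{i,1},\epsilon_{i,2})$ is independent of the $\mathbf{x}_i$ by Assumption \ref{as-x-2}, the estimation noise and the post-$m$ innovations are asymptotically independent, which is what will produce the two independent Wiener processes in the limit.

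Next, I would transport the approximations established in the proofs of Theorems \ref{th-x-1}--\ref{th-x-2} to differences in $k$ and $\ell$. In the stationary case, $M_m(k)=\mathscr{s}_{x,2}W_2(k)+o_P(g_{m,\psi}^{(x)}(k))$ for a Wiener process $W_2$ independent of the training sample, while $A_m(k)=\mathscr{s}_{x,1}^{2}k+O_P(k^{1/2})$ and $\mathbf{B}_m(k)=k\,\mathbf{b}_0+O_P(k^{1/2})$; in the nonstationary case Assumption \ref{as-x-3} gives $y_{i-1}^2/(1+y_{i-1}^2)\to 1$ a.s., so the same expansions hold with $\mathscr{s}_{x,1}^{2}=1$ and the covariate term becomes $o_P(k)$. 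Combining with the joint CLT $m^{1/2}(\widehat\beta_m-\beta_0,\widehat{\mathbf{\lambda}}_m-\mathbf{\lambda}_0)$ obtained in the proof of Theorem \ref{th-x-1}, and noting that the contribution of $(\widehat{\mathbf{\lambda}}_m-\mathbf{\lambda}_0)^{\intercal}\mathbf{B}_m$ is absorbed into the linear-in-$k$ piece generated by $\widehat\beta_m-\beta_0$ once the limiting covariance matrix is block-inverted as in the proof of Theorem \ref{th-x-1}, I would obtain
\[
S_m(k)-S_m(\ell-1)=\mathscr{s}_{x,2}\bigl[W_2(k)-W_2(\ell-1)\bigr]-\mathscr{s}_{x,1}\mathscr{s}_{x,2}\,(k-\ell+1)\,m^{-1/2}W_1(1)+o_P\bigl(g_{m,\psi}^{(x)}(k)\bigr)
\]
uniformly in $1\leq \ell\leq k\leq m^{\ast}$, with $W_1$ an independent standard Wiener process. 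Rescaling $k=\mathscr{s}_{x,d}^{2}m\cdot x$, $\ell-1=\mathscr{s}_{x,d}^{2}m\cdot t$ and dividing by $g_{m,\psi}^{(x)}(k)$, the continuous mapping theorem yields the three limit distributions of Theorem \ref{th-kirch}, depending on the horizon regime.

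The main obstacle is handling the maximum over $\ell$ uniformly with a denominator $g_{m,\psi}^{(x)}(k)$ that behaves like $k^{\psi}$ near the origin: small-$k$ fluctuations of $M_m$ and of the linearisation remainders must be shown to be $o_P(k^{\psi})$ uniformly in the starting index $\ell$. Following \citet{fremdt2015page} and \citet{kirch2022asymptotic}, I would deploy a H\'ajek--R\'enyi type maximal inequality for the martingale-difference sequence underlying $M_m$, together with a uniform-in-$\ell$ linearisation of $A_m(k)-A_m(\ell-1)$ and $\mathbf{B}_m(k)-\mathbf{B}_m(\ell-1)$ in $k-\ell+1$; the latter requires splitting $\mathbf{B}_m$ into a martingale part and a conditional-mean part and exploiting the decomposable Bernoulli-shift structure in Assumption \ref{as-x-1}. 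A secondary difficulty is that, in the nonstationary regime, the convergence $y_{i-1}^{2}/(1+y_{i-1}^{2})\to 1$ must be quantified uniformly over a monitoring window that can grow to $m^{\ast}\sim m^{\lambda}$; this is controlled via the anti-concentration bound from Assumption \ref{as-x-3}, exactly as in Theorems \ref{th-x-1}--\ref{th-x-2}. Once these uniform bounds are in place, the short-horizon case follows by noting that $(\widehat\beta_m-\beta_0)A_m(m^{\ast})=O_P(m^{-1/2}m^{\ast})=o_P((m^{\ast})^{1/2})$, so the $W_1$-term disappears and the limit reduces to the single-Wiener-process expression in the third bullet of Theorem \ref{th-kirch}.
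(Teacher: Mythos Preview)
Your strategy is the right one and matches the paper's: the paper's own proof is a single sentence (``follows by the same arguments as the proof of Theorem \ref{th-kirch}''), and your sketch correctly identifies that this means transplanting the doubly-indexed Page--CUSUM argument of Theorem \ref{th-kirch} onto the covariate-specific approximations developed for Theorem \ref{th-x-1} (Lemmas \ref{beta-x}--\ref{beta-x-3} in the stationary case, Lemma \ref{beta-x-4} and the arguments around (\ref{decline}) in the explosive case).

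A few points to clean up. First, several constants are off: $A_m(k)$ centers at $k$ times the first entry of $\mathbf{a}$ in (\ref{a}), not at $\mathscr{s}_{x,1}^{2}k$ (the latter is a variance, not a mean); your symbol $\mathbf{b}_0$ for the centering of $k^{-1}\mathbf{B}_m(k)$ clashes with the paper's $\mathbf{b}_0=(\beta_0,\mathbf{\lambda}_0^{\intercal})^{\intercal}$; and in your displayed approximation for $S_m(k)-S_m(\ell-1)$ the coefficient of $(k-\ell+1)m^{-1/2}W_1(1)$ should be $\mathscr{s}_{x,1}$, not $\mathscr{s}_{x,1}\mathscr{s}_{x,2}$, so that after rescaling one recovers $\mathscr{s}_{x,2}W_2(t)-\mathscr{s}_{x,1}tW_1(1)$ as in the proof of Theorem \ref{th-x-1}. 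Second, the paper's route to the double maximum is simpler than the H\'ajek--R\'enyi machinery you propose: write $\sum_{i=m+\ell}^{m+k}=\sum_{i=m+1}^{m+k}-\sum_{i=m+1}^{m+\ell-1}$ and apply the \emph{same} singly-indexed strong approximation (Lemma \ref{beta-x-3}) to each piece; the combined remainder is at most twice the singly-indexed one and is therefore already $o_P(g_{m,\psi}^{(x)}(k))$ uniformly in $\ell\leq k$. Third, keeping the full vector $(\mathbf{b}_0-\widehat{\mathbf{b}}_m)$ as in the decomposition (\ref{dec-p}), rather than splitting off $\widehat\beta_m-\beta_0$ and $\widehat{\mathbf{\lambda}}_m-\mathbf{\lambda}_0$, avoids any ``block-inversion'': the linear-in-$k$ term is directly $-(k/m)\,\mathbf{a}^{\intercal}\mathbf{Q}^{-1}\sum_{i=2}^{m}\mathbf{\eta}_i$, which Lemma \ref{beta-x-3} approximates by $-(k/m)\mathscr{s}_{x,1}W_{m,1}(m)$.
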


In all the results above, the limiting behaviour of the stopping time is the
same as in the absence of covariates; however, this does not mean that these
do not play a role, since they build into the recursion that defines $y_{i}$%
. As far as Theorem \ref{th-x-2} is concerned, the same approximation for
critical values as in (\ref{vostr-cv}) can be used.

Under the alternative that the deterministic part of the autoregressive root
changes, 
\begin{equation}
y_{i}=\left\{ 
\begin{array}{ll}
\left( \beta _{0}+\epsilon _{i,1}\right) y_{i-1}+\mathbf{\lambda }%
_{0}^{\intercal }\mathbf{x}_{i}+\epsilon _{i,2} & 1\leq i\leq m+k^{\ast },
\\ 
\left( \beta _{A}+\epsilon _{i,1}\right) y_{i-1}+\mathbf{\lambda }%
_{0}^{\intercal }\mathbf{x}_{i}+\epsilon _{i,2} & i>m+k^{\ast },%
\end{array}%
\right.  \label{alt-cov}
\end{equation}%
the same results as in Section \ref{alternative-asy} hold, as summarised in
the following theorem.

\begin{theorem}
\label{alt-covariates}We assume that the conditions of Theorem \ref{th-x-1}
are satisfied. Then, under (\ref{alt-cov}), the same results as in Theorem %
\ref{power} hold.
\end{theorem}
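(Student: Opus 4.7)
The plan is to mimic the proof of Theorem \ref{power}, tracking the additional contributions generated by the covariates and by the second estimator $\widehat{\mathbf{\lambda}}_m$. The argument proceeds in two stages: first, isolate a drift term in $Z_m^X(k)$ of order $|\Delta_m|$ times a sum of $y_{i-1}^2/(1+y_{i-1}^2)$; second, show that all remaining terms are negligible with respect to the boundary function $g_{m,\psi}^{(x)}(k)$, so that the drift eventually forces the detector above the boundary and the stopping time is finite with probability tending to one.

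Concretely, for $i>m+k^{\ast}$ I would use the decomposition
\begin{equation*}
y_i-\widehat\beta_m y_{i-1}-\widehat{\mathbf{\lambda}}_m^{\intercal}\mathbf{x}_i
=\Delta_m\,y_{i-1}+(\beta_0-\widehat\beta_m)y_{i-1}+\epsilon_{i,1}y_{i-1}+\epsilon_{i,2}+(\mathbf{\lambda}_0-\widehat{\mathbf{\lambda}}_m)^{\intercal}\mathbf{x}_i,
\end{equation*}
with the same identity and $\Delta_m$ replaced by $0$ for $m<i\le m+k^{\ast}$. Multiplying by $y_{i-1}/(1+y_{i-1}^2)$ and summing from $i=m+1$ to $m+k$ splits $Z_m^X(k)$ into a drift $D_m(k)=\Delta_m\sum_{i=m+k^{\ast}+1}^{m+k}y_{i-1}^2/(1+y_{i-1}^2)$, a bias term of order $|\widehat\beta_m-\beta_0|$ times a sum of $y_{i-1}^2/(1+y_{i-1}^2)$, a martingale part driven by $\epsilon_{i,1}y_{i-1}^2/(1+y_{i-1}^2)+\epsilon_{i,2}y_{i-1}/(1+y_{i-1}^2)$ whose uniform behaviour has already been worked out in the proof of Theorem \ref{th-x-1}, and a covariate-induced cross term $(\mathbf{\lambda}_0-\widehat{\mathbf{\lambda}}_m)^{\intercal}\sum_i\mathbf{x}_iy_{i-1}/(1+y_{i-1}^2)$.

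I would then handle the two regimes separately, exactly as in Theorem \ref{power}. In the stationary case, the ergodic theorem applied to $\overline y_i$ and Assumption \ref{as-x-2} give $D_m(k)\sim|\Delta_m|(k-k^{\ast})E[\overline y_0^2/(1+\overline y_0^2)]$, whereas $g_{m,\psi}^{(x)}(k)$ is of order $m^{1/2}(k/m)^{\psi}(1+k/m)$; taking $k$ of the same order as $m$, condition (\ref{power-weighted}) guarantees divergence of the ratio. In the explosive regime $E\log|\beta_0+\epsilon_{0,1}|>0$, we have $y_{i-1}^2/(1+y_{i-1}^2)\to 1$ almost surely at an exponential rate by \citet{berkes2009}, so the drift again grows linearly in $k-k^{\ast}$ and the same comparison applies. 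For $\psi=1/2$, the Darling--Erd\H{o}s normalisation multiplies the effective threshold by an additional factor of order $\sqrt{\log\log m}$, and condition (\ref{power-de}) is precisely what is needed. The Page-CUSUM versions follow immediately because $Z_m^{\dagger(X)}(k)\ge Z_m^X(k)$ (the maximum over $1\le\ell\le k$ includes $\ell=1$).

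The main technical obstacle is controlling the covariate cross term $(\mathbf{\lambda}_0-\widehat{\mathbf{\lambda}}_m)^{\intercal}\sum_{i=m+1}^{m+k}\mathbf{x}_iy_{i-1}/(1+y_{i-1}^2)$ uniformly in $k$: $\widehat{\mathbf{\lambda}}_m-\mathbf{\lambda}_0$ is only of order $m^{-1/2}$ in probability, while the inner sum has growing norm. In the stationary regime I would invoke a maximal inequality for the decomposable Bernoulli-shift structure of Assumption \ref{as-x-1} (using $\kappa_2>2$ to close the summability gap) combined with the stationarity of $\overline y_i$, which bounds the sum by order $\sqrt{k}$ uniformly in $k$ and hence makes the cross term of smaller order than $D_m(k)$ on the relevant range $k=O(m)$. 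In the explosive regime, $y_{i-1}/(1+y_{i-1}^2)\to 0$ almost surely at exponential speed, so the inner sum is itself of order one in probability and the cross term is automatically negligible. Once these two points are in place, the remainder is a mechanical repetition of the covariate-free argument underlying Theorem \ref{power}.
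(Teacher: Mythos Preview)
Your proposal is correct and is precisely the argument the paper has in mind: the paper's own proof is the single sentence ``The proof follows from the same arguments as in the case with no covariates,'' and your sketch is a detailed fleshing-out of that deferral, built on the covariate analogue of the decomposition (\ref{z-delay}).

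One small technical correction: in the stationary regime the inner sum $\sum_{i=m+1}^{m+k}\mathbf{x}_i y_{i-1}/(1+y_{i-1}^2)$ is not $O_P(\sqrt{k})$ in general, because its mean---the lower block of $\mathbf{a}$ in (\ref{a})---is typically nonzero, so the sum is of exact order $k$ (cf.\ (\ref{beta-x-2-1})). The easy fix is the crude bound $|y_{i-1}/(1+y_{i-1}^2)|\le 1/2$ together with stationarity of $\mathbf{x}_i$, which gives the sum as $O_P(k)$ and hence the cross term as $O_P(k/\sqrt{m})$. On the relevant range $k\asymp m$ this is $O_P(\sqrt{m})$, still dominated by the drift $|\Delta_m|\,c\,(k-k^\ast)\asymp |\Delta_m|\,m$ exactly under condition (\ref{power-weighted}); your conclusion is therefore unaffected.
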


\section{Simulations\label{simulations}}

We provide some Monte Carlo evidence and guidelines on implementation;
further details and results are reported in Section \ref{furtherMC} in the
Supplement. We consider the following Data Generating Process, based on (\ref%
{rca-x})%
\begin{equation}
y_{i}=\left( \beta _{0}+\epsilon _{i,1}\right) y_{i-1}+\lambda
_{0}x_{i}+\epsilon _{i,2},  \label{dgp}
\end{equation}%
for $1\leq i\leq m+1,000$, where we simulate $\epsilon _{i,1}$ and $\epsilon
_{i,2}$ as independent of one another and \textit{i.i.d.} with distributions 
$N\left( 0,\sigma _{1}^{2}\right) $ and $N\left( 0,\sigma _{2}^{2}\right) $
respectively, discarding the first $1,000$ observations to avoid dependence
on initial conditions. We used $\sigma _{1}^{2}=0.01$ in all experiments,
and we have considered three cases: in \textbf{Case I}, we set $\beta
_{0}=0.5$, with $E\log \left\vert \beta _{0}+\epsilon _{i,1}\right\vert
=-0.717$, corresponding to a stationary regime; in \textbf{Case II}, we set $%
\beta _{0}=1.05$, with $E\log \left\vert \beta _{0}+\epsilon
_{i,1}\right\vert =0.044$, corresponding to a mildly explosive regime; and
in \textbf{Case III}, we set $\beta _{0}=1$, with $E\log \left\vert \beta
_{0}+\epsilon _{i,1}\right\vert =-0.007$, indicating a stationary, but
\textquotedblleft nearly non-stationary\textquotedblright\ process (this
case corresponds to the STUR model). The variance of the idiosyncratic shock
is $\sigma _{2}^{2}=0.5$ in Case I, and $\sigma _{2}^{2}=0.1$ in Cases II
and III; in unreported simulations, using different
values does not result in any significant changes, save for the (expected)
fact that tests have better properties (in terms of size and power) for
smaller values of $\sigma _{2}^{2}$. When covariates are used, we set $%
\lambda _{0}=1$ and generate $x_{i}$ as \textit{i.i.d.}$N\left( 0,1\right) $%
. Critical values (for a nominal level equal to $5\%$) are computed using
the results in Theorem \ref{th-2} when using the weighted CUSUM with $\psi
<1/2$, and Theorem \ref{th-kirch}\textit{(ii)} when using Page-CUSUM
statistics. When using the weighted CUSUM with $\psi =1/2$, we use both the
asymptotic critical values $c_{\alpha ,0.5}$\ defined in (\ref{asy-crv-de}),
and the approximation $\widehat{c}_{\alpha ,0.5}$\ defined as the solution
of (\ref{vostr-cv}), with $h_{m^{\ast }}=\left( \log m^{\ast }\right) ^{1/2}$%
. Results are based on $1,000$ replications.

\medskip

Empirical rejection frequencies are in Tables \ref{tab:ERF1}-\ref{tab:ERF3},
where we consider the case where, in (\ref{dgp}), $\lambda _{0}=0$ - i.e., a
\textquotedblleft pure\textquotedblright\ RCA without covariates. In the
case of stationarity (Table \ref{tab:ERF1}), the procedure-wise probability
of Type I Errors is always controlled when using $\psi =1/2$; the asymptotic
critical values $c_{\alpha ,0.5}$\ defined in (\ref{asy-crv-de}) lead to
under-rejection, as can be expected in light of the slow convergence to the
asymptotic distribution; conversely, their approximation using $\widehat{c}%
_{\alpha ,0.5}$\ defined as the solution of (\ref{vostr-cv}) achieves size
control in all cases considered. When using $\psi <1/2$, both in the case of
the CUSUM and the Page-CUSUM, the procedure-wise probability of Type I
Errors is also controlled, but larger sample sizes $m$ are required. In the
explosive case (Table \ref{tab:ERF2}), our procedures tend to under-reject
whenever using $\psi =1/2$ (both with $c_{\alpha ,0.5}$\ and $\widehat{c}%
_{\alpha ,0.5}$), although this seems to slowly improve as $m$ increases,
for all lengths of the monitoring horizon $m^{\ast }$; results are less
conservative when using $\psi <1/2$, again both in the case of the CUSUM and
the Page-CUSUM. On the other hand, in the STUR case (Table \ref{tab:ERF3}),
empirical rejection frequencies tend to be higher than in the other cases.
The procedure-wise probability of Type I Errors is controlled in all cases
when using $\psi =1/2$ and $c_{\alpha ,0.5}$, whereas using $\widehat{c}%
_{\alpha ,0.5}$\ requires either large $m$ ($\geq 200$), or not overly long
monitoring horizons when $m<200$ (in those cases, using $m^{\ast }\leq m$
always results in size control). When $\psi <1/2$, size control is more
problematic and the procedure tends to be oversized, unless $m^{\ast }$ is
\textquotedblleft small\textquotedblright\ compared to $m$. Whilst results
are broadly similar for the CUSUM and the Page-CUSUM, we note that the
latter yields marginally improved size control over the former.

The main message of Tables \ref{tab:ERF1}-\ref{tab:ERF3} is that, broadly
speaking, using the CUSUM with $\psi =1/2$ seems the preferred solution
across all cases considered. The choice between $c_{\alpha ,0.5}$\ and $%
\widehat{c}_{\alpha ,0.5}$ essentially depends on $m$ (and $m^{\ast }$), but
in the vast majority of the cases considered, using $\widehat{c}_{\alpha
,0.5}$ offers excellent size control without having an overly conservative
procedure. Our results suggest avoiding overly long monitoring horizons when 
$m\leq 100$. In these cases, monitoring can be carried out firstly over a
short horizon and, if no changepoints are detected, the procedure can be
started afresh including the previous monitoring horizon within the training
sample. In Tables \ref{tab:ERF1c}-\ref{tab:ERF3c} in the Supplement, we
report empirical rejection frequencies in the presence of covariates in (\ref%
{dgp}). Results are broadly similar, but empirical rejection frequencies
increase in all cases considered. In the stationary case (Table \ref%
{tab:ERF1c}), results are essentially the same, and, in the explosive case
(Table \ref{tab:ERF2c}), the increase in empirical rejection frequencies
leads to improvements, making the procedure less conservative. The latter
result is interesting, and it suggests that sequential monitoring during a
bubble phase may benefit from including control variables in the price
dynamics. On the other hand, results worsen in the STUR case (Table \ref%
{tab:ERF3c}), and the procedures tend to (sometimes massively) over-reject.
The best results are obtained with $\psi =1/2$ and $c_{\alpha ,0.5}$, which
achieves size control at least when $m>50$ (or when $m\leq 50$, but the
monitoring horizon is not too long, i.e. $m^{\ast }\leq m$). This suggests
that, when covariates are included in the basic RCA specification, $\psi
=1/2 $ and $c_{\alpha ,0.5}$ should be employed - especially when trying to detect the inception of a bubble from a near
stationary regime.

We would like to emphasize that all the cases considered above are fully
under the researcher's control: whether to include covariates or not, and
the length of the monitoring horizon $m^{\ast }$, can be decided \textit{a
priori}; as mentioned above, preliminary analysis on $y_{i}$ during the
training period offers further help.

\medskip

\begin{table*}[h!]
\caption{{\protect\footnotesize {Empirical rejection frequencies under the
null of no changepoint and no covariates - Case I, $\protect\beta_0=0.5$}}}
\label{tab:ERF1}\centering
\par
{\scriptsize 
\begin{tabular}{lllllllllllllllllllll}
\hline\hline
&  &  &  &  &  &  &  &  &  &  &  &  &  &  &  &  &  &  &  &  \\ 
&  &  &  &  &  & \multicolumn{5}{c}{Weighted CUSUM} & \multicolumn{1}{c}{} & 
\multicolumn{3}{c}{Standardised CUSUM} & \multicolumn{1}{c}{} & 
\multicolumn{5}{c}{Weighted Page-CUSUM} \\ 
&  &  &  & $\psi $ &  & \multicolumn{1}{c}{$0$} & \multicolumn{1}{c}{} & 
\multicolumn{1}{c}{$0.25$} & \multicolumn{1}{c}{} & \multicolumn{1}{c}{$0.45$%
} & \multicolumn{1}{c}{} & \multicolumn{1}{c}{} & \multicolumn{1}{c}{$0.5$}
& \multicolumn{1}{c}{} & \multicolumn{1}{c}{} & \multicolumn{1}{c}{$0$} & 
\multicolumn{1}{c}{} & \multicolumn{1}{c}{$0.25$} & \multicolumn{1}{c}{} & 
\multicolumn{1}{c}{$0.45$} \\ 
&  &  &  &  &  & \multicolumn{1}{c}{} & \multicolumn{1}{c}{} & 
\multicolumn{1}{c}{} & \multicolumn{1}{c}{} & \multicolumn{1}{c}{} & 
\multicolumn{1}{c}{} & \multicolumn{1}{c}{$c_{\alpha ,0.5}$} & 
\multicolumn{1}{c}{} & \multicolumn{1}{c}{$\widehat{c}_{\alpha ,0.5}$} & 
\multicolumn{1}{c}{} & \multicolumn{1}{c}{} & \multicolumn{1}{c}{} & 
\multicolumn{1}{c}{} & \multicolumn{1}{c}{} & \multicolumn{1}{c}{} \\ 
&  &  &  &  &  & \multicolumn{1}{c}{} & \multicolumn{1}{c}{} & 
\multicolumn{1}{c}{} & \multicolumn{1}{c}{} & \multicolumn{1}{c}{} & 
\multicolumn{1}{c}{} & \multicolumn{1}{c}{} & \multicolumn{1}{c}{} & 
\multicolumn{1}{c}{} & \multicolumn{1}{c}{} & \multicolumn{1}{c}{} & 
\multicolumn{1}{c}{} & \multicolumn{1}{c}{} & \multicolumn{1}{c}{} & 
\multicolumn{1}{c}{} \\ 
\multicolumn{1}{c}{} & \multicolumn{1}{c}{$m$} & \multicolumn{1}{c}{} & 
\multicolumn{1}{c}{$m^{\ast }$} & \multicolumn{1}{c}{} & \multicolumn{1}{c}{}
& \multicolumn{1}{c}{} & \multicolumn{1}{c}{} & \multicolumn{1}{c}{} & 
\multicolumn{1}{c}{} & \multicolumn{1}{c}{} & \multicolumn{1}{c}{} & 
\multicolumn{1}{c}{} & \multicolumn{1}{c}{} & \multicolumn{1}{c}{} & 
\multicolumn{1}{c}{} & \multicolumn{1}{c}{} & \multicolumn{1}{c}{} & 
\multicolumn{1}{c}{} & \multicolumn{1}{c}{} & \multicolumn{1}{c}{} \\ 
\multicolumn{1}{c}{} & \multicolumn{1}{c}{} & \multicolumn{1}{c}{} & 
\multicolumn{1}{c}{} & \multicolumn{1}{c}{} & \multicolumn{1}{c}{} & 
\multicolumn{1}{c}{} & \multicolumn{1}{c}{} & \multicolumn{1}{c}{} & 
\multicolumn{1}{c}{} & \multicolumn{1}{c}{} & \multicolumn{1}{c}{} & 
\multicolumn{1}{c}{} & \multicolumn{1}{c}{} & \multicolumn{1}{c}{} & 
\multicolumn{1}{c}{} & \multicolumn{1}{c}{} & \multicolumn{1}{c}{} & 
\multicolumn{1}{c}{} & \multicolumn{1}{c}{} & \multicolumn{1}{c}{} \\ 
\multicolumn{1}{c}{} & \multicolumn{1}{c}{} & \multicolumn{1}{c}{} & 
\multicolumn{1}{c}{$25$} & \multicolumn{1}{c}{} & \multicolumn{1}{c}{} & 
\multicolumn{1}{c}{$0.047$} & \multicolumn{1}{c}{} & \multicolumn{1}{c}{$%
0.057$} & \multicolumn{1}{c}{} & \multicolumn{1}{c}{$0.045$} & 
\multicolumn{1}{c}{} & \multicolumn{1}{c}{$0.023$} & \multicolumn{1}{c}{} & 
\multicolumn{1}{c}{$0.045$} & \multicolumn{1}{c}{} & \multicolumn{1}{c}{$%
0.045$} & \multicolumn{1}{c}{} & \multicolumn{1}{c}{$0.057$} & 
\multicolumn{1}{c}{} & \multicolumn{1}{c}{$0.054$} \\ 
\multicolumn{1}{c}{} & \multicolumn{1}{c}{$50$} & \multicolumn{1}{c}{} & 
\multicolumn{1}{c}{$50$} & \multicolumn{1}{c}{} & \multicolumn{1}{c}{} & 
\multicolumn{1}{c}{$0.068$} & \multicolumn{1}{c}{} & \multicolumn{1}{c}{$%
0.081$} & \multicolumn{1}{c}{} & \multicolumn{1}{c}{$0.066$} & 
\multicolumn{1}{c}{} & \multicolumn{1}{c}{$0.034$} & \multicolumn{1}{c}{} & 
\multicolumn{1}{c}{$0.055$} & \multicolumn{1}{c}{} & \multicolumn{1}{c}{$%
0.060$} & \multicolumn{1}{c}{} & \multicolumn{1}{c}{$0.080$} & 
\multicolumn{1}{c}{} & \multicolumn{1}{c}{$0.069$} \\ 
\multicolumn{1}{c}{} & \multicolumn{1}{c}{} & \multicolumn{1}{c}{} & 
\multicolumn{1}{c}{$100$} & \multicolumn{1}{c}{} & \multicolumn{1}{c}{} & 
\multicolumn{1}{c}{$0.064$} & \multicolumn{1}{c}{} & \multicolumn{1}{c}{$%
0.079$} & \multicolumn{1}{c}{} & \multicolumn{1}{c}{$0.070$} & 
\multicolumn{1}{c}{} & \multicolumn{1}{c}{$0.036$} & \multicolumn{1}{c}{} & 
\multicolumn{1}{c}{$0.053$} & \multicolumn{1}{c}{} & \multicolumn{1}{c}{$%
0.058$} & \multicolumn{1}{c}{} & \multicolumn{1}{c}{$0.052$} & 
\multicolumn{1}{c}{} & \multicolumn{1}{c}{$0.061$} \\ 
\multicolumn{1}{c}{} & \multicolumn{1}{c}{} & \multicolumn{1}{c}{} & 
\multicolumn{1}{c}{$200$} & \multicolumn{1}{c}{} & \multicolumn{1}{c}{} & 
\multicolumn{1}{c}{$0.092$} & \multicolumn{1}{c}{} & \multicolumn{1}{c}{$%
0.107$} & \multicolumn{1}{c}{} & \multicolumn{1}{c}{$0.086$} & 
\multicolumn{1}{c}{} & \multicolumn{1}{c}{$0.038$} & \multicolumn{1}{c}{} & 
\multicolumn{1}{c}{$0.057$} & \multicolumn{1}{c}{} & \multicolumn{1}{c}{$%
0.087$} & \multicolumn{1}{c}{} & \multicolumn{1}{c}{$0.089$} & 
\multicolumn{1}{c}{} & \multicolumn{1}{c}{$0.083$} \\ 
\multicolumn{1}{c}{} & \multicolumn{1}{c}{} & \multicolumn{1}{c}{} & 
\multicolumn{1}{c}{} & \multicolumn{1}{c}{} & \multicolumn{1}{c}{} & 
\multicolumn{1}{c}{} & \multicolumn{1}{c}{} & \multicolumn{1}{c}{} & 
\multicolumn{1}{c}{} & \multicolumn{1}{c}{} & \multicolumn{1}{c}{} & 
\multicolumn{1}{c}{} & \multicolumn{1}{c}{} & \multicolumn{1}{c}{} & 
\multicolumn{1}{c}{} & \multicolumn{1}{c}{} & \multicolumn{1}{c}{} & 
\multicolumn{1}{c}{} & \multicolumn{1}{c}{} & \multicolumn{1}{c}{} \\ 
\multicolumn{1}{c}{} & \multicolumn{1}{c}{} & \multicolumn{1}{c}{} & 
\multicolumn{1}{c}{} & \multicolumn{1}{c}{} & \multicolumn{1}{c}{} & 
\multicolumn{1}{c}{} & \multicolumn{1}{c}{} & \multicolumn{1}{c}{} & 
\multicolumn{1}{c}{} & \multicolumn{1}{c}{} & \multicolumn{1}{c}{} & 
\multicolumn{1}{c}{} & \multicolumn{1}{c}{} & \multicolumn{1}{c}{} & 
\multicolumn{1}{c}{} & \multicolumn{1}{c}{} & \multicolumn{1}{c}{} & 
\multicolumn{1}{c}{} & \multicolumn{1}{c}{} & \multicolumn{1}{c}{} \\ 
\multicolumn{1}{c}{} & \multicolumn{1}{c}{} & \multicolumn{1}{c}{} & 
\multicolumn{1}{c}{$50$} & \multicolumn{1}{c}{} & \multicolumn{1}{c}{} & 
\multicolumn{1}{c}{$0.063$} & \multicolumn{1}{c}{} & \multicolumn{1}{c}{$%
0.068$} & \multicolumn{1}{c}{} & \multicolumn{1}{c}{$0.055$} & 
\multicolumn{1}{c}{} & \multicolumn{1}{c}{$0.025$} & \multicolumn{1}{c}{} & 
\multicolumn{1}{c}{$0.048$} & \multicolumn{1}{c}{} & \multicolumn{1}{c}{$%
0.061$} & \multicolumn{1}{c}{} & \multicolumn{1}{c}{$0.066$} & 
\multicolumn{1}{c}{} & \multicolumn{1}{c}{$0.054$} \\ 
\multicolumn{1}{c}{} & \multicolumn{1}{c}{$100$} & \multicolumn{1}{c}{} & 
\multicolumn{1}{c}{$100$} & \multicolumn{1}{c}{} & \multicolumn{1}{c}{} & 
\multicolumn{1}{c}{$0.057$} & \multicolumn{1}{c}{} & \multicolumn{1}{c}{$%
0.067$} & \multicolumn{1}{c}{} & \multicolumn{1}{c}{$0.075$} & 
\multicolumn{1}{c}{} & \multicolumn{1}{c}{$0.027$} & \multicolumn{1}{c}{} & 
\multicolumn{1}{c}{$0.056$} & \multicolumn{1}{c}{} & \multicolumn{1}{c}{$%
0.059$} & \multicolumn{1}{c}{} & \multicolumn{1}{c}{$0.066$} & 
\multicolumn{1}{c}{} & \multicolumn{1}{c}{$0.063$} \\ 
\multicolumn{1}{c}{} & \multicolumn{1}{c}{} & \multicolumn{1}{c}{} & 
\multicolumn{1}{c}{$200$} & \multicolumn{1}{c}{} & \multicolumn{1}{c}{} & 
\multicolumn{1}{c}{$0.062$} & \multicolumn{1}{c}{} & \multicolumn{1}{c}{$%
0.062$} & \multicolumn{1}{c}{} & \multicolumn{1}{c}{$0.066$} & 
\multicolumn{1}{c}{} & \multicolumn{1}{c}{$0.029$} & \multicolumn{1}{c}{} & 
\multicolumn{1}{c}{$0.048$} & \multicolumn{1}{c}{} & \multicolumn{1}{c}{$%
0.059$} & \multicolumn{1}{c}{} & \multicolumn{1}{c}{$0.055$} & 
\multicolumn{1}{c}{} & \multicolumn{1}{c}{$0.060$} \\ 
\multicolumn{1}{c}{} & \multicolumn{1}{c}{} & \multicolumn{1}{c}{} & 
\multicolumn{1}{c}{$400$} & \multicolumn{1}{c}{} & \multicolumn{1}{c}{} & 
\multicolumn{1}{c}{$0.062$} & \multicolumn{1}{c}{} & \multicolumn{1}{c}{$%
0.061$} & \multicolumn{1}{c}{} & \multicolumn{1}{c}{$0.054$} & 
\multicolumn{1}{c}{} & \multicolumn{1}{c}{$0.028$} & \multicolumn{1}{c}{} & 
\multicolumn{1}{c}{$0.050$} & \multicolumn{1}{c}{} & \multicolumn{1}{c}{$%
0.062$} & \multicolumn{1}{c}{} & \multicolumn{1}{c}{$0.064$} & 
\multicolumn{1}{c}{} & \multicolumn{1}{c}{$0.061$} \\ 
\multicolumn{1}{c}{} & \multicolumn{1}{c}{} & \multicolumn{1}{c}{} & 
\multicolumn{1}{c}{} & \multicolumn{1}{c}{} & \multicolumn{1}{c}{} & 
\multicolumn{1}{c}{} & \multicolumn{1}{c}{} & \multicolumn{1}{c}{} & 
\multicolumn{1}{c}{} & \multicolumn{1}{c}{} & \multicolumn{1}{c}{} & 
\multicolumn{1}{c}{} & \multicolumn{1}{c}{} & \multicolumn{1}{c}{} & 
\multicolumn{1}{c}{} & \multicolumn{1}{c}{} & \multicolumn{1}{c}{} & 
\multicolumn{1}{c}{} & \multicolumn{1}{c}{} & \multicolumn{1}{c}{} \\ 
\multicolumn{1}{c}{} & \multicolumn{1}{c}{} & \multicolumn{1}{c}{} & 
\multicolumn{1}{c}{} & \multicolumn{1}{c}{} & \multicolumn{1}{c}{} & 
\multicolumn{1}{c}{} & \multicolumn{1}{c}{} & \multicolumn{1}{c}{} & 
\multicolumn{1}{c}{} & \multicolumn{1}{c}{} & \multicolumn{1}{c}{} & 
\multicolumn{1}{c}{} & \multicolumn{1}{c}{} & \multicolumn{1}{c}{} & 
\multicolumn{1}{c}{} & \multicolumn{1}{c}{} & \multicolumn{1}{c}{} & 
\multicolumn{1}{c}{} & \multicolumn{1}{c}{} & \multicolumn{1}{c}{} \\ 
\multicolumn{1}{c}{} & \multicolumn{1}{c}{} & \multicolumn{1}{c}{} & 
\multicolumn{1}{c}{$100$} & \multicolumn{1}{c}{} & \multicolumn{1}{c}{} & 
\multicolumn{1}{c}{$0.060$} & \multicolumn{1}{c}{} & \multicolumn{1}{c}{$%
0.064$} & \multicolumn{1}{c}{} & \multicolumn{1}{c}{$0.053$} & 
\multicolumn{1}{c}{} & \multicolumn{1}{c}{$0.020$} & \multicolumn{1}{c}{} & 
\multicolumn{1}{c}{$0.034$} & \multicolumn{1}{c}{} & \multicolumn{1}{c}{$%
0.050$} & \multicolumn{1}{c}{} & \multicolumn{1}{c}{$0.056$} & 
\multicolumn{1}{c}{} & \multicolumn{1}{c}{$0.058$} \\ 
\multicolumn{1}{c}{} & \multicolumn{1}{c}{$200$} & \multicolumn{1}{c}{} & 
\multicolumn{1}{c}{$200$} & \multicolumn{1}{c}{} & \multicolumn{1}{c}{} & 
\multicolumn{1}{c}{$0.049$} & \multicolumn{1}{c}{} & \multicolumn{1}{c}{$%
0.063$} & \multicolumn{1}{c}{} & \multicolumn{1}{c}{$0.058$} & 
\multicolumn{1}{c}{} & \multicolumn{1}{c}{$0.023$} & \multicolumn{1}{c}{} & 
\multicolumn{1}{c}{$0.044$} & \multicolumn{1}{c}{} & \multicolumn{1}{c}{$%
0.045$} & \multicolumn{1}{c}{} & \multicolumn{1}{c}{$0.058$} & 
\multicolumn{1}{c}{} & \multicolumn{1}{c}{$0.055$} \\ 
\multicolumn{1}{c}{} & \multicolumn{1}{c}{} & \multicolumn{1}{c}{} & 
\multicolumn{1}{c}{$400$} & \multicolumn{1}{c}{} & \multicolumn{1}{c}{} & 
\multicolumn{1}{c}{$0.057$} & \multicolumn{1}{c}{} & \multicolumn{1}{c}{$%
0.057$} & \multicolumn{1}{c}{} & \multicolumn{1}{c}{$0.060$} & 
\multicolumn{1}{c}{} & \multicolumn{1}{c}{$0.023$} & \multicolumn{1}{c}{} & 
\multicolumn{1}{c}{$0.042$} & \multicolumn{1}{c}{} & \multicolumn{1}{c}{$%
0.055$} & \multicolumn{1}{c}{} & \multicolumn{1}{c}{$0.057$} & 
\multicolumn{1}{c}{} & \multicolumn{1}{c}{$0.052$} \\ 
\multicolumn{1}{c}{} & \multicolumn{1}{c}{} & \multicolumn{1}{c}{} & 
\multicolumn{1}{c}{$800$} & \multicolumn{1}{c}{} & \multicolumn{1}{c}{} & 
\multicolumn{1}{c}{$0.049$} & \multicolumn{1}{c}{} & \multicolumn{1}{c}{$%
0.050$} & \multicolumn{1}{c}{} & \multicolumn{1}{c}{$0.056$} & 
\multicolumn{1}{c}{} & \multicolumn{1}{c}{$0.024$} & \multicolumn{1}{c}{} & 
\multicolumn{1}{c}{$0.048$} & \multicolumn{1}{c}{} & \multicolumn{1}{c}{$%
0.047$} & \multicolumn{1}{c}{} & \multicolumn{1}{c}{$0.062$} & 
\multicolumn{1}{c}{} & \multicolumn{1}{c}{$0.067$} \\ 
\multicolumn{1}{c}{} & \multicolumn{1}{c}{} & \multicolumn{1}{c}{} & 
\multicolumn{1}{c}{} & \multicolumn{1}{c}{} & \multicolumn{1}{c}{} & 
\multicolumn{1}{c}{} & \multicolumn{1}{c}{} & \multicolumn{1}{c}{} & 
\multicolumn{1}{c}{} & \multicolumn{1}{c}{} & \multicolumn{1}{c}{} & 
\multicolumn{1}{c}{} & \multicolumn{1}{c}{} & \multicolumn{1}{c}{} &  &  & 
&  &  &  \\ \hline\hline
\end{tabular}
}
\par
{\scriptsize {\footnotesize 
\begin{tablenotes}
      \tiny
            \item 
            
\end{tablenotes}
} }
\end{table*}

\begin{table*}[h!]
\caption{{\protect\footnotesize {Empirical rejection frequencies under the
null of no changepoint and no covariates - Case II, $\protect\beta_0=1.05$}}}
\label{tab:ERF2}\centering
\par
{\scriptsize 
\begin{tabular}{lllllllllllllllllllll}
\hline\hline
&  &  &  &  &  &  &  &  &  &  &  &  &  &  &  &  &  &  &  &  \\ 
&  &  &  &  &  & \multicolumn{5}{c}{Weighted CUSUM} & \multicolumn{1}{c}{} & 
\multicolumn{3}{c}{Standardised CUSUM} & \multicolumn{1}{c}{} & 
\multicolumn{5}{c}{Weighted Page-CUSUM} \\ 
&  &  &  & $\psi $ &  & \multicolumn{1}{c}{$0$} & \multicolumn{1}{c}{} & 
\multicolumn{1}{c}{$0.25$} & \multicolumn{1}{c}{} & \multicolumn{1}{c}{$0.45$%
} & \multicolumn{1}{c}{} & \multicolumn{1}{c}{} & \multicolumn{1}{c}{$0.5$}
& \multicolumn{1}{c}{} & \multicolumn{1}{c}{} & \multicolumn{1}{c}{$0$} & 
\multicolumn{1}{c}{} & \multicolumn{1}{c}{$0.25$} & \multicolumn{1}{c}{} & 
\multicolumn{1}{c}{$0.45$} \\ 
&  &  &  &  &  & \multicolumn{1}{c}{} & \multicolumn{1}{c}{} & 
\multicolumn{1}{c}{} & \multicolumn{1}{c}{} & \multicolumn{1}{c}{} & 
\multicolumn{1}{c}{} & \multicolumn{1}{c}{$c_{\alpha ,0.5}$} & 
\multicolumn{1}{c}{} & \multicolumn{1}{c}{$\widehat{c}_{\alpha ,0.5}$} & 
\multicolumn{1}{c}{} & \multicolumn{1}{c}{} & \multicolumn{1}{c}{} & 
\multicolumn{1}{c}{} & \multicolumn{1}{c}{} & \multicolumn{1}{c}{} \\ 
&  &  &  &  &  & \multicolumn{1}{c}{} & \multicolumn{1}{c}{} & 
\multicolumn{1}{c}{} & \multicolumn{1}{c}{} & \multicolumn{1}{c}{} & 
\multicolumn{1}{c}{} & \multicolumn{1}{c}{} & \multicolumn{1}{c}{} & 
\multicolumn{1}{c}{} & \multicolumn{1}{c}{} & \multicolumn{1}{c}{} & 
\multicolumn{1}{c}{} & \multicolumn{1}{c}{} & \multicolumn{1}{c}{} & 
\multicolumn{1}{c}{} \\ 
\multicolumn{1}{c}{} & \multicolumn{1}{c}{$m$} & \multicolumn{1}{c}{} & 
\multicolumn{1}{c}{$m^{\ast }$} & \multicolumn{1}{c}{} & \multicolumn{1}{c}{}
& \multicolumn{1}{c}{} & \multicolumn{1}{c}{} & \multicolumn{1}{c}{} & 
\multicolumn{1}{c}{} & \multicolumn{1}{c}{} & \multicolumn{1}{c}{} & 
\multicolumn{1}{c}{} & \multicolumn{1}{c}{} & \multicolumn{1}{c}{} & 
\multicolumn{1}{c}{} & \multicolumn{1}{c}{} & \multicolumn{1}{c}{} & 
\multicolumn{1}{c}{} & \multicolumn{1}{c}{} & \multicolumn{1}{c}{} \\ 
\multicolumn{1}{c}{} & \multicolumn{1}{c}{} & \multicolumn{1}{c}{} & 
\multicolumn{1}{c}{} & \multicolumn{1}{c}{} & \multicolumn{1}{c}{} & 
\multicolumn{1}{c}{} & \multicolumn{1}{c}{} & \multicolumn{1}{c}{} & 
\multicolumn{1}{c}{} & \multicolumn{1}{c}{} & \multicolumn{1}{c}{} & 
\multicolumn{1}{c}{} & \multicolumn{1}{c}{} & \multicolumn{1}{c}{} & 
\multicolumn{1}{c}{} & \multicolumn{1}{c}{} & \multicolumn{1}{c}{} & 
\multicolumn{1}{c}{} & \multicolumn{1}{c}{} & \multicolumn{1}{c}{} \\ 
\multicolumn{1}{c}{} & \multicolumn{1}{c}{} & \multicolumn{1}{c}{} & 
\multicolumn{1}{c}{$25$} & \multicolumn{1}{c}{} & \multicolumn{1}{c}{} & 
\multicolumn{1}{c}{$0.035$} & \multicolumn{1}{c}{} & \multicolumn{1}{c}{$%
0.042$} & \multicolumn{1}{c}{} & \multicolumn{1}{c}{$0.026$} & 
\multicolumn{1}{c}{} & \multicolumn{1}{c}{$0.011$} & \multicolumn{1}{c}{} & 
\multicolumn{1}{c}{$0.022$} & \multicolumn{1}{c}{} & \multicolumn{1}{c}{$%
0.027$} & \multicolumn{1}{c}{} & \multicolumn{1}{c}{$0.037$} & 
\multicolumn{1}{c}{} & \multicolumn{1}{c}{$0.030$} \\ 
\multicolumn{1}{c}{} & \multicolumn{1}{c}{$50$} & \multicolumn{1}{c}{} & 
\multicolumn{1}{c}{$50$} & \multicolumn{1}{c}{} & \multicolumn{1}{c}{} & 
\multicolumn{1}{c}{$0.046$} & \multicolumn{1}{c}{} & \multicolumn{1}{c}{$%
0.051$} & \multicolumn{1}{c}{} & \multicolumn{1}{c}{$0.040$} & 
\multicolumn{1}{c}{} & \multicolumn{1}{c}{$0.012$} & \multicolumn{1}{c}{} & 
\multicolumn{1}{c}{$0.029$} & \multicolumn{1}{c}{} & \multicolumn{1}{c}{$%
0.044$} & \multicolumn{1}{c}{} & \multicolumn{1}{c}{$0.054$} & 
\multicolumn{1}{c}{} & \multicolumn{1}{c}{$0.039$} \\ 
\multicolumn{1}{c}{} & \multicolumn{1}{c}{} & \multicolumn{1}{c}{} & 
\multicolumn{1}{c}{$100$} & \multicolumn{1}{c}{} & \multicolumn{1}{c}{} & 
\multicolumn{1}{c}{$0.033$} & \multicolumn{1}{c}{} & \multicolumn{1}{c}{$%
0.049$} & \multicolumn{1}{c}{} & \multicolumn{1}{c}{$0.027$} & 
\multicolumn{1}{c}{} & \multicolumn{1}{c}{$0.009$} & \multicolumn{1}{c}{} & 
\multicolumn{1}{c}{$0.023$} & \multicolumn{1}{c}{} & \multicolumn{1}{c}{$%
0.028$} & \multicolumn{1}{c}{} & \multicolumn{1}{c}{$0.026$} & 
\multicolumn{1}{c}{} & \multicolumn{1}{c}{$0.025$} \\ 
\multicolumn{1}{c}{} & \multicolumn{1}{c}{} & \multicolumn{1}{c}{} & 
\multicolumn{1}{c}{$200$} & \multicolumn{1}{c}{} & \multicolumn{1}{c}{} & 
\multicolumn{1}{c}{$0.055$} & \multicolumn{1}{c}{} & \multicolumn{1}{c}{$%
0.063$} & \multicolumn{1}{c}{} & \multicolumn{1}{c}{$0.044$} & 
\multicolumn{1}{c}{} & \multicolumn{1}{c}{$0.018$} & \multicolumn{1}{c}{} & 
\multicolumn{1}{c}{$0.034$} & \multicolumn{1}{c}{} & \multicolumn{1}{c}{$%
0.051$} & \multicolumn{1}{c}{} & \multicolumn{1}{c}{$0.056$} & 
\multicolumn{1}{c}{} & \multicolumn{1}{c}{$0.047$} \\ 
\multicolumn{1}{c}{} & \multicolumn{1}{c}{} & \multicolumn{1}{c}{} & 
\multicolumn{1}{c}{} & \multicolumn{1}{c}{} & \multicolumn{1}{c}{} & 
\multicolumn{1}{c}{} & \multicolumn{1}{c}{} & \multicolumn{1}{c}{} & 
\multicolumn{1}{c}{} & \multicolumn{1}{c}{} & \multicolumn{1}{c}{} & 
\multicolumn{1}{c}{} & \multicolumn{1}{c}{} & \multicolumn{1}{c}{} & 
\multicolumn{1}{c}{} & \multicolumn{1}{c}{} & \multicolumn{1}{c}{} & 
\multicolumn{1}{c}{} & \multicolumn{1}{c}{} & \multicolumn{1}{c}{} \\ 
\multicolumn{1}{c}{} & \multicolumn{1}{c}{} & \multicolumn{1}{c}{} & 
\multicolumn{1}{c}{} & \multicolumn{1}{c}{} & \multicolumn{1}{c}{} & 
\multicolumn{1}{c}{} & \multicolumn{1}{c}{} & \multicolumn{1}{c}{} & 
\multicolumn{1}{c}{} & \multicolumn{1}{c}{} & \multicolumn{1}{c}{} & 
\multicolumn{1}{c}{} & \multicolumn{1}{c}{} & \multicolumn{1}{c}{} & 
\multicolumn{1}{c}{} & \multicolumn{1}{c}{} & \multicolumn{1}{c}{} & 
\multicolumn{1}{c}{} & \multicolumn{1}{c}{} & \multicolumn{1}{c}{} \\ 
\multicolumn{1}{c}{} & \multicolumn{1}{c}{} & \multicolumn{1}{c}{} & 
\multicolumn{1}{c}{$50$} & \multicolumn{1}{c}{} & \multicolumn{1}{c}{} & 
\multicolumn{1}{c}{$0.044$} & \multicolumn{1}{c}{} & \multicolumn{1}{c}{$%
0.043$} & \multicolumn{1}{c}{} & \multicolumn{1}{c}{$0.030$} & 
\multicolumn{1}{c}{} & \multicolumn{1}{c}{$0.008$} & \multicolumn{1}{c}{} & 
\multicolumn{1}{c}{$0.020$} & \multicolumn{1}{c}{} & \multicolumn{1}{c}{$%
0.033$} & \multicolumn{1}{c}{} & \multicolumn{1}{c}{$0.038$} & 
\multicolumn{1}{c}{} & \multicolumn{1}{c}{$0.027$} \\ 
\multicolumn{1}{c}{} & \multicolumn{1}{c}{$100$} & \multicolumn{1}{c}{} & 
\multicolumn{1}{c}{$100$} & \multicolumn{1}{c}{} & \multicolumn{1}{c}{} & 
\multicolumn{1}{c}{$0.033$} & \multicolumn{1}{c}{} & \multicolumn{1}{c}{$%
0.036$} & \multicolumn{1}{c}{} & \multicolumn{1}{c}{$0.039$} & 
\multicolumn{1}{c}{} & \multicolumn{1}{c}{$0.015$} & \multicolumn{1}{c}{} & 
\multicolumn{1}{c}{$0.025$} & \multicolumn{1}{c}{} & \multicolumn{1}{c}{$%
0.034$} & \multicolumn{1}{c}{} & \multicolumn{1}{c}{$0.034$} & 
\multicolumn{1}{c}{} & \multicolumn{1}{c}{$0.028$} \\ 
\multicolumn{1}{c}{} & \multicolumn{1}{c}{} & \multicolumn{1}{c}{} & 
\multicolumn{1}{c}{$200$} & \multicolumn{1}{c}{} & \multicolumn{1}{c}{} & 
\multicolumn{1}{c}{$0.054$} & \multicolumn{1}{c}{} & \multicolumn{1}{c}{$%
0.052$} & \multicolumn{1}{c}{} & \multicolumn{1}{c}{$0.041$} & 
\multicolumn{1}{c}{} & \multicolumn{1}{c}{$0.009$} & \multicolumn{1}{c}{} & 
\multicolumn{1}{c}{$0.019$} & \multicolumn{1}{c}{} & \multicolumn{1}{c}{$%
0.051$} & \multicolumn{1}{c}{} & \multicolumn{1}{c}{$0.040$} & 
\multicolumn{1}{c}{} & \multicolumn{1}{c}{$0.024$} \\ 
\multicolumn{1}{c}{} & \multicolumn{1}{c}{} & \multicolumn{1}{c}{} & 
\multicolumn{1}{c}{$400$} & \multicolumn{1}{c}{} & \multicolumn{1}{c}{} & 
\multicolumn{1}{c}{$0.059$} & \multicolumn{1}{c}{} & \multicolumn{1}{c}{$%
0.050$} & \multicolumn{1}{c}{} & \multicolumn{1}{c}{$0.046$} & 
\multicolumn{1}{c}{} & \multicolumn{1}{c}{$0.014$} & \multicolumn{1}{c}{} & 
\multicolumn{1}{c}{$0.035$} & \multicolumn{1}{c}{} & \multicolumn{1}{c}{$%
0.052$} & \multicolumn{1}{c}{} & \multicolumn{1}{c}{$0.055$} & 
\multicolumn{1}{c}{} & \multicolumn{1}{c}{$0.044$} \\ 
\multicolumn{1}{c}{} & \multicolumn{1}{c}{} & \multicolumn{1}{c}{} & 
\multicolumn{1}{c}{} & \multicolumn{1}{c}{} & \multicolumn{1}{c}{} & 
\multicolumn{1}{c}{} & \multicolumn{1}{c}{} & \multicolumn{1}{c}{} & 
\multicolumn{1}{c}{} & \multicolumn{1}{c}{} & \multicolumn{1}{c}{} & 
\multicolumn{1}{c}{} & \multicolumn{1}{c}{} & \multicolumn{1}{c}{} & 
\multicolumn{1}{c}{} & \multicolumn{1}{c}{} & \multicolumn{1}{c}{} & 
\multicolumn{1}{c}{} & \multicolumn{1}{c}{} & \multicolumn{1}{c}{} \\ 
\multicolumn{1}{c}{} & \multicolumn{1}{c}{} & \multicolumn{1}{c}{} & 
\multicolumn{1}{c}{} & \multicolumn{1}{c}{} & \multicolumn{1}{c}{} & 
\multicolumn{1}{c}{} & \multicolumn{1}{c}{} & \multicolumn{1}{c}{} & 
\multicolumn{1}{c}{} & \multicolumn{1}{c}{} & \multicolumn{1}{c}{} & 
\multicolumn{1}{c}{} & \multicolumn{1}{c}{} & \multicolumn{1}{c}{} & 
\multicolumn{1}{c}{} & \multicolumn{1}{c}{} & \multicolumn{1}{c}{} & 
\multicolumn{1}{c}{} & \multicolumn{1}{c}{} & \multicolumn{1}{c}{} \\ 
\multicolumn{1}{c}{} & \multicolumn{1}{c}{} & \multicolumn{1}{c}{} & 
\multicolumn{1}{c}{$100$} & \multicolumn{1}{c}{} & \multicolumn{1}{c}{} & 
\multicolumn{1}{c}{$0.047$} & \multicolumn{1}{c}{} & \multicolumn{1}{c}{$%
0.049$} & \multicolumn{1}{c}{} & \multicolumn{1}{c}{$0.034$} & 
\multicolumn{1}{c}{} & \multicolumn{1}{c}{$0.015$} & \multicolumn{1}{c}{} & 
\multicolumn{1}{c}{$0.027$} & \multicolumn{1}{c}{} & \multicolumn{1}{c}{$%
0.039$} & \multicolumn{1}{c}{} & \multicolumn{1}{c}{$0.040$} & 
\multicolumn{1}{c}{} & \multicolumn{1}{c}{$0.038$} \\ 
\multicolumn{1}{c}{} & \multicolumn{1}{c}{$200$} & \multicolumn{1}{c}{} & 
\multicolumn{1}{c}{$200$} & \multicolumn{1}{c}{} & \multicolumn{1}{c}{} & 
\multicolumn{1}{c}{$0.041$} & \multicolumn{1}{c}{} & \multicolumn{1}{c}{$%
0.055$} & \multicolumn{1}{c}{} & \multicolumn{1}{c}{$0.050$} & 
\multicolumn{1}{c}{} & \multicolumn{1}{c}{$0.022$} & \multicolumn{1}{c}{} & 
\multicolumn{1}{c}{$0.036$} & \multicolumn{1}{c}{} & \multicolumn{1}{c}{$%
0.041$} & \multicolumn{1}{c}{} & \multicolumn{1}{c}{$0.055$} & 
\multicolumn{1}{c}{} & \multicolumn{1}{c}{$0.047$} \\ 
\multicolumn{1}{c}{} & \multicolumn{1}{c}{} & \multicolumn{1}{c}{} & 
\multicolumn{1}{c}{$400$} & \multicolumn{1}{c}{} & \multicolumn{1}{c}{} & 
\multicolumn{1}{c}{$0.051$} & \multicolumn{1}{c}{} & \multicolumn{1}{c}{$%
0.054$} & \multicolumn{1}{c}{} & \multicolumn{1}{c}{$0.053$} & 
\multicolumn{1}{c}{} & \multicolumn{1}{c}{$0.015$} & \multicolumn{1}{c}{} & 
\multicolumn{1}{c}{$0.037$} & \multicolumn{1}{c}{} & \multicolumn{1}{c}{$%
0.049$} & \multicolumn{1}{c}{} & \multicolumn{1}{c}{$0.046$} & 
\multicolumn{1}{c}{} & \multicolumn{1}{c}{$0.046$} \\ 
\multicolumn{1}{c}{} & \multicolumn{1}{c}{} & \multicolumn{1}{c}{} & 
\multicolumn{1}{c}{$800$} & \multicolumn{1}{c}{} & \multicolumn{1}{c}{} & 
\multicolumn{1}{c}{$0.042$} & \multicolumn{1}{c}{} & \multicolumn{1}{c}{$%
0.042$} & \multicolumn{1}{c}{} & \multicolumn{1}{c}{$0.047$} & 
\multicolumn{1}{c}{} & \multicolumn{1}{c}{$0.013$} & \multicolumn{1}{c}{} & 
\multicolumn{1}{c}{$0.023$} & \multicolumn{1}{c}{} & \multicolumn{1}{c}{$%
0.037$} & \multicolumn{1}{c}{} & \multicolumn{1}{c}{$0.048$} & 
\multicolumn{1}{c}{} & \multicolumn{1}{c}{$0.047$} \\ 
\multicolumn{1}{c}{} & \multicolumn{1}{c}{} & \multicolumn{1}{c}{} & 
\multicolumn{1}{c}{} & \multicolumn{1}{c}{} & \multicolumn{1}{c}{} & 
\multicolumn{1}{c}{} & \multicolumn{1}{c}{} & \multicolumn{1}{c}{} & 
\multicolumn{1}{c}{} & \multicolumn{1}{c}{} & \multicolumn{1}{c}{} & 
\multicolumn{1}{c}{} & \multicolumn{1}{c}{} & \multicolumn{1}{c}{} &  &  & 
&  &  &  \\ \hline\hline
\end{tabular}
}
\par
{\scriptsize {\footnotesize 
\begin{tablenotes}
      \tiny
            \item 
            
\end{tablenotes}
} }
\end{table*}

\begin{table*}[h]
\caption{{\protect\footnotesize {Empirical rejection frequencies under the
null of no changepoint and no covariates - Case III, $\protect\beta_0=1$}}}
\label{tab:ERF3}\centering
\par
{\scriptsize 
\begin{tabular}{lllllllllllllllllllll}
\hline\hline
&  &  &  &  &  &  &  &  &  &  &  &  &  &  &  &  &  &  &  &  \\ 
&  &  &  &  &  & \multicolumn{5}{c}{Weighted CUSUM} & \multicolumn{1}{c}{} & 
\multicolumn{3}{c}{Standardised CUSUM} & \multicolumn{1}{c}{} & 
\multicolumn{5}{c}{Weighted Page-CUSUM} \\ 
&  &  &  & $\psi $ &  & \multicolumn{1}{c}{$0$} & \multicolumn{1}{c}{} & 
\multicolumn{1}{c}{$0.25$} & \multicolumn{1}{c}{} & \multicolumn{1}{c}{$0.45$%
} & \multicolumn{1}{c}{} & \multicolumn{1}{c}{} & \multicolumn{1}{c}{$0.5$}
& \multicolumn{1}{c}{} & \multicolumn{1}{c}{} & \multicolumn{1}{c}{$0$} & 
\multicolumn{1}{c}{} & \multicolumn{1}{c}{$0.25$} & \multicolumn{1}{c}{} & 
\multicolumn{1}{c}{$0.45$} \\ 
&  &  &  &  &  & \multicolumn{1}{c}{} & \multicolumn{1}{c}{} & 
\multicolumn{1}{c}{} & \multicolumn{1}{c}{} & \multicolumn{1}{c}{} & 
\multicolumn{1}{c}{} & \multicolumn{1}{c}{$c_{\alpha ,0.5}$} & 
\multicolumn{1}{c}{} & \multicolumn{1}{c}{$\widehat{c}_{\alpha ,0.5}$} & 
\multicolumn{1}{c}{} & \multicolumn{1}{c}{} & \multicolumn{1}{c}{} & 
\multicolumn{1}{c}{} & \multicolumn{1}{c}{} & \multicolumn{1}{c}{} \\ 
&  &  &  &  &  & \multicolumn{1}{c}{} & \multicolumn{1}{c}{} & 
\multicolumn{1}{c}{} & \multicolumn{1}{c}{} & \multicolumn{1}{c}{} & 
\multicolumn{1}{c}{} & \multicolumn{1}{c}{} & \multicolumn{1}{c}{} & 
\multicolumn{1}{c}{} & \multicolumn{1}{c}{} & \multicolumn{1}{c}{} & 
\multicolumn{1}{c}{} & \multicolumn{1}{c}{} & \multicolumn{1}{c}{} & 
\multicolumn{1}{c}{} \\ 
\multicolumn{1}{c}{} & \multicolumn{1}{c}{$m$} & \multicolumn{1}{c}{} & 
\multicolumn{1}{c}{$m^{\ast }$} & \multicolumn{1}{c}{} & \multicolumn{1}{c}{}
& \multicolumn{1}{c}{} & \multicolumn{1}{c}{} & \multicolumn{1}{c}{} & 
\multicolumn{1}{c}{} & \multicolumn{1}{c}{} & \multicolumn{1}{c}{} & 
\multicolumn{1}{c}{} & \multicolumn{1}{c}{} & \multicolumn{1}{c}{} & 
\multicolumn{1}{c}{} & \multicolumn{1}{c}{} & \multicolumn{1}{c}{} & 
\multicolumn{1}{c}{} & \multicolumn{1}{c}{} & \multicolumn{1}{c}{} \\ 
\multicolumn{1}{c}{} & \multicolumn{1}{c}{} & \multicolumn{1}{c}{} & 
\multicolumn{1}{c}{} & \multicolumn{1}{c}{} & \multicolumn{1}{c}{} & 
\multicolumn{1}{c}{} & \multicolumn{1}{c}{} & \multicolumn{1}{c}{} & 
\multicolumn{1}{c}{} & \multicolumn{1}{c}{} & \multicolumn{1}{c}{} & 
\multicolumn{1}{c}{} & \multicolumn{1}{c}{} & \multicolumn{1}{c}{} & 
\multicolumn{1}{c}{} & \multicolumn{1}{c}{} & \multicolumn{1}{c}{} & 
\multicolumn{1}{c}{} & \multicolumn{1}{c}{} & \multicolumn{1}{c}{} \\ 
\multicolumn{1}{c}{} & \multicolumn{1}{c}{} & \multicolumn{1}{c}{} & 
\multicolumn{1}{c}{$25$} & \multicolumn{1}{c}{} & \multicolumn{1}{c}{} & 
\multicolumn{1}{c}{$0.070$} & \multicolumn{1}{c}{} & \multicolumn{1}{c}{$%
0.069$} & \multicolumn{1}{c}{} & \multicolumn{1}{c}{$0.053$} & 
\multicolumn{1}{c}{} & \multicolumn{1}{c}{$0.028$} & \multicolumn{1}{c}{} & 
\multicolumn{1}{c}{$0.047$} & \multicolumn{1}{c}{} & \multicolumn{1}{c}{$%
0.062$} & \multicolumn{1}{c}{} & \multicolumn{1}{c}{$0.072$} & 
\multicolumn{1}{c}{} & \multicolumn{1}{c}{$0.058$} \\ 
\multicolumn{1}{c}{} & \multicolumn{1}{c}{$50$} & \multicolumn{1}{c}{} & 
\multicolumn{1}{c}{$50$} & \multicolumn{1}{c}{} & \multicolumn{1}{c}{} & 
\multicolumn{1}{c}{$0.068$} & \multicolumn{1}{c}{} & \multicolumn{1}{c}{$%
0.073$} & \multicolumn{1}{c}{} & \multicolumn{1}{c}{$0.055$} & 
\multicolumn{1}{c}{} & \multicolumn{1}{c}{$0.031$} & \multicolumn{1}{c}{} & 
\multicolumn{1}{c}{$0.044$} & \multicolumn{1}{c}{} & \multicolumn{1}{c}{$%
0.066$} & \multicolumn{1}{c}{} & \multicolumn{1}{c}{$0.076$} & 
\multicolumn{1}{c}{} & \multicolumn{1}{c}{$0.058$} \\ 
\multicolumn{1}{c}{} & \multicolumn{1}{c}{} & \multicolumn{1}{c}{} & 
\multicolumn{1}{c}{$100$} & \multicolumn{1}{c}{} & \multicolumn{1}{c}{} & 
\multicolumn{1}{c}{$0.092$} & \multicolumn{1}{c}{} & \multicolumn{1}{c}{$%
0.100$} & \multicolumn{1}{c}{} & \multicolumn{1}{c}{$0.078$} & 
\multicolumn{1}{c}{} & \multicolumn{1}{c}{$0.053$} & \multicolumn{1}{c}{} & 
\multicolumn{1}{c}{$0.070$} & \multicolumn{1}{c}{} & \multicolumn{1}{c}{$%
0.089$} & \multicolumn{1}{c}{} & \multicolumn{1}{c}{$0.076$} & 
\multicolumn{1}{c}{} & \multicolumn{1}{c}{$0.076$} \\ 
\multicolumn{1}{c}{} & \multicolumn{1}{c}{} & \multicolumn{1}{c}{} & 
\multicolumn{1}{c}{$200$} & \multicolumn{1}{c}{} & \multicolumn{1}{c}{} & 
\multicolumn{1}{c}{$0.111$} & \multicolumn{1}{c}{} & \multicolumn{1}{c}{$%
0.114$} & \multicolumn{1}{c}{} & \multicolumn{1}{c}{$0.092$} & 
\multicolumn{1}{c}{} & \multicolumn{1}{c}{$0.055$} & \multicolumn{1}{c}{} & 
\multicolumn{1}{c}{$0.073$} & \multicolumn{1}{c}{} & \multicolumn{1}{c}{$%
0.109$} & \multicolumn{1}{c}{} & \multicolumn{1}{c}{$0.098$} & 
\multicolumn{1}{c}{} & \multicolumn{1}{c}{$0.091$} \\ 
\multicolumn{1}{c}{} & \multicolumn{1}{c}{} & \multicolumn{1}{c}{} & 
\multicolumn{1}{c}{} & \multicolumn{1}{c}{} & \multicolumn{1}{c}{} & 
\multicolumn{1}{c}{} & \multicolumn{1}{c}{} & \multicolumn{1}{c}{} & 
\multicolumn{1}{c}{} & \multicolumn{1}{c}{} & \multicolumn{1}{c}{} & 
\multicolumn{1}{c}{} & \multicolumn{1}{c}{} & \multicolumn{1}{c}{} & 
\multicolumn{1}{c}{} & \multicolumn{1}{c}{} & \multicolumn{1}{c}{} & 
\multicolumn{1}{c}{} & \multicolumn{1}{c}{} & \multicolumn{1}{c}{} \\ 
\multicolumn{1}{c}{} & \multicolumn{1}{c}{} & \multicolumn{1}{c}{} & 
\multicolumn{1}{c}{} & \multicolumn{1}{c}{} & \multicolumn{1}{c}{} & 
\multicolumn{1}{c}{} & \multicolumn{1}{c}{} & \multicolumn{1}{c}{} & 
\multicolumn{1}{c}{} & \multicolumn{1}{c}{} & \multicolumn{1}{c}{} & 
\multicolumn{1}{c}{} & \multicolumn{1}{c}{} & \multicolumn{1}{c}{} & 
\multicolumn{1}{c}{} & \multicolumn{1}{c}{} & \multicolumn{1}{c}{} & 
\multicolumn{1}{c}{} & \multicolumn{1}{c}{} & \multicolumn{1}{c}{} \\ 
\multicolumn{1}{c}{} & \multicolumn{1}{c}{} & \multicolumn{1}{c}{} & 
\multicolumn{1}{c}{$50$} & \multicolumn{1}{c}{} & \multicolumn{1}{c}{} & 
\multicolumn{1}{c}{$0.070$} & \multicolumn{1}{c}{} & \multicolumn{1}{c}{$%
0.068$} & \multicolumn{1}{c}{} & \multicolumn{1}{c}{$0.049$} & 
\multicolumn{1}{c}{} & \multicolumn{1}{c}{$0.020$} & \multicolumn{1}{c}{} & 
\multicolumn{1}{c}{$0.037$} & \multicolumn{1}{c}{} & \multicolumn{1}{c}{$%
0.064$} & \multicolumn{1}{c}{} & \multicolumn{1}{c}{$0.070$} & 
\multicolumn{1}{c}{} & \multicolumn{1}{c}{$0.049$} \\ 
\multicolumn{1}{c}{} & \multicolumn{1}{c}{$100$} & \multicolumn{1}{c}{} & 
\multicolumn{1}{c}{$100$} & \multicolumn{1}{c}{} & \multicolumn{1}{c}{} & 
\multicolumn{1}{c}{$0.075$} & \multicolumn{1}{c}{} & \multicolumn{1}{c}{$%
0.082$} & \multicolumn{1}{c}{} & \multicolumn{1}{c}{$0.072$} & 
\multicolumn{1}{c}{} & \multicolumn{1}{c}{$0.038$} & \multicolumn{1}{c}{} & 
\multicolumn{1}{c}{$0.057$} & \multicolumn{1}{c}{} & \multicolumn{1}{c}{$%
0.077$} & \multicolumn{1}{c}{} & \multicolumn{1}{c}{$0.078$} & 
\multicolumn{1}{c}{} & \multicolumn{1}{c}{$0.069$} \\ 
\multicolumn{1}{c}{} & \multicolumn{1}{c}{} & \multicolumn{1}{c}{} & 
\multicolumn{1}{c}{$200$} & \multicolumn{1}{c}{} & \multicolumn{1}{c}{} & 
\multicolumn{1}{c}{$0.095$} & \multicolumn{1}{c}{} & \multicolumn{1}{c}{$%
0.099$} & \multicolumn{1}{c}{} & \multicolumn{1}{c}{$0.085$} & 
\multicolumn{1}{c}{} & \multicolumn{1}{c}{$0.047$} & \multicolumn{1}{c}{} & 
\multicolumn{1}{c}{$0.067$} & \multicolumn{1}{c}{} & \multicolumn{1}{c}{$%
0.092$} & \multicolumn{1}{c}{} & \multicolumn{1}{c}{$0.083$} & 
\multicolumn{1}{c}{} & \multicolumn{1}{c}{$0.077$} \\ 
\multicolumn{1}{c}{} & \multicolumn{1}{c}{} & \multicolumn{1}{c}{} & 
\multicolumn{1}{c}{$400$} & \multicolumn{1}{c}{} & \multicolumn{1}{c}{} & 
\multicolumn{1}{c}{$0.111$} & \multicolumn{1}{c}{} & \multicolumn{1}{c}{$%
0.100$} & \multicolumn{1}{c}{} & \multicolumn{1}{c}{$0.089$} & 
\multicolumn{1}{c}{} & \multicolumn{1}{c}{$0.057$} & \multicolumn{1}{c}{} & 
\multicolumn{1}{c}{$0.076$} & \multicolumn{1}{c}{} & \multicolumn{1}{c}{$%
0.106$} & \multicolumn{1}{c}{} & \multicolumn{1}{c}{$0.106$} & 
\multicolumn{1}{c}{} & \multicolumn{1}{c}{$0.092$} \\ 
\multicolumn{1}{c}{} & \multicolumn{1}{c}{} & \multicolumn{1}{c}{} & 
\multicolumn{1}{c}{} & \multicolumn{1}{c}{} & \multicolumn{1}{c}{} & 
\multicolumn{1}{c}{} & \multicolumn{1}{c}{} & \multicolumn{1}{c}{} & 
\multicolumn{1}{c}{} & \multicolumn{1}{c}{} & \multicolumn{1}{c}{} & 
\multicolumn{1}{c}{} & \multicolumn{1}{c}{} & \multicolumn{1}{c}{} & 
\multicolumn{1}{c}{} & \multicolumn{1}{c}{} & \multicolumn{1}{c}{} & 
\multicolumn{1}{c}{} & \multicolumn{1}{c}{} & \multicolumn{1}{c}{} \\ 
\multicolumn{1}{c}{} & \multicolumn{1}{c}{} & \multicolumn{1}{c}{} & 
\multicolumn{1}{c}{} & \multicolumn{1}{c}{} & \multicolumn{1}{c}{} & 
\multicolumn{1}{c}{} & \multicolumn{1}{c}{} & \multicolumn{1}{c}{} & 
\multicolumn{1}{c}{} & \multicolumn{1}{c}{} & \multicolumn{1}{c}{} & 
\multicolumn{1}{c}{} & \multicolumn{1}{c}{} & \multicolumn{1}{c}{} & 
\multicolumn{1}{c}{} & \multicolumn{1}{c}{} & \multicolumn{1}{c}{} & 
\multicolumn{1}{c}{} & \multicolumn{1}{c}{} & \multicolumn{1}{c}{} \\ 
\multicolumn{1}{c}{} & \multicolumn{1}{c}{} & \multicolumn{1}{c}{} & 
\multicolumn{1}{c}{$100$} & \multicolumn{1}{c}{} & \multicolumn{1}{c}{} & 
\multicolumn{1}{c}{$0.060$} & \multicolumn{1}{c}{} & \multicolumn{1}{c}{$%
0.066$} & \multicolumn{1}{c}{} & \multicolumn{1}{c}{$0.051$} & 
\multicolumn{1}{c}{} & \multicolumn{1}{c}{$0.024$} & \multicolumn{1}{c}{} & 
\multicolumn{1}{c}{$0.040$} & \multicolumn{1}{c}{} & \multicolumn{1}{c}{$%
0.057$} & \multicolumn{1}{c}{} & \multicolumn{1}{c}{$0.058$} & 
\multicolumn{1}{c}{} & \multicolumn{1}{c}{$0.052$} \\ 
\multicolumn{1}{c}{} & \multicolumn{1}{c}{$200$} & \multicolumn{1}{c}{} & 
\multicolumn{1}{c}{$200$} & \multicolumn{1}{c}{} & \multicolumn{1}{c}{} & 
\multicolumn{1}{c}{$0.093$} & \multicolumn{1}{c}{} & \multicolumn{1}{c}{$%
0.102$} & \multicolumn{1}{c}{} & \multicolumn{1}{c}{$0.096$} & 
\multicolumn{1}{c}{} & \multicolumn{1}{c}{$0.045$} & \multicolumn{1}{c}{} & 
\multicolumn{1}{c}{$0.059$} & \multicolumn{1}{c}{} & \multicolumn{1}{c}{$%
0.090$} & \multicolumn{1}{c}{} & \multicolumn{1}{c}{$0.097$} & 
\multicolumn{1}{c}{} & \multicolumn{1}{c}{$0.083$} \\ 
\multicolumn{1}{c}{} & \multicolumn{1}{c}{} & \multicolumn{1}{c}{} & 
\multicolumn{1}{c}{$400$} & \multicolumn{1}{c}{} & \multicolumn{1}{c}{} & 
\multicolumn{1}{c}{$0.076$} & \multicolumn{1}{c}{} & \multicolumn{1}{c}{$%
0.079$} & \multicolumn{1}{c}{} & \multicolumn{1}{c}{$0.071$} & 
\multicolumn{1}{c}{} & \multicolumn{1}{c}{$0.032$} & \multicolumn{1}{c}{} & 
\multicolumn{1}{c}{$0.060$} & \multicolumn{1}{c}{} & \multicolumn{1}{c}{$%
0.078$} & \multicolumn{1}{c}{} & \multicolumn{1}{c}{$0.076$} & 
\multicolumn{1}{c}{} & \multicolumn{1}{c}{$0.072$} \\ 
\multicolumn{1}{c}{} & \multicolumn{1}{c}{} & \multicolumn{1}{c}{} & 
\multicolumn{1}{c}{$800$} & \multicolumn{1}{c}{} & \multicolumn{1}{c}{} & 
\multicolumn{1}{c}{$0.093$} & \multicolumn{1}{c}{} & \multicolumn{1}{c}{$%
0.087$} & \multicolumn{1}{c}{} & \multicolumn{1}{c}{$0.082$} & 
\multicolumn{1}{c}{} & \multicolumn{1}{c}{$0.040$} & \multicolumn{1}{c}{} & 
\multicolumn{1}{c}{$0.057$} & \multicolumn{1}{c}{} & \multicolumn{1}{c}{$%
0.085$} & \multicolumn{1}{c}{} & \multicolumn{1}{c}{$0.096$} & 
\multicolumn{1}{c}{} & \multicolumn{1}{c}{$0.088$} \\ 
\multicolumn{1}{c}{} & \multicolumn{1}{c}{} & \multicolumn{1}{c}{} & 
\multicolumn{1}{c}{} & \multicolumn{1}{c}{} & \multicolumn{1}{c}{} & 
\multicolumn{1}{c}{} & \multicolumn{1}{c}{} & \multicolumn{1}{c}{} & 
\multicolumn{1}{c}{} & \multicolumn{1}{c}{} & \multicolumn{1}{c}{} & 
\multicolumn{1}{c}{} & \multicolumn{1}{c}{} & \multicolumn{1}{c}{} &  &  & 
&  &  &  \\ \hline\hline
\end{tabular}
}
\par
{\scriptsize {\footnotesize 
\begin{tablenotes}
      \tiny
            \item 
            
\end{tablenotes}
} }
\end{table*}

Under the alternative, monitoring is carried out for $m+1\leq i\leq
m+m^{\ast }$, with the same specifications as in Cases I-III and 
\begin{equation}
y_{i}=\left( \beta _{0}+\Delta I\left( i\geq m+1\right) +\epsilon
_{i,1}\right) y_{i-1}+\lambda _{0}x_{i}+\epsilon _{i,2}.  \label{gdp-power}
\end{equation}%
We set: $\Delta =0.5$ under Case I, with $\beta _{A}=0.75$ and $E\log
\left\vert \beta _{A}+\epsilon _{i,1}\right\vert =-0.298$, thus having a
change in persistence with $y_{i}$ stationary under the null and the
alternative; $\Delta =-0.1$ under Case II, with $\beta _{A}=0.969$ and $%
E\log \left\vert \beta _{A}+\epsilon _{i,1}\right\vert =-0.063$, thus having
a change from nonstationarity to stationarity; and $\Delta =0.1$ under Case III, with $\beta _{A}=1.1$ and $%
E\log \left\vert \beta _{A}+\epsilon _{i,1}\right\vert =0.089$, thus having
a change from stationarity to nonstationarity. Median delays and empirical rejection frequencies for $m=200$,
under the case of no covariates ($\lambda _{0}=0$), are in Table \ref%
{tab:Power1}.

\begin{table*}[h]
\caption{{\protect\footnotesize {Median delays and empirical rejection
frequencies under alternatives - no covariates}}}
\label{tab:Power1}\centering
\par
\resizebox{\textwidth}{!}{

\begin{tabular}{cccccccccccccccccccccc}
\hline\hline
&  &  &  &  &  &  &  &  &  &  &  &  &  &  &  &  &  &  &  &  &  \\ 
&  &  &  &  &  &  & \multicolumn{5}{c}{Weighted CUSUM} &  & 
\multicolumn{3}{c}{Standardised CUSUM} &  & \multicolumn{5}{c}{Weighted
Page-CUSUM } \\ 
& DGP &  &  &  & $\psi $ &  & $0$ &  & $0.25$ &  & $0.45$ &  &  & $0.5$ &  & 
& $0$ &  & $0.25$ &  & $0.45$ \\ 
&  &  &  &  &  &  &  &  &  &  &  &  & $c_{\alpha ,0.5}$ &  & $\widehat{c}_{\alpha ,0.5}$ &  &  &  &  &  &  \\ 
&  &  &  &  &  &  &  &  &  &  &  &  &  &  &  &  &  &  &  &  &  \\ 
\cline{2-22}
&  &  &  &  &  &  &  &  &  &  &  &  &  &  &  &  &  &  &  &  &  \\ 
&  &  &  & $100$ &  &  & $\underset{\left( 0.705\right) }{54}$ &  & $\underset{\left( 0.695\right) }{44}$ &  & $\underset{\left( 0.618\right) }{35.5}$ &  & $\underset{\left( 0.465\right) }{37}$ &  & $\underset{\left(
0.552\right) }{34}$ &  & $\underset{\left( 0.700\right) }{55}$ &  & $\underset{\left( 0.690\right) }{44}$ &  & $\underset{\left( 0.644\right) }{35.5}$ \\ 
& $\underset{\left( \beta _{0}=0.5\right) }{\text{\textbf{Case I}}}$ &  & $m^{\ast }$ & $200$ &  &  & $\underset{\left( 0.821\right) }{82}$ &  & $\underset{\left( 0.822\right) }{63}$ &  & $\underset{\left( 0.769\right) }{49}$ &  & $\underset{\left( 0.633\right) }{57}$ &  & $\underset{\left(
0.704\right) }{51}$ &  & $\underset{\left( 0.818\right) }{79}$ &  & $\underset{\left( 0.819\right) }{62}$ &  & $\underset{\left( 0.779\right) }{47}$ \\ 
&  &  &  & $400$ &  &  & $\underset{\left( 0.947\right) }{105}$ &  & $\underset{\left( 0.941\right) }{80}$ &  & $\underset{\left( 0.894\right) }{61}$ &  & $\underset{\left( 0.784\right) }{74}$ &  & $\underset{\left(
0.847\right) }{62}$ &  & $\underset{\left( 0.940\right) }{105}$ &  & $\underset{\left( 0.926\right) }{80.5}$ &  & $\underset{\left( 0.885\right) }{59}$ \\ 
&  &  &  & $800$ &  &  & $\underset{\left( 0.964\right) }{135}$ &  & $\underset{\left( 0.952\right) }{103}$ &  & $\underset{\left( 0.923\right) }{69}$ &  & $\underset{\left( 0.843\right) }{90}$ &  & $\underset{\left(
0.889\right) }{75.5}$ &  & $\underset{\left( 0.959\right) }{135}$ &  & $\underset{\left( 0.958\right) }{89}$ &  & $\underset{\left( 0.933\right) }{60}$ \\ 
&  &  &  &  &  &  &  &  &  &  &  &  &  &  &  &  &  &  &  &  &  \\ 
\cline{2-22}
&  &  &  &  &  &  &  &  &  &  &  &  &  &  &  &  &  &  &  &  &  \\ 
&  &  &  & $100$ &  &  & $\underset{\left( 1.000\right) }{20}$ &  & $\underset{\left( 1.000\right) }{14}$ &  & $\underset{\left( 1.000\right) }{11}$ &  & $\underset{\left( 1.000\right) }{12}$ &  & $\underset{\left(
1.000\right) }{11}$ &  & $\underset{\left( 1.000\right) }{20}$ &  & $\underset{\left( 1.000\right) }{14}$ &  & $\underset{\left( 1.000\right) }{11}$ \\ 
& $\underset{\left( \beta _{0}=1.05\right) }{\text{\textbf{Case II}}}$ &  & $m^{\ast }$ & $200$ &  &  & $\underset{\left( 1.000\right) }{26}$ &  & $\underset{\left( 1.000\right) }{16}$ &  & $\underset{\left( 1.000\right) }{11}$ &  & $\underset{\left( 1.000\right) }{13}$ &  & $\underset{\left(
1.000\right) }{11}$ &  & $\underset{\left( 1.000\right) }{25}$ &  & $\underset{\left( 1.000\right) }{16}$ &  & $\underset{\left( 1.000\right) }{10}$ \\ 
&  &  &  & $400$ &  &  & $\underset{\left( 1.000\right) }{29}$ &  & $\underset{\left( 1.000\right) }{18}$ &  & $\underset{\left( 1.000\right) }{11}$ &  & $\underset{\left( 1.000\right) }{13}$ &  & $\underset{\left(
1.000\right) }{11}$ &  & $\underset{\left( 1.000\right) }{30}$ &  & $\underset{\left( 1.000\right) }{18}$ &  & $\underset{\left( 1.000\right) }{11}$ \\ 
&  &  &  & $800$ &  &  & $\underset{\left( 1.000\right) }{32}$ &  & $\underset{\left( 1.000\right) }{19}$ &  & $\underset{\left( 1.000\right) }{12}$ &  & $\underset{\left( 1.000\right) }{13}$ &  & $\underset{\left(
1.000\right) }{12}$ &  & $\underset{\left( 1.000\right) }{32}$ &  & $\underset{\left( 1.000\right) }{18}$ &  & $\underset{\left( 1.000\right) }{10}$ \\ 
&  &  &  &  &  &  &  &  &  &  &  &  &  &  &  &  &  &  &  &  &  \\ 
\cline{2-22}
&  &  &  &  &  &  &  &  &  &  &  &  &  &  &  &  &  &  &  &  &  \\ 
&  &  &  & $100$ &  &  & $\underset{\left( 0.994\right) }{29}$ &  & $\underset{\left( 0.993\right) }{23}$ &  & $\underset{\left( 0.991\right) }{20}$ &  & $\underset{\left( 0.989\right) }{24}$ &  & $\underset{\left(
0.990\right) }{21}$ &  & $\underset{\left( 0.996\right) }{30}$ &  & $\underset{\left( 0.995\right) }{23}$ &  & $\underset{\left( 0.993\right) }{19}$ \\ 
& $\underset{\left( \beta _{0}=1\right) }{\text{\textbf{Case III}}}$ &  & $m^{\ast }$ & $200$ &  &  & $\underset{\left( 1.000\right) }{37}$ &  & $\underset{\left( 1.000\right) }{26}$ &  & $\underset{\left( 1.000\right) }{19}$ &  & $\underset{\left( 1.000\right) }{24}$ &  & $\underset{\left(
1.000\right) }{21}$ &  & $\underset{\left( 1.000\right) }{37}$ &  & $\underset{\left( 1.000\right) }{26}$ &  & $\underset{\left( 1.000\right) }{19}$ \\ 
&  &  &  & $400$ &  &  & $\underset{\left( 1.000\right) }{42}$ &  & $\underset{\left( 1.000\right) }{29}$ &  & $\underset{\left( 1.000\right) }{20}$ &  & $\underset{\left( 1.000\right) }{25}$ &  & $\underset{\left(
1.000\right) }{21}$ &  & $\underset{\left( 1.000\right) }{42}$ &  & $\underset{\left( 1.000\right) }{29}$ &  & $\underset{\left( 1.000\right) }{20}$ \\ 
&  &  &  & $800$ &  &  & $\underset{\left( 1.000\right) }{47}$ &  & $\underset{\left( 1.000\right) }{31}$ &  & $\underset{\left( 1.000\right) }{21}$ &  & $\underset{\left( 1.000\right) }{25}$ &  & $\underset{\left(
1.000\right) }{22}$ &  & $\underset{\left( 1.000\right) }{47}$ &  & $\underset{\left( 1.000\right) }{29}$ &  & $\underset{\left( 1.000\right) }{21}$ \\ 
&  &  &  &  &  &  &  &  &  &  &  &  &  &  &  &  &  &  &  &  &  \\ 
\hline\hline
\end{tabular}

}
\par
{\scriptsize {\ 
\begin{tablenotes}
      \tiny
            \item For each DGP, we report the \textit{median} detection delay for only the cases where a changepoint is detected (thus leaving out the cases where no changepoint is detected). Numbers in round brackets represent the empirical rejection frequencies.
            
\end{tablenotes}
} }
\end{table*}

Under stationarity, using $\psi =0.45$ yields the best results in terms of
delay and power, with little difference between the CUSUM and the Page-CUSUM
(with the latter occasionally delivering higher power, if not shorter
detection delays); results are anyway satisfactory also when using $\psi
=1/2 $ with $\widehat{c}_{\alpha ,0.5}$, which is always either the best or
the second best in terms of detection delay. The same results are observed
in the explosive and in the STUR cases. The latter is particularly
remarkable: using $\psi =1/2$ with $\widehat{c}_{\alpha ,0.5}$ delivers
essentialy the same performance as using $\psi =0.45$; however, according to
Table \ref{tab:ERF3}, the latter choice yields an oversized procedure, thus
making its performance under the alternative less reliable. Hence, Table \ref%
{tab:Power1} essentially indicates that using $\psi =1/2$ with $\widehat{c}%
_{\alpha ,0.5}$ yields the best results in terms of timeliness of detection
and power. As a final remark, detection delays worsen as $m^{\ast }$
increases; this is a natural phenomenon, since critical values (see e.g.
Corollary \ref{th-2}) increase with $m^{\ast }$, thus making detection, 
\textit{ceteris paribus}, more infrequent. In Table \ref{tab:Power2} in the
Supplement, we report median delays and empirical rejection frequencies in
the presence of covariates in the basic RCA specification. Our guidelines
based on size control suggest using $\psi =1/2$ with $c_{\alpha ,0.5}$ in
this case; median delays appear very satisfactory, although the cases $\psi
=1/2$ with $\widehat{c}_{\alpha ,0.5}$ and especially $\psi =0.45$ perform
better in several cases.

\medskip

In conclusion, the results in this section show that online changepoint
detection in an RCA\ model based on weighted CUSUM statistics seems to work
very well both under stationarity and nonstationarity. Broadly speaking, in
the latter case using the \textit{standardised} CUSUM (i.e., $\psi =1/2$) is
the preferred choice; in the former case, choosing $\psi $ close to (but
smaller than) $1/2$, appears to yield the best compromise between size
control and timely detection.

\section{Empirical applications: finding bubbles in RCA\ models\label%
{empirics}}

We validate our approach through two empirical applications, both involving
(potentially) nonstationary data. In Section \ref{covid}, we consider a
\textquotedblleft pure\textquotedblright\ RCA model with no covariates, and
use it to detect daily Covid-19 hospitalisations; in Section \ref{housing},
we consider and RCA model with covariates, and use it to detect changes in
the dynamics of house prices.

\subsection{Sequential monitoring of Covid-19 hospitalisations\label{covid}}

We consider daily data on Covid-19 hospitalisations. There is a huge
literature on the application of time series methods to the reduced form of
epidemiological models; \citet{shtatland}, \textit{inter alia}, advocate
using a low-order autoregression as an approximation of the popular SIR\
model, especially as a methodology for the early detection of outbreaks; and %
\citet{jeremy} apply the RCA model to a similar dataset, detecting several
changepoints. In this context, it is important to check whether the
observations change from an explosive to a stationary regime (meaning that
the epidemic is slowing down), or vice versa whether they change from a
stationary to an explosive regime (corresponding to a surge in the
epidemic). Our dataset consists of daily data for England recorded between
March 19th, 2020, and January 30th, 2021, i.e. spanning the period from
approximately the first lockdown (which was announced on March 23rd, 2020),
until after the third and last lockdown of January 6th, 2021. We transform
the series into logs (plus one since on some days the number of
hospitalisation is zero); no further pre-processing is applied.\footnote{%
The data are available from
\par
\url{https://ourworldindata.org/grapher/uk-daily-covid-admissions?tab=chart%
\&stackMode=absolute\&time=2020-03-29..latest\&region=World}
\par
See also %
\url{https://www.instituteforgovernment.org.uk/sites/default/files/2022-12/timeline-coronavirus-lockdown-december-2021.pdf}%
, for a timeline of the UK Government decisions on lockdown and closures.}

We use a \textquotedblleft pure\textquotedblright\ RCA\ specification with
no covariates. The purpose of our exercise is to ascertain whether public
health authorities, back in 2020, could have benefited from the use of a
sequential monitoring procedure to flag changes in the dynamics of daily
hospitalisations, thus informing their decisions. We have used $\psi =1/2$
in $g_{m,\psi }\left( k\right) $ in (\ref{boundary}); based on the empirical
rejection frequencies reported in Tables \ref{tab:ERF1}-\ref{tab:ERF3}, we
use the critical values $\widehat{c}_{\alpha ,0.5}$ defined in (\ref%
{vostr-cv});\footnote{%
Results with different values of $\psi $ - and with $\psi =1/2$ and
asymptotic critical values - are available upon request.} in all scenarios
we set the length of the monitoring horizon always equal to the training
sample size, i.e. $m=m^{\ast }$, which, according to the results in Tables %
\ref{tab:ERF1}-\ref{tab:ERF3}, always ensures size control.

\medskip

\begin{table*}[h]
\caption{Online changepoint detection for Covid-19 daily hospitalisation -
England data.}
\label{tab:TabCovidM}%\centering
%\begin{threeparttable}
%{\tiny
\par
\centering
\par
{\scriptsize {\ 
\begin{tabular}{lllll}
\hline\hline
&  &  &  &  \\ 
&  & \multicolumn{1}{c}{Changepoint 1} & \multicolumn{1}{c}{Changepoint 2} & 
\multicolumn{1}{c}{Changepoint 3} \\ 
&  & \multicolumn{1}{c}{} & \multicolumn{1}{c}{} & \multicolumn{1}{c}{} \\ 
&  & \multicolumn{1}{c}{Aug 28th, 2020} & \multicolumn{1}{c}{Nov 4th, 2020}
& \multicolumn{1}{c}{Jan 27th, 2021} \\ 
&  & \multicolumn{1}{c}{} & \multicolumn{1}{c}{} & \multicolumn{1}{c}{} \\ 
&  & \multicolumn{1}{c}{$\underset{\left[ \text{Apr 11th, 2020 - Aug 15th,
2020}\right] }{\widehat{\beta }=0.995}$} & \multicolumn{1}{c}{$\underset{%
\left[ \text{Aug 29th, 2020 - Oct 29th, 2020}\right] }{\widehat{\beta }=1.010%
}$} & \multicolumn{1}{c}{$\underset{\left[ \text{Nov 5th, 2020 - Dec 31st,
2020}\right] }{\widehat{\beta }=1.002}$} \\ 
&  &  &  &  \\ \hline\hline
\end{tabular}
}}
\par
{\scriptsize 
\begin{tablenotes}
      \tiny
            \item The series ends at 30 January 2021. We use the logs of the original data (plus one, given that, in some days, hospitalisations are equal to zero): no further transformations are used.
        
            \item For each changepoint, we report the left WLS estimates of $\beta_0$ - i.e., the value of $\beta_0$ \textit{prior} to the breakdate; we report the sample on which estimation was performed in square brackets, based on the changepoints identified using the test by \citet{horvath2022changepoint}.
            
\end{tablenotes}
%}  
%\end{threeparttable}
}
\end{table*}

By way of preliminary analysis (and also to assess how our online detection
methodology fares), in Section \ref{further-covid} in the Supplement we
apply the test by \citet{horvath2022changepoint} to the whole sample (Table %
\ref{tab:TabCovid}). In particular, after the closure of the education and
hospitality sectors in all UK nations announced on March 23rd, 2020 (and
implemented 3 days later), a changepoint is found at April 10th, 2020,
indicating that the lockdown had started to \textquotedblleft
bite\textquotedblright\ after that date; subsequently, another changepoint
is found towards the end of August, which can be naturally interpreted as
the beginning of the second wave in the UK, after an increase in travelling
during the holiday season. Hence, we use, as training sample, the period
between April 11th, 2020 and August 15th, 2020, with $m=127$. We find
evidence of a changepoint on August 28th, 2020. As can be seen in Table \ref%
{tab:TabCovid} in the Supplement, the ex-post test by %
\citet{horvath2022changepoint} finds a break at August 26th. Clearly, a
direct comparison between the two tests is not meaningful, since ex-post
detection uses the information contained in the full sample, whereas online
detection cannot make use of it - thus putting the latter at a disadvantage
compared to the former. Still, there is (only) a two days delay between the
two procedures. Moreover, the August changepoint was officially acknowledged
by the PM on September 18th, 2020; hence, our sequential monitoring
procedure could have brought forward public health decisions. The value of $%
\beta $ after the break is above $1$, indicating an explosive dynamics in
Covid-19 hospitalisations. Table \ref{tab:TabCovid} in the Supplement
indicates that there was a further break on October 29th, 2020,
corresponding to the lockdown at the end of that month. Hence, we restart
our monitoring procedure, and we use the training sample August 29th, 2020
till October 29th, 2020, with $m=61$; under an explosive regime, this should
suffice to ensure size control and short detection delay, according to the
results in Tables \ref{tab:ERF3} and \ref{tab:Power1}. A changepoint is
flagged on November 4th, 2020, with a decline in $\beta $ indicating the
effect of the closures and also of the growing concerns about a second wave.
The break is detected very close to (and indeed \textit{before}) the
lockdown, suggesting that lockdowns tended to occur when a turning point in
hospitalisations had already occurred, or was \textquotedblleft in the
making\textquotedblright . Such evidence can be read in conjunction with the
results in Table \ref{tab:TabCovid} in the Supplement, and also with %
\citet{wood2021}, who, albeit with a different methodology and focus, finds
similar results for hospital deaths. Finally, we carry out monitoring using
a training sample between November 5th, 2020, till December 31st, 2020, thus
having $m=58$. We find a changepoint on January 27th, 2021, i.e. two weeks
later the break found with the ex-post test by \citet{horvath2022changepoint}%
, and three weeks later after the national lockdown announced on January
6th, 2021.

In conclusion, the empirical evidence presented above shows that our
RCA-based approach to online changepoint detection is suitable to detect the
onset (and the receding) of a pandemic with short delays, thus being a
recommended item in the toolbox of public health decision-makers.

\subsection{Sequential monitoring of housing prices\label{housing}}

We apply our sequential detection procedures to the online detection of a
change in housing prices in Los Angeles at the end of the first decade of
the century. \citet{horvath2022changepoint} apply an RCA-based, ex-post
changepoint test, and find evidence of such a bubble starting around
February 4th, 2009, when, after a period of \textquotedblleft hard
landing\textquotedblright , prices stabilised;\footnote{%
See Section \ref{further-housing} for a plot of the data.} we also refer to
a related paper by \citet{zenhya}, who use monthly data. We use Los Angeles
as a case study to check how timely online detection is, and also to assess
the robustness of our results to the choice of the training and monitoring
sample sizes, and the benefit of adding covariates. In our application,
following \citet{horvath2022changepoint}, we use (logs of) daily housing
prices.\footnote{%
We use the daily data constructed by \citet{bollerslev}, and we refer to
that paper for a description of the datasets.}

All our monitoring exercises start at January 15th, 2009; %
\citet{horvath2022changepoint} find no changepoints between May 4th, 2006,
and February 3rd, 2009, which entails that the non-contamination assumption
during the training period is satisfied in all cases considered. We monitor
for changepoints using several alternative models. In addition to the basic
RCA specification, with no covariates, we also use different combinations of
regressors, including: two variables that are closely related to the
risk-free interest rate, taking into account the opportunity cost of capital
(the Moody's Seasoned Aaa Corporate Bond Yield, and the 10 Year US Treasury
Constant Maturity Rate); a measure of volatility (namely, the VXO volatility
index); and a measure of real economic activity (we use the
Lewis-Mertens-Stock Weekly Economic Indicator, WEI\footnote{%
See \citet{lewis2022measuring} for a thorough description.}). All regressors
are taken from the FRED St Louis website. The risk-free interest rate
proxies and the volatility measure are transformed into logs. Applying
standard unit root tests to these explanatory variables, we find
overwhelming evidence of a unit root in all of them during the period March
28th, 2008, corresponding to the earliest starting point for the training
sample, and October 30th, 2009, corresponding to the latest ending point for
the monitoring horizon (see Section \ref{further-housing} in the Supplement
for details); hence, we employ their first differences. As far as the WEI is
concerned, it comes at a weekly frequency, and we use its weekly value for
each day by way of disaggregation; other disaggregation methods could be
employed following the Mixed-Data Sampling literature (MIDAS; see e.g. %
\citealp{ghysels2018mixed}), e.g. using a weighted average interpolation,
but we found that these did not make virtually any difference in our results
(in fact, marginally worsening the detection delay). No further
pre-processing is carried out. As far as implementation is concerned, we
have used $\psi =1/2$ in $g_{m,\psi }^{\left( x\right) }\left( k\right) $ in
(\ref{boundary-x}); in the case considered, the data are \textquotedblleft
nearly\textquotedblright\ non-stationary during the training period, and
therefore, based on Table \ref{tab:ERF3c}, we use the asymptotic critical
values $c_{\alpha ,0.5}$.

\begin{table*}[t]
\caption{Online changepoint detection for Los Angeles daily housing prices.}
\label{tab:TabHousing}%\centering
%\begin{threeparttable}
%{\tiny
\par
\centering
\par
\resizebox{\textwidth}{!}{

\begin{tabular}{cccccccccccccc}
\hline\hline
&  &  &  &  &  &  &  &  &  &  &  &  &  \\ 
\multicolumn{6}{c}{Model: $y_{i}=\left( \beta _{i}+\epsilon _{i,1}\right)
y_{i-1}+\epsilon _{i,2}$} &  & \multicolumn{1}{|c}{} & \multicolumn{6}{c}{
Model: $y_{i}=\left( \beta _{i}+\epsilon _{i,1}\right) y_{i-1}+\lambda
_{1}x_{1,i}+\lambda _{2}x_{2,i}+\epsilon _{i,2}$} \\ 
&  &  &  &  &  &  & \multicolumn{1}{|c}{} &  &  &  &  &  &  \\ 
& $m^{\ast }$ &  & $100$ &  & $200$ &  & \multicolumn{1}{|c}{} &  & $m^{\ast
}$ &  & $100$ &  & $200$ \\ 
$m$ &  &  &  &  &  &  & \multicolumn{1}{|c}{} & $m$ &  &  &  &  &  \\ 
&  &  &  &  &  &  & \multicolumn{1}{|c}{} &  &  &  &  &  &  \\ 
$100$ &  &  & $\underset{\left[ \text{no changepoint found}\right] }{\text{Jun 9th, 2009}}$ &  & Jun 15th, 2009 &  & \multicolumn{1}{|c}{} & $100$ &  & 
& $\text{Jun 2nd, 2009}$ &  & $\text{Jun 2nd, 2009}$ \\ 
&  &  &  &  &  &  & \multicolumn{1}{|c}{} &  &  &  &  &  &  \\ 
$200$ &  &  & $\underset{\left[ \text{no changepoint found}\right] }{\text{Jun 9th, 2009}}$ &  & Jun 15th, 2009 &  & \multicolumn{1}{|c}{} & $200$ &  & 
& J$\text{un 3rd, 2009}$ &  & Jun 10th, 2009 \\ 
&  &  &  &  &  &  & \multicolumn{1}{|c}{} &  &  &  &  &  &  \\ \hline\hline
&  &  &  &  &  &  &  &  &  &  &  &  &  \\ 
\multicolumn{6}{c}{Model: $y_{i}=\left( \beta _{i}+\epsilon _{i,1}\right)
y_{i-1}+\lambda _{1}x_{1,i}+\lambda _{2}x_{2,i}+\lambda _{3}x_{3,i}+\epsilon
_{i,2}$} &  & \multicolumn{1}{|c}{} & \multicolumn{6}{c}{Model: $y_{i}=\left( \beta _{i}+\epsilon _{i,1}\right) y_{i-1}+\lambda
_{1}x_{1,i}+\lambda _{2}x_{2,i}+\lambda _{3}x_{3,i}+\lambda
_{4}x_{4,i}+\epsilon _{i,2}$} \\ 
&  &  &  &  &  &  & \multicolumn{1}{|c}{} &  &  &  &  &  &  \\ 
& $m^{\ast }$ &  & $100$ &  & $200$ &  & \multicolumn{1}{|c}{} &  & $m^{\ast
}$ &  & $100$ &  & $200$ \\ 
$m$ &  &  &  &  &  &  & \multicolumn{1}{|c}{} & $m$ &  &  &  &  &  \\ 
&  &  &  &  &  &  & \multicolumn{1}{|c}{} &  &  &  &  &  &  \\ 
$100$ &  &  & $\text{Jun 2nd, 2009}$ &  & $\text{Jun 2nd, 2009}$ &  & 
\multicolumn{1}{|c}{} & $100$ &  &  & May 18th, 2009 &  & May 18th, 2009 \\ 
&  &  &  &  &  &  & \multicolumn{1}{|c}{} &  &  &  &  &  &  \\ 
$200$ &  &  & J$\text{un 3rd, 2009}$ &  & Jun 10th, 2009 &  & 
\multicolumn{1}{|c}{} & $200$ &  &  & $\text{Jun 2nd, 2009}$ &  & $\text{Jun
2nd, 2009}$ \\ 
&  &  &  &  &  &  & \multicolumn{1}{|c}{} &  &  &  &  &  &  \\ \hline\hline
\end{tabular}
}
\par
{\scriptsize {\ 
\begin{tablenotes}
      \tiny
            \item For each combination of $m$ and $m^\ast$, we report the estimated breakdate. For all combinations of $m$ and $m^{\ast}$, monitoring starts on January 15th, 2009. When $m=100$, the training sample covers the period August 20th, 2008, till January 14th, 2009; when $m=200$, the training sample covers the period March 28th, 2008, till January 14th, 2009. Similarly, when $m^{\ast}=100$, the monitoring horizon stops at June 9th, 2009; when $m^{\ast}=200$, the monitoring horizon stops at October 30th, 2009.
            \item We have used the following notation for the regressors: $x_{1,i}$ denotes the 10 Year US Treasury Constant Maturity Rate, $x_{2,i}$ denotes the Moody's Seasoned Aaa Corporate Bond Yield, $x_{3,i}$ is the VXO volatility index, and $x_{4,i}$ is the WEI.
            \item \citet{horvath2022changepoint} find evidence of a changepoint on February 3rd, 2009, applying ex-post changepoint detection to the period January 5th, 1995 to October 23rd, 2012. The deterministic part of the autoregressive coefficient, $\beta$, is found to be equal to $0.99931$ in the period before the changepoint, and $1.00007$ afterwards.
            
\end{tablenotes}
%}  
%\end{threeparttable}
} }
\end{table*}

Results in Table \ref{tab:TabHousing} show that adding covariates can
potentially lead to meaningful improvements in terms of detection delay: a
measure of economic activity such as the WEI index, in particular, seems to
contain relevant information to model the dynamics of house prices. We also
note that detection delays seem to be around $4$ months, which may be a
consequence of the relatively small change in $\beta $ before and after the
change. Interestingly, results are unaffected by the size of the training
sample $m$, but they do differ if $m^{\ast }$ changes. This is a purely
mechanical effect, due to the increase in the asymptotic critical values as $%
m^{\ast }$ increases, as also noted in Section \ref{simulations}.

\section{Discussion and conclusions\label{conclusions}}

We propose a family of weighted CUSUM statistics for the online detection of
changepoints in a Random Coefficient Autoregressive model. Our statistics
are, in particular, based on the CUSUM\ process of the WLS\ residuals, and
we study both the standard CUSUM, and the so-called Page-CUSUM monitoring
schemes, which is designed to offer higher power/shorter detection delay. In
the case of the standard CUSUM, we also study the standardised version using 
$\psi =1/2$, obtaining, under the null, a Darling-Erd\H{o}s limit theorem.
In this case, seeing as the asymptotic critical values are a poor
approximation leading to an overly conservative procedure, we also propose
an approximation which works well in finite samples, avoiding
overrejections. As our simulations show, the use of weighted statistics is
particularly beneficial under the alternative, with detection delays
decreasing as $\psi $ increases. Whilst, for the ease of exposition, we
focus on the RCA\ case with no covariates, we also extend our theory to
include exogenous regressors, which are allowed to be weakly dependent
according to a very general definition of dependence. Simulations show that
our procedures broadly guarantee size control; indeed, our experiments
indicate that, for any given training sample size $m$ and monitoring horizon 
$m^{\ast }$, it is always possible to choose the appropriate weighing scheme
to ensure the best balance between size control and timely detection. This
is reinforced by our empirical illustrations, showing that when our
methodology is applied to the online detection of epidemiological and
housing data, it manages to find breaks very quickly, even when compared
against ex-post detection methodologies.

Importantly, building on the well-known fact that, in an RCA\ model, WLS\
inference on the deterministic part of the autoregressive parameter is
always Gaussian, irrespective of whether the observations form a stationary
or nonstationary sequence, our monitoring procedures can be applied to
virtually any type of data: stationary, or with an explosive dynamics, or on
the boundary between the two regimes, with no modifications required. Hence,
our methodology is particularly suited, as a leading example, to the
detection of both the onset, or the collapse, of a bubble (when using
financial data), or of a pandemic (when using epidemiological data). Other
cases also can be considered, e.g. monitoring for changes in the persistence
of a stationary series such as inflation, and the extension of the basic RCA
specification to include covariates should enhance the applicability of our
methodology.

\begin{adjustwidth}{-5pt}{-5pt}

{\footnotesize {\ 
\bibliographystyle{chicago}
\bibliography{LTbiblio}

\begin{thebibliography}{}

\bibitem[\protect\citeauthoryear{Akharif and Hallin}{Akharif and
  Hallin}{2003}]{akharif2003}
Akharif, A. and M.~Hallin (2003).
\newblock Efficient detection of random coefficients in autoregressive models.
\newblock {\em Annals of Statistics\/}~{\em 31\/}(2), 675--704.

\bibitem[\protect\citeauthoryear{And{\v{e}}l}{And{\v{e}}l}{1976}]{andel}
And{\v{e}}l, J. (1976).
\newblock Autoregressive series with random parameters.
\newblock {\em Mathematische Operationsforschung und Statistik\/}~{\em 7\/}(5),
  735--741.

\bibitem[\protect\citeauthoryear{Astill, Taylor, Kellard, and Korkos}{Astill
  et~al.}{2023}]{astill2023using}
Astill, S., A.~R. Taylor, N.~Kellard, and I.~Korkos (2023).
\newblock Using covariates to improve the efficacy of univariate bubble
  detection methods.
\newblock {\em Journal of Empirical Finance\/}~{\em 70}, 342--366.

\bibitem[\protect\citeauthoryear{Aue, H{\"o}rmann, Horv{\'a}th, and
  Hu{\v{s}}kov{\'a}}{Aue et~al.}{2014}]{aue2014}
Aue, A., S.~H{\"o}rmann, L.~Horv{\'a}th, and M.~Hu{\v{s}}kov{\'a} (2014).
\newblock Dependent functional linear models with applications to monitoring
  structural change.
\newblock {\em Statistica Sinica\/}~{\em 24\/}(3), 1043--1073.

\bibitem[\protect\citeauthoryear{Aue, H\"ormann, Horv\'ath, and Reimherr}{Aue
  et~al.}{2009}]{aue09}
Aue, A., S.~H\"ormann, L.~Horv\'ath, and M.~Reimherr (2009).
\newblock Break detection in the covariance structure of multivariate time
  series models.
\newblock {\em Annals of Statistics\/}~{\em 37\/}(6B), 4046--4087.

\bibitem[\protect\citeauthoryear{Aue and Horv{\'a}th}{Aue and
  Horv{\'a}th}{2004}]{aue2004delay}
Aue, A. and L.~Horv{\'a}th (2004).
\newblock Delay time in sequential detection of change.
\newblock {\em Statistics \& Probability Letters\/}~{\em 67\/}(3), 221--231.

\bibitem[\protect\citeauthoryear{Aue and Horv{\'a}th}{Aue and
  Horv{\'a}th}{2011}]{aue2011}
Aue, A. and L.~Horv{\'a}th (2011).
\newblock Quasi-likelihood estimation in stationary and nonstationary
  autoregressive models with random coefficients.
\newblock {\em Statistica Sinica\/}~{\em 21\/}(3), 973--999.

\bibitem[\protect\citeauthoryear{Aue, Horv{\'a}th, Kokoszka, and
  Steinebach}{Aue et~al.}{2008}]{aue2008monitoring}
Aue, A., L.~Horv{\'a}th, P.~Kokoszka, and J.~Steinebach (2008).
\newblock Monitoring shifts in mean: asymptotic normality of stopping times.
\newblock {\em Test\/}~{\em 17}, 515--530.

\bibitem[\protect\citeauthoryear{Aue, Horv{\'a}th, and Steinebach}{Aue
  et~al.}{2006}]{aue2006}
Aue, A., L.~Horv{\'a}th, and J.~Steinebach (2006).
\newblock Estimation in random coefficient autoregressive models.
\newblock {\em Journal of Time Series Analysis\/}~{\em 27\/}(1), 61--76.

\bibitem[\protect\citeauthoryear{Berkes, H{\"o}rmann, and Schauer}{Berkes
  et~al.}{2011}]{berkeshormann}
Berkes, I., S.~H{\"o}rmann, and J.~Schauer (2011).
\newblock Split invariance principles for stationary processes.
\newblock {\em Annals of Probability\/}~{\em 39\/}(6), 2441--2473.

\bibitem[\protect\citeauthoryear{Berkes, Horv{\'a}th, and Ling}{Berkes
  et~al.}{2009}]{berkes2009}
Berkes, I., L.~Horv{\'a}th, and S.~Ling (2009).
\newblock Estimation in nonstationary random coefficient autoregressive models.
\newblock {\em Journal of Time Series Analysis\/}~{\em 30\/}(4), 395--416.

\bibitem[\protect\citeauthoryear{Bollerslev, Patton, and Wang}{Bollerslev
  et~al.}{2016}]{bollerslev}
Bollerslev, T., A.~J. Patton, and W.~Wang (2016).
\newblock Daily house price indices: construction, modeling, and longer-run
  predictions.
\newblock {\em Journal of Applied Econometrics\/}~{\em 31\/}(6), 1005--1025.

\bibitem[\protect\citeauthoryear{Chu, Stinchcombe, and White}{Chu
  et~al.}{1996}]{CSW96}
Chu, C., M.~Stinchcombe, and H.~White (1996).
\newblock Monitoring structural change.
\newblock {\em Econometrica\/}~{\em 64\/}(5), 1045--1066.

\bibitem[\protect\citeauthoryear{Cs{\"o}rg{\H{o}} and
  Horv{\'a}th}{Cs{\"o}rg{\H{o}} and Horv{\'a}th}{1993}]{csorgo1993}
Cs{\"o}rg{\H{o}}, M. and L.~Horv{\'a}th (1993).
\newblock {\em Weighted {A}pproximations in {P}robability and {S}tatistics}.
\newblock J. Wiley \& Sons.

\bibitem[\protect\citeauthoryear{Cs{\"o}rg{\H{o}} and
  Horv{\'a}th}{Cs{\"o}rg{\H{o}} and Horv{\'a}th}{1997}]{csorgo1997}
Cs{\"o}rg{\H{o}}, M. and L.~Horv{\'a}th (1997).
\newblock {\em Limit {T}heorems in {C}hange-{P}oint {A}nalysis}, Volume~18.
\newblock John Wiley \& Sons.

\bibitem[\protect\citeauthoryear{Darling and Erd{\H{o}}s}{Darling and
  Erd{\H{o}}s}{1956}]{darling1956limit}
Darling, D.~A. and P.~Erd{\H{o}}s (1956).
\newblock A limit theorem for the maximum of normalized sums of independent
  random variables.
\newblock {\em Duke Mathematical Journal\/}~{\em 23\/}(1), 143--155.

\bibitem[\protect\citeauthoryear{Diba and Grossman}{Diba and
  Grossman}{1988}]{diba1988theory}
Diba, B.~T. and H.~I. Grossman (1988).
\newblock The theory of rational bubbles in stock prices.
\newblock {\em The Economic Journal\/}~{\em 98\/}(392), 746--754.

\bibitem[\protect\citeauthoryear{Engle}{Engle}{1982}]{engle1982}
Engle, R.~F. (1982).
\newblock Autoregressive conditional heteroscedasticity with estimates of the
  variance of {U}nited {K}ingdom inflation.
\newblock {\em Econometrica\/}~{\em 50\/}(4), 987--1007.

\bibitem[\protect\citeauthoryear{Fan and Yao}{Fan and Yao}{2008}]{fanyao}
Fan, J. and Q.~Yao (2008).
\newblock {\em Nonlinear {T}ime {S}eries: {N}onparametric and {P}arametric
  {M}ethods}.
\newblock Springer Science \& Business Media.

\bibitem[\protect\citeauthoryear{Fremdt}{Fremdt}{2015}]{fremdt2015page}
Fremdt, S. (2015).
\newblock Page's sequential procedure for change-point detection in time series
  regression.
\newblock {\em Statistics\/}~{\em 49\/}(1), 128--155.

\bibitem[\protect\citeauthoryear{Garsia, Rodemich, Rumsey, and
  Rosenblatt}{Garsia et~al.}{1970}]{garsia}
Garsia, A.~M., E.~Rodemich, H.~Rumsey, and M.~Rosenblatt (1970).
\newblock A real variable lemma and the continuity of paths of some {G}aussian
  processes.
\newblock {\em Indiana University Mathematics Journal\/}~{\em 20\/}(6),
  565--578.

\bibitem[\protect\citeauthoryear{Ghysels}{Ghysels}{2018}]{ghysels2018mixed}
Ghysels, E. (2018).
\newblock Mixed frequency models.
\newblock In {\em Oxford Research Encyclopedia of Economics and Finance}.

\bibitem[\protect\citeauthoryear{Gombay and Horv\'ath}{Gombay and
  Horv\'ath}{1996}]{gombay}
Gombay, E. and L.~Horv\'ath (1996).
\newblock On the rate of approximations for maximum likelihood tests in
  change-point models.
\newblock {\em Journal of Multivariate Analysis\/}~{\em 56\/}(1), 120--152.

\bibitem[\protect\citeauthoryear{Granger and Swanson}{Granger and
  Swanson}{1997}]{granger1997introduction}
Granger, C.~W. and N.~R. Swanson (1997).
\newblock An introduction to stochastic unit-root processes.
\newblock {\em Journal of Econometrics\/}~{\em 80\/}(1), 35--62.

\bibitem[\protect\citeauthoryear{Hall and Heyde}{Hall and
  Heyde}{2014}]{hallheyde}
Hall, P. and C.~C. Heyde (2014).
\newblock {\em Martingale {L}imit {T}heory and its {A}pplication}.
\newblock Academic press.

\bibitem[\protect\citeauthoryear{Harvey, Leybourne, and Sollis}{Harvey
  et~al.}{2017}]{harvey2017improving}
Harvey, D.~I., S.~J. Leybourne, and R.~Sollis (2017).
\newblock Improving the accuracy of asset price bubble start and end date
  estimators.
\newblock {\em Journal of Empirical Finance\/}~{\em 40}, 121--138.

\bibitem[\protect\citeauthoryear{Homm and Breitung}{Homm and
  Breitung}{2012}]{homm2012testing}
Homm, U. and J.~Breitung (2012).
\newblock Testing for speculative bubbles in stock markets: a comparison of
  alternative methods.
\newblock {\em Journal of Financial Econometrics\/}~{\em 10\/}(1), 198--231.

\bibitem[\protect\citeauthoryear{Horv\'{a}th, Hu\v{s}kov\'{a}, Kokoszka, and
  Steinebach}{Horv\'{a}th et~al.}{2004}]{lajos04}
Horv\'{a}th, L., M.~Hu\v{s}kov\'{a}, P.~Kokoszka, and J.~Steinebach (2004).
\newblock Monitoring changes in linear models.
\newblock {\em Journal of Statistical Planning and Inference\/}~{\em 126\/}(1),
  225--251.

\bibitem[\protect\citeauthoryear{Horv{\'a}th, Liu, and Lu}{Horv{\'a}th
  et~al.}{2022}]{zenhya}
Horv{\'a}th, L., Z.~Liu, and S.~Lu (2022).
\newblock Sequential monitoring of changes in dynamic linear models, applied to
  the us housing market.
\newblock {\em Econometric Theory\/}~{\em 38\/}(2), 209--272.

\bibitem[\protect\citeauthoryear{Horv{\'a}th and Trapani}{Horv{\'a}th and
  Trapani}{2016}]{HT2016}
Horv{\'a}th, L. and L.~Trapani (2016).
\newblock Statistical inference in a random coefficient panel model.
\newblock {\em Journal of Econometrics\/}~{\em 193\/}(1), 54--75.

\bibitem[\protect\citeauthoryear{Horv{\'a}th and Trapani}{Horv{\'a}th and
  Trapani}{2019}]{HT2019}
Horv{\'a}th, L. and L.~Trapani (2019).
\newblock Testing for randomness in a random coefficient autoregression model.
\newblock {\em Journal of Econometrics\/}~{\em 209\/}(2), 338--352.

\bibitem[\protect\citeauthoryear{Horv{\'a}th and Trapani}{Horv{\'a}th and
  Trapani}{2023}]{horvath2022changepoint}
Horv{\'a}th, L. and L.~Trapani (2023).
\newblock Changepoint detection in heteroscedastic random coefficient
  autoregressive models.
\newblock {\em Journal of Business \& Economic Statistics\/}~{\em 41\/}(4),
  1300--1314.

\bibitem[\protect\citeauthoryear{Horv{\'a}th, Trapani, and
  VanderDoes}{Horv{\'a}th et~al.}{2022}]{jeremy}
Horv{\'a}th, L., L.~Trapani, and J.~VanderDoes (2022).
\newblock The maximally selected likelihood ratio test in random coefficient
  models.
\newblock {\em Econometrics Journal\/}.

\bibitem[\protect\citeauthoryear{Ibragimov}{Ibragimov}{1962}]{ibragimov1962some}
Ibragimov, I.~A. (1962).
\newblock Some limit theorems for stationary processes.
\newblock {\em Theory of Probability \& Its Applications\/}~{\em 7\/}(4),
  349--382.

\bibitem[\protect\citeauthoryear{Kirch and Stoehr}{Kirch and
  Stoehr}{2022a}]{kirch2022asymptotic}
Kirch, C. and C.~Stoehr (2022a).
\newblock Asymptotic delay times of sequential tests based on {U}-statistics
  for early and late change points.
\newblock {\em Journal of Statistical Planning and Inference\/}~{\em 221},
  114--135.

\bibitem[\protect\citeauthoryear{Kirch and Stoehr}{Kirch and
  Stoehr}{2022b}]{kirch2022sequential}
Kirch, C. and C.~Stoehr (2022b).
\newblock Sequential change point tests based on u-statistics.
\newblock {\em Scandinavian Journal of Statistics\/}~{\em 49\/}(3), 1184--1214.

\bibitem[\protect\citeauthoryear{Koml\'{o}s, Major, and Tusn\'{a}dy}{Koml\'{o}s
  et~al.}{1975}]{KMT1}
Koml\'{o}s, J., P.~Major, and G.~Tusn\'{a}dy (1975).
\newblock An approximation of partial sums of independent {R.V.'s} and the
  sample {DF.I}.
\newblock {\em Z. Wahrscheinlichkeitstheorie und verwandte Gebiete\/}~{\em 32},
  111--131.

\bibitem[\protect\citeauthoryear{Koml\'{o}s, Major, and Tusn\'{a}dy}{Koml\'{o}s
  et~al.}{1976}]{KMT2}
Koml\'{o}s, J., P.~Major, and G.~Tusn\'{a}dy (1976).
\newblock An approximation of partial sums of independent {R.V.'s} and the
  sample {DF.II}.
\newblock {\em Z. Wahrscheinlichkeitstheorie und verwandte Gebiete\/}~{\em 34},
  33--58.

\bibitem[\protect\citeauthoryear{Lewis, Mertens, Stock, and Trivedi}{Lewis
  et~al.}{2022}]{lewis2022measuring}
Lewis, D.~J., K.~Mertens, J.~H. Stock, and M.~Trivedi (2022).
\newblock Measuring real activity using a weekly economic index.
\newblock {\em Journal of Applied Econometrics\/}~{\em 37\/}(4), 667--687.

\bibitem[\protect\citeauthoryear{Li, Tian, and Qi}{Li et~al.}{2015}]{li2015}
Li, F., Z.~Tian, and P.~Qi (2015).
\newblock Structural change monitoring for random coefficient autoregressive
  time series.
\newblock {\em Communications in Statistics-Simulation and Computation\/}~{\em
  44\/}(4), 996--1009.

\bibitem[\protect\citeauthoryear{Li, Tian, Qi, and Chen}{Li
  et~al.}{2015}]{li2015b}
Li, F., Z.~Tian, P.~Qi, and Z.~Chen (2015).
\newblock Monitoring parameter changes in {RCA(p)} models.
\newblock {\em Journal of the Korean Statistical Society\/}~{\em 44\/}(1),
  111--122.

\bibitem[\protect\citeauthoryear{Liu and Lin}{Liu and Lin}{2009}]{linliu}
Liu, W. and Z.~Lin (2009).
\newblock Strong approximation for a class of stationary processes.
\newblock {\em Stochastic Processes and their Applications\/}~{\em 119\/}(1),
  249--280.

\bibitem[\protect\citeauthoryear{Na, Lee, and Lee}{Na et~al.}{2010}]{na2010}
Na, O., J.~Lee, and S.~Lee (2010).
\newblock Monitoring parameter changes for random coefficient autoregressive
  models.
\newblock {\em Journal of the Korean Statistical Society\/}~{\em 39\/}(3),
  281--288.

\bibitem[\protect\citeauthoryear{Nicholls and Quinn}{Nicholls and
  Quinn}{2012}]{nichollsquinn}
Nicholls, D.~F. and B.~G. Quinn (2012).
\newblock {\em Random Coefficient Autoregressive Models: an Introduction},
  Volume~11.
\newblock Springer Science \& Business Media.

\bibitem[\protect\citeauthoryear{Phillips, Shi, and Yu}{Phillips
  et~al.}{2015a}]{phillips2015testing2}
Phillips, P.~C., S.~Shi, and J.~Yu (2015a).
\newblock Testing for multiple bubbles: historical episodes of exuberance and
  collapse in the {S}\&{P} 500.
\newblock {\em International Economic Review\/}~{\em 56\/}(4), 1043--1078.

\bibitem[\protect\citeauthoryear{Phillips, Shi, and Yu}{Phillips
  et~al.}{2015b}]{phillips2015testing}
Phillips, P.~C., S.~Shi, and J.~Yu (2015b).
\newblock Testing for multiple bubbles: limit theory of real-time detectors.
\newblock {\em International Economic Review\/}~{\em 56\/}(4), 1079--1134.

\bibitem[\protect\citeauthoryear{Phillips and Shi}{Phillips and
  Shi}{2018}]{phillips2018financial}
Phillips, P.~C. and S.-P. Shi (2018).
\newblock Financial bubble implosion and reverse regression.
\newblock {\em Econometric Theory\/}~{\em 34\/}(4), 705--753.

\bibitem[\protect\citeauthoryear{Phillips, Wu, and Yu}{Phillips
  et~al.}{2011}]{phillips2011}
Phillips, P.~C., Y.~Wu, and J.~Yu (2011).
\newblock Explosive behavior in the 1990s {Nasdaq}: when did exuberance
  escalate asset values?
\newblock {\em International Economic Review\/}~{\em 52\/}(1), 201--226.

\bibitem[\protect\citeauthoryear{Phillips and Yu}{Phillips and
  Yu}{2011}]{phillips2011dating}
Phillips, P.~C. and J.~Yu (2011).
\newblock Dating the timeline of financial bubbles during the subprime crisis.
\newblock {\em Quantitative Economics\/}~{\em 2\/}(3), 455--491.

\bibitem[\protect\citeauthoryear{Shtatland and Shtatland}{Shtatland and
  Shtatland}{2008}]{shtatland}
Shtatland, E.~S. and T.~Shtatland (2008).
\newblock Another look at low-order autoregressive models in early detection of
  epidemic outbreaks and explosive behaviors in economic and financial time
  series.
\newblock {\em SGF Proceedings\/}~{\em 363}.

\bibitem[\protect\citeauthoryear{Skrobotov}{Skrobotov}{2023}]{skrobotov2023testing}
Skrobotov, A. (2023).
\newblock Testing for explosive bubbles: a review.
\newblock {\em Dependence Modeling\/}~{\em 11\/}(1), 20220152.

\bibitem[\protect\citeauthoryear{Tsay}{Tsay}{1987}]{tsay1987}
Tsay, R.~S. (1987).
\newblock Conditional heteroscedastic time series models.
\newblock {\em Journal of the American Statistical Association\/}~{\em
  82\/}(398), 590--604.

\bibitem[\protect\citeauthoryear{Vostrikova}{Vostrikova}{1981}]{vostrikova}
Vostrikova, L.~Y. (1981).
\newblock Detecting disorder in multidimensional random processes.
\newblock In {\em Doklady Akademii Nauk}, Volume 259, pp.\  270--274. Russian
  Academy of Sciences.

\bibitem[\protect\citeauthoryear{Whitehouse, Harvey, and Leybourne}{Whitehouse
  et~al.}{2023}]{whitehouse2023real}
Whitehouse, E.~J., D.~I. Harvey, and S.~J. Leybourne (2023).
\newblock Real-time monitoring of bubbles and crashes.
\newblock {\em Oxford Bulletin of Economics and Statistics\/}~{\em 85\/}(3),
  482--513.

\bibitem[\protect\citeauthoryear{Wood}{Wood}{2022}]{wood2021}
Wood, S.~N. (2022).
\newblock Inferring {UK} {C}ovid-19 fatal infection trajectories from daily
  mortality data: were infections already in decline before the {UK} lockdowns?
\newblock {\em Biometrics\/}~{\em 78\/}(3), 1127--1140.

\bibitem[\protect\citeauthoryear{Wu}{Wu}{2005}]{wu2005}
Wu, W.~B. (2005).
\newblock Nonlinear system theory: another look at dependence.
\newblock {\em Proceedings of the National Academy of Sciences of the United
  States of America\/}~{\em 102\/}(40), 14150--14154.

\end{thebibliography}
} }

\end{adjustwidth}

\newpage

\clearpage
\renewcommand*{\thesection}{\Alph{section}}

\setcounter{section}{0} \setcounter{subsection}{-1} %
\setcounter{subsubsection}{-1} \setcounter{equation}{0} \setcounter{lemma}{0}
\setcounter{theorem}{0} \renewcommand{\theassumption}{A.\arabic{assumption}} 
\renewcommand{\thetheorem}{A.\arabic{theorem}} \renewcommand{\thelemma}{A.%
\arabic{lemma}} \renewcommand{\thecorollary}{A.\arabic{corollary}} %
\renewcommand{\theequation}{A.\arabic{equation}}

\section{Further Monte Carlo evidence and guidelines\label{furtherMC}}

\begin{table*}[h!]
\caption{{\protect\footnotesize {Empirical rejection frequencies under the
null of no changepoint with covariates - Case I, $\protect\beta_0=0.5$}}}
\label{tab:ERF1c}\centering
{\footnotesize {\ }}
\par
{\footnotesize {\ }}
\par
{\scriptsize {\ }}
\par
{\scriptsize 
\begin{tabular}{lllllllllllllllllllll}
\hline\hline
&  &  &  &  &  &  &  &  &  &  &  &  &  &  &  &  &  &  &  &  \\ 
&  &  &  &  &  & \multicolumn{5}{c}{Weighted CUSUM} & \multicolumn{1}{c}{} & 
\multicolumn{3}{c}{Standardised CUSUM} & \multicolumn{1}{c}{} & 
\multicolumn{5}{c}{Weighted Page-CUSUM} \\ 
&  &  &  & $\psi $ & \multicolumn{1}{c}{} & \multicolumn{1}{c}{$0$} & 
\multicolumn{1}{c}{} & \multicolumn{1}{c}{$0.25$} & \multicolumn{1}{c}{} & 
\multicolumn{1}{c}{$0.45$} & \multicolumn{1}{c}{} & \multicolumn{1}{c}{} & 
\multicolumn{1}{c}{$0.5$} & \multicolumn{1}{c}{} & \multicolumn{1}{c}{} & 
\multicolumn{1}{c}{$0$} & \multicolumn{1}{c}{} & \multicolumn{1}{c}{$0.25$}
& \multicolumn{1}{c}{} & \multicolumn{1}{c}{$0.45$} \\ 
&  &  &  &  & \multicolumn{1}{c}{} & \multicolumn{1}{c}{} & 
\multicolumn{1}{c}{} & \multicolumn{1}{c}{} & \multicolumn{1}{c}{} & 
\multicolumn{1}{c}{} & \multicolumn{1}{c}{} & \multicolumn{1}{c}{$c_{\alpha
,0.5}$} & \multicolumn{1}{c}{} & \multicolumn{1}{c}{$\widehat{c}_{\alpha
,0.5}$} & \multicolumn{1}{c}{} & \multicolumn{1}{c}{} & \multicolumn{1}{c}{}
& \multicolumn{1}{c}{} & \multicolumn{1}{c}{} & \multicolumn{1}{c}{} \\ 
&  &  &  &  & \multicolumn{1}{c}{} & \multicolumn{1}{c}{} & 
\multicolumn{1}{c}{} & \multicolumn{1}{c}{} & \multicolumn{1}{c}{} & 
\multicolumn{1}{c}{} & \multicolumn{1}{c}{} & \multicolumn{1}{c}{} & 
\multicolumn{1}{c}{} & \multicolumn{1}{c}{} & \multicolumn{1}{c}{} & 
\multicolumn{1}{c}{} & \multicolumn{1}{c}{} & \multicolumn{1}{c}{} & 
\multicolumn{1}{c}{} & \multicolumn{1}{c}{} \\ 
\multicolumn{1}{c}{} & \multicolumn{1}{c}{$m$} & \multicolumn{1}{c}{} & 
\multicolumn{1}{c}{$m^{\ast }$} & \multicolumn{1}{c}{} & \multicolumn{1}{c}{}
& \multicolumn{1}{c}{} & \multicolumn{1}{c}{} & \multicolumn{1}{c}{} & 
\multicolumn{1}{c}{} & \multicolumn{1}{c}{} & \multicolumn{1}{c}{} & 
\multicolumn{1}{c}{} & \multicolumn{1}{c}{} & \multicolumn{1}{c}{} & 
\multicolumn{1}{c}{} & \multicolumn{1}{c}{} & \multicolumn{1}{c}{} & 
\multicolumn{1}{c}{} & \multicolumn{1}{c}{} & \multicolumn{1}{c}{} \\ 
\multicolumn{1}{c}{} & \multicolumn{1}{c}{} & \multicolumn{1}{c}{} & 
\multicolumn{1}{c}{} & \multicolumn{1}{c}{} & \multicolumn{1}{c}{} & 
\multicolumn{1}{c}{} & \multicolumn{1}{c}{} & \multicolumn{1}{c}{} & 
\multicolumn{1}{c}{} & \multicolumn{1}{c}{} & \multicolumn{1}{c}{} & 
\multicolumn{1}{c}{} & \multicolumn{1}{c}{} & \multicolumn{1}{c}{} & 
\multicolumn{1}{c}{} & \multicolumn{1}{c}{} & \multicolumn{1}{c}{} & 
\multicolumn{1}{c}{} & \multicolumn{1}{c}{} & \multicolumn{1}{c}{} \\ 
\multicolumn{1}{c}{} & \multicolumn{1}{c}{} & \multicolumn{1}{c}{} & 
\multicolumn{1}{c}{$25$} & \multicolumn{1}{c}{} & \multicolumn{1}{c}{} & 
\multicolumn{1}{c}{$0.060$} & \multicolumn{1}{c}{} & \multicolumn{1}{c}{$%
0.062$} & \multicolumn{1}{c}{} & \multicolumn{1}{c}{$0.046$} & 
\multicolumn{1}{c}{} & \multicolumn{1}{c}{$0.030$} & \multicolumn{1}{c}{} & 
\multicolumn{1}{c}{$0.052$} & \multicolumn{1}{c}{} & \multicolumn{1}{c}{$%
0.051$} & \multicolumn{1}{c}{} & \multicolumn{1}{c}{$0.059$} & 
\multicolumn{1}{c}{} & \multicolumn{1}{c}{$0.046$} \\ 
\multicolumn{1}{c}{} & \multicolumn{1}{c}{$50$} & \multicolumn{1}{c}{} & 
\multicolumn{1}{c}{$50$} & \multicolumn{1}{c}{} & \multicolumn{1}{c}{} & 
\multicolumn{1}{c}{$0.068$} & \multicolumn{1}{c}{} & \multicolumn{1}{c}{$%
0.078$} & \multicolumn{1}{c}{} & \multicolumn{1}{c}{$0.071$} & 
\multicolumn{1}{c}{} & \multicolumn{1}{c}{$0.037$} & \multicolumn{1}{c}{} & 
\multicolumn{1}{c}{$0.066$} & \multicolumn{1}{c}{} & \multicolumn{1}{c}{$%
0.068$} & \multicolumn{1}{c}{} & \multicolumn{1}{c}{$0.067$} & 
\multicolumn{1}{c}{} & \multicolumn{1}{c}{$0.066$} \\ 
\multicolumn{1}{c}{} & \multicolumn{1}{c}{} & \multicolumn{1}{c}{} & 
\multicolumn{1}{c}{$100$} & \multicolumn{1}{c}{} & \multicolumn{1}{c}{} & 
\multicolumn{1}{c}{$0.072$} & \multicolumn{1}{c}{} & \multicolumn{1}{c}{$%
0.078$} & \multicolumn{1}{c}{} & \multicolumn{1}{c}{$0.077$} & 
\multicolumn{1}{c}{} & \multicolumn{1}{c}{$0.043$} & \multicolumn{1}{c}{} & 
\multicolumn{1}{c}{$0.064$} & \multicolumn{1}{c}{} & \multicolumn{1}{c}{$%
0.070$} & \multicolumn{1}{c}{} & \multicolumn{1}{c}{$0.072$} & 
\multicolumn{1}{c}{} & \multicolumn{1}{c}{$0.072$} \\ 
\multicolumn{1}{c}{} & \multicolumn{1}{c}{} & \multicolumn{1}{c}{} & 
\multicolumn{1}{c}{$200$} & \multicolumn{1}{c}{} & \multicolumn{1}{c}{} & 
\multicolumn{1}{c}{$0.087$} & \multicolumn{1}{c}{} & \multicolumn{1}{c}{$%
0.082$} & \multicolumn{1}{c}{} & \multicolumn{1}{c}{$0.082$} & 
\multicolumn{1}{c}{} & \multicolumn{1}{c}{$0.044$} & \multicolumn{1}{c}{} & 
\multicolumn{1}{c}{$0.066$} & \multicolumn{1}{c}{} & \multicolumn{1}{c}{$%
0.081$} & \multicolumn{1}{c}{} & \multicolumn{1}{c}{$0.102$} & 
\multicolumn{1}{c}{} & \multicolumn{1}{c}{$0.093$} \\ 
\multicolumn{1}{c}{} & \multicolumn{1}{c}{} & \multicolumn{1}{c}{} & 
\multicolumn{1}{c}{} & \multicolumn{1}{c}{} & \multicolumn{1}{c}{} & 
\multicolumn{1}{c}{} & \multicolumn{1}{c}{} & \multicolumn{1}{c}{} & 
\multicolumn{1}{c}{} & \multicolumn{1}{c}{} & \multicolumn{1}{c}{} & 
\multicolumn{1}{c}{} & \multicolumn{1}{c}{} & \multicolumn{1}{c}{} & 
\multicolumn{1}{c}{} & \multicolumn{1}{c}{} & \multicolumn{1}{c}{} & 
\multicolumn{1}{c}{} & \multicolumn{1}{c}{} & \multicolumn{1}{c}{} \\ 
\multicolumn{1}{c}{} & \multicolumn{1}{c}{} & \multicolumn{1}{c}{} & 
\multicolumn{1}{c}{} & \multicolumn{1}{c}{} & \multicolumn{1}{c}{} & 
\multicolumn{1}{c}{} & \multicolumn{1}{c}{} & \multicolumn{1}{c}{} & 
\multicolumn{1}{c}{} & \multicolumn{1}{c}{} & \multicolumn{1}{c}{} & 
\multicolumn{1}{c}{} & \multicolumn{1}{c}{} & \multicolumn{1}{c}{} & 
\multicolumn{1}{c}{} & \multicolumn{1}{c}{} & \multicolumn{1}{c}{} & 
\multicolumn{1}{c}{} & \multicolumn{1}{c}{} & \multicolumn{1}{c}{} \\ 
\multicolumn{1}{c}{} & \multicolumn{1}{c}{} & \multicolumn{1}{c}{} & 
\multicolumn{1}{c}{$50$} & \multicolumn{1}{c}{} & \multicolumn{1}{c}{} & 
\multicolumn{1}{c}{$0.075$} & \multicolumn{1}{c}{} & \multicolumn{1}{c}{$%
0.069$} & \multicolumn{1}{c}{} & \multicolumn{1}{c}{$0.075$} & 
\multicolumn{1}{c}{} & \multicolumn{1}{c}{$0.030$} & \multicolumn{1}{c}{} & 
\multicolumn{1}{c}{$0.057$} & \multicolumn{1}{c}{} & \multicolumn{1}{c}{$%
0.063$} & \multicolumn{1}{c}{} & \multicolumn{1}{c}{$0.067$} & 
\multicolumn{1}{c}{} & \multicolumn{1}{c}{$0.059$} \\ 
\multicolumn{1}{c}{} & \multicolumn{1}{c}{$100$} & \multicolumn{1}{c}{} & 
\multicolumn{1}{c}{$100$} & \multicolumn{1}{c}{} & \multicolumn{1}{c}{} & 
\multicolumn{1}{c}{$0.060$} & \multicolumn{1}{c}{} & \multicolumn{1}{c}{$%
0.061$} & \multicolumn{1}{c}{} & \multicolumn{1}{c}{$0.050$} & 
\multicolumn{1}{c}{} & \multicolumn{1}{c}{$0.034$} & \multicolumn{1}{c}{} & 
\multicolumn{1}{c}{$0.050$} & \multicolumn{1}{c}{} & \multicolumn{1}{c}{$%
0.060$} & \multicolumn{1}{c}{} & \multicolumn{1}{c}{$0.062$} & 
\multicolumn{1}{c}{} & \multicolumn{1}{c}{$0.058$} \\ 
\multicolumn{1}{c}{} & \multicolumn{1}{c}{} & \multicolumn{1}{c}{} & 
\multicolumn{1}{c}{$200$} & \multicolumn{1}{c}{} & \multicolumn{1}{c}{} & 
\multicolumn{1}{c}{$0.075$} & \multicolumn{1}{c}{} & \multicolumn{1}{c}{$%
0.085$} & \multicolumn{1}{c}{} & \multicolumn{1}{c}{$0.078$} & 
\multicolumn{1}{c}{} & \multicolumn{1}{c}{$0.026$} & \multicolumn{1}{c}{} & 
\multicolumn{1}{c}{$0.057$} & \multicolumn{1}{c}{} & \multicolumn{1}{c}{$%
0.074$} & \multicolumn{1}{c}{} & \multicolumn{1}{c}{$0.088$} & 
\multicolumn{1}{c}{} & \multicolumn{1}{c}{$0.078$} \\ 
\multicolumn{1}{c}{} & \multicolumn{1}{c}{} & \multicolumn{1}{c}{} & 
\multicolumn{1}{c}{$400$} & \multicolumn{1}{c}{} & \multicolumn{1}{c}{} & 
\multicolumn{1}{c}{$0.062$} & \multicolumn{1}{c}{} & \multicolumn{1}{c}{$%
0.066$} & \multicolumn{1}{c}{} & \multicolumn{1}{c}{$0.065$} & 
\multicolumn{1}{c}{} & \multicolumn{1}{c}{$0.022$} & \multicolumn{1}{c}{} & 
\multicolumn{1}{c}{$0.047$} & \multicolumn{1}{c}{} & \multicolumn{1}{c}{$%
0.056$} & \multicolumn{1}{c}{} & \multicolumn{1}{c}{$0.065$} & 
\multicolumn{1}{c}{} & \multicolumn{1}{c}{$0.062$} \\ 
\multicolumn{1}{c}{} & \multicolumn{1}{c}{} & \multicolumn{1}{c}{} & 
\multicolumn{1}{c}{} & \multicolumn{1}{c}{} & \multicolumn{1}{c}{} & 
\multicolumn{1}{c}{} & \multicolumn{1}{c}{} & \multicolumn{1}{c}{} & 
\multicolumn{1}{c}{} & \multicolumn{1}{c}{} & \multicolumn{1}{c}{} & 
\multicolumn{1}{c}{} & \multicolumn{1}{c}{} & \multicolumn{1}{c}{} & 
\multicolumn{1}{c}{} & \multicolumn{1}{c}{} & \multicolumn{1}{c}{} & 
\multicolumn{1}{c}{} & \multicolumn{1}{c}{} & \multicolumn{1}{c}{} \\ 
\multicolumn{1}{c}{} & \multicolumn{1}{c}{} & \multicolumn{1}{c}{} & 
\multicolumn{1}{c}{} & \multicolumn{1}{c}{} & \multicolumn{1}{c}{} & 
\multicolumn{1}{c}{} & \multicolumn{1}{c}{} & \multicolumn{1}{c}{} & 
\multicolumn{1}{c}{} & \multicolumn{1}{c}{} & \multicolumn{1}{c}{} & 
\multicolumn{1}{c}{} & \multicolumn{1}{c}{} & \multicolumn{1}{c}{} & 
\multicolumn{1}{c}{} & \multicolumn{1}{c}{} & \multicolumn{1}{c}{} & 
\multicolumn{1}{c}{} & \multicolumn{1}{c}{} & \multicolumn{1}{c}{} \\ 
\multicolumn{1}{c}{} & \multicolumn{1}{c}{} & \multicolumn{1}{c}{} & 
\multicolumn{1}{c}{$100$} & \multicolumn{1}{c}{} & \multicolumn{1}{c}{} & 
\multicolumn{1}{c}{$0.051$} & \multicolumn{1}{c}{} & \multicolumn{1}{c}{$%
0.041$} & \multicolumn{1}{c}{} & \multicolumn{1}{c}{$0.055$} & 
\multicolumn{1}{c}{} & \multicolumn{1}{c}{$0.029$} & \multicolumn{1}{c}{} & 
\multicolumn{1}{c}{$0.046$} & \multicolumn{1}{c}{} & \multicolumn{1}{c}{$%
0.048$} & \multicolumn{1}{c}{} & \multicolumn{1}{c}{$0.047$} & 
\multicolumn{1}{c}{} & \multicolumn{1}{c}{$0.058$} \\ 
\multicolumn{1}{c}{} & \multicolumn{1}{c}{$200$} & \multicolumn{1}{c}{} & 
\multicolumn{1}{c}{$200$} & \multicolumn{1}{c}{} & \multicolumn{1}{c}{} & 
\multicolumn{1}{c}{$0.056$} & \multicolumn{1}{c}{} & \multicolumn{1}{c}{$%
0.047$} & \multicolumn{1}{c}{} & \multicolumn{1}{c}{$0.051$} & 
\multicolumn{1}{c}{} & \multicolumn{1}{c}{$0.030$} & \multicolumn{1}{c}{} & 
\multicolumn{1}{c}{$0.044$} & \multicolumn{1}{c}{} & \multicolumn{1}{c}{$%
0.058$} & \multicolumn{1}{c}{} & \multicolumn{1}{c}{$0.061$} & 
\multicolumn{1}{c}{} & \multicolumn{1}{c}{$0.066$} \\ 
\multicolumn{1}{c}{} & \multicolumn{1}{c}{} & \multicolumn{1}{c}{} & 
\multicolumn{1}{c}{$400$} & \multicolumn{1}{c}{} & \multicolumn{1}{c}{} & 
\multicolumn{1}{c}{$0.056$} & \multicolumn{1}{c}{} & \multicolumn{1}{c}{$%
0.062$} & \multicolumn{1}{c}{} & \multicolumn{1}{c}{$0.055$} & 
\multicolumn{1}{c}{} & \multicolumn{1}{c}{$0.033$} & \multicolumn{1}{c}{} & 
\multicolumn{1}{c}{$0.053$} & \multicolumn{1}{c}{} & \multicolumn{1}{c}{$%
0.049$} & \multicolumn{1}{c}{} & \multicolumn{1}{c}{$0.060$} & 
\multicolumn{1}{c}{} & \multicolumn{1}{c}{$0.065$} \\ 
\multicolumn{1}{c}{} & \multicolumn{1}{c}{} & \multicolumn{1}{c}{} & 
\multicolumn{1}{c}{$800$} & \multicolumn{1}{c}{} & \multicolumn{1}{c}{} & 
\multicolumn{1}{c}{$0.054$} & \multicolumn{1}{c}{} & \multicolumn{1}{c}{$%
0.064$} & \multicolumn{1}{c}{} & \multicolumn{1}{c}{$0.056$} & 
\multicolumn{1}{c}{} & \multicolumn{1}{c}{$0.025$} & \multicolumn{1}{c}{} & 
\multicolumn{1}{c}{$0.042$} & \multicolumn{1}{c}{} & \multicolumn{1}{c}{$%
0.049$} & \multicolumn{1}{c}{} & \multicolumn{1}{c}{$0.059$} & 
\multicolumn{1}{c}{} & \multicolumn{1}{c}{$0.057$} \\ 
\multicolumn{1}{c}{} & \multicolumn{1}{c}{} & \multicolumn{1}{c}{} & 
\multicolumn{1}{c}{} & \multicolumn{1}{c}{} & \multicolumn{1}{c}{} & 
\multicolumn{1}{c}{} & \multicolumn{1}{c}{} & \multicolumn{1}{c}{} & 
\multicolumn{1}{c}{} & \multicolumn{1}{c}{} & \multicolumn{1}{c}{} & 
\multicolumn{1}{c}{} & \multicolumn{1}{c}{} & \multicolumn{1}{c}{} &  &  & 
&  &  &  \\ \hline\hline
\end{tabular}
}
\par
{\scriptsize \ }
\par
{\scriptsize {\footnotesize \ } }
\par
{\scriptsize {\footnotesize 
\begin{tablenotes}
      \tiny
            \item 
            
\end{tablenotes}
} }
\end{table*}

\begin{table*}[h!]
\caption{{\protect\footnotesize {Empirical rejection frequencies under the
null of no changepoint with covariates - Case II, $\protect\beta_0=1.05$}}}
\label{tab:ERF2c}\centering
{\footnotesize {\ }}
\par
{\footnotesize {\ }}
\par
{\scriptsize {\ }}
\par
{\scriptsize 
\begin{tabular}{lllllllllllllllllllll}
\hline\hline
&  &  &  &  &  &  &  &  &  &  &  &  &  &  &  &  &  &  &  &  \\ 
&  &  &  &  &  & \multicolumn{5}{c}{Weighted CUSUM} & \multicolumn{1}{c}{} & 
\multicolumn{3}{c}{Standardised CUSUM} & \multicolumn{1}{c}{} & 
\multicolumn{5}{c}{Weighted Page-CUSUM} \\ 
&  &  &  & $\psi $ &  & \multicolumn{1}{c}{$0$} & \multicolumn{1}{c}{} & 
\multicolumn{1}{c}{$0.25$} & \multicolumn{1}{c}{} & \multicolumn{1}{c}{$0.45$%
} & \multicolumn{1}{c}{} & \multicolumn{1}{c}{} & \multicolumn{1}{c}{$0.5$}
& \multicolumn{1}{c}{} & \multicolumn{1}{c}{} & \multicolumn{1}{c}{$0$} & 
\multicolumn{1}{c}{} & \multicolumn{1}{c}{$0.25$} & \multicolumn{1}{c}{} & 
\multicolumn{1}{c}{$0.45$} \\ 
&  &  &  &  &  & \multicolumn{1}{c}{} & \multicolumn{1}{c}{} & 
\multicolumn{1}{c}{} & \multicolumn{1}{c}{} & \multicolumn{1}{c}{} & 
\multicolumn{1}{c}{} & \multicolumn{1}{c}{$c_{\alpha ,0.5}$} & 
\multicolumn{1}{c}{} & \multicolumn{1}{c}{$\widehat{c}_{\alpha ,0.5}$} & 
\multicolumn{1}{c}{} & \multicolumn{1}{c}{} & \multicolumn{1}{c}{} & 
\multicolumn{1}{c}{} & \multicolumn{1}{c}{} & \multicolumn{1}{c}{} \\ 
&  &  &  &  &  & \multicolumn{1}{c}{} & \multicolumn{1}{c}{} & 
\multicolumn{1}{c}{} & \multicolumn{1}{c}{} & \multicolumn{1}{c}{} & 
\multicolumn{1}{c}{} & \multicolumn{1}{c}{} & \multicolumn{1}{c}{} & 
\multicolumn{1}{c}{} & \multicolumn{1}{c}{} & \multicolumn{1}{c}{} & 
\multicolumn{1}{c}{} & \multicolumn{1}{c}{} & \multicolumn{1}{c}{} & 
\multicolumn{1}{c}{} \\ 
\multicolumn{1}{c}{} & \multicolumn{1}{c}{$m$} & \multicolumn{1}{c}{} & 
\multicolumn{1}{c}{$m^{\ast }$} & \multicolumn{1}{c}{} & \multicolumn{1}{c}{}
& \multicolumn{1}{c}{} & \multicolumn{1}{c}{} & \multicolumn{1}{c}{} & 
\multicolumn{1}{c}{} & \multicolumn{1}{c}{} & \multicolumn{1}{c}{} & 
\multicolumn{1}{c}{} & \multicolumn{1}{c}{} & \multicolumn{1}{c}{} & 
\multicolumn{1}{c}{} & \multicolumn{1}{c}{} & \multicolumn{1}{c}{} & 
\multicolumn{1}{c}{} & \multicolumn{1}{c}{} & \multicolumn{1}{c}{} \\ 
\multicolumn{1}{c}{} & \multicolumn{1}{c}{} & \multicolumn{1}{c}{} & 
\multicolumn{1}{c}{} & \multicolumn{1}{c}{} & \multicolumn{1}{c}{} & 
\multicolumn{1}{c}{} & \multicolumn{1}{c}{} & \multicolumn{1}{c}{} & 
\multicolumn{1}{c}{} & \multicolumn{1}{c}{} & \multicolumn{1}{c}{} & 
\multicolumn{1}{c}{} & \multicolumn{1}{c}{} & \multicolumn{1}{c}{} & 
\multicolumn{1}{c}{} & \multicolumn{1}{c}{} & \multicolumn{1}{c}{} & 
\multicolumn{1}{c}{} & \multicolumn{1}{c}{} & \multicolumn{1}{c}{} \\ 
\multicolumn{1}{c}{} & \multicolumn{1}{c}{} & \multicolumn{1}{c}{} & 
\multicolumn{1}{c}{$25$} & \multicolumn{1}{c}{} & \multicolumn{1}{c}{} & 
\multicolumn{1}{c}{$0.046$} & \multicolumn{1}{c}{} & \multicolumn{1}{c}{$%
0.045$} & \multicolumn{1}{c}{} & \multicolumn{1}{c}{$0.029$} & 
\multicolumn{1}{c}{} & \multicolumn{1}{c}{$0.014$} & \multicolumn{1}{c}{} & 
\multicolumn{1}{c}{$0.029$} & \multicolumn{1}{c}{} & \multicolumn{1}{c}{$%
0.041$} & \multicolumn{1}{c}{} & \multicolumn{1}{c}{$0.037$} & 
\multicolumn{1}{c}{} & \multicolumn{1}{c}{$0.025$} \\ 
\multicolumn{1}{c}{} & \multicolumn{1}{c}{$50$} & \multicolumn{1}{c}{} & 
\multicolumn{1}{c}{$50$} & \multicolumn{1}{c}{} & \multicolumn{1}{c}{} & 
\multicolumn{1}{c}{$0.070$} & \multicolumn{1}{c}{} & \multicolumn{1}{c}{$%
0.073$} & \multicolumn{1}{c}{} & \multicolumn{1}{c}{$0.046$} & 
\multicolumn{1}{c}{} & \multicolumn{1}{c}{$0.016$} & \multicolumn{1}{c}{} & 
\multicolumn{1}{c}{$0.041$} & \multicolumn{1}{c}{} & \multicolumn{1}{c}{$%
0.063$} & \multicolumn{1}{c}{} & \multicolumn{1}{c}{$0.061$} & 
\multicolumn{1}{c}{} & \multicolumn{1}{c}{$0.041$} \\ 
\multicolumn{1}{c}{} & \multicolumn{1}{c}{} & \multicolumn{1}{c}{} & 
\multicolumn{1}{c}{$100$} & \multicolumn{1}{c}{} & \multicolumn{1}{c}{} & 
\multicolumn{1}{c}{$0.062$} & \multicolumn{1}{c}{} & \multicolumn{1}{c}{$%
0.061$} & \multicolumn{1}{c}{} & \multicolumn{1}{c}{$0.046$} & 
\multicolumn{1}{c}{} & \multicolumn{1}{c}{$0.013$} & \multicolumn{1}{c}{} & 
\multicolumn{1}{c}{$0.029$} & \multicolumn{1}{c}{} & \multicolumn{1}{c}{$%
0.055$} & \multicolumn{1}{c}{} & \multicolumn{1}{c}{$0.042$} & 
\multicolumn{1}{c}{} & \multicolumn{1}{c}{$0.040$} \\ 
\multicolumn{1}{c}{} & \multicolumn{1}{c}{} & \multicolumn{1}{c}{} & 
\multicolumn{1}{c}{$200$} & \multicolumn{1}{c}{} & \multicolumn{1}{c}{} & 
\multicolumn{1}{c}{$0.070$} & \multicolumn{1}{c}{} & \multicolumn{1}{c}{$%
0.052$} & \multicolumn{1}{c}{} & \multicolumn{1}{c}{$0.032$} & 
\multicolumn{1}{c}{} & \multicolumn{1}{c}{$0.016$} & \multicolumn{1}{c}{} & 
\multicolumn{1}{c}{$0.024$} & \multicolumn{1}{c}{} & \multicolumn{1}{c}{$%
0.063$} & \multicolumn{1}{c}{} & \multicolumn{1}{c}{$0.065$} & 
\multicolumn{1}{c}{} & \multicolumn{1}{c}{$0.038$} \\ 
\multicolumn{1}{c}{} & \multicolumn{1}{c}{} & \multicolumn{1}{c}{} & 
\multicolumn{1}{c}{} & \multicolumn{1}{c}{} & \multicolumn{1}{c}{} & 
\multicolumn{1}{c}{} & \multicolumn{1}{c}{} & \multicolumn{1}{c}{} & 
\multicolumn{1}{c}{} & \multicolumn{1}{c}{} & \multicolumn{1}{c}{} & 
\multicolumn{1}{c}{} & \multicolumn{1}{c}{} & \multicolumn{1}{c}{} & 
\multicolumn{1}{c}{} & \multicolumn{1}{c}{} & \multicolumn{1}{c}{} & 
\multicolumn{1}{c}{} & \multicolumn{1}{c}{} & \multicolumn{1}{c}{} \\ 
\multicolumn{1}{c}{} & \multicolumn{1}{c}{} & \multicolumn{1}{c}{} & 
\multicolumn{1}{c}{} & \multicolumn{1}{c}{} & \multicolumn{1}{c}{} & 
\multicolumn{1}{c}{} & \multicolumn{1}{c}{} & \multicolumn{1}{c}{} & 
\multicolumn{1}{c}{} & \multicolumn{1}{c}{} & \multicolumn{1}{c}{} & 
\multicolumn{1}{c}{} & \multicolumn{1}{c}{} & \multicolumn{1}{c}{} & 
\multicolumn{1}{c}{} & \multicolumn{1}{c}{} & \multicolumn{1}{c}{} & 
\multicolumn{1}{c}{} & \multicolumn{1}{c}{} & \multicolumn{1}{c}{} \\ 
\multicolumn{1}{c}{} & \multicolumn{1}{c}{} & \multicolumn{1}{c}{} & 
\multicolumn{1}{c}{$50$} & \multicolumn{1}{c}{} & \multicolumn{1}{c}{} & 
\multicolumn{1}{c}{$0.039$} & \multicolumn{1}{c}{} & \multicolumn{1}{c}{$%
0.040$} & \multicolumn{1}{c}{} & \multicolumn{1}{c}{$0.033$} & 
\multicolumn{1}{c}{} & \multicolumn{1}{c}{$0.009$} & \multicolumn{1}{c}{} & 
\multicolumn{1}{c}{$0.027$} & \multicolumn{1}{c}{} & \multicolumn{1}{c}{$%
0.038$} & \multicolumn{1}{c}{} & \multicolumn{1}{c}{$0.036$} & 
\multicolumn{1}{c}{} & \multicolumn{1}{c}{$0.033$} \\ 
\multicolumn{1}{c}{} & \multicolumn{1}{c}{$100$} & \multicolumn{1}{c}{} & 
\multicolumn{1}{c}{$100$} & \multicolumn{1}{c}{} & \multicolumn{1}{c}{} & 
\multicolumn{1}{c}{$0.052$} & \multicolumn{1}{c}{} & \multicolumn{1}{c}{$%
0.045$} & \multicolumn{1}{c}{} & \multicolumn{1}{c}{$0.034$} & 
\multicolumn{1}{c}{} & \multicolumn{1}{c}{$0.012$} & \multicolumn{1}{c}{} & 
\multicolumn{1}{c}{$0.029$} & \multicolumn{1}{c}{} & \multicolumn{1}{c}{$%
0.048$} & \multicolumn{1}{c}{} & \multicolumn{1}{c}{$0.048$} & 
\multicolumn{1}{c}{} & \multicolumn{1}{c}{$0.037$} \\ 
\multicolumn{1}{c}{} & \multicolumn{1}{c}{} & \multicolumn{1}{c}{} & 
\multicolumn{1}{c}{$200$} & \multicolumn{1}{c}{} & \multicolumn{1}{c}{} & 
\multicolumn{1}{c}{$0.048$} & \multicolumn{1}{c}{} & \multicolumn{1}{c}{$%
0.046$} & \multicolumn{1}{c}{} & \multicolumn{1}{c}{$0.037$} & 
\multicolumn{1}{c}{} & \multicolumn{1}{c}{$0.011$} & \multicolumn{1}{c}{} & 
\multicolumn{1}{c}{$0.023$} & \multicolumn{1}{c}{} & \multicolumn{1}{c}{$%
0.042$} & \multicolumn{1}{c}{} & \multicolumn{1}{c}{$0.047$} & 
\multicolumn{1}{c}{} & \multicolumn{1}{c}{$0.036$} \\ 
\multicolumn{1}{c}{} & \multicolumn{1}{c}{} & \multicolumn{1}{c}{} & 
\multicolumn{1}{c}{$400$} & \multicolumn{1}{c}{} & \multicolumn{1}{c}{} & 
\multicolumn{1}{c}{$0.066$} & \multicolumn{1}{c}{} & \multicolumn{1}{c}{$%
0.077$} & \multicolumn{1}{c}{} & \multicolumn{1}{c}{$0.061$} & 
\multicolumn{1}{c}{} & \multicolumn{1}{c}{$0.021$} & \multicolumn{1}{c}{} & 
\multicolumn{1}{c}{$0.049$} & \multicolumn{1}{c}{} & \multicolumn{1}{c}{$%
0.051$} & \multicolumn{1}{c}{} & \multicolumn{1}{c}{$0.068$} & 
\multicolumn{1}{c}{} & \multicolumn{1}{c}{$0.063$} \\ 
\multicolumn{1}{c}{} & \multicolumn{1}{c}{} & \multicolumn{1}{c}{} & 
\multicolumn{1}{c}{} & \multicolumn{1}{c}{} & \multicolumn{1}{c}{} & 
\multicolumn{1}{c}{} & \multicolumn{1}{c}{} & \multicolumn{1}{c}{} & 
\multicolumn{1}{c}{} & \multicolumn{1}{c}{} & \multicolumn{1}{c}{} & 
\multicolumn{1}{c}{} & \multicolumn{1}{c}{} & \multicolumn{1}{c}{} & 
\multicolumn{1}{c}{} & \multicolumn{1}{c}{} & \multicolumn{1}{c}{} & 
\multicolumn{1}{c}{} & \multicolumn{1}{c}{} & \multicolumn{1}{c}{} \\ 
\multicolumn{1}{c}{} & \multicolumn{1}{c}{} & \multicolumn{1}{c}{} & 
\multicolumn{1}{c}{} & \multicolumn{1}{c}{} & \multicolumn{1}{c}{} & 
\multicolumn{1}{c}{} & \multicolumn{1}{c}{} & \multicolumn{1}{c}{} & 
\multicolumn{1}{c}{} & \multicolumn{1}{c}{} & \multicolumn{1}{c}{} & 
\multicolumn{1}{c}{} & \multicolumn{1}{c}{} & \multicolumn{1}{c}{} & 
\multicolumn{1}{c}{} & \multicolumn{1}{c}{} & \multicolumn{1}{c}{} & 
\multicolumn{1}{c}{} & \multicolumn{1}{c}{} & \multicolumn{1}{c}{} \\ 
\multicolumn{1}{c}{} & \multicolumn{1}{c}{} & \multicolumn{1}{c}{} & 
\multicolumn{1}{c}{$100$} & \multicolumn{1}{c}{} & \multicolumn{1}{c}{} & 
\multicolumn{1}{c}{$0.053$} & \multicolumn{1}{c}{} & \multicolumn{1}{c}{$%
0.041$} & \multicolumn{1}{c}{} & \multicolumn{1}{c}{$0.033$} & 
\multicolumn{1}{c}{} & \multicolumn{1}{c}{$0.011$} & \multicolumn{1}{c}{} & 
\multicolumn{1}{c}{$0.025$} & \multicolumn{1}{c}{} & \multicolumn{1}{c}{$%
0.048$} & \multicolumn{1}{c}{} & \multicolumn{1}{c}{$0.045$} & 
\multicolumn{1}{c}{} & \multicolumn{1}{c}{$0.037$} \\ 
\multicolumn{1}{c}{} & \multicolumn{1}{c}{$200$} & \multicolumn{1}{c}{} & 
\multicolumn{1}{c}{$200$} & \multicolumn{1}{c}{} & \multicolumn{1}{c}{} & 
\multicolumn{1}{c}{$0.059$} & \multicolumn{1}{c}{} & \multicolumn{1}{c}{$%
0.050$} & \multicolumn{1}{c}{} & \multicolumn{1}{c}{$0.045$} & 
\multicolumn{1}{c}{} & \multicolumn{1}{c}{$0.014$} & \multicolumn{1}{c}{} & 
\multicolumn{1}{c}{$0.031$} & \multicolumn{1}{c}{} & \multicolumn{1}{c}{$%
0.055$} & \multicolumn{1}{c}{} & \multicolumn{1}{c}{$0.061$} & 
\multicolumn{1}{c}{} & \multicolumn{1}{c}{$0.051$} \\ 
\multicolumn{1}{c}{} & \multicolumn{1}{c}{} & \multicolumn{1}{c}{} & 
\multicolumn{1}{c}{$400$} & \multicolumn{1}{c}{} & \multicolumn{1}{c}{} & 
\multicolumn{1}{c}{$0.046$} & \multicolumn{1}{c}{} & \multicolumn{1}{c}{$%
0.055$} & \multicolumn{1}{c}{} & \multicolumn{1}{c}{$0.043$} & 
\multicolumn{1}{c}{} & \multicolumn{1}{c}{$0.010$} & \multicolumn{1}{c}{} & 
\multicolumn{1}{c}{$0.032$} & \multicolumn{1}{c}{} & \multicolumn{1}{c}{$%
0.036$} & \multicolumn{1}{c}{} & \multicolumn{1}{c}{$0.053$} & 
\multicolumn{1}{c}{} & \multicolumn{1}{c}{$0.044$} \\ 
\multicolumn{1}{c}{} & \multicolumn{1}{c}{} & \multicolumn{1}{c}{} & 
\multicolumn{1}{c}{$800$} & \multicolumn{1}{c}{} & \multicolumn{1}{c}{} & 
\multicolumn{1}{c}{$0.046$} & \multicolumn{1}{c}{} & \multicolumn{1}{c}{$%
0.053$} & \multicolumn{1}{c}{} & \multicolumn{1}{c}{$0.038$} & 
\multicolumn{1}{c}{} & \multicolumn{1}{c}{$0.008$} & \multicolumn{1}{c}{} & 
\multicolumn{1}{c}{$0.018$} & \multicolumn{1}{c}{} & \multicolumn{1}{c}{$%
0.039$} & \multicolumn{1}{c}{} & \multicolumn{1}{c}{$0.045$} & 
\multicolumn{1}{c}{} & \multicolumn{1}{c}{$0.034$} \\ 
\multicolumn{1}{c}{} & \multicolumn{1}{c}{} & \multicolumn{1}{c}{} & 
\multicolumn{1}{c}{} & \multicolumn{1}{c}{} & \multicolumn{1}{c}{} & 
\multicolumn{1}{c}{} & \multicolumn{1}{c}{} & \multicolumn{1}{c}{} & 
\multicolumn{1}{c}{} & \multicolumn{1}{c}{} & \multicolumn{1}{c}{} & 
\multicolumn{1}{c}{} & \multicolumn{1}{c}{} & \multicolumn{1}{c}{} &  &  & 
&  &  &  \\ \hline\hline
\end{tabular}
}
\par
{\scriptsize \ }
\par
{\scriptsize {\footnotesize \ } }
\par
{\scriptsize {\footnotesize 
\begin{tablenotes}
      \tiny
            \item 
            
\end{tablenotes}
} }
\end{table*}

\begin{table*}[h!]
\caption{{\protect\footnotesize {Empirical rejection frequencies under the
null of no changepoint with covariates - Case III, $\protect\beta_0=1$}}}
\label{tab:ERF3c}\centering
{\footnotesize {\ }}
\par
{\footnotesize {\ }}
\par
{\scriptsize {\ }}
\par
{\scriptsize 
\begin{tabular}{lllllllllllllllllllll}
\hline\hline
&  &  &  &  &  &  &  &  &  &  &  &  &  &  &  &  &  &  &  &  \\ 
&  &  &  &  &  & \multicolumn{5}{c}{Weighted CUSUM} & \multicolumn{1}{c}{} & 
\multicolumn{3}{c}{Standardised CUSUM} & \multicolumn{1}{c}{} & 
\multicolumn{5}{c}{Weighted Page-CUSUM} \\ 
&  &  &  & $\psi $ &  & \multicolumn{1}{c}{$0$} & \multicolumn{1}{c}{} & 
\multicolumn{1}{c}{$0.25$} & \multicolumn{1}{c}{} & \multicolumn{1}{c}{$0.45$%
} & \multicolumn{1}{c}{} & \multicolumn{1}{c}{} & \multicolumn{1}{c}{$0.5$}
& \multicolumn{1}{c}{} & \multicolumn{1}{c}{} & \multicolumn{1}{c}{$0$} & 
\multicolumn{1}{c}{} & \multicolumn{1}{c}{$0.25$} & \multicolumn{1}{c}{} & 
\multicolumn{1}{c}{$0.45$} \\ 
&  &  &  &  &  & \multicolumn{1}{c}{} & \multicolumn{1}{c}{} & 
\multicolumn{1}{c}{} & \multicolumn{1}{c}{} & \multicolumn{1}{c}{} & 
\multicolumn{1}{c}{} & \multicolumn{1}{c}{$c_{\alpha ,0.5}$} & 
\multicolumn{1}{c}{} & \multicolumn{1}{c}{$\widehat{c}_{\alpha ,0.5}$} & 
\multicolumn{1}{c}{} & \multicolumn{1}{c}{} & \multicolumn{1}{c}{} & 
\multicolumn{1}{c}{} & \multicolumn{1}{c}{} & \multicolumn{1}{c}{} \\ 
&  &  &  &  &  & \multicolumn{1}{c}{} & \multicolumn{1}{c}{} & 
\multicolumn{1}{c}{} & \multicolumn{1}{c}{} & \multicolumn{1}{c}{} & 
\multicolumn{1}{c}{} & \multicolumn{1}{c}{} & \multicolumn{1}{c}{} & 
\multicolumn{1}{c}{} & \multicolumn{1}{c}{} & \multicolumn{1}{c}{} & 
\multicolumn{1}{c}{} & \multicolumn{1}{c}{} & \multicolumn{1}{c}{} & 
\multicolumn{1}{c}{} \\ 
\multicolumn{1}{c}{} & \multicolumn{1}{c}{$m$} & \multicolumn{1}{c}{} & 
\multicolumn{1}{c}{$m^{\ast }$} & \multicolumn{1}{c}{} & \multicolumn{1}{c}{}
& \multicolumn{1}{c}{} & \multicolumn{1}{c}{} & \multicolumn{1}{c}{} & 
\multicolumn{1}{c}{} & \multicolumn{1}{c}{} & \multicolumn{1}{c}{} & 
\multicolumn{1}{c}{} & \multicolumn{1}{c}{} & \multicolumn{1}{c}{} & 
\multicolumn{1}{c}{} & \multicolumn{1}{c}{} & \multicolumn{1}{c}{} & 
\multicolumn{1}{c}{} & \multicolumn{1}{c}{} & \multicolumn{1}{c}{} \\ 
\multicolumn{1}{c}{} & \multicolumn{1}{c}{} & \multicolumn{1}{c}{} & 
\multicolumn{1}{c}{} & \multicolumn{1}{c}{} & \multicolumn{1}{c}{} & 
\multicolumn{1}{c}{} & \multicolumn{1}{c}{} & \multicolumn{1}{c}{} & 
\multicolumn{1}{c}{} & \multicolumn{1}{c}{} & \multicolumn{1}{c}{} & 
\multicolumn{1}{c}{} & \multicolumn{1}{c}{} & \multicolumn{1}{c}{} & 
\multicolumn{1}{c}{} & \multicolumn{1}{c}{} & \multicolumn{1}{c}{} & 
\multicolumn{1}{c}{} & \multicolumn{1}{c}{} & \multicolumn{1}{c}{} \\ 
\multicolumn{1}{c}{} & \multicolumn{1}{c}{} & \multicolumn{1}{c}{} & 
\multicolumn{1}{c}{$25$} & \multicolumn{1}{c}{} & \multicolumn{1}{c}{} & 
\multicolumn{1}{c}{$0.084$} & \multicolumn{1}{c}{} & \multicolumn{1}{c}{$%
0.083$} & \multicolumn{1}{c}{} & \multicolumn{1}{c}{$0.061$} & 
\multicolumn{1}{c}{} & \multicolumn{1}{c}{$0.039$} & \multicolumn{1}{c}{} & 
\multicolumn{1}{c}{$0.066$} & \multicolumn{1}{c}{} & \multicolumn{1}{c}{$%
0.078$} & \multicolumn{1}{c}{} & \multicolumn{1}{c}{$0.074$} & 
\multicolumn{1}{c}{} & \multicolumn{1}{c}{$0.066$} \\ 
\multicolumn{1}{c}{} & \multicolumn{1}{c}{$50$} & \multicolumn{1}{c}{} & 
\multicolumn{1}{c}{$50$} & \multicolumn{1}{c}{} & \multicolumn{1}{c}{} & 
\multicolumn{1}{c}{$0.103$} & \multicolumn{1}{c}{} & \multicolumn{1}{c}{$%
0.107$} & \multicolumn{1}{c}{} & \multicolumn{1}{c}{$0.095$} & 
\multicolumn{1}{c}{} & \multicolumn{1}{c}{$0.057$} & \multicolumn{1}{c}{} & 
\multicolumn{1}{c}{$0.083$} & \multicolumn{1}{c}{} & \multicolumn{1}{c}{$%
0.103$} & \multicolumn{1}{c}{} & \multicolumn{1}{c}{$0.092$} & 
\multicolumn{1}{c}{} & \multicolumn{1}{c}{$0.084$} \\ 
\multicolumn{1}{c}{} & \multicolumn{1}{c}{} & \multicolumn{1}{c}{} & 
\multicolumn{1}{c}{$100$} & \multicolumn{1}{c}{} & \multicolumn{1}{c}{} & 
\multicolumn{1}{c}{$0.126$} & \multicolumn{1}{c}{} & \multicolumn{1}{c}{$%
0.124$} & \multicolumn{1}{c}{} & \multicolumn{1}{c}{$0.110$} & 
\multicolumn{1}{c}{} & \multicolumn{1}{c}{$0.071$} & \multicolumn{1}{c}{} & 
\multicolumn{1}{c}{$0.093$} & \multicolumn{1}{c}{} & \multicolumn{1}{c}{$%
0.124$} & \multicolumn{1}{c}{} & \multicolumn{1}{c}{$0.119$} & 
\multicolumn{1}{c}{} & \multicolumn{1}{c}{$0.101$} \\ 
\multicolumn{1}{c}{} & \multicolumn{1}{c}{} & \multicolumn{1}{c}{} & 
\multicolumn{1}{c}{$200$} & \multicolumn{1}{c}{} & \multicolumn{1}{c}{} & 
\multicolumn{1}{c}{$0.139$} & \multicolumn{1}{c}{} & \multicolumn{1}{c}{$%
0.132$} & \multicolumn{1}{c}{} & \multicolumn{1}{c}{$0.121$} & 
\multicolumn{1}{c}{} & \multicolumn{1}{c}{$0.074$} & \multicolumn{1}{c}{} & 
\multicolumn{1}{c}{$0.106$} & \multicolumn{1}{c}{} & \multicolumn{1}{c}{$%
0.136$} & \multicolumn{1}{c}{} & \multicolumn{1}{c}{$0.143$} & 
\multicolumn{1}{c}{} & \multicolumn{1}{c}{$0.125$} \\ 
\multicolumn{1}{c}{} & \multicolumn{1}{c}{} & \multicolumn{1}{c}{} & 
\multicolumn{1}{c}{} & \multicolumn{1}{c}{} & \multicolumn{1}{c}{} & 
\multicolumn{1}{c}{} & \multicolumn{1}{c}{} & \multicolumn{1}{c}{} & 
\multicolumn{1}{c}{} & \multicolumn{1}{c}{} & \multicolumn{1}{c}{} & 
\multicolumn{1}{c}{} & \multicolumn{1}{c}{} & \multicolumn{1}{c}{} & 
\multicolumn{1}{c}{} & \multicolumn{1}{c}{} & \multicolumn{1}{c}{} & 
\multicolumn{1}{c}{} & \multicolumn{1}{c}{} & \multicolumn{1}{c}{} \\ 
\multicolumn{1}{c}{} & \multicolumn{1}{c}{} & \multicolumn{1}{c}{} & 
\multicolumn{1}{c}{} & \multicolumn{1}{c}{} & \multicolumn{1}{c}{} & 
\multicolumn{1}{c}{} & \multicolumn{1}{c}{} & \multicolumn{1}{c}{} & 
\multicolumn{1}{c}{} & \multicolumn{1}{c}{} & \multicolumn{1}{c}{} & 
\multicolumn{1}{c}{} & \multicolumn{1}{c}{} & \multicolumn{1}{c}{} & 
\multicolumn{1}{c}{} & \multicolumn{1}{c}{} & \multicolumn{1}{c}{} & 
\multicolumn{1}{c}{} & \multicolumn{1}{c}{} & \multicolumn{1}{c}{} \\ 
\multicolumn{1}{c}{} & \multicolumn{1}{c}{} & \multicolumn{1}{c}{} & 
\multicolumn{1}{c}{$50$} & \multicolumn{1}{c}{} & \multicolumn{1}{c}{} & 
\multicolumn{1}{c}{$0.083$} & \multicolumn{1}{c}{} & \multicolumn{1}{c}{$%
0.077$} & \multicolumn{1}{c}{} & \multicolumn{1}{c}{$0.063$} & 
\multicolumn{1}{c}{} & \multicolumn{1}{c}{$0.036$} & \multicolumn{1}{c}{} & 
\multicolumn{1}{c}{$0.056$} & \multicolumn{1}{c}{} & \multicolumn{1}{c}{$%
0.075$} & \multicolumn{1}{c}{} & \multicolumn{1}{c}{$0.070$} & 
\multicolumn{1}{c}{} & \multicolumn{1}{c}{$0.060$} \\ 
\multicolumn{1}{c}{} & \multicolumn{1}{c}{$100$} & \multicolumn{1}{c}{} & 
\multicolumn{1}{c}{$100$} & \multicolumn{1}{c}{} & \multicolumn{1}{c}{} & 
\multicolumn{1}{c}{$0.098$} & \multicolumn{1}{c}{} & \multicolumn{1}{c}{$%
0.091$} & \multicolumn{1}{c}{} & \multicolumn{1}{c}{$0.078$} & 
\multicolumn{1}{c}{} & \multicolumn{1}{c}{$0.047$} & \multicolumn{1}{c}{} & 
\multicolumn{1}{c}{$0.074$} & \multicolumn{1}{c}{} & \multicolumn{1}{c}{$%
0.097$} & \multicolumn{1}{c}{} & \multicolumn{1}{c}{$0.094$} & 
\multicolumn{1}{c}{} & \multicolumn{1}{c}{$0.080$} \\ 
\multicolumn{1}{c}{} & \multicolumn{1}{c}{} & \multicolumn{1}{c}{} & 
\multicolumn{1}{c}{$200$} & \multicolumn{1}{c}{} & \multicolumn{1}{c}{} & 
\multicolumn{1}{c}{$0.103$} & \multicolumn{1}{c}{} & \multicolumn{1}{c}{$%
0.102$} & \multicolumn{1}{c}{} & \multicolumn{1}{c}{$0.090$} & 
\multicolumn{1}{c}{} & \multicolumn{1}{c}{$0.061$} & \multicolumn{1}{c}{} & 
\multicolumn{1}{c}{$0.072$} & \multicolumn{1}{c}{} & \multicolumn{1}{c}{$%
0.103$} & \multicolumn{1}{c}{} & \multicolumn{1}{c}{$0.111$} & 
\multicolumn{1}{c}{} & \multicolumn{1}{c}{$0.092$} \\ 
\multicolumn{1}{c}{} & \multicolumn{1}{c}{} & \multicolumn{1}{c}{} & 
\multicolumn{1}{c}{$400$} & \multicolumn{1}{c}{} & \multicolumn{1}{c}{} & 
\multicolumn{1}{c}{$0.112$} & \multicolumn{1}{c}{} & \multicolumn{1}{c}{$%
0.120$} & \multicolumn{1}{c}{} & \multicolumn{1}{c}{$0.106$} & 
\multicolumn{1}{c}{} & \multicolumn{1}{c}{$0.056$} & \multicolumn{1}{c}{} & 
\multicolumn{1}{c}{$0.084$} & \multicolumn{1}{c}{} & \multicolumn{1}{c}{$%
0.112$} & \multicolumn{1}{c}{} & \multicolumn{1}{c}{$0.118$} & 
\multicolumn{1}{c}{} & \multicolumn{1}{c}{$0.107$} \\ 
\multicolumn{1}{c}{} & \multicolumn{1}{c}{} & \multicolumn{1}{c}{} & 
\multicolumn{1}{c}{} & \multicolumn{1}{c}{} & \multicolumn{1}{c}{} & 
\multicolumn{1}{c}{} & \multicolumn{1}{c}{} & \multicolumn{1}{c}{} & 
\multicolumn{1}{c}{} & \multicolumn{1}{c}{} & \multicolumn{1}{c}{} & 
\multicolumn{1}{c}{} & \multicolumn{1}{c}{} & \multicolumn{1}{c}{} & 
\multicolumn{1}{c}{} & \multicolumn{1}{c}{} & \multicolumn{1}{c}{} & 
\multicolumn{1}{c}{} & \multicolumn{1}{c}{} & \multicolumn{1}{c}{} \\ 
\multicolumn{1}{c}{} & \multicolumn{1}{c}{} & \multicolumn{1}{c}{} & 
\multicolumn{1}{c}{} & \multicolumn{1}{c}{} & \multicolumn{1}{c}{} & 
\multicolumn{1}{c}{} & \multicolumn{1}{c}{} & \multicolumn{1}{c}{} & 
\multicolumn{1}{c}{} & \multicolumn{1}{c}{} & \multicolumn{1}{c}{} & 
\multicolumn{1}{c}{} & \multicolumn{1}{c}{} & \multicolumn{1}{c}{} & 
\multicolumn{1}{c}{} & \multicolumn{1}{c}{} & \multicolumn{1}{c}{} & 
\multicolumn{1}{c}{} & \multicolumn{1}{c}{} & \multicolumn{1}{c}{} \\ 
\multicolumn{1}{c}{} & \multicolumn{1}{c}{} & \multicolumn{1}{c}{} & 
\multicolumn{1}{c}{$100$} & \multicolumn{1}{c}{} & \multicolumn{1}{c}{} & 
\multicolumn{1}{c}{$0.068$} & \multicolumn{1}{c}{} & \multicolumn{1}{c}{$%
0.050$} & \multicolumn{1}{c}{} & \multicolumn{1}{c}{$0.043$} & 
\multicolumn{1}{c}{} & \multicolumn{1}{c}{$0.020$} & \multicolumn{1}{c}{} & 
\multicolumn{1}{c}{$0.032$} & \multicolumn{1}{c}{} & \multicolumn{1}{c}{$%
0.065$} & \multicolumn{1}{c}{} & \multicolumn{1}{c}{$0.062$} & 
\multicolumn{1}{c}{} & \multicolumn{1}{c}{$0.047$} \\ 
\multicolumn{1}{c}{} & \multicolumn{1}{c}{$200$} & \multicolumn{1}{c}{} & 
\multicolumn{1}{c}{$200$} & \multicolumn{1}{c}{} & \multicolumn{1}{c}{} & 
\multicolumn{1}{c}{$0.098$} & \multicolumn{1}{c}{} & \multicolumn{1}{c}{$%
0.087$} & \multicolumn{1}{c}{} & \multicolumn{1}{c}{$0.082$} & 
\multicolumn{1}{c}{} & \multicolumn{1}{c}{$0.041$} & \multicolumn{1}{c}{} & 
\multicolumn{1}{c}{$0.070$} & \multicolumn{1}{c}{} & \multicolumn{1}{c}{$%
0.092$} & \multicolumn{1}{c}{} & \multicolumn{1}{c}{$0.098$} & 
\multicolumn{1}{c}{} & \multicolumn{1}{c}{$0.092$} \\ 
\multicolumn{1}{c}{} & \multicolumn{1}{c}{} & \multicolumn{1}{c}{} & 
\multicolumn{1}{c}{$400$} & \multicolumn{1}{c}{} & \multicolumn{1}{c}{} & 
\multicolumn{1}{c}{$0.095$} & \multicolumn{1}{c}{} & \multicolumn{1}{c}{$%
0.096$} & \multicolumn{1}{c}{} & \multicolumn{1}{c}{$0.078$} & 
\multicolumn{1}{c}{} & \multicolumn{1}{c}{$0.044$} & \multicolumn{1}{c}{} & 
\multicolumn{1}{c}{$0.068$} & \multicolumn{1}{c}{} & \multicolumn{1}{c}{$%
0.094$} & \multicolumn{1}{c}{} & \multicolumn{1}{c}{$0.098$} & 
\multicolumn{1}{c}{} & \multicolumn{1}{c}{$0.083$} \\ 
\multicolumn{1}{c}{} & \multicolumn{1}{c}{} & \multicolumn{1}{c}{} & 
\multicolumn{1}{c}{$800$} & \multicolumn{1}{c}{} & \multicolumn{1}{c}{} & 
\multicolumn{1}{c}{$0.109$} & \multicolumn{1}{c}{} & \multicolumn{1}{c}{$%
0.114$} & \multicolumn{1}{c}{} & \multicolumn{1}{c}{$0.091$} & 
\multicolumn{1}{c}{} & \multicolumn{1}{c}{$0.056$} & \multicolumn{1}{c}{} & 
\multicolumn{1}{c}{$0.073$} & \multicolumn{1}{c}{} & \multicolumn{1}{c}{$%
0.103$} & \multicolumn{1}{c}{} & \multicolumn{1}{c}{$0.106$} & 
\multicolumn{1}{c}{} & \multicolumn{1}{c}{$0.100$} \\ 
\multicolumn{1}{c}{} & \multicolumn{1}{c}{} & \multicolumn{1}{c}{} & 
\multicolumn{1}{c}{} & \multicolumn{1}{c}{} & \multicolumn{1}{c}{} & 
\multicolumn{1}{c}{} & \multicolumn{1}{c}{} & \multicolumn{1}{c}{} & 
\multicolumn{1}{c}{} & \multicolumn{1}{c}{} & \multicolumn{1}{c}{} & 
\multicolumn{1}{c}{} & \multicolumn{1}{c}{} & \multicolumn{1}{c}{} &  &  & 
&  &  &  \\ \hline\hline
\end{tabular}
}
\par
{\scriptsize {\footnotesize 
\begin{tablenotes}
      \tiny
            \item 
            
\end{tablenotes}
} }
\end{table*}

\begin{table*}[h!]
\caption{{\protect\footnotesize {Median delays and empirical rejection
frequencies under alternatives - DGP with covariates}}}
\label{tab:Power2}\centering
\par
\resizebox{\textwidth}{!}{

\begin{tabular}{cccccccccccccccccccccc}
\hline\hline
&  &  &  &  &  &  &  &  &  &  &  &  &  &  &  &  &  &  &  &  &  \\ 
&  &  &  &  &  &  & \multicolumn{5}{c}{Weighted CUSUM} &  & 
\multicolumn{3}{c}{Standardised CUSUM} &  & \multicolumn{5}{c}{Weighted
Page-CUSUM } \\ 
& DGP &  &  &  & $\psi $ &  & $0$ &  & $0.25$ &  & $0.45$ &  &  & $0.5$ &  & 
& $0$ &  & $0.25$ &  & $0.45$ \\ 
&  &  &  &  &  &  &  &  &  &  &  &  & $c_{\alpha ,0.5}$ &  & $\widehat{c}_{\alpha ,0.5}$ &  &  &  &  &  &  \\ 
&  &  &  &  &  &  &  &  &  &  &  &  &  &  &  &  &  &  &  &  &  \\ 
\cline{2-22}
&  &  &  &  &  &  &  &  &  &  &  &  &  &  &  &  &  &  &  &  &  \\ 
&  &  &  & $100$ &  &  & $\underset{\left( 0.704\right) }{52}$ &  & $\underset{\left( 0.671\right) }{45}$ &  & $\underset{\left( 0.622\right) }{34}$ &  & $\underset{\left( 0.473\right) }{36}$ &  & $\underset{\left(
0.553\right) }{31}$ &  & $\underset{\left( 0.704\right) }{55}$ &  & $\underset{\left( 0.699\right) }{45.5}$ &  & $\underset{\left( 0.656\right) }{31}$ \\ 
& $\underset{\left( \beta _{0}=0.5\right) }{\text{\textbf{Case I}}}$ &  & $m^{\ast }$ & $200$ &  &  & $\underset{\left( 0.849\right) }{80}$ &  & $\underset{\left( 0.833\right) }{61.5}$ &  & $\underset{\left( 0.792\right) }{47}$ &  & $\underset{\left( 0.662\right) }{52}$ &  & $\underset{\left(
0.743\right) }{46}$ &  & $\underset{\left( 0.848\right) }{81}$ &  & $\underset{\left( 0.846\right) }{59}$ &  & $\underset{\left( 0.809\right) }{43.5}$ \\ 
&  &  &  & $400$ &  &  & $\underset{\left( 0.915\right) }{108.5}$ &  & $\underset{\left( 0.911\right) }{79}$ &  & $\underset{\left( 0.880\right) }{60}$ &  & $\underset{\left( 0.759\right) }{71.5}$ &  & $\underset{\left(
0.834\right) }{58}$ &  & $\underset{\left( 0.909\right) }{109}$ &  & $\underset{\left( 0.912\right) }{78}$ &  & $\underset{\left( 0.881\right) }{57}$ \\ 
&  &  &  & $800$ &  &  & $\underset{\left( 0.972\right) }{135}$ &  & $\underset{\left( 0.970\right) }{94}$ &  & $\underset{\left( 0.942\right) }{75}$ &  & $\underset{\left( 0.846\right) }{94}$ &  & $\underset{\left(
0.898\right) }{81}$ &  & $\underset{\left( 0.967\right) }{136}$ &  & $\underset{\left( 0.962\right) }{95}$ &  & $\underset{\left( 0.940\right) }{70}$ \\ 
&  &  &  &  &  &  &  &  &  &  &  &  &  &  &  &  &  &  &  &  &  \\ 
\cline{2-22}
&  &  &  &  &  &  &  &  &  &  &  &  &  &  &  &  &  &  &  &  &  \\ 
&  &  &  & $100$ &  &  & $\underset{\left( 1.000\right) }{15}$ &  & $\underset{\left( 1.000\right) }{13}$ &  & $\underset{\left( 1.000\right) }{9}
$ &  & $\underset{\left( 1.000\right) }{11}$ &  & $\underset{\left(
1.000\right) }{9}$ &  & $\underset{\left( 1.000\right) }{19}$ &  & $\underset{\left( 1.000\right) }{13}$ &  & $\underset{\left( 1.000\right) }{9}$ \\ 
& $\underset{\left( \beta _{0}=1.05\right) }{\text{\textbf{Case II}}}$ &  & $m^{\ast }$ & $200$ &  &  & $\underset{\left( 1.000\right) }{24}$ &  & $\underset{\left( 1.000\right) }{15}$ &  & $\underset{\left( 1.000\right) }{9}
$ &  & $\underset{\left( 1.000\right) }{11}$ &  & $\underset{\left(
1.000\right) }{9}$ &  & $\underset{\left( 1.000\right) }{25}$ &  & $\underset{\left( 1.000\right) }{15}$ &  & $\underset{\left( 1.000\right) }{9}$ \\ 
&  &  &  & $400$ &  &  & $\underset{\left( 1.000\right) }{27}$ &  & $\underset{\left( 1.000\right) }{16}$ &  & $\underset{\left( 1.000\right) }{10}$ &  & $\underset{\left( 1.000\right) }{11}$ &  & $\underset{\left(
1.000\right) }{9}$ &  & $\underset{\left( 1.000\right) }{28}$ &  & $\underset{\left( 1.000\right) }{16}$ &  & $\underset{\left( 1.000\right) }{10}$ \\ 
&  &  &  & $800$ &  &  & $\underset{\left( 1.000\right) }{30}$ &  & $\underset{\left( 1.000\right) }{17}$ &  & $\underset{\left( 1.000\right) }{10}$ &  & $\underset{\left( 1.000\right) }{11}$ &  & $\underset{\left(
1.000\right) }{9}$ &  & $\underset{\left( 1.000\right) }{31}$ &  & $\underset{\left( 1.000\right) }{17}$ &  & $\underset{\left( 1.000\right) }{9}$ \\ 
&  &  &  &  &  &  &  &  &  &  &  &  &  &  &  &  &  &  &  &  &  \\ 
\cline{2-22}
&  &  &  &  &  &  &  &  &  &  &  &  &  &  &  &  &  &  &  &  &  \\ 
&  &  &  & $100$ &  &  & $\underset{\left( 0.997\right) }{27}$ &  & $\underset{\left( 0.996\right) }{22}$ &  & $\underset{\left( 0.995\right) }{18}$ &  & $\underset{\left( 0.992\right) }{22}$ &  & $\underset{\left(
0.994\right) }{18}$ &  & $\underset{\left( 0.998\right) }{29}$ &  & $\underset{\left( 0.997\right) }{22}$ &  & $\underset{\left( 0.997\right) }{18}$ \\ 
& $\underset{\left( \beta _{0}=1\right) }{\text{\textbf{Case III}}}$ &  & $m^{\ast }$ & $200$ &  &  & $\underset{\left( 1.000\right) }{34}$ &  & $\underset{\left( 1.000\right) }{25}$ &  & $\underset{\left( 1.000\right) }{19}$ &  & $\underset{\left( 1.000\right) }{22}$ &  & $\underset{\left(
1.000\right) }{18}$ &  & $\underset{\left( 1.000\right) }{35}$ &  & $\underset{\left( 1.000\right) }{24}$ &  & $\underset{\left( 1.000\right) }{18}$ \\ 
&  &  &  & $400$ &  &  & $\underset{\left( 1.000\right) }{39}$ &  & $\underset{\left( 1.000\right) }{26}$ &  & $\underset{\left( 1.000\right) }{19}$ &  & $\underset{\left( 1.000\right) }{22}$ &  & $\underset{\left(
1.000\right) }{18}$ &  & $\underset{\left( 1.000\right) }{40}$ &  & $\underset{\left( 1.000\right) }{26}$ &  & $\underset{\left( 1.000\right) }{18}$ \\ 
&  &  &  & $800$ &  &  & $\underset{\left( 1.000\right) }{45}$ &  & $\underset{\left( 1.000\right) }{27}$ &  & $\underset{\left( 1.000\right) }{18}$ &  & $\underset{\left( 1.000\right) }{23}$ &  & $\underset{\left(
1.000\right) }{19}$ &  & $\underset{\left( 1.000\right) }{46}$ &  & $\underset{\left( 1.000\right) }{29}$ &  & $\underset{\left( 1.000\right) }{18}$ \\ 
&  &  &  &  &  &  &  &  &  &  &  &  &  &  &  &  &  &  &  &  &  \\ 
\hline\hline
\end{tabular}

}
\par
{\scriptsize {\ 
\begin{tablenotes}
      \tiny
            \item For each DGP, we report the \textit{mean} detection delay for only the cases where a changepoint is detected (thus leaving out the cases where no changepoint is detected). Numbers in round brackets represent the empirical rejection frequencies.
            \item We do not report median delays, unlike in Table \ref{tab:Power1}. All medians are around zero.
            
\end{tablenotes}
} }
\end{table*}

\clearpage

\newpage

\clearpage
\renewcommand*{\thesection}{\Alph{section}}

\setcounter{subsection}{-1} \setcounter{subsubsection}{-1} %
\setcounter{equation}{0} \setcounter{lemma}{0} \setcounter{theorem}{0} %
\renewcommand{\theassumption}{B.\arabic{assumption}} 
\renewcommand{\thetheorem}{B.\arabic{theorem}} \renewcommand{\thelemma}{B.%
\arabic{lemma}} \renewcommand{\thecorollary}{B.\arabic{corollary}} %
\renewcommand{\theequation}{B.\arabic{equation}}

\section{Further empirical evidence\label{empiric_further}}

\subsection{Further empirical evidence on UK Covid-19 hospitalisation data 
\label{further-covid}}

We report a graph of the logs of (one plus) the daily hospitalisation data
with the identified changepoints in Figure \ref{fig:FigCovid}.

\medskip

\begin{figure}[h]
\caption{Daily Covid-19 hospitalisations - with changepoints - for England.}
\label{fig:FigCovid}\centering
\hspace{-2.5cm}
\par
\centering
\includegraphics[scale=0.95]{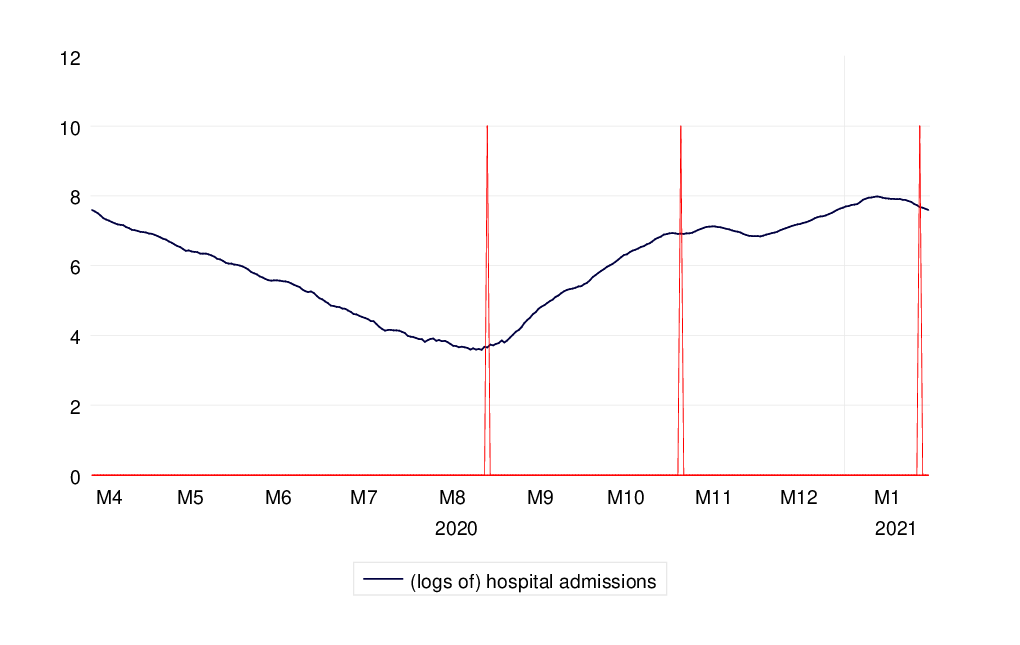}
\end{figure}

\medskip

We have conducted, by way of comparison, an ex-post changepoint detection
exercise, applying the techniques developed in \citet{horvath2022changepoint}
to the whole sample. We only report results obtained using R\'{e}nyi
statistics (corresponding to using a weighted version of the CUSUM process
with weights $\kappa =0.51$, $0.55$, $0.65$, $0.75$, $0.85$ and $1$ - see %
\citealp{horvath2022changepoint} for details).\footnote{%
Using other weighing schemes give very similar results, available upon
request.} As far as breakdates are concerned, we pick the ones corresponding
to the \textquotedblleft majority vote\textquotedblright\ across $\kappa $,
although discrepancies are, when present, in the region of few days ($2-5$
at most). We use binary segmentation, as also discussed in %
\citet{horvath2022changepoint}, to detect multiple breaks.

\medskip

\begin{table*}[h]
\caption{Ex-post changepoint detection for Covid-19 daily hospitalisation -
England data.}
\label{tab:TabCovid}%\centering
%\begin{threeparttable}
%{\tiny
\par
\centering
{\scriptsize {\ }}
\par
{\scriptsize 
\begin{tabular}{llll}
\hline\hline
&  &  &  \\ 
\multicolumn{1}{c}{Changepoint 1} & \multicolumn{1}{c}{Changepoint 2} & 
\multicolumn{1}{c}{Changepoint 3} & \multicolumn{1}{c}{Changepoint 4} \\ 
\multicolumn{1}{c}{} & \multicolumn{1}{c}{} & \multicolumn{1}{c}{} & 
\multicolumn{1}{c}{} \\ 
\multicolumn{1}{c}{$\underset{\left[ 1.009\right] }{\text{Apr 10th, 2020}}$}
& \multicolumn{1}{c}{$\underset{\left[ 0.994\right] }{\text{Aug 26th, 2020}}$%
} & \multicolumn{1}{c}{$\underset{\left[ 1.010\right] }{\text{Oct 29th, 2020}%
}$} & \multicolumn{1}{c}{$\underset{\left[ 1.002\right] }{\text{Jan 12th,
2021}}$} \\ 
&  &  &  \\ \hline\hline
\end{tabular}
}
\par
{\scriptsize 
\begin{tablenotes}
      \tiny
            \item The series ends at 30 January 2021. We use the logs of the original data (plus one, given that, in some days, hospitalisations are equal to zero): no further transformations are used.
            \item All changepoints have been detected by all R\'{e}nyi-type tests - no discrepancies were noted. Detected changepoints, and their estimated date, are presented in \textit{chronological} order; breakdates have been estimated as the points in time where the majority of tests identifies a changepoint. Whilst details are available upon request, we note that breaks were detected with this order (from the first to be detected to the last one): break in August; break in April; break in January 2021; break in October.
            \item For each changepoint, we report in square brackets, for reference, the left WLS estimates of $\beta_0$ - i.e., the value of $\beta_0$ \textit{prior} to the breakdate.
            
\end{tablenotes}
%}  
%\end{threeparttable}
}
\end{table*}

\subsection{Further empirical evidence on housing data\label{further-housing}%
}

We report some preliminary information on our data. In Table \ref%
{tab:TabURoot}, we report the outcome of a standard unit root test on the
three covariates used in our exercise.

\medskip

\begin{table*}[h]
\caption{Unit root tests applied to covariates}
\label{tab:TabURoot}%\centering
%\begin{threeparttable}
%{\tiny
\par
\centering
\par
{\scriptsize {\ }}
\par
{\scriptsize {\ }}
\par
{\scriptsize 
\begin{tabular}{lllllll}
\hline\hline
&  &  &  &  &  & Notes \\ 
& \multicolumn{1}{c}{Variable} & \multicolumn{1}{c}{} & \multicolumn{1}{c}{
Period} & \multicolumn{1}{c}{} & \multicolumn{1}{c}{$t$-ADF} & 
\multicolumn{1}{c}{} \\ 
& \multicolumn{1}{c}{} & \multicolumn{1}{c}{} & \multicolumn{1}{c}{} & 
\multicolumn{1}{c}{} & \multicolumn{1}{c}{} & \multicolumn{1}{c}{} \\ 
& \multicolumn{1}{c}{AAA} & \multicolumn{1}{c}{} & \multicolumn{1}{c}{Mar
28th, 2008 - Oct 30th, 2009} & \multicolumn{1}{c}{} & \multicolumn{1}{c}{$%
\underset{\left[ 0.863\right] }{-1.387}$} & \multicolumn{1}{c}{Daily
frequency; trend and intercept used in ADF} \\ 
& \multicolumn{1}{c}{} & \multicolumn{1}{c}{} & \multicolumn{1}{c}{} & 
\multicolumn{1}{c}{} & \multicolumn{1}{c}{} & \multicolumn{1}{c}{} \\ 
& \multicolumn{1}{c}{GS10} & \multicolumn{1}{c}{} & \multicolumn{1}{c}{Mar
28th, 2008 - Oct 30th, 2009} & \multicolumn{1}{c}{} & \multicolumn{1}{c}{$%
\underset{\left[ 0.840\right] }{-1.463}$} & \multicolumn{1}{c}{Daily
frequency; trend and intercept used in ADF} \\ 
& \multicolumn{1}{c}{} & \multicolumn{1}{c}{} & \multicolumn{1}{c}{} & 
\multicolumn{1}{c}{} & \multicolumn{1}{c}{} & \multicolumn{1}{c}{} \\ 
& \multicolumn{1}{c}{VXO} & \multicolumn{1}{c}{} & \multicolumn{1}{c}{Mar
28th, 2008 - Oct 30th, 2009} & \multicolumn{1}{c}{} & \multicolumn{1}{c}{$%
\underset{\left[ 0.244\right] }{-2.682}$} & \multicolumn{1}{c}{Daily
frequency; trend and intercept used in ADF} \\ 
& \multicolumn{1}{c}{} & \multicolumn{1}{c}{} & \multicolumn{1}{c}{} & 
\multicolumn{1}{c}{} & \multicolumn{1}{c}{} & \multicolumn{1}{c}{} \\ 
& \multicolumn{1}{c}{WEI} & \multicolumn{1}{c}{} & \multicolumn{1}{c}{Jan
5th, 2008 - Dec 26th, 2009} & \multicolumn{1}{c}{} & \multicolumn{1}{c}{$%
\underset{\left[ 1.000\right] }{3.070}$} & \multicolumn{1}{c}{Weekly
frequency; trend and intercept used in ADF} \\ 
&  &  &  &  &  &  \\ \hline\hline
\end{tabular}
}
\par
{\scriptsize 
\begin{tablenotes}
      \tiny
            \item For each series, we have carried out a standard ADF test, choosing the number of lags in the Dickey-Fuller regression based on BIC. The numbers in square brackets are the p-values.
            
\end{tablenotes}
%}  
%\end{threeparttable}
}
\end{table*}

\medskip

In Figure \ref{fig:FigLA}, we plot housing prices in Los Angeles between
March 28th, 2008, and October 30th, 2009.

\medskip

\begin{figure}[h]
\caption{Logs of daily housing prices in Los Angeles - with estimated
changepoint.}
\label{fig:FigLA}\centering
\hspace{-2.5cm}
\par
\centering
\includegraphics[scale=0.95]{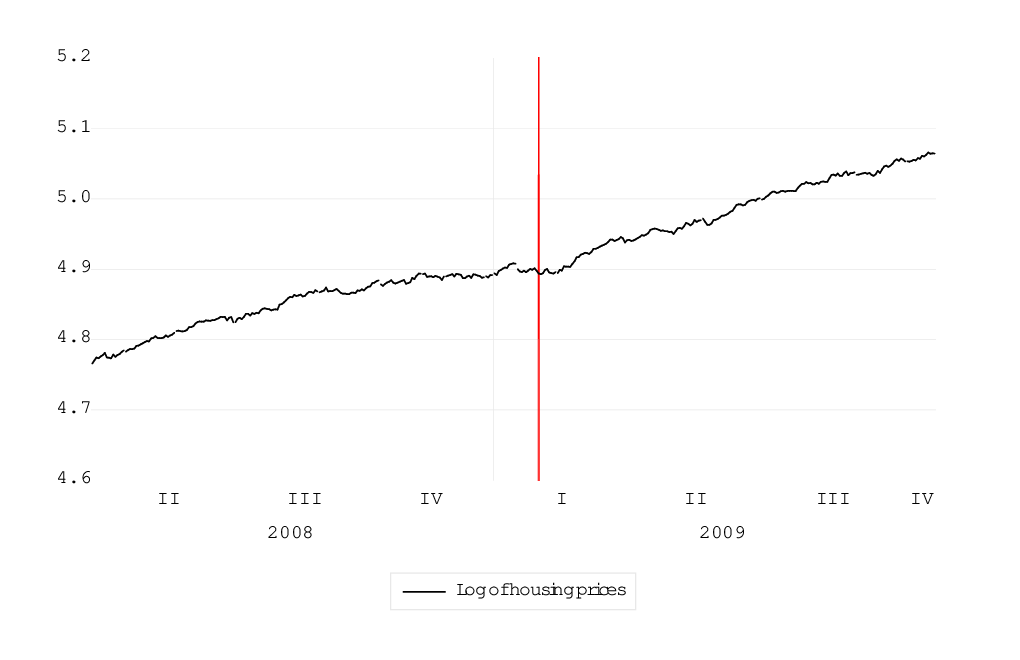}
\end{figure}

\medskip

\newpage

\clearpage
\renewcommand*{\thesection}{\Alph{section}}

\setcounter{subsection}{-1} \setcounter{subsubsection}{-1} %
\setcounter{equation}{0} \setcounter{lemma}{0} \setcounter{theorem}{0} %
\renewcommand{\theassumption}{C.\arabic{assumption}} 
\renewcommand{\thetheorem}{C.\arabic{theorem}} \renewcommand{\thelemma}{C.%
\arabic{lemma}} \renewcommand{\thecorollary}{C.\arabic{corollary}} %
\renewcommand{\theequation}{C.\arabic{equation}}

\section{Preliminary lemmas\label{lemmas}}

We will use the following facts and notation: \textquotedblleft $\overset{%
\mathcal{D}}{\rightarrow }$\textquotedblright\ denotes convergence in
distribution; \textquotedblleft $\overset{\mathcal{D}}{=}$%
\textquotedblright\ denotes equality in distribution; $c_{1}$, $c_{2}$, ...
denote positive, finite constants which do not depend on sample sizes, and
whose values may change from line to line.

Recall (\ref{rca}) under $H_{0}$%
\begin{equation*}
y_{i}=\left( \beta _{0}+\epsilon _{i,1}\right) y_{i-1}+\epsilon _{i,2}.
\end{equation*}%
If $E\log \left\vert \beta _{0}+\epsilon _{0,1}\right\vert <0$, define the
stationary solution%
\begin{equation}
\overline{y}_{i}=\sum_{\ell =0}^{\infty }\left( \prod_{j=1}^{\ell }\left(
\beta _{0}+\epsilon _{i-j,1}\right) \right) \epsilon _{i-\ell ,2},
\label{st-sol}
\end{equation}%
with the convention that $\prod_{\oslash }=0$. Finally, let%
\begin{equation}
a_{3}=E\left( \frac{\overline{y}_{0}^{2}}{1+\overline{y}_{0}^{2}}\right) .
\label{a3}
\end{equation}

\begin{lemma}
\label{lemma1}We assume that Assumption \ref{as-1} is satisfied. Under $%
H_{0} $, it holds that

(i) if $E\log \left\vert \beta _{0}+\epsilon _{0,1}\right\vert <0$, then%
\begin{align}
&\left\vert \left( \sum_{i=2}^{m}\frac{y_{i-1}^{2}}{1+y_{i-1}^{2}}\right)
^{-1}\left( \sum_{i=2}^{m}\frac{\epsilon _{i,1}y_{i-1}^{2}}{1+y_{i-1}^{2}}%
\right) -\frac{1}{ma_{3}}\left( \sum_{i=2}^{m}\frac{\epsilon _{i,1}\overline{%
y}_{i-1}^{2}}{1+\overline{y}_{i-1}^{2}}\right) \right\vert =O_{P}\left( 
\frac{1}{m}\right) ,  \label{lemma1-1} \\
&\left\vert \left( \sum_{i=2}^{m}\frac{y_{i-1}^{2}}{1+y_{i-1}^{2}}\right)
^{-1}\left( \sum_{i=2}^{m}\frac{\epsilon _{i,2}y_{i-1}}{1+y_{i-1}^{2}}%
\right) -\frac{1}{ma_{3}}\left( \sum_{i=2}^{m}\frac{\epsilon _{i,2}\overline{%
y}_{i-1}}{1+\overline{y}_{i-1}^{2}}\right) \right\vert =O_{P}\left( \frac{1}{%
m}\right) ;  \label{lemma1-2}
\end{align}

(ii) if either $E\log \left\vert \beta _{0}+\epsilon _{0,1}\right\vert =0$
and Assumption \ref{as-3} holds, or $E\log \left\vert \beta _{0}+\epsilon
_{0,1}\right\vert >0$ and Assumption \ref{as-4} holds, then%
\begin{align}
&\left\vert \left( \sum_{i=2}^{m}\frac{y_{i-1}^{2}}{1+y_{i-1}^{2}}\right)
^{-1}\left( \sum_{i=2}^{m}\frac{\epsilon _{i,1}y_{i-1}^{2}}{1+y_{i-1}^{2}}%
\right) -\frac{1}{m}\sum_{i=2}^{m}\epsilon _{i,1}\right\vert =O_{P}\left(
m^{-1/2-\zeta }\right) ,  \label{lemma1-3} \\
&\left\vert \left( \sum_{i=2}^{m}\frac{y_{i-1}^{2}}{1+y_{i-1}^{2}}\right)
^{-1}\left( \sum_{i=2}^{m}\frac{\epsilon _{i,2}y_{i-1}}{1+y_{i-1}^{2}}%
\right) \right\vert =O_{P}\left( m^{-1/2-\zeta }\right) ,  \label{lemma1-4}
\end{align}%
for some $\zeta >0$.

\begin{proof}
We begin by showing (\ref{lemma1-1})-(\ref{lemma1-2}). Under $E\log
\left\vert \beta _{0}+\epsilon _{0,1}\right\vert <0$, \citet{aue2006} show
that there exist a $\kappa >0$ and a $0<c<1$ such that%
\begin{equation*}
E\left\vert y_{i}-\overline{y}_{i}\right\vert ^{\kappa }=O\left(
c^{i}\right) ,
\end{equation*}%
as $i\rightarrow \infty $. Thus, by elementary algebra%
\begin{equation*}
\sum_{i=2}^{\infty }\left\vert \epsilon _{i,1}\right\vert \left\vert \frac{%
y_{i-1}^{2}}{1+y_{i-1}^{2}}-\frac{\overline{y}_{i-1}^{2}}{1+\overline{y}%
_{i-1}^{2}}\right\vert \leq \sum_{i=2}^{\infty }\left\vert \epsilon
_{i,1}\right\vert \left\vert y_{i-1}^{2}-\overline{y}_{i-1}^{2}\right\vert
\leq \sum_{i=2}^{\infty }\left\vert \epsilon _{i,1}\right\vert \left\vert
y_{i-1}-\overline{y}_{i-1}\right\vert \left( \left\vert y_{i-1}\right\vert
+\left\vert \overline{y}_{i-1}\right\vert \right) .
\end{equation*}%
Since we can assume that $\kappa <1$, it follows that%
\begin{align*}
& E\left( \sum_{i=2}^{\infty }\left\vert \epsilon _{i,1}\right\vert
\left\vert \frac{y_{i-1}^{2}}{1+y_{i-1}^{2}}-\frac{\overline{y}_{i-1}^{2}}{1+%
\overline{y}_{i-1}^{2}}\right\vert \right) ^{\kappa /2} \\
\leq & \sum_{i=2}^{\infty }\left( E\left\vert \epsilon _{i,1}\right\vert
^{\kappa /2}\right) E\left( \left\vert y_{i-1}-\overline{y}_{i-1}\right\vert
^{\kappa /2}\left( \left\vert y_{i-1}\right\vert +\left\vert \overline{y}%
_{i-1}\right\vert \right) ^{\kappa /2}\right) \\
\leq & \sum_{i=2}^{\infty }\left( E\left\vert \epsilon _{i,1}\right\vert
^{\kappa /2}\right) \left( E\left\vert y_{i-1}-\overline{y}_{i-1}\right\vert
^{\kappa }\right) ^{1/2}\left( E\left\vert y_{i-1}\right\vert ^{\kappa
}+E\left\vert \overline{y}_{i-1}\right\vert ^{\kappa }\right) ^{1/2} \\
<& \infty ,
\end{align*}%
having used Assumption \ref{as-1}. Hence by Markov's inequality it follows
that%
\begin{equation}
\sum_{i=2}^{\infty }\left\vert \epsilon _{i,1}\right\vert \left\vert \frac{%
y_{i-1}^{2}}{1+y_{i-1}^{2}}-\frac{\overline{y}_{i-1}^{2}}{1+\overline{y}%
_{i-1}^{2}}\right\vert =O_{P}\left( 1\right) ,  \label{markov-1}
\end{equation}%
and by the same logic it also follows that%
\begin{equation*}
\sum_{i=2}^{\infty }\left\vert \frac{y_{i-1}^{2}}{1+y_{i-1}^{2}}-\frac{%
\overline{y}_{i-1}^{2}}{1+\overline{y}_{i-1}^{2}}\right\vert =O_{P}\left(
1\right) .
\end{equation*}%
By Lemmas D.1-D.4 in \citet{horvath2022changepoint}, the sequence $\overline{%
y}_{i-1}^{2}/\left( 1+\overline{y}_{i-1}^{2}\right) $ is a decomposable
Bernoulli shift with all moments; hence, by Proposition 4.1 in %
\citet{berkeshormann} it follows that%
\begin{equation*}
\left\vert \sum_{i=2}^{m}\frac{\overline{y}_{i-1}^{2}}{1+\overline{y}%
_{i-1}^{2}}-ma_{3}\right\vert =O_{P}\left( m^{1/2}\right) ,
\end{equation*}%
where $a_{3}$ is defined in (\ref{a3}), and similarly%
\begin{equation*}
\left\vert \sum_{i=2}^{m}\frac{\epsilon _{i,1}\overline{y}_{i-1}^{2}}{1+%
\overline{y}_{i-1}^{2}}\right\vert =O_{P}\left( m^{1/2}\right) .
\end{equation*}%
Equation (\ref{lemma1-1}) now follows; (\ref{lemma1-2}) follows also from
exactly the same logic.

We now turn to showing (\ref{lemma1-3}) and (\ref{lemma1-4}). When $E\log
\left\vert \beta _{0}+\epsilon _{0,1}\right\vert =0$, Lemma A.4 in %
\citet{HT2016} implies that%
\begin{equation}
P\left\{ \left\vert y_{i}\right\vert \leq i^{\overline{\kappa }}\right\}
\leq ci^{-\overline{\kappa }};  \label{lammeht2016}
\end{equation}%
when $E\log \left\vert \beta _{0}+\epsilon _{0,1}\right\vert >0$, %
\citet{berkes2009} show that $\left\vert y_{i}\right\vert \rightarrow \infty 
$ a.s. exponentially fast, which implies (\ref{lammeht2016}). Hence by (\ref%
{lammeht2016}) it follows that%
\begin{equation*}
\sum_{i=2}^{m}E\left\vert \frac{y_{i-1}^{2}}{1+y_{i-1}^{2}}-1\right\vert
=\sum_{i=2}^{m}E\left\vert \frac{1}{1+y_{i-1}^{2}}\right\vert I\left(
\left\vert y_{i}\right\vert \leq i^{\overline{\kappa }}\right)
+\sum_{i=2}^{m}E\left\vert \frac{1}{1+y_{i-1}^{2}}\right\vert I\left(
\left\vert y_{i}\right\vert >i^{\overline{\kappa }}\right) =O\left( m^{1-%
\overline{\kappa }}\right) .
\end{equation*}%
By the independence between $\epsilon _{i,1}$ and $y_{i-1}$ and by
Assumption \ref{as-1}, it follows that%
\begin{equation*}
E\left( \sum_{i=2}^{m}\epsilon _{i,1}\left( \frac{y_{i-1}^{2}}{1+y_{i-1}^{2}}%
-1\right) \right) ^{2}=\sum_{i=2}^{m}E\epsilon _{i,1}^{2}E\left( \frac{%
y_{i-1}^{2}}{1+y_{i-1}^{2}}-1\right) ^{2}=O\left( m^{1-\overline{\kappa }%
}\right) ,
\end{equation*}%
and similarly%
\begin{equation*}
E\left( \sum_{i=2}^{m}\frac{\epsilon _{i,2}y_{i-1}}{1+y_{i-1}^{2}}\right)
^{2}=O\left( m^{1-\overline{\kappa }}\right) .
\end{equation*}%
The proof of (\ref{lemma1-3}) and (\ref{lemma1-4}) is now complete.
\end{proof}
\end{lemma}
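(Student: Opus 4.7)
The plan is to split by regime, since the normaliser $\sum_{i=2}^m y_{i-1}^2/(1+y_{i-1}^2)$ behaves very differently in each: under $E\log|\beta_0+\epsilon_{0,1}|<0$ it stabilises around $m a_3$, while under $E\log|\beta_0+\epsilon_{0,1}|\geq 0$ each summand approaches $1$, so the sum is essentially $m$. In both cases the target is to replace a ratio involving $y_{i-1}^2/(1+y_{i-1}^2)$ by a cleaner quantity and control the remainder.

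For part (i), the first move is to replace $y_i$ by the strictly stationary solution $\bar{y}_i$ in (\ref{st-sol}). From \citet{aue2006} one has $\kappa\in(0,1)$ and $c\in(0,1)$ with $E|y_i - \bar{y}_i|^\kappa = O(c^i)$, and combining this coupling with Assumption \ref{as-1} via H\"older and Markov inequalities yields
\begin{equation*}
\sum_{i=2}^\infty |\epsilon_{i,1}|\left|\frac{y_{i-1}^2}{1+y_{i-1}^2} - \frac{\bar{y}_{i-1}^2}{1+\bar{y}_{i-1}^2}\right| = O_P(1),
\end{equation*}
and similarly without the $\epsilon_{i,1}$ factor. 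Next I would identify $\bar{y}_{i-1}^2/(1+\bar{y}_{i-1}^2)$ and $\epsilon_{i,1}\bar{y}_{i-1}^2/(1+\bar{y}_{i-1}^2)$ as bounded (resp. $L^\nu$) decomposable Bernoulli shifts, relying on the Lemmas D.1--D.4 calculations in \citet{horvath2022changepoint}, and apply Proposition 4.1 of \citet{berkeshormann} to obtain $|\sum_{i=2}^m \bar{y}_{i-1}^2/(1+\bar{y}_{i-1}^2) - m a_3| = O_P(m^{1/2})$ and $|\sum_{i=2}^m \epsilon_{i,1}\bar{y}_{i-1}^2/(1+\bar{y}_{i-1}^2)| = O_P(m^{1/2})$. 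Dividing by $\sum_{i=2}^m y_{i-1}^2/(1+y_{i-1}^2) = m a_3(1+o_P(1))$ produces the $O_P(1/m)$ rate in (\ref{lemma1-1}); (\ref{lemma1-2}) follows identically with $\epsilon_{i,2}y_{i-1}/(1+y_{i-1}^2)$ in place of $\epsilon_{i,1}y_{i-1}^2/(1+y_{i-1}^2)$.

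For part (ii), the engine is a polynomial anti-concentration bound $P\{|y_i|\leq i^{\bar\kappa}\}\leq c\, i^{-\bar\kappa}$, available for $\bar\kappa$ as large as needed; this comes from Lemma A.4 of \citet{HT2016} in the boundary case (this is precisely what Assumption \ref{as-3} is designed to secure) and from the a.s.\ exponential divergence of $|y_i|$ in \citet{berkes2009} in the explosive case (requiring Assumption \ref{as-4}). Splitting on $\{|y_{i-1}|\leq i^{\bar\kappa}\}$ versus its complement, the bounded quantity $y_{i-1}^2/(1+y_{i-1}^2)-1 \in [-1,0]$ satisfies $\sum_{i=2}^m E|y_{i-1}^2/(1+y_{i-1}^2) - 1| = O(m^{1-\bar\kappa})$. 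Using that $\epsilon_{i,\cdot}$ is independent of $y_{i-1}$ (Assumption \ref{as-1} and, in the boundary case, the independence clause of Assumption \ref{as-3}), one gets $E(\sum_{i=2}^m \epsilon_{i,1}(y_{i-1}^2/(1+y_{i-1}^2)-1))^2 = O(m^{1-\bar\kappa})$ and analogously $E(\sum_{i=2}^m \epsilon_{i,2}y_{i-1}/(1+y_{i-1}^2))^2 = O(m^{1-\bar\kappa})$. A short algebraic rearrangement shows that the left-hand side of (\ref{lemma1-3}) equals $(mR_m - V_m U_m)/(mS_m)$, with $V_m=\sum\epsilon_{i,1}=O_P(m^{1/2})$, $R_m = O_P(m^{(1-\bar\kappa)/2})$, $U_m = O_P(m^{1-\bar\kappa})$ and $S_m = m(1+o_P(1))$; this yields the $O_P(m^{-1/2-\bar\kappa/2})$ rate, i.e.\ $\zeta = \bar\kappa/2$. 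The argument for (\ref{lemma1-4}) is similar but without the $V_m/m$ subtraction.

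The main obstacle is the boundary case $E\log|\beta_0+\epsilon_{0,1}| = 0$: here neither exponential divergence of $|y_i|$ nor exponential contraction to a stationary solution is available, so the only quantitative handle on $y_i$ is the polynomial tail bound of \citet{HT2016}, and that result is what forces Assumption \ref{as-3} (bounded density of $\epsilon_{0,2}$ and independence of the two innovation sequences) into the statement of the lemma. Securing $\bar\kappa$ large enough there, uniformly in $m$, is the delicate point that underwrites the entire nonstationary branch of the argument.
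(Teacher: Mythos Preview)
Your proposal is correct and follows essentially the same route as the paper: the coupling $E|y_i-\bar y_i|^\kappa=O(c^i)$ from \citet{aue2006} plus H\"older/Markov for part~(i), the Bernoulli-shift machinery of \citet{horvath2022changepoint} and \citet{berkeshormann} for the $O_P(m^{1/2})$ fluctuations, and in part~(ii) the anti-concentration bound $P\{|y_i|\leq i^{\bar\kappa}\}\leq c\,i^{-\bar\kappa}$ from \citet{HT2016}/\citet{berkes2009} combined with the martingale-difference variance computation. Your explicit decomposition $(mR_m-V_mU_m)/(mS_m)$ and the identification $\zeta=\bar\kappa/2$ in fact spell out the final step more transparently than the paper, which simply declares the proof complete after recording the three moment bounds.
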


\medskip

Consider now the decomposition%
\begin{equation}
Z_{m}\left( k\right) =\left\vert \left( \beta _{0}-\widehat{\beta }%
_{m}\right) \sum_{i=m+1}^{m+k}\frac{y_{i-1}^{2}}{1+y_{i-1}^{2}}%
+\sum_{i=m+1}^{m+k}\frac{\epsilon _{i,1}y_{i-1}^{2}}{1+y_{i-1}^{2}}%
+\sum_{i=m+1}^{m+k}\frac{\epsilon _{i,2}y_{i-1}}{1+y_{i-1}^{2}}\right\vert ,
\label{cusum-dec}
\end{equation}%
where $\widehat{\beta }_{m}$ is the WLS\ estimator computed using $\left\{
y_{1},...,y_{m}\right\} $.

In the next lemma, we obtain asymptotic representations for the terms in (%
\ref{cusum-dec}).

\begin{lemma}
\label{lemma2}We assume that Assumption \ref{as-1} is satisfied. Under $%
H_{0} $:

(i) if $E\log \left\vert \beta _{0}+\epsilon _{0,1}\right\vert <0$, then%
\begin{align}
&\max_{1\leq k<\infty }k^{-1/2-\eta }\left\vert \sum_{i=m+1}^{m+k}\frac{%
y_{i-1}^{2}}{1+y_{i-1}^{2}}-ka_{3}\right\vert =O_{P}\left( 1\right) ,
\label{lemma2-1} \\
&\max_{1\leq k<\infty }\left\vert \sum_{i=m+1}^{m+k}\frac{\epsilon
_{i,1}y_{i-1}^{2}}{1+y_{i-1}^{2}}-\sum_{i=m+1}^{m+k}\frac{\epsilon _{i,1}%
\overline{y}_{i-1}^{2}}{1+\overline{y}_{i-1}^{2}}\right\vert =O_{P}\left(
1\right) ,  \label{lemma2-2} \\
&\max_{1\leq k<\infty }\left\vert \sum_{i=m+1}^{m+k}\frac{\epsilon
_{i,2}y_{i-1}}{1+y_{i-1}^{2}}-\sum_{i=m+1}^{m+k}\frac{\epsilon _{i,2}%
\overline{y}_{i-1}}{1+\overline{y}_{i-1}^{2}}\right\vert =O_{P}\left(
1\right) ,  \label{lemma2-3}
\end{align}%
for all $\eta >0$;

(ii) if either $E\log \left\vert \beta _{0}+\epsilon _{0,1}\right\vert =0$
and Assumption \ref{as-3} holds, or $E\log \left\vert \beta _{0}+\epsilon
_{0,1}\right\vert >0$ and Assumption \ref{as-4} holds, then%
\begin{align}
&\max_{1\leq k<\infty }k^{-1}\left\vert \sum_{i=m+1}^{m+k}\frac{y_{i-1}^{2}}{%
1+y_{i-1}^{2}}-k\right\vert =O_{P}\left( m^{-\zeta }\right) ,
\label{lemma2-4} \\
&\max_{1\leq k<\infty }k^{-1/2-\eta }\left\vert \sum_{i=m+1}^{m+k}\frac{%
\epsilon _{i,1}y_{i-1}^{2}}{1+y_{i-1}^{2}}-\sum_{i=m+1}^{m+k}\epsilon
_{i,1}\right\vert =O_{P}\left( m^{-\zeta }\right) ,  \label{lemma2-5} \\
&\max_{1\leq k<\infty }k^{-1/2-\eta }\left\vert \sum_{i=m+1}^{m+k}\frac{%
\epsilon _{i,2}y_{i-1}}{1+y_{i-1}^{2}}\right\vert =O_{P}\left( m^{-\zeta
}\right) ,  \label{lemma2-6}
\end{align}%
for some $\zeta >0$ and for all $\eta >0$.

\begin{proof}
We begin by considering the stationary case $E\log \left\vert \beta
_{0}+\epsilon _{0,1}\right\vert <0$. On account of the proof of Lemma \ref%
{lemma1}, (\ref{lemma2-1}) follows if we show%
\begin{equation*}
\max_{1\leq k<\infty }k^{-1/2-\eta }\left\vert \sum_{i=m+1}^{m+k}\frac{%
\overline{y}_{i-1}^{2}}{1+\overline{y}_{i-1}^{2}}-ka_{3}\right\vert
=O_{P}\left( 1\right) .
\end{equation*}%
This follows immediately, since $\overline{y}_{i-1}^{2}/\left( 1+\overline{y}%
_{i-1}^{2}\right) $ is a decomposable Bernoulli shift with all moments, and
the result is implied by the strong approximation in \citet{aue2014}.
Further, (\ref{lemma2-2}) follows immediately from (\ref{markov-1}); (\ref%
{lemma2-3}) can be shown by the same logic.

Consider now the case $E\log \left\vert \beta _{0}+\epsilon
_{0,1}\right\vert \geq 0$. By (\ref{lammeht2016})%
\begin{equation*}
E\sum_{i=m+1}^{\infty }\frac{1}{i}\frac{1}{1+y_{i-1}^{2}}=O\left( m^{-%
\widehat{\kappa }}\right) ,
\end{equation*}%
for all $\widehat{\kappa }<\overline{\kappa }$. Hence, Abel's summation
formula yields%
\begin{equation*}
\frac{1}{k}\sum_{i=m+1}^{m+k}\frac{1}{1+y_{i-1}^{2}}=\sum_{i=m+1}^{m+k}\frac{%
1}{i}\frac{1}{1+y_{i-1}^{2}}-\frac{1}{k}\sum_{i=m+1}^{m+k-1}\left( \left(
i+1\right) -i\right) \left( \sum_{j=m+1}^{i}\frac{1}{j}\frac{1}{1+y_{j-1}^{2}%
}\right) .
\end{equation*}%
Given that%
\begin{equation*}
\frac{1}{k}\sum_{i=m+1}^{m+k-1}E\left( \sum_{j=m+1}^{i}\frac{1}{j}\frac{1}{%
1+y_{j-1}^{2}}\right) \leq E\sum_{j=m+1}^{\infty }\frac{1}{j}\frac{1}{%
1+y_{j-1}^{2}}=O\left( m^{-\widehat{\kappa }}\right) ,
\end{equation*}%
(\ref{lemma2-4}) follows from Markov's inequality. Letting $\mathcal{F}_{i}$
be the $\sigma $-field generated by $\left\{ \left( \epsilon _{j,1},\epsilon
_{j,2}\right) ,j\leq i\right\} $, note that%
\begin{equation*}
E\left( \left. \frac{\epsilon _{i,1}}{1+y_{i-1}^{2}}\right\vert \mathcal{F}%
_{i-1}\right) =0,
\end{equation*}%
and therefore the sequence $\epsilon _{i,1}/\left( 1+y_{i-1}^{2}\right) $ is
a martingale difference sequence. Also, using the Burkholder's inequality
(see e.g. Theorem 2.10 in \citealp{hallheyde}) and (\ref{lammeht2016})%
\begin{align}
&E\left\vert \sum_{i=m+1}^{m+k}E\left[ \left. \left( \frac{\epsilon _{i,1}}{%
1+y_{i-1}^{2}}\right) ^{2}\right\vert \mathcal{F}_{i-1}\right] \right\vert
^{\nu }  \label{m-ineq-1} \\
\leq &c_{1}E\left( \sum_{i=m+1}^{m+k}E\left( \frac{1}{1+y_{i-1}^{2}}\right)
^{2}\right) ^{\nu /2}\leq c_{2}k^{\nu /2-1}\sum_{i=m+1}^{m+k}\left( E\left( 
\frac{1}{\left( 1+y_{i-1}^{2}\right) ^{\nu }}\right) \right)  \notag \\
\leq &c_{2}k^{\nu /2-1}\sum_{i=m+1}^{m+k}\left( E\left( \frac{1}{\left(
1+y_{i-1}^{2}\right) ^{\nu }}I\left( \left\vert y_{i}\right\vert \leq i^{%
\overline{\kappa }}\right) \right) +E\left( \frac{1}{\left(
1+y_{i-1}^{2}\right) ^{\nu }}I\left( \left\vert y_{i}\right\vert >i^{%
\overline{\kappa }}\right) \right) \right)  \notag \\
\leq &k^{\nu /2-1}\sum_{i=m+1}^{m+k}\left( c_{3}i^{-\overline{\kappa }%
}+c_{4}i^{-2\nu \overline{\kappa }}\right) \leq c_{5}\frac{k^{\nu /2}}{m^{%
\overline{\kappa }}}.  \notag
\end{align}%
Similarly, we have%
\begin{equation}
\sum_{i=m+1}^{m+k}E\left\vert \frac{\epsilon _{i,1}}{1+y_{i-1}^{2}}%
\right\vert ^{\nu }\leq c_{6}\sum_{i=m+1}^{m+k}E\left\vert \frac{1}{%
1+y_{i-1}^{2}}\right\vert ^{\nu }\leq c_{7}\frac{k}{m^{\overline{\kappa }}}.
\label{m-ineq-2}
\end{equation}%
Using (\ref{m-ineq-1}) and (\ref{m-ineq-2}), and Rosenthal's maximal
inequality for martingale difference sequences (see e.g. Theorem 2.12 in %
\citealp{hallheyde}), it follows that 
\begin{equation}
E\max_{1\leq j\leq k}\left\vert \sum_{i=m+1}^{m+j}\frac{\epsilon _{i,1}}{%
1+y_{i-1}^{2}}\right\vert ^{\nu }\leq c_{8}\frac{k^{\nu /2}}{m^{\overline{%
\kappa }}}.  \label{rosenthal}
\end{equation}%
We now show (\ref{lemma2-5}) by noting that%
\begin{align*}
&P\left\{ \max_{1\leq k<\infty }k^{-1/2-\eta }\left\vert \sum_{i=m+1}^{m+k}%
\frac{\epsilon _{i,1}}{1+y_{i-1}^{2}}\right\vert >x\right\} \\
\leq &\sum_{\ell =0}^{\infty }P\left\{ \max_{\exp \left( \ell \right) \leq
k\leq \exp \left( \ell +1\right) }k^{-1/2-\eta }\left\vert \sum_{i=m+1}^{m+k}%
\frac{\epsilon _{i,1}}{1+y_{i-1}^{2}}\right\vert >x\right\} \\
\leq &\sum_{\ell =0}^{\infty }P\left\{ \max_{\exp \left( \ell \right) \leq
k\leq \exp \left( \ell +1\right) }\left\vert \sum_{i=m+1}^{m+k}\frac{%
\epsilon _{i,1}}{1+y_{i-1}^{2}}\right\vert >x\exp \left( \ell \left( \frac{1%
}{2}+\eta \right) \right) \right\} \\
\leq &c_{9}x^{-\nu }\sum_{\ell =0}^{\infty }\exp \left( -\nu \ell \left( 
\frac{1}{2}+\eta \right) \right) E\max_{1\leq k\leq \exp \left( \ell
+1\right) }\left\vert \sum_{i=m+1}^{m+k}\frac{\epsilon _{i,1}}{1+y_{i-1}^{2}}%
\right\vert ^{\nu } \\
\leq &c_{10}x^{-\nu }m^{-\overline{\kappa }}\sum_{\ell =0}^{\infty }\exp
\left( -\nu \ell \left( \frac{1}{2}+\eta \right) +\frac{\nu }{2}\left( \ell
+1\right) \right) ,
\end{align*}%
whence (\ref{lemma2-5}) follows immediately with $\zeta =\overline{\kappa }$%
. Equation (\ref{lemma2-6}) can be shown using the same logic.
\end{proof}
\end{lemma}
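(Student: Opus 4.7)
The plan is to treat the stationary case (i) and the nonstationary case (ii) by very different techniques, exploiting in each regime the key structural feature of $y_i$.

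For part (i), my strategy would be to reduce everything to the stationary solution $\overline{y}_i$ defined in (C.1). From \citet{aue2006}, the coupling $E|y_i-\overline{y}_i|^\kappa = O(c^i)$ for some $\kappa>0$ and $c<1$ holds; combined with $\nu$th moments of $\epsilon_{i,1}$ and Markov's inequality (exactly as in Lemma C.1), this immediately gives $\sum_{i=2}^\infty |\epsilon_{i,1}|\,|y_{i-1}^2/(1+y_{i-1}^2)-\overline{y}_{i-1}^2/(1+\overline{y}_{i-1}^2)| = O_P(1)$, and the analogous sum with $\epsilon_{i,2}\,y_{i-1}/(1+y_{i-1}^2)$. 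The supremum over $k\geq 1$ of the partial sums is therefore bounded by this absolutely convergent sum, yielding (C.2-2)--(C.2-3). For (C.2-1), after passing to $\overline{y}_{i-1}^2/(1+\overline{y}_{i-1}^2)$, this sequence is a centered, bounded decomposable Bernoulli shift with all moments (via Lemmas D.1--D.4 of \citealp{horvath2022changepoint}), so by the strong invariance principle of \citet{aue2014} the partial sums admit a Wiener approximation with error $O(k^{1/2-\delta})$ a.s.; a standard LIL-type bound then gives $\max_k k^{-1/2-\eta}|\cdot| = O_P(1)$ for every $\eta>0$.

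For part (ii), the stationary solution is not available, so I would use Lemma A.4 of \citet{HT2016} (the anti-concentration bound $P\{|y_i|\leq i^{\overline{\kappa}}\} \leq c\,i^{-\overline{\kappa}}$), which holds under both $E\log|\beta_0+\epsilon_{0,1}|=0$ (with Assumption~\ref{as-3}) and $>0$ (with Assumption~\ref{as-4}). For (C.2-4), I write $y_{i-1}^2/(1+y_{i-1}^2)-1 = -1/(1+y_{i-1}^2)$ and split the expectation on $\{|y_i|\leq i^{\overline{\kappa}}\}$ vs.\ its complement to get $\sum_{i=m+1}^\infty E[i^{-1}\cdot 1/(1+y_{i-1}^2)] = O(m^{-\widehat\kappa})$; Abel's summation then produces the ratio form $k^{-1}|\cdot|$ with the $m^{-\zeta}$ rate via Markov's inequality. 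For (C.2-5)--(C.2-6), the central observation is that $\{\epsilon_{i,1}/(1+y_{i-1}^2), \mathcal{F}_i\}$ is a martingale difference sequence (by the independence built into Assumption~\ref{as-3}), and analogously for $\epsilon_{i,2}\,y_{i-1}/(1+y_{i-1}^2)$. I would apply Burkholder's inequality to control the conditional variance proxy and then Rosenthal's maximal inequality, splitting $E|\cdot|^\nu$ again on the two events in the anti-concentration bound to obtain $E\max_{1\leq j\leq k}|\sum_{i=m+1}^{m+j}\epsilon_{i,1}/(1+y_{i-1}^2)|^\nu \leq c\,k^{\nu/2}/m^{\overline{\kappa}}$.

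The final piece is the passage from a bound on each window $[1,k]$ to the uniform-in-$k$ statement over the infinite range. I would use a dyadic chaining: decompose $\{k\geq 1\}$ as $\bigcup_\ell [e^\ell, e^{\ell+1}]$ and apply the maximal inequality on each block, with the weighting factor $k^{-1/2-\eta}$ producing a geometric series in $\ell$ that sums when $\eta>0$. This gives, by Markov, $\max_{k\geq 1} k^{-1/2-\eta}|\sum_{i=m+1}^{m+k}\epsilon_{i,1}/(1+y_{i-1}^2)| = O_P(m^{-\overline{\kappa}/\nu})$, yielding $\zeta=\overline{\kappa}/\nu$ for an appropriate choice. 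The main obstacle I anticipate is the interplay between the martingale moment bound (which is not automatically summable in $k$) and the weight $k^{-1/2-\eta}$: without the dyadic device the maximum would diverge in $k$, so the $\eta>0$ slack is essential and must be exploited cleanly via Rosenthal plus geometric summation. Once this chaining step is done, combining the representations in (C.2-1)--(C.2-6) with (C.2-cusum-dec) feeds directly into the subsequent weak-convergence arguments for the CUSUM detector.
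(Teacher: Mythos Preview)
Your proposal is correct and follows essentially the same route as the paper's proof: coupling to the stationary solution plus the \citet{aue2014} strong approximation for part (i), and the anti-concentration bound, Abel summation, martingale difference structure, Burkholder--Rosenthal maximal inequality, and dyadic chaining over blocks $[e^{\ell},e^{\ell+1}]$ for part (ii). Two minor remarks: the martingale-difference property of $\epsilon_{i,1}/(1+y_{i-1}^{2})$ comes from Assumption~\ref{as-1} (i.i.d.\ mean-zero innovations), not Assumption~\ref{as-3}; and your exponent $\zeta=\overline{\kappa}/\nu$ is in fact the sharp one implied by the chaining argument, whereas the paper writes $\zeta=\overline{\kappa}$---either way only ``some $\zeta>0$'' is needed.
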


\begin{lemma}
\label{lemma3}We assume that Assumption \ref{as-1} is satisfied. Under $%
H_{0} $

(i) if $E\log \left\vert \beta _{0}+\epsilon _{0,1}\right\vert <0$, then we
can define two independent standard Wiener processes $\left\{ W_{m,1}\left(
k\right) ,1\leq k\leq m\right\} $ and $\left\{ W_{m,2}\left( k\right) ,1\leq
k<\infty \right\} $, whose distribution does not depend on $m$, such that 
\begin{align}
& \max_{1\leq k<m}k^{-1/2+\zeta }\left\vert \sum_{i=2}^{m}\left( \frac{%
\epsilon _{i,1}\overline{y}_{i-1}^{2}}{1+\overline{y}_{i-1}^{2}}+\frac{%
\epsilon _{i,2}\overline{y}_{i-1}}{1+\overline{y}_{i-1}^{2}}\right) -%
\mathscr{s}^{1/2}W_{m,1}\left( k\right) \right\vert  \label{lemma3-1} \\
=& O_{P}\left( 1\right) ,  \notag \\
& \max_{1\leq k<\infty }k^{-1/2+\zeta }\left\vert \sum_{i=m+1}^{m+k}\left( 
\frac{\epsilon _{i,1}\overline{y}_{i-1}^{2}}{1+\overline{y}_{i-1}^{2}}+\frac{%
\epsilon _{i,2}\overline{y}_{i-1}}{1+\overline{y}_{i-1}^{2}}\right) -%
\mathscr{s}^{1/2}W_{m,2}\left( k\right) \right\vert  \label{lemma3-2} \\
=& O_{P}\left( 1\right) ,  \notag
\end{align}%
for some $\zeta >0$;

(ii) if either $E\log \left\vert \beta _{0}+\epsilon _{0,1}\right\vert =0$
and Assumption \ref{as-3} holds, or $E\log \left\vert \beta _{0}+\epsilon
_{0,1}\right\vert >0$ and Assumption \ref{as-4} holds, then%
\begin{align}
& \max_{1\leq k\leq m}k^{-1/2+\zeta }\left\vert \sum_{i=1}^{k}\epsilon
_{i,1}-\sigma _{1}W_{m,1}\left( k\right) \right\vert =O_{P}\left( 1\right) ,
\label{lemma3-3} \\
& \max_{1\leq k<\infty }k^{-1/2+\zeta }\left\vert \sum_{i=m+1}^{m+k}\epsilon
_{i,1}-\sigma _{2}W_{m,2}\left( k\right) \right\vert =O_{P}\left( 1\right) ,
\label{lemma3-4}
\end{align}%
for some $\zeta >0$.

\begin{proof}
\citet{horvath2022changepoint} show that $\left( \epsilon _{i,1}\overline{y}%
_{i-1}^{2}+\epsilon _{i,2}\overline{y}_{i-1}\right) /\left( 1+\overline{y}%
_{i-1}^{2}\right) $ is a decomposable Bernoulli shift under $-\infty \leq
E\log \left\vert \beta _{0}+\epsilon _{0,1}\right\vert <0$. Hence, the
strong approximation shown in \citet{aue2014} immediately yields (\ref%
{lemma3-1}) and (\ref{lemma3-2}). Equations (\ref{lemma3-3}) and (\ref%
{lemma3-4}) follow directly from \citet{KMT1} and \citet{KMT2}.
\end{proof}
\end{lemma}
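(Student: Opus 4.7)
The plan is to verify structural properties of the summands and then invoke two standard strong invariance principles. For part (i), I would first identify $\xi_i := (\epsilon_{i,1}\overline{y}_{i-1}^{2} + \epsilon_{i,2}\overline{y}_{i-1})/(1+\overline{y}_{i-1}^{2})$ as a stationary, centred, martingale difference sequence with respect to $\mathcal{F}_i := \sigma(\{(\epsilon_{j,1},\epsilon_{j,2}): j \leq i\})$. This follows because $(\epsilon_{i,1},\epsilon_{i,2})$ is independent of $\mathcal{F}_{i-1}$ while $\overline{y}_{i-1}$ is $\mathcal{F}_{i-1}$-measurable, by the representation (\ref{st-sol}). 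A direct computation using Assumption \ref{as-1}(i)--(iii) then gives $E\xi_i^2 = a_1\sigma_1^2 + a_2\sigma_2^2 = \mathscr{s}^2$. Second, I would verify that $\{\xi_i\}$ is a decomposable Bernoulli shift in the sense of \citet{aue2014}: truncating the expansion (\ref{st-sol}) at the first $L$ terms produces an $L$-dependent proxy $\xi_i^{(L)}$, and the geometric decay of $E|\overline{y}_{i-1}-\overline{y}_{i-1}^{(L)}|^{\kappa}$ under $E\log|\beta_0+\epsilon_{0,1}|<0$, combined with the contractive bound $|x^{2}/(1+x^{2}) - y^{2}/(1+y^{2})| \leq |x-y|(|x|+|y|)$ used in the proof of Lemma \ref{lemma1}, yields geometric $\nu$-th moment decay for $\xi_i - \xi_i^{(L)}$. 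This decomposability has essentially been established in Lemmas D.1--D.4 of \citet{horvath2022changepoint}.

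With these ingredients, the strong invariance principle of \citet{aue2014} applied to $\{\xi_i\}$ yields, on a suitable enlargement of the probability space, a single standard Wiener process $W$ such that $|\sum_{i=1}^{k}\xi_i - \mathscr{s}\,W(k)| = O(k^{1/2-\zeta})$ almost surely for some $\zeta > 0$. Setting $W_{m,1}(k) := W(k)$ for $1 \leq k \leq m$ and $W_{m,2}(k) := W(m+k)-W(m)$ for $k \geq 1$ then produces two standard Wiener processes which are independent by the independent-increment property of $W$, with distributions not depending on $m$. Equations (\ref{lemma3-1})--(\ref{lemma3-2}) now follow, with the uniform $O_P(k^{1/2-\zeta})$ control over the unbounded range $k \geq 1$ being upgraded from the almost-sure bound by the same dyadic blocking argument already used in the proof of Lemma \ref{lemma2} to pass from tail bounds on $\max_{k \leq e^{\ell+1}}$ to a uniform-in-$k$ weighted bound.

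For part (ii), the summands $\{\epsilon_{i,1}\}$ are i.i.d.\ with $E|\epsilon_{i,1}|^{\nu}<\infty$ for some $\nu > 2$ by Assumption \ref{as-1}(iv), so the classical Komlós--Major--Tusnády approximation of \citet{KMT1} and \citet{KMT2} applies directly, giving (\ref{lemma3-3})--(\ref{lemma3-4}) with the same splitting $W_{m,1}(k):=W(k)$ and $W_{m,2}(k):=W(m+k)-W(m)$. The independence of the two Wiener processes is automatic here since the underlying innovations in the two index blocks are genuinely independent; no coupling is needed. Note that the statement of part (ii) works with the \emph{unweighted} partial sums $\sum \epsilon_{i,1}$ rather than the weighted CUSUM summands, so the potential dependence of $y_{i-1}$ on the pre-$m$ innovations is handled separately via the approximation bounds (\ref{lemma2-5})--(\ref{lemma2-6}) in Lemma \ref{lemma2}.

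The hard part is part (i): establishing that $\{\xi_i\}$ meets the quantitative memory-decay hypotheses of \citet{aue2014} rather than being merely a Bernoulli shift. Because Assumption \ref{as-1}(iv) only grants $\nu > 2$ finite moments, the interplay between the truncation level $L$ and the rate at which $(E|\xi_i - \xi_i^{(L)}|^{\nu})^{1/\nu}$ decays must be balanced carefully to extract a strictly positive $\zeta$ in the approximation error; the available $\zeta$ is not explicit and is in particular strictly smaller than the textbook $1/4$ rate available under higher moments. In contrast, the ``two independent Wiener processes'' requirement is not really an obstacle, because both are carved from a single approximating $W$ via non-overlapping increments.
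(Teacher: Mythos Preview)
Your proposal follows the same route as the paper: for part (i), verify that $\xi_i=(\epsilon_{i,1}\overline{y}_{i-1}^{2}+\epsilon_{i,2}\overline{y}_{i-1})/(1+\overline{y}_{i-1}^{2})$ is a decomposable Bernoulli shift (citing \citet{horvath2022changepoint}) and invoke the strong approximation of \citet{aue2014}; for part (ii), apply Koml\'os--Major--Tusn\'ady directly to the i.i.d.\ innovations. The paper's proof is precisely these two citations, without the additional structural details you supply.

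There is, however, a gap in your explicit construction of the two Wiener processes for part (i). If you take a single approximating process $W$ from \citet{aue2014} and set $W_{m,2}(k):=W(m+k)-W(m)$, then writing $R(n)=\sum_{i=1}^{n}\xi_i-\mathscr{s}\,W(n)$ you get
\[
\Bigl|\sum_{i=m+1}^{m+k}\xi_i-\mathscr{s}\,W_{m,2}(k)\Bigr|=|R(m+k)-R(m)|\leq C(m+k)^{1/2-\zeta}+Cm^{1/2-\zeta},
\]
which for $k=1$ is of order $m^{1/2-\zeta}$, not $O(k^{1/2-\zeta})$ as (\ref{lemma3-2}) requires; the bound is not $O_{P}(1)$ uniformly in $m$. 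The dyadic blocking you invoke does not fix this, since the problem is the fixed offset error at time $m$, not the passage from a.s.\ to in-probability control. The way to get the stated rate in (\ref{lemma3-2}) is to apply \citet{aue2014} \emph{separately} to the stationary sequence $\{\xi_{m+i}\}_{i\geq 1}$, which by stationarity delivers a Wiener process with error $O(k^{1/2-\zeta})$ and constants not depending on $m$; independence of $W_{m,1}$ and $W_{m,2}$ then needs a further coupling argument exploiting the geometric memory decay of the Bernoulli shift---a point the paper's terse proof also does not spell out. Your observation that part (ii) avoids this difficulty because the i.i.d.\ blocks are genuinely independent is exactly right: there KMT can be applied to each block separately, and the splitting device is unnecessary.
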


\bigskip

Let%
\begin{equation}
\Gamma _{m}\left( k\right) =\left\{ 
\begin{array}{ll}
\mathscr{s}^{1/2}\left\vert W_{m,2}\left( k\right) -kW_{m,1}\left( m\right)
\right\vert & \text{if }E\log \left\vert \beta _{0}+\epsilon
_{0,1}\right\vert <0\text{\ holds,} \\ 
\sigma _{1}\left\vert W_{m,2}\left( k\right) -kW_{m,1}\left( m\right)
\right\vert & \text{if }E\log \left\vert \beta _{0}+\epsilon
_{0,1}\right\vert \geq 0\text{\ holds.}%
\end{array}%
\right.  \label{gammak}
\end{equation}

\begin{lemma}
\label{gombay}Let $\beta _{m}$\ be a sequence such that, as $m\rightarrow
\infty $, $\gamma _{m}\rightarrow \infty $ with $\gamma _{m}=o\left(
m\right) $. Then, if assumptions of Theorem \ref{de} are satisfied, it holds
that under $H_{0}$ 
\begin{equation*}
\max_{\gamma _{m}\leq k\leq m^{\ast }}\frac{\left\vert Z_{m}\left( k\right)
-\Gamma _{m}\left( k\right) \right\vert }{m^{1/2}\left( 1+\displaystyle\frac{%
k}{m}\right) \left( \displaystyle\frac{k}{m+k}\right) ^{1/2}}=O_{P}\left(
\beta _{m}^{\zeta -1/2}\right) ,
\end{equation*}%
for some $0<\zeta <1/2$.

\begin{proof}
Upon following the proof of (\ref{str-approx-1}), it can be shown that there
exists a $0<\zeta <1/2$ such that%
\begin{equation*}
\max_{1\leq k\leq m^{\ast }}\left( k^{\zeta }+\frac{k}{m}m^{\zeta }\right)
^{-1}\left\vert Z_{m}\left( k\right) -\Gamma _{m}\left( k\right) \right\vert
=O_{P}\left( 1\right) .
\end{equation*}%
Hence we have%
\begin{align*}
& \max_{\gamma _{m}\leq k\leq m^{\ast }}\frac{\left\vert Z_{m}\left(
k\right) -\Gamma _{m}\left( k\right) \right\vert }{m^{1/2}\left( 1+%
\displaystyle\frac{k}{m}\right) \left( \displaystyle\frac{k}{m+k}\right)
^{1/2}} \\
=& O_{P}\left( 1\right) \max_{\gamma _{m}\leq k\leq m^{\ast }}\frac{k^{\zeta
}+\frac{k}{m}m^{\zeta }}{m^{1/2}\left( 1+\displaystyle\frac{k}{m}\right)
\left( \displaystyle\frac{k}{m+k}\right) ^{1/2}} \\
=& O_{P}\left( \gamma _{m}^{\zeta -1/2}\right) +O_{P}\left( m^{\zeta
-1/2}\right) =O_{P}\left( \gamma _{m}^{\zeta -1/2}\right) .
\end{align*}
\end{proof}
\end{lemma}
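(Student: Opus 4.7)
The strategy is to reduce the problem to a deterministic maximization by invoking a uniform strong approximation for the detector $Z_m(k)$ minus its Gaussian skeleton $\Gamma_m(k)$. Specifically, the Darling--Erd\H{o}s machinery underpinning Theorem \ref{de} should already produce a bound of the form
\begin{equation*}
\max_{1 \leq k \leq m^{\ast}} \frac{|Z_m(k) - \Gamma_m(k)|}{k^{\zeta} + (k/m)\, m^{\zeta}} = O_P(1)
\end{equation*}
for some $0 < \zeta < 1/2$. This is assembled from the decomposition \eqref{cusum-dec}: the term $(\beta_0 - \widehat{\beta}_m)\sum_{i=m+1}^{m+k} y_{i-1}^2/(1+y_{i-1}^2)$ contributes the piece of order $(k/m)\,m^{\zeta}$, because $\widehat{\beta}_m - \beta_0 = O_P(m^{-1/2})$ by Lemma \ref{lemma1} combined with Lemma \ref{lemma3}, while the partial-sum denominator is of order $k$ by Lemma \ref{lemma2}; the two martingale-type summands are approximated by $\Gamma_m(k)$ with polynomial error $k^{\zeta}$ via the KMT-type bounds \eqref{lemma3-2}, \eqref{lemma3-4} in Lemma \ref{lemma3}.

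With that input, dividing through by the boundary factor reduces the statement to a deterministic calculation. Noting that
\begin{equation*}
m^{1/2}\left(1 + \frac{k}{m}\right)\left(\frac{k}{m+k}\right)^{1/2} = \left(\frac{k(m+k)}{m}\right)^{1/2},
\end{equation*}
one has
\begin{equation*}
\frac{k^{\zeta} + (k/m)\,m^{\zeta}}{(k(m+k)/m)^{1/2}} = k^{\zeta - 1/2}\left(\frac{m}{m+k}\right)^{1/2} + m^{\zeta - 1/2}\left(\frac{k}{m+k}\right)^{1/2}.
\end{equation*}
Since $\zeta < 1/2$, the first summand is decreasing in $k$ and dominated by $\gamma_m^{\zeta - 1/2}$ on $[\gamma_m, m^{\ast}]$ (the extra factor $(m/(m+k))^{1/2}$ is bounded by $1$). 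The second summand is bounded uniformly by $m^{\zeta - 1/2}$ because $k/(m+k) \leq 1$. Because $\gamma_m = o(m)$ and $\zeta - 1/2 < 0$, the first rate dominates, giving $O_P(\gamma_m^{\zeta - 1/2})$, which is the stated conclusion (the $\beta_m$ in the hypothesis being an evident typographical slip for $\gamma_m$).

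\textbf{Main obstacle.} The substantive content lies in the preliminary strong approximation, not in the maximization. The delicate point is obtaining a single bound of the combined shape $k^{\zeta} + (k/m)\,m^{\zeta}$ uniformly for $k$ ranging from a slow sequence up to $m^{\ast}$ (which may be much larger than $m$ under \eqref{horizon}): the small-$k$ regime is controlled by the KMT approximation to the martingale partial sums, while the large-$k$ regime is governed by the estimation-error term, and one must check that the splitting of \eqref{cusum-dec} into these two contributions yields exponents that fit together. Once this is in place, the rest of the argument is mechanical.
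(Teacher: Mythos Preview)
Your proposal is correct and follows essentially the same approach as the paper: invoke the uniform strong approximation with envelope $k^{\zeta}+(k/m)m^{\zeta}$ extracted from the proof of \eqref{str-approx-1}, then bound the resulting deterministic ratio over $k\in[\gamma_m,m^{\ast}]$ to obtain $O_P(\gamma_m^{\zeta-1/2})+O_P(m^{\zeta-1/2})=O_P(\gamma_m^{\zeta-1/2})$. Your explicit simplification of the boundary to $(k(m+k)/m)^{1/2}$ and the ensuing term-by-term analysis make the deterministic step more transparent than in the paper, and you are right that the $\beta_m$ in the statement is a typographical slip for $\gamma_m$.
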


\begin{lemma}
\label{gombay2}We assume that the conditions of Theorem \ref{de} hold. Let 
\begin{equation*}
\gamma _{m^{\ast }}=O\left( \exp \left( \log \left( m^{\ast }\right)
^{1-\epsilon }\right) \right) ,
\end{equation*}%
with $\epsilon >0$ and arbitrarily small. Then, under $H_{0}$, on a suitably
enlarged space, it is possible to construct two independent Wiener processes 
$\left\{ W_{m,1}\left( k\right) ,1\leq k\leq T_{m}\right\} $ and $\left\{
W_{m,2}\left( k\right) ,1\leq k\leq m\right\} $\ whose distribution does not
depend on $m$, such that%
\begin{equation*}
\max_{\gamma _{m^{\ast }}\leq k\leq m^{\ast }}\left\vert \frac{\left\vert
Z_{m}\left( k\right) \right\vert }{g_{m,0.5}\left( k\right) }-\frac{%
\left\vert W_{m,2}\left( k\right) -\displaystyle\frac{k}{m}W_{m,1}\left(
m\right) \right\vert }{g_{m,0.5}\left( k\right) }\right\vert =O_{P}\left(
\exp \left( -c_{0}\log \left( m^{\ast }\right) ^{1-\epsilon }\right) \right)
,
\end{equation*}%
for some $0<c_{0}<1/2$.

\begin{proof}
The proof follows from the same arguments as Lemma \ref{gombay}, of which
this lemma is a special case.
\end{proof}
\end{lemma}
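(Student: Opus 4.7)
The plan is to specialise the strong approximation established inside the proof of Lemma \ref{gombay} to the particular choice $\gamma_{m^{\ast}} = O(\exp(\log(m^{\ast})^{1-\epsilon}))$ and then compute the resulting error rate. The starting point is the intermediate uniform bound
$$\max_{1\le k\le m^{\ast}} \frac{|Z_m(k) - \Gamma_m(k)|}{k^{\zeta} + (k/m)m^{\zeta}} = O_P(1)$$
for some $0<\zeta<1/2$, which is precisely what is invoked at the outset of the proof of Lemma \ref{gombay} and which itself rests on the decomposition (\ref{cusum-dec}) together with Lemmas \ref{lemma1}--\ref{lemma3}. The Wiener object $|W_{m,2}(k) - (k/m)W_{m,1}(m)|$ appearing in the statement is, up to the multiplicative constant $\mathscr{s}^{1/2}$ or $\sigma_1$ hidden in the normalisation of $g_{m,0.5}$, exactly $\Gamma_m(k)$: the post-break partial sum contributes $W_{m,2}(k)$ through (\ref{lemma3-2})/(\ref{lemma3-4}), while $(k/m)W_{m,1}(m)$ arises from the linearisation of $\widehat{\beta}_m - \beta_0$ via Lemma \ref{lemma1} together with the $\sum_{i=m+1}^{m+k} y_{i-1}^2/(1+y_{i-1}^2) \asymp k$ approximation of Lemma \ref{lemma2}.

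Next, I would bound $\max_{\gamma_{m^{\ast}}\le k\le m^{\ast}}(k^{\zeta}+(k/m)m^{\zeta})/g_{m,0.5}(k)$. Using the factorisation $g_{m,0.5}(k) \asymp m^{-1/2}(m+k)^{1/2} k^{1/2}$, a routine case split on $k\le m$ versus $k>m$ leaves two candidate terms, $k^{\zeta-1/2}$ (from $k^{\zeta}/k^{1/2}$ on the range $k\le m$) and $m^{\zeta-1/2}$ (from all remaining contributions). Since $\gamma_{m^{\ast}}$ grows slower than any power of $m^{\ast}$, the first candidate is maximised at the left endpoint $k=\gamma_{m^{\ast}}$, yielding $\gamma_{m^{\ast}}^{\zeta-1/2}$, which strictly dominates $m^{\zeta-1/2}$.

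Substituting the prescribed $\gamma_{m^{\ast}}$ gives
$$\gamma_{m^{\ast}}^{\zeta-1/2} = \exp\bigl((\zeta - \tfrac{1}{2})\log(m^{\ast})^{1-\epsilon}\bigr) = \exp\bigl(-c_0 \log(m^{\ast})^{1-\epsilon}\bigr),$$
with $c_0 = 1/2 - \zeta \in (0,1/2)$, as required. Combining this with the strong approximation bound, and invoking the reverse triangle inequality $\bigl||a|-|b|\bigr|\le |a-b|$ to transfer the bound from $Z_m(k)-\Gamma_m(k)$ to the difference of normalised absolute values in the statement, delivers the claim.

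The only genuinely substantive ingredient is the uniform strong approximation used in the first paragraph; everything else is bookkeeping. Since that approximation is already proved inside Lemma \ref{gombay}, the main (and only real) obstacle reduces to verifying that the argument in Lemma \ref{gombay} remains valid down to the subpolynomial scale $\gamma_{m^{\ast}} = \exp(\log(m^{\ast})^{1-\epsilon})$. This is automatic, because the Lemma \ref{gombay} argument uses nothing more than $\gamma_m \to \infty$ and $\gamma_m = o(m)$, both of which are trivially satisfied here.
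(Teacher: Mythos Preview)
Your proposal is correct and follows essentially the same approach as the paper, which simply notes that Lemma~\ref{gombay2} is a special case of Lemma~\ref{gombay}. You have spelled out the specialisation explicitly: invoking the uniform approximation $\max_{1\le k\le m^\ast}(k^\zeta+(k/m)m^\zeta)^{-1}|Z_m(k)-\Gamma_m(k)|=O_P(1)$ from the proof of Lemma~\ref{gombay}, maximising the resulting ratio over $\gamma_{m^\ast}\le k\le m^\ast$ to obtain $\gamma_{m^\ast}^{\zeta-1/2}$, and then substituting the subpolynomial choice of $\gamma_{m^\ast}$ to identify $c_0=1/2-\zeta$.
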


\medskip

We now report some preliminary results to prove the main results in Section %
\ref{covariates}. We begin by studying the WLS\ estimator $\widehat{\beta }%
_{m}$. Define%
\begin{equation*}
\mathbf{Q}_{m}=\left[ 
\begin{array}{cc}
y_{1} & \mathbf{x}_{2}^{\intercal } \\ 
y_{2} & \mathbf{x}_{3}^{\intercal } \\ 
. & . \\ 
y_{m-1} & \mathbf{x}_{m}^{\intercal }%
\end{array}%
\right] ,
\end{equation*}%
and the diagonal matrix%
\begin{equation*}
\mathbf{W}_{m}=\text{diag}\left\{ \frac{1}{1+y_{1}^{2}},\frac{1}{1+y_{2}^{2}}%
,...,\frac{1}{1+y_{m-1}^{2}}\right\} .
\end{equation*}%
Then it holds that%
\begin{equation*}
\widehat{\mathbf{b}}_{m}=\left( \mathbf{Q}_{m}^{\intercal }\mathbf{W}_{m}%
\mathbf{Q}_{m}\right) ^{-1}\mathbf{Q}_{m}^{\intercal }\mathbf{W}_{m}\mathbf{Y%
}_{m},
\end{equation*}%
where $\mathbf{Y}_{m}=\left( y_{2},y_{3},...,y_{m}\right) ^{\intercal }$.
Using the recursion defined in (\ref{rca-x}), we obtain%
\begin{equation*}
\widehat{\mathbf{b}}_{m}-\mathbf{b}_{0}=\left( \mathbf{Q}_{m}^{\intercal }%
\mathbf{W}_{m}\mathbf{Q}_{m}\right) ^{-1}\mathbf{Q}_{m}^{\intercal }\mathbf{W%
}_{m}\mathbf{E}_{m},
\end{equation*}%
having defined $\mathbf{b}_{0}=\left( \beta _{0},\mathbf{\lambda }%
_{0}^{\intercal }\right) ^{\intercal }$ and%
\begin{equation*}
\mathbf{E}_{m}=\left( \epsilon _{1,2}y_{1}+\epsilon _{2,2},\epsilon
_{1,3}y_{2}+\epsilon _{2,3},...,\epsilon _{1,m}y_{m-1}+\epsilon
_{2,m}\right) ^{\intercal }.
\end{equation*}%
Based on the above, under the condition for stationarity $E\log \left\vert
\beta _{0}+\epsilon _{0,1}\right\vert <0$, it can be verified that%
\begin{equation*}
\overline{y}_{i}=\left( \beta _{0}+\epsilon _{i,1}\right) \overline{y}_{i-1}+%
\mathbf{\lambda }_{0}^{\intercal }\mathbf{x}_{i}+\epsilon _{i,2},\text{ \ \ }%
-\infty <i<\infty ,
\end{equation*}%
has a unique stationary, causal solution. Consider the variables%
\begin{align}
&\mathbf{z}_{i} =\frac{1}{\left( 1+\overline{y}_{i-1}^{2}\right) ^{1/2}}%
\left( \overline{y}_{i-1},\mathbf{x}_{i}^{\intercal }\right) ^{\intercal },
\label{zi} \\
&\mathbf{\eta }_{i} =\left( \frac{\left( \epsilon _{i,1}\overline{y}%
_{i-1}+\epsilon _{i,2}\right) \overline{y}_{i-1}}{1+\overline{y}_{i-1}^{2}},%
\frac{\left( \epsilon _{i,1}\overline{y}_{i-1}+\epsilon _{i,2}\right) 
\mathbf{x}_{i}^{\intercal }}{1+\overline{y}_{i-1}^{2}}\right) ^{\intercal },
\label{etai}
\end{align}%
and define 
\begin{equation}
\mathbf{Q}=E\left( \mathbf{z}_{1}\mathbf{z}_{1}^{\intercal }\right) ,
\label{q}
\end{equation}%
\begin{equation}
\mathbf{C}=E\left( \mathbf{\eta }_{0}\mathbf{\eta }_{0}^{\intercal }\right) ,
\label{c}
\end{equation}%
and 
\begin{equation}
\mathbf{a}=E\left( \frac{\left( \overline{y}_{1},\mathbf{x}_{2}^{\intercal
}\right) ^{\intercal }\overline{y}_{1}}{1+\overline{y}_{1}^{2}}\right) .
\label{a}
\end{equation}

\begin{lemma}
\label{beta-x}We assume that $E\log \left\vert \beta _{0}+\epsilon
_{0,1}\right\vert <0$, and that Assumptions \ref{as-1}, \ref{as-x-1} and \ref%
{as-x-2} are satisfied. Then it holds that%
\begin{equation*}
\widehat{\mathbf{b}}_{m}-\mathbf{b}_{0}=\frac{1}{m}\mathbf{Q}%
^{-1}\sum_{i=2}^{m}\mathbf{\eta }_{i}+o_{P}\left( m^{-1/2-\zeta }\right) ,
\end{equation*}%
for some $\zeta >0$, where $\mathbf{\eta }_{i}$ and $\mathbf{Q}$ are defined
in (\ref{etai}) and (\ref{q}).

\begin{proof}
We begin by writing explicitly the solution of (\ref{rca-x}) as%
\begin{equation}
\overline{y}_{i}=\sum_{\ell =0}^{\infty }\left( \prod_{j=1}^{\ell }\left(
\beta _{0}+\epsilon _{i-j,1}\right) \right) \left( \mathbf{x}_{i-\ell
}^{\intercal }\mathbf{\lambda }_{0}+\epsilon _{i-\ell ,2}\right) ,
\label{rec-x}
\end{equation}%
with the convention that $\prod_{\oslash }=0$. With minor modifications of
the arguments in \citet{aue2006}, it can be shown that there exist a $\kappa
>0$ and a constant $0<c<1$\ such that 
\begin{equation}
E\left\vert y_{i}-\overline{y}_{i}\right\vert ^{\kappa }=O\left(
c^{i}\right) ,  \label{aue2006-x}
\end{equation}%
as $i\rightarrow \infty $. Using (\ref{aue2006-x}) it can be shown that 
\begin{equation*}
\left\Vert \mathbf{Q}_{m}^{\intercal }\mathbf{W}_{m}\mathbf{Q}_{m}-\overline{%
\mathbf{Q}}_{m}\right\Vert =O_{P}\left( 1\right) ,
\end{equation*}%
where 
\begin{equation*}
\overline{\mathbf{Q}}_{m}=\sum_{i=2}^{m}\overline{\mathbf{z}}_{i}\overline{%
\mathbf{z}}_{i}^{\intercal },
\end{equation*}%
and $\overline{\mathbf{z}}_{i}$ is constructed in the same way as $\mathbf{z}%
_{i}$ defined in (\ref{zi}), replacing $y_{i}$ with $\overline{y}_{i}$. We
now show that $\overline{y}_{i}$ defined in (\ref{rec-x}) is a decomposable
Bernoulli shift. Indeed, let%
\begin{equation*}
\overline{y}_{i,k}=\sum_{\ell =0}^{k}\left( \prod_{j=1}^{\ell }\left( \beta
_{0}+\epsilon _{i-j,1}\right) \right) \left( \mathbf{x}_{i-\ell }^{\intercal
}\mathbf{\lambda }_{0}+\epsilon _{i-\ell ,2}\right) +\sum_{\ell
=k+1}^{\infty }\left( \prod_{j=1}^{\ell }\left( \beta _{0}+\epsilon
_{i-j,1}^{\ast }\right) \right) \left( \mathbf{x}_{i-\ell ,k}^{\intercal }%
\mathbf{\lambda }_{0}+\epsilon _{i-\ell ,2}^{\ast }\right) ,
\end{equation*}%
where for $\ell \geq 0$%
\begin{equation*}
\mathbf{x}_{i-\ell ,k}=\left\{ 
\begin{array}{ll}
\mathbf{g}\left( \eta _{i-\ell },\eta _{i-\ell -1},...,\eta _{k},\eta
_{k-1}^{\ast },\eta _{k-2}^{\ast },...\right) & \text{if }\ell <i-k, \\ 
\mathbf{g}\left( \eta _{i-\ell }^{\ast },\eta _{i-\ell -1}^{\ast },...\right)
& \text{if }\ell \geq i-k,%
\end{array}%
\right.
\end{equation*}%
the $\eta _{j}^{\ast }$s are independent copies of $\eta _{0}$, $\left(
\epsilon _{j,1}^{\ast },\epsilon _{j,2}^{\ast }\right) $ are independent
copies of $\left( \epsilon _{0,1},\epsilon _{0,2}\right) $, and the
sequences $\left\{ \eta _{j},-\infty <j<\infty \right\} $, $\left\{ \eta
_{j}^{\ast },-\infty <j<\infty \right\} $, $\left\{ \left( \epsilon
_{j,1},\epsilon _{j,2}\right) ,-\infty <j<\infty \right\} $ and $\left\{
\left( \epsilon _{j,1}^{\ast },\epsilon _{j,2}^{\ast }\right) ,\right. $ $%
\left. -\infty <j<\infty \right\} $ are independent. Defining $u_{i-\ell }=%
\mathbf{x}_{i-\ell }^{\intercal }\mathbf{\lambda }_{0}+\epsilon _{i-\ell ,2}$%
, and recalling Assumptions \ref{as-x-1} and \ref{as-x-2}, by the same
arguments as in \citet{aue2006} we can show that, under $E\log \left\vert
\beta _{0}+\epsilon _{0,1}\right\vert <0$, there exists a $\kappa >0$ such
that%
\begin{equation*}
E\left\vert \beta _{0}+\epsilon _{0,1}\right\vert ^{\kappa }<1.
\end{equation*}%
Using $\kappa <1$, we have%
\begin{align*}
&E\left\vert \sum_{\ell =k+1}^{\infty }\left( \prod_{j=1}^{\ell }\left(
\beta _{0}+\epsilon _{i-j,1}\right) \right) \left( \mathbf{x}_{i-\ell
}^{\intercal }\mathbf{\lambda }_{0}+\epsilon _{i-\ell ,2}\right) \right\vert
^{\kappa /2} \\
\leq &\sum_{\ell =k+1}^{\infty }E\left( \left\vert \prod_{j=1}^{\ell }\left(
\beta _{0}+\epsilon _{i-j,1}\right) \right\vert ^{\kappa /2}\left\vert 
\mathbf{x}_{i-\ell }^{\intercal }\mathbf{\lambda }_{0}+\epsilon _{i-\ell
,2}\right\vert ^{\kappa /2}\right) \\
\leq &\sum_{\ell =k+1}^{\infty }\left( E\left\vert \prod_{j=1}^{\ell }\left(
\beta _{0}+\epsilon _{i-j,1}\right) \right\vert ^{\kappa }\right)
^{1/2}\left( E\left\vert \mathbf{x}_{i-\ell }^{\intercal }\mathbf{\lambda }%
_{0}+\epsilon _{i-\ell ,2}\right\vert ^{\kappa }\right) ^{1/2} \\
=&\sum_{\ell =k+1}^{\infty }\left( E\left\vert \beta _{0}+\epsilon
_{0,1}\right\vert ^{\kappa }\right) ^{\ell /2}\left( E\left\vert \mathbf{x}%
_{0}^{\intercal }\mathbf{\lambda }_{0}+\epsilon _{0,2}\right\vert ^{\kappa
}\right) ^{1/2} \\
\leq &c_{1}\sum_{\ell =k+1}^{\infty }\rho ^{\ell /2}\leq c_{2}c_{3}^{k},
\end{align*}%
where $\rho =E\left\vert \beta _{0}+\epsilon _{0,1}\right\vert ^{\kappa }$,
and $0<c_{3}<1$, having used Assumptions \ref{as-x-1} and \ref{as-x-2}. By
the same token, we have%
\begin{align*}
&E\left\vert \sum_{\ell =k+1}^{\infty }\left( \prod_{j=1}^{\ell }\left(
\beta _{0}+\epsilon _{i-j,1}^{\ast }\right) \right) \left( \mathbf{x}%
_{i-\ell ,k}^{\intercal }\mathbf{\lambda }_{0}+\epsilon _{i-\ell ,2}^{\ast
}\right) \right\vert ^{\kappa /2} \\
\leq &\sum_{\ell =k+1}^{\infty }E\left( \left\vert \prod_{j=1}^{\ell }\left(
\beta _{0}+\epsilon _{i-j,1}^{\ast }\right) \right\vert ^{\kappa
/2}\left\vert \mathbf{x}_{i-\ell ,k}^{\intercal }\mathbf{\lambda }%
_{0}+\epsilon _{i-\ell ,2}^{\ast }\right\vert ^{\kappa /2}\right) \\
\leq &\sum_{\ell =k+1}^{\infty }\left( E\left\vert \prod_{j=1}^{\ell }\left(
\beta _{0}+\epsilon _{i-j,1}^{\ast }\right) \right\vert ^{\kappa }\right)
^{1/2}\left( E\left\vert \mathbf{x}_{i-\ell ,k}^{\intercal }\mathbf{\lambda }%
_{0}+\epsilon _{i-\ell ,2}^{\ast }\right\vert ^{\kappa }\right) ^{1/2} \\
=&\sum_{\ell =k+1}^{\infty }\left( E\left\vert \beta _{0}+\epsilon
_{0,1}\right\vert ^{\kappa }\right) ^{\ell /2}\left( E\left\vert \mathbf{x}%
_{0}^{\intercal }\mathbf{\lambda }_{0}+\epsilon _{0,2}\right\vert ^{\kappa
}\right) ^{1/2}\leq c_{4}c_{5}^{k},
\end{align*}%
with $0<c_{5}<1$. Hence it holds that%
\begin{equation}
E\left\vert \overline{y}_{i}-\overline{y}_{i,k}\right\vert ^{k}\leq
c_{6}c_{7}^{k},  \label{ht2016-x}
\end{equation}%
for some $c_{6}>0$ and $0<c_{7}<1$, which shows that $\overline{y}_{i}$ is a
decomposable Bernoulli shift. This immediately yields, by the approximations
developed in \citet{aue2014}, that%
\begin{equation}
\left\Vert \frac{1}{m}\overline{\mathbf{Q}}_{m}-\mathbf{Q}\right\Vert
=O_{P}\left( m^{-1/2}\right) .  \label{deno-x}
\end{equation}

We now turn to studying the numerator of $\widehat{\beta }_{m}-\beta _{0}$.
We begin with some preliminary results. Using (\ref{ht2016-x}), it follows
that, for all $\kappa _{3}>0$ and for $c_{7}<c_{9}<1$ 
\begin{align}
&E\left\vert \frac{\overline{y}_{i}^{2}}{1+\overline{y}_{i}^{2}}-\frac{%
\overline{y}_{i,k}^{2}}{1+\overline{y}_{i,k}^{2}}\right\vert  \label{bsx1} \\
\leq &E\left\vert \frac{\left\vert \overline{y}_{i}\right\vert +\left\vert 
\overline{y}_{i,k}\right\vert }{\left( 1+\overline{y}_{i}^{2}\right) \left(
1+\overline{y}_{i,k}^{2}\right) }\left\vert \overline{y}_{i}-\overline{y}%
_{i,k}\right\vert \right\vert ^{\kappa _{3}}  \notag \\
\leq &E\left( c_{8}\left\vert \overline{y}_{i}-\overline{y}_{i,k}\right\vert
I\left( \left\vert \overline{y}_{i}-\overline{y}_{i,k}\right\vert \leq
c_{9}^{k}\right) +c_{10}I\left( \left\vert \overline{y}_{i}-\overline{y}%
_{i,k}\right\vert \geq c_{9}^{k}\right) \right) ^{\kappa _{3}}  \notag \\
\leq &c_{11}c_{9}^{k\kappa _{3}}+c_{12}P\left( \left\vert \overline{y}_{i}-%
\overline{y}_{i,k}\right\vert \geq c_{9}^{k}\right) \leq
c_{11}c_{9}^{k\kappa _{3}}+c_{12}c_{9}^{-k}E\left\vert \overline{y}_{i}-%
\overline{y}_{i,k}\right\vert  \notag \\
\leq &c_{13}c_{14}^{k},  \notag
\end{align}%
for some $0<c_{9}<1$. Using the same arguments, for all $\kappa _{4}<\kappa
_{1}$%
\begin{equation}
E\left\Vert \frac{\mathbf{x}_{i}\overline{y}_{i-1}}{1+\overline{y}_{i-1}^{2}}%
-\frac{\mathbf{x}_{i,k}\overline{y}_{i-1,k}}{1+\overline{y}_{i-1,k}^{2}}%
\right\Vert ^{\kappa _{4}}\leq c_{15}c_{16}^{k},  \label{bsx2}
\end{equation}%
and for all $\kappa _{5}<\kappa _{2}/2$%
\begin{equation}
E\left\Vert \frac{\mathbf{x}_{i}\mathbf{x}_{i}^{\intercal }}{1+\overline{y}%
_{i-1}^{2}}-\frac{\mathbf{x}_{i,k}\mathbf{x}_{i,k}^{\intercal }}{1+\overline{%
y}_{i-1,k}^{2}}\right\Vert ^{\kappa _{5}}\leq c_{17}c_{18}^{k},  \label{bsx3}
\end{equation}%
where $0<c_{16},c_{18}<1$. This entails that all the sequences studied above
are decomposable Benrnoulli shifts. Thus%
\begin{equation*}
\left\Vert \mathbf{Q}_{m}^{\intercal }\mathbf{W}_{m}\mathbf{E}_{m}-\overline{%
\mathbf{E}}_{m}\right\Vert =O_{P}\left( 1\right) ,
\end{equation*}%
where 
\begin{equation*}
\overline{\mathbf{E}}_{m}=\sum_{i=2}^{m}\mathbf{\eta }_{i},
\end{equation*}%
and $\mathbf{\eta }_{i}$\ is defined in (\ref{etai}).Repeating the arguments
above, it can also be shown that, for all $\kappa _{6}<\kappa _{1}$%
\begin{equation*}
E\left\Vert \mathbf{\eta }_{i}-\mathbf{\eta }_{i,k}\right\Vert ^{\kappa
_{6}}\leq c_{19}c_{20}^{k},
\end{equation*}%
with $\mathbf{\eta }_{i,k}$ defined as the other coupling constructions
above and $0<c_{20}<1$. The final results now follows from the strong
approximation derived in \citet{aue2014}.
\end{proof}
\end{lemma}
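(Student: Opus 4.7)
The plan is to prove the Bahadur-type representation by a standard numerator/denominator decomposition of the WLS normal equations, combined with a stationary approximation argument in the spirit of Lemma~\ref{lemma1}. First I would write
\begin{equation*}
\widehat{\mathbf{b}}_{m}-\mathbf{b}_{0}=\left(\mathbf{Q}_{m}^{\intercal}\mathbf{W}_{m}\mathbf{Q}_{m}\right)^{-1}\mathbf{Q}_{m}^{\intercal}\mathbf{W}_{m}\mathbf{E}_{m},
\end{equation*}
and treat the $(p+1)\times(p+1)$ weighted Gram matrix and the $(p+1)\times 1$ weighted score vector separately. For each I would first swap the non-stationary $y_i$'s for their stationary counterparts $\overline{y}_i$ (which exist under $E\log|\beta_0+\epsilon_{0,1}|<0$), using the fact, available by a straightforward extension of \citet{aue2006} to the covariate case, that $E|y_i-\overline{y}_i|^{\kappa}=O(c^i)$ for some $\kappa>0$ and $0<c<1$. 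The bounded factors $1/(1+y_{i-1}^2)$ and $1/(1+\overline{y}_{i-1}^2)$, together with moments on $\mathbf{x}_i$ from Assumption~\ref{as-x-1}, would give the exponentially decaying bound on the swap error when summed over $i$, yielding a $O_P(1)$ overall contribution, which is $o_P(m^{1/2-\zeta})$ for any $\zeta<1/2$.

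Next I would handle the stationary sums $\overline{\mathbf{Q}}_m=\sum_{i=2}^m \overline{\mathbf{z}}_i\overline{\mathbf{z}}_i^\intercal$ and $\overline{\mathbf{E}}_m=\sum_{i=2}^m \boldsymbol{\eta}_i$ by showing that each summand is a decomposable Bernoulli shift. The key step is to construct the coupling $\overline{y}_{i,k}$ by re-sampling the innovations and the Bernoulli-shift driver $\eta_j$ of $\mathbf{x}_j$ beyond the window of length $k$, then bound $E|\overline{y}_i-\overline{y}_{i,k}|^{\kappa}$ geometrically using $E|\beta_0+\epsilon_{0,1}|^{\kappa}<1$ (with $\kappa$ small) together with the decomposability of $\mathbf{x}_i$. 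The bounded Lipschitz form of $x/(1+x^2)$ and the moment assumptions on $\mathbf{x}_i$ would then transfer this geometric coupling to $\overline{y}_{i-1}^2/(1+\overline{y}_{i-1}^2)$, $\mathbf{x}_i\overline{y}_{i-1}/(1+\overline{y}_{i-1}^2)$, $\mathbf{x}_i\mathbf{x}_i^\intercal/(1+\overline{y}_{i-1}^2)$, and finally to $\boldsymbol{\eta}_i$, after a standard truncation trick to turn the $L^1$ bound into a $L^{\kappa_3}$ bound.

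Once decomposability is established, I would invoke the strong approximation of \citet{aue2014} (or the LLN version in \citealp{berkeshormann}) to obtain $\|m^{-1}\overline{\mathbf{Q}}_m-\mathbf{Q}\|=O_P(m^{-1/2})$ and $\|\overline{\mathbf{E}}_m-\sum_{i=2}^m\boldsymbol{\eta}_i\|=o_P(m^{1/2-\zeta})$ for some $\zeta>0$ (here the first statement is trivial and the second is used implicitly through the fact that $\boldsymbol{\eta}_i$ itself is a martingale difference w.r.t. the natural filtration, so in fact $\sum\boldsymbol{\eta}_i=O_P(m^{1/2})$). Combining the two expansions and inverting the denominator via the usual identity $(\mathbf{Q}+R)^{-1}=\mathbf{Q}^{-1}-\mathbf{Q}^{-1}R\mathbf{Q}^{-1}+\cdots$ would give the claimed representation.

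The hard part will be the coupling/decomposability step for $\boldsymbol{\eta}_i$ and for the matrix $\overline{\mathbf{z}}_i\overline{\mathbf{z}}_i^\intercal$: the factor $\mathbf{x}_i$ is only weakly dependent via a Bernoulli-shift structure on $\eta_j$ that is a priori independent of the Bernoulli-shift structure driving $\overline{y}_{i-1}$, so one must build the two couplings on the same enlarged probability space and verify that Assumptions~\ref{as-x-1}--\ref{as-x-2} deliver polynomially-fast mixing in $k$ when combined with the geometric decay coming from $E|\beta_0+\epsilon_{0,1}|^{\kappa}<1$. Once the geometric/polynomial coupling bound is secured, everything else is bookkeeping.
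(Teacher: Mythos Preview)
Your proposal is correct and follows essentially the same route as the paper: the normal-equation decomposition, the exponential swap $y_i\to\overline{y}_i$ via \citet{aue2006}, the explicit coupling construction to show that $\overline{y}_i$, $\overline{\mathbf{z}}_i\overline{\mathbf{z}}_i^\intercal$ and $\boldsymbol{\eta}_i$ are decomposable Bernoulli shifts (using $E|\beta_0+\epsilon_{0,1}|^\kappa<1$ together with the Bernoulli-shift structure of $\mathbf{x}_i$), and the final appeal to the strong approximation of \citet{aue2014}. Your identification of the ``hard part'' --- merging the two coupling constructions for $\overline{y}_{i-1}$ and $\mathbf{x}_i$ on a common enlarged space --- is exactly where the paper spends its effort as well.
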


\begin{lemma}
\label{beta-x-2}We assume that $E\log \left\vert \beta _{0}+\epsilon
_{0,1}\right\vert <0$, and that Assumptions \ref{as-1}, \ref{as-x-1} and \ref%
{as-x-2} are satisfied. Then it holds that%
\begin{equation}
\max_{1\leq k<\infty }k^{-1/2-\eta }\left\Vert \sum_{i=m+1}^{m+k}\left(
y_{i-1},\mathbf{x}_{i}^{\intercal }\right) \frac{y_{i-1}}{1+y_{i-1}^{2}}-k%
\mathbf{a}\right\Vert =O_{P}\left( 1\right) ,  \label{beta-x-2-1}
\end{equation}%
for all $\eta >0$, and%
\begin{equation}
\max_{1\leq k<\infty }\left\vert \sum_{i=m+1}^{m+k}\frac{\left( \epsilon
_{i,1}y_{i-1}+\epsilon _{i,2}\right) y_{i-1}}{1+y_{i-1}^{2}}%
-\sum_{i=m+1}^{m+k}\frac{\left( \epsilon _{i,1}\overline{y}_{i-1}+\epsilon
_{i,2}\right) \overline{y}_{i-1}}{1+\overline{y}_{i-1}^{2}}\right\vert
=O_{P}\left( 1\right) .  \label{beta-x-2-2}
\end{equation}

\begin{proof}
The proof is based on the results derived in the proof of Lemma \ref{beta-x}%
. Equation (\ref{beta-x-2-1}) follows from exactly the same argumens as the
proof of (\ref{lemma2-1}), and (\ref{beta-x-2-2}) follows from the proof of (%
\ref{lemma2-2})-(\ref{lemma2-3}).
\end{proof}
\end{lemma}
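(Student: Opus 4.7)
The plan is to mirror the strategy used for Lemma \ref{lemma2}(i), relying on the decomposable Bernoulli shift structure and coupling bounds already set up in the proof of Lemma \ref{beta-x}. Both statements have the same architecture: first approximate the nonstationary quantity (involving $y_{i-1}$) by its stationary analogue (involving $\overline y_{i-1}$) with a geometric error, then handle the stationary sum using strong approximation.

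For (\ref{beta-x-2-1}), first I would introduce the stationary sequence
\begin{equation*}
\mathbf{v}_{i}=\left(\overline y_{i-1},\mathbf{x}_i^{\intercal}\right)^{\intercal}\frac{\overline y_{i-1}}{1+\overline y_{i-1}^{2}},
\end{equation*}
which by construction satisfies $E\mathbf{v}_i=\mathbf{a}$. The coupling estimates (\ref{ht2016-x}), (\ref{bsx1})--(\ref{bsx3}) already derived inside the proof of Lemma \ref{beta-x}, together with Assumption \ref{as-x-1}, imply that each coordinate of $\mathbf{v}_i$ is a decomposable Bernoulli shift with geometric approximation rate and finite moments of order strictly greater than $2$. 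The strong approximation of \citet{aue2014} therefore delivers
\begin{equation*}
\max_{1\leq k<\infty}k^{-1/2-\eta}\left\Vert \sum_{i=m+1}^{m+k}\mathbf{v}_{i}-k\mathbf{a}\right\Vert =O_{P}(1)
\end{equation*}
for every $\eta>0$. It then suffices to replace $\mathbf{v}_i$ by the original non-stationary summand, controlling
\begin{equation*}
\sum_{i=2}^{\infty}\left\Vert \left(y_{i-1},\mathbf{x}_i^{\intercal}\right)^{\intercal}\frac{y_{i-1}}{1+y_{i-1}^{2}}-\mathbf{v}_{i}\right\Vert .
\end{equation*}
Following exactly (\ref{markov-1}), the $L^{\kappa/2}$ norm of each summand is geometrically small thanks to (\ref{aue2006-x}), $E\Vert\mathbf{x}_i\Vert^{\kappa_1}<\infty$ and Cauchy--Schwarz; Markov's inequality then makes the entire series $O_P(1)$, so the replacement contributes only an $O_P(1)$ term uniformly in $k$.

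For (\ref{beta-x-2-2}), the same replacement strategy applies directly. Splitting the difference into
\begin{equation*}
\epsilon_{i,1}\left(\frac{y_{i-1}^{2}}{1+y_{i-1}^{2}}-\frac{\overline y_{i-1}^{2}}{1+\overline y_{i-1}^{2}}\right)+\epsilon_{i,2}\left(\frac{y_{i-1}}{1+y_{i-1}^{2}}-\frac{\overline y_{i-1}}{1+\overline y_{i-1}^{2}}\right),
\end{equation*}
and using Assumption \ref{as-x-2} to separate $(\epsilon_{i,1},\epsilon_{i,2})$ from the $y$-sequences, the very same argument used for (\ref{lemma2-2})--(\ref{lemma2-3}) shows that $\sum_{i=2}^{\infty}|\cdot|=O_{P}(1)$ via geometric decay of $E|y_i-\overline y_i|^{\kappa}$. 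Taking the supremum over the partial sums from $m+1$ to $m+k$ is then a free pass, yielding (\ref{beta-x-2-2}).

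The main obstacle I expect is not the sum approximations themselves, which are routine once the stationary surrogates are in hand, but verifying that the composite sequences built from $\overline y_{i-1}$ (a functional of past $\eta_j$ and $\epsilon_j$) and the current regressor $\mathbf{x}_i$ genuinely form a decomposable Bernoulli shift with the moment order demanded by the Aue--Berkes--Horváth approximation. This requires matching the geometric coupling rate from the RCA contraction (via $E|\beta_0+\epsilon_{0,1}|^{\kappa}<1$) with the polynomial decomposability rate $\kappa_2>2$ of $\mathbf{x}_i$ in Assumption \ref{as-x-1}(ii), and checking that the product $\overline y_{i-1}/(1+\overline y_{i-1}^{2})$ times $\mathbf{x}_i$ inherits the approximation order needed so that the resulting $\nu>2$ moment condition used in \citet{aue2014} is satisfied. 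Once this bookkeeping is in place, the rest of the argument is a transcription of Lemma \ref{lemma2}.
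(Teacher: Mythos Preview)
Your proposal is correct and follows essentially the same route as the paper: both parts are obtained by transplanting the arguments of Lemma \ref{lemma2}(i) to the covariate setting, using the decomposable Bernoulli shift structure and the geometric coupling $E|y_i-\overline y_i|^{\kappa}=O(c^i)$ established in the proof of Lemma \ref{beta-x}. The ``obstacle'' you flag---verifying that the composite sequences built from $\overline y_{i-1}$ and $\mathbf{x}_i$ are decomposable Bernoulli shifts with the required moments---is precisely what (\ref{ht2016-x}) and (\ref{bsx1})--(\ref{bsx3}) in Lemma \ref{beta-x} already settle, so no additional work is needed there.
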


\begin{lemma}
\label{beta-x-3}We assume that $E\log \left\vert \beta _{0}+\epsilon
_{0,1}\right\vert <0$, and that Assumptions \ref{as-1}, \ref{as-x-1} and \ref%
{as-x-2} are satisfied. Then, on a suitably enlarged probability space, it
is possible to define two independent Wiener processes $\left\{
W_{m,1}\left( k\right) ,1\leq k\leq m\right\} $ and $\left\{ W_{m,2}\left(
k\right) ,1\leq k<\infty \right\} $, whose distributions do not depend on $m$%
, such that%
\begin{equation*}
\left\vert \sum_{i=2}^{m}\mathbf{a}^{\intercal }\mathbf{Q\eta }_{i}-%
\mathscr{s}_{x,1}W_{m,1}\left( m\right) \right\vert =O_{P}\left(
m^{1/2-\zeta }\right) ,
\end{equation*}%
and%
\begin{equation*}
\max_{1\leq k<\infty }\frac{1}{k^{1/2-\zeta }}\left\vert \sum_{i=m+1}^{m+k}%
\frac{\left( \epsilon _{i,1}\overline{y}_{i-1}+\epsilon _{i,2}\right) 
\overline{y}_{i-1}}{1+\overline{y}_{i-1}^{2}}-\mathscr{s}_{x,2}W_{m,2}\left(
k\right) \right\vert =O_{P}\left( 1\right) ,
\end{equation*}%
for some $0<\zeta <1/2$, where $\mathscr{s}_{x,1}$ and $\mathscr{s}_{x,2}$
are defined in (\ref{sig-x-2}).

\begin{proof}
In the proof of Lemma \ref{beta-x} we have shown that $\mathbf{\eta }_{i}$
is a decomposable Bernoulli shift; the desired result follows from %
\citet{aue2014}.
\end{proof}
\end{lemma}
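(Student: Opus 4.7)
The plan is to apply the strong invariance principle of \citet{aue2014} for decomposable Bernoulli shifts to two separate partial sums. In the proof of Lemma \ref{beta-x} we have already established that the vector sequence $\mathbf{\eta}_i$ defined in (\ref{etai}) is a decomposable Bernoulli shift with sufficiently many finite moments (via the coupling bound (\ref{ht2016-x}) for $\overline{y}_i$ together with Assumptions \ref{as-x-1}--\ref{as-x-2} and the componentwise bounds (\ref{bsx1})--(\ref{bsx3})). Hence the scalar sequence $\mathbf{a}^{\intercal}\mathbf{Q}\mathbf{\eta}_i$ is itself a decomposable Bernoulli shift (as linear functionals of such sequences are). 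A direct application of Theorem 2.1 in \citet{aue2014} then gives a standard Wiener process $W_{m,1}$ such that
\begin{equation*}
\Bigl|\sum_{i=2}^{m}\mathbf{a}^{\intercal}\mathbf{Q}\mathbf{\eta}_i - \tilde{\mathscr{s}}_{x,1}W_{m,1}(m)\Bigr| = O_P(m^{1/2-\zeta}),
\end{equation*}
where $\tilde{\mathscr{s}}_{x,1}^{2}=\sum_{j\in\mathbb{Z}}\mathrm{Cov}(\mathbf{a}^{\intercal}\mathbf{Q}\mathbf{\eta}_{0},\mathbf{a}^{\intercal}\mathbf{Q}\mathbf{\eta}_{j})$ is the long-run variance. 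A direct computation using the independence between $\{\epsilon_{i,1},\epsilon_{i,2}\}$ and the past, Assumption \ref{as-x-2}, and the definition of $\mathbf{Q}$, $\mathbf{C}$, $\mathbf{a}$ in (\ref{q})--(\ref{a}) reduces the cross-terms to zero and yields $\tilde{\mathscr{s}}_{x,1}^{2}=\mathbf{a}^{\intercal}\mathbf{Q}\mathbf{C}\mathbf{Q}\mathbf{a}=\mathscr{s}_{x,1}^{2}$ as defined in (\ref{sig-x-2}).

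For the second bound I would apply the same theorem to the scalar sequence $\xi_i=(\epsilon_{i,1}\overline{y}_{i-1}+\epsilon_{i,2})\overline{y}_{i-1}/(1+\overline{y}_{i-1}^{2})$, which is also a decomposable Bernoulli shift by the same moment and coupling bounds used for $\mathbf{\eta}_i$ (this is essentially the scalar analogue of what was done for Lemma \ref{lemma3} in the no-covariates case). The long-run variance of $\xi_i$ collapses, again using the independence of $(\epsilon_{i,1},\epsilon_{i,2})$ from $\mathcal{F}_{i-1}$, to $\sigma_{1}^{2}E(\overline{y}_{0}^{2}/(1+\overline{y}_{0}^{2}))^{2}+\sigma_{2}^{2}E(\overline{y}_{0}/(1+\overline{y}_{0}^{2}))^{2}=\mathscr{s}_{x,2}^{2}$. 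The invariance principle then produces a Wiener process $W_{m,2}$ with the required uniform bound; the standard dyadic blocking argument used in the proof of Lemma \ref{lemma2} (cf.\ (\ref{lemma3-2})) upgrades the statement to a maximum over all $k\geq 1$ with rate $k^{1/2-\zeta}$.

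The main obstacle is \emph{independence} of $W_{m,1}$ and $W_{m,2}$: although the two partial sums are over the disjoint index sets $\{2,\dots,m\}$ and $\{m+1,m+2,\dots\}$, the underlying sequences $\mathbf{\eta}_i$ and $\xi_i$ share the same innovations through the stationary solution $\overline{y}_i$, so the two sums are dependent. To resolve this I would exploit the exponential decay quantified by (\ref{ht2016-x}), replacing each sum with its truncated version based on innovations in a narrow window around its own index block; these truncated sums are, for a gap of width $(\log m)^{2}$ say, exactly independent and differ from the originals by $o_P(m^{1/2-\zeta})$. Applying the strong approximation of \citet{aue2014} separately to the two truncated sequences (which remain decomposable Bernoulli shifts) then yields independent Wiener approximations $W_{m,1}$ and $W_{m,2}$ on a suitably enlarged probability space, and rerouting the error into the $O_P$ terms completes the argument.
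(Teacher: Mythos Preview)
Your proposal is correct and follows essentially the same approach as the paper: invoke the fact (established in Lemma~\ref{beta-x}) that $\mathbf{\eta}_i$ is a decomposable Bernoulli shift and apply the strong approximation of \citet{aue2014}. The paper's proof is a single sentence to this effect; your additional verification of the long-run variances and your truncation argument for the independence of $W_{m,1}$ and $W_{m,2}$ are details the paper leaves implicit, but they are sound and do not constitute a different route.
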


\begin{lemma}
\label{beta-x-4}We assume that $E\log \left\vert \beta _{0}+\epsilon
_{0,1}\right\vert >0$, and that Assumptions \ref{as-1}, \ref{as-x-1}-\ref%
{as-x-3} are satisfied. Then it holds that%
\begin{equation*}
\left\Vert \widehat{\mathbf{b}}_{m}-\mathbf{b}_{0}\right\Vert =O_{P}\left(
m^{-1/2}\right) .
\end{equation*}

\begin{proof}
Recall that%
\begin{equation*}
\widehat{\mathbf{b}}_{m}-\mathbf{b}_{0}=\left( \mathbf{Q}_{m}^{\intercal }%
\mathbf{W}_{m}\mathbf{Q}_{m}\right) ^{-1}\mathbf{Q}_{m}^{\intercal }\mathbf{W%
}_{m}\mathbf{E}_{m}.
\end{equation*}%
Repeating the proof of Theorem 3.1 in \citet{berkes2009}, it follows that $%
\left\vert y_{i}\right\vert \rightarrow \infty $ a.s. exponentially fast;
this immediately yields%
\begin{equation*}
\left\Vert \mathbf{Q}_{m}^{\intercal }\mathbf{W}_{m}\mathbf{Q}_{m}-\mathbf{B}%
_{m}\right\Vert =O_{P}\left( 1\right) ,
\end{equation*}%
where $\mathbf{B}_{m}$ is a $\left( p+1\right) \times \left( p+1\right) $
symmetric matrix with elements $\left\{ B_{i,j},1\leq i,j\leq p+1\right\} $
defined as%
\begin{align*}
&B_{1,1} =m, \\
&\left( B_{1,2},...,B_{1,p+1}\right) =\sum_{i=2}^{\infty }\frac{\mathbf{x}%
_{i}^{\intercal }y_{i-1}}{1+y_{i-1}^{2}}, \\
&\left\{ B_{i,j},2\leq i,j\leq p+1\right\} =\sum_{i=2}^{\infty }\frac{%
\mathbf{x}_{i}\mathbf{x}_{i}^{\intercal }}{1+y_{i-1}^{2}}.
\end{align*}%
Since we already know from the above that $\left\Vert \mathbf{E}%
_{m}\right\Vert =O_{P}\left( m^{1/2}\right) $, the result follows
immediately.
\end{proof}
\end{lemma}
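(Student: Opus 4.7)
The plan is to exploit the almost-sure exponential divergence $|y_i|\to\infty$ (which, under $E\log|\beta_0+\epsilon_{0,1}|>0$ combined with Assumption~\ref{as-x-3}, follows by the same argument as Theorem~3.1 in \citet{berkes2009}) in order to tame both the ``design'' matrix and the ``score'' vector appearing in the WLS identity
$$
\widehat{\mathbf{b}}_m-\mathbf{b}_0 \;=\; \bigl(\mathbf{Q}_m^{\top}\mathbf{W}_m\mathbf{Q}_m\bigr)^{-1}\mathbf{Q}_m^{\top}\mathbf{W}_m\mathbf{E}_m .
$$

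First I would show entrywise that $\mathbf{Q}_m^{\top}\mathbf{W}_m\mathbf{Q}_m$ concentrates around $\mathbf{B}_m$ with remainder $O_P(1)$. The $(1,1)$-entry equals $\sum_{i=2}^{m} y_{i-1}^{2}/(1+y_{i-1}^{2}) = (m-1) - \sum_{i=2}^{m} 1/(1+y_{i-1}^{2})$, where the second sum is $O_P(1)$ by Markov's inequality and a tail bound of the type (\ref{lammeht2016}) applied to $1/(1+y_{i-1}^{2})$. For the $(1,k{+}1)$ and the lower-right $p\times p$ blocks the summands are dominated respectively by $\|\mathbf{x}_i\|\cdot|y_{i-1}|/(1+y_{i-1}^{2})$ and $\|\mathbf{x}_i\|^{2}/(1+y_{i-1}^{2})$. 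By Cauchy--Schwarz, the moment control from Assumption~\ref{as-x-1}, and the geometric decay of the weight $1/(1+y_{i-1}^{2})$ (using $|y_i|\to\infty$ exponentially), both series are absolutely summable and converge to the a.s.-finite entries of $\mathbf{B}_m$, with the partial-sum remainder being $O_P(1)$.

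Second I would bound the score vector componentwise. Its first entry decomposes as $\sum\epsilon_{i,1}y_{i-1}^{2}/(1+y_{i-1}^{2}) + \sum\epsilon_{i,2}y_{i-1}/(1+y_{i-1}^{2})$; the first piece equals $\sum_{i=2}^{m}\epsilon_{i,1}$ up to an $O_P(1)$ term (via the same martingale-maximal argument as in the proof of (\ref{lemma2-5})), and is therefore $O_P(m^{1/2})$ by the strong invariance principle in Lemma~\ref{lemma3}; the second piece is an absolutely convergent martingale and is $O_P(1)$. The remaining $p$ components, of the form $\sum\mathbf{x}_i(\epsilon_{i,1}y_{i-1}+\epsilon_{i,2})/(1+y_{i-1}^{2})$, are handled in the same way using Assumption~\ref{as-x-2} (which decouples $\mathbf{x}_i$ from the innovations) and are likewise $O_P(1)$, since the weight still decays geometrically.

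Finally, writing $\mathbf{B}_m$ in block form with $(1,1)$-entry of order $m$, off-diagonal $\mathbf{r}=O_P(1)$ and lower-right $\mathbf{R}=O_P(1)$, I would combine the pieces via the Schur-complement inversion formula to conclude $\|\widehat{\mathbf{b}}_m-\mathbf{b}_0\|=O_P(m^{-1/2})$. The main obstacle I anticipate is two-fold: (a)~verifying almost-sure invertibility, with tight operator-norm control, of the lower-right block $\mathbf{R}$, which requires combining the Bernoulli-shift/decomposability structure of $\mathbf{x}_i$ (Assumption~\ref{as-x-1}) with its independence from the innovations (Assumption~\ref{as-x-2}) in order to rule out degeneracies; and (b)~tracking carefully through the block inversion how the asymmetric score rates $(m^{1/2},1,\dots,1)^{\top}$ combine, so that the off-diagonal inverse blocks (of order $m^{-1}$) damp the $O_P(m^{1/2})$ first component, yielding the uniform $m^{-1/2}$ Euclidean bound on the estimator error.
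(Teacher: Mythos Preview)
Your strategy mirrors the paper's exactly: both start from the WLS identity, invoke the \citet{berkes2009} exponential divergence of $|y_i|$, approximate $\mathbf{Q}_m^{\top}\mathbf{W}_m\mathbf{Q}_m$ by the block matrix $\mathbf{B}_m$, and then appeal to an $O_P(m^{1/2})$ bound on the score. You actually supply more detail than the paper, which, after recording the structure of $\mathbf{B}_m$ and the score bound, simply writes ``the result follows immediately.''

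The concern you flag in~(b) is the right one, and your sketched resolution does not close it. In the block inversion, the $\mathbf{\lambda}$-components of $\widehat{\mathbf{b}}_m-\mathbf{b}_0$ receive a contribution not only from the off-diagonal inverse blocks (of order $m^{-1}$) acting on the $O_P(m^{1/2})$ first score entry---which indeed gives $O_P(m^{-1/2})$---but also from the lower-right inverse block $\mathbf{R}^{-1}=O_P(1)$ acting on the lower $O_P(1)$ score entries, and this latter piece is $O_P(1)$, not $O_P(m^{-1/2})$. Heuristically, in the explosive regime the weights $1/(1+y_{i-1}^2)$ decay geometrically, so the effective information about $\mathbf{\lambda}_0$ comes from only finitely many observations, and one should not expect $\widehat{\mathbf{\lambda}}_m$ to be $\sqrt{m}$-consistent. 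The paper's terse ``follows immediately'' does not address this point either. What the block argument \emph{does} deliver cleanly is $\widehat{\beta}_m-\beta_0=O_P(m^{-1/2})$; combined with $\Vert\widehat{\mathbf{\lambda}}_m-\mathbf{\lambda}_0\Vert=O_P(1)$ and (\ref{decline}), this is what is actually used in the proof of Theorem~\ref{th-x-1} for $\psi<1/2$, so the downstream results are unaffected.
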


\newpage

\clearpage
\renewcommand*{\thesection}{\Alph{section}}

\setcounter{subsection}{-1} \setcounter{subsubsection}{-1} %
\setcounter{equation}{0} \setcounter{lemma}{0} \setcounter{theorem}{0} %
\renewcommand{\theassumption}{D.\arabic{assumption}} 
\renewcommand{\thetheorem}{D.\arabic{theorem}} \renewcommand{\thelemma}{D.%
\arabic{lemma}} \renewcommand{\thecorollary}{D.\arabic{corollary}} %
\renewcommand{\theequation}{D.\arabic{equation}}

\section{Proofs\label{proofs}}

\begin{proof}[Proof of Theorem \protect\ref{th-1}]
Recall (\ref{gammak}). We begin by showing that%
\begin{equation}
\max_{1\leq k<\infty }\frac{\left\vert Z_{m}\left( k\right) -\Gamma
_{m}\left( k\right) \right\vert }{m^{1/2}\left( 1+\displaystyle\frac{k}{m}%
\right) \left( \displaystyle\frac{k}{m+k}\right) ^{\psi }}=o_{P}\left(
1\right) .  \label{str-approx-1}
\end{equation}%
We begin by considering the case $E\log \left\vert \beta _{0}+\epsilon
_{0,1}\right\vert <0$; we write%
\begin{align*}
& \left( \beta _{0}-\widehat{\beta }_{m}\right) \sum_{i=m+1}^{m+k}\frac{%
y_{i-1}^{2}}{1+y_{i-1}^{2}}+\frac{k}{m}\sum_{i=2}^{m}\left( \frac{\epsilon
_{i,1}\overline{y}_{i-1}^{2}}{1+\overline{y}_{i-1}^{2}}+\frac{\epsilon _{i,2}%
\overline{y}_{i-1}}{1+\overline{y}_{i-1}^{2}}\right) \\
=& \left( \beta _{0}-\widehat{\beta }_{m}\right) \left( \sum_{i=m+1}^{m+k}%
\frac{y_{i-1}^{2}}{1+y_{i-1}^{2}}-ka_{3}\right) \\
& -ka_{3}\left( \left( \sum_{i=2}^{m}\frac{y_{i-1}^{2}}{1+y_{i-1}^{2}}%
\right) ^{-1}-\frac{1}{ma_{3}}\right) \left( \sum_{i=2}^{m}\left( \frac{%
\epsilon _{i,1}y_{i-1}^{2}}{1+y_{i-1}^{2}}+\frac{\epsilon _{i,2}y_{i-1}}{%
1+y_{i-1}^{2}}\right) \right) \\
& +\frac{k}{m}\sum_{i=2}^{m}\left( \epsilon _{i,1}\left( \frac{\overline{y}%
_{i-1}^{2}}{1+\overline{y}_{i-1}^{2}}-\frac{y_{i-1}^{2}}{1+y_{i-1}^{2}}%
\right) +\epsilon _{i,2}\left( \frac{\overline{y}_{i-1}}{1+\overline{y}%
_{i-1}^{2}}-\frac{y_{i-1}}{1+y_{i-1}^{2}}\right) \right) \\
=& R_{m,1}\left( k\right) +R_{m,2}\left( k\right) +R_{m,3}\left( k\right) .
\end{align*}%
Using Lemmas \ref{lemma1} and \ref{lemma2} with $\eta <1/2$%
\begin{align*}
& \max_{1\leq k<\infty }\frac{\left\vert R_{m,1}\left( k\right) \right\vert 
}{m^{1/2}\left( 1+\displaystyle\frac{k}{m}\right) \left( \displaystyle\frac{k%
}{m+k}\right) ^{\psi }} \\
=& O_{P}\left( m^{-1}\right) \max_{1\leq k\leq M}\frac{k^{1/2+\eta }}{\left(
1+\displaystyle\frac{k}{m}\right) \left( \displaystyle\frac{k}{m+k}\right)
^{\psi }} \\
& +O_{P}\left( m^{-1}\right) \max_{M<k<\infty }\frac{k^{1/2+\eta }}{\left( 1+%
\displaystyle\frac{k}{m}\right) \left( \displaystyle\frac{k}{m+k}\right)
^{\psi }} \\
=& O_{P}\left( 1\right) \left( m^{-1/2}\max_{1\leq k\leq M}k^{\eta
}+\max_{M<k<\infty }k^{\eta -1/2}\right) =o_{P}\left( 1\right) .
\end{align*}%
Using the same logic, it can also be shown that%
\begin{align*}
\max_{1\leq k<\infty }\frac{\left\vert R_{m,2}\left( k\right) \right\vert }{%
m^{1/2}\left( 1+\displaystyle\frac{k}{m}\right) \left( \displaystyle\frac{k}{%
m+k}\right) ^{\psi }}& =o_{P}\left( 1\right) , \\
\max_{1\leq k<\infty }\frac{\left\vert R_{m,3}\left( k\right) \right\vert }{%
m^{1/2}\left( 1+\displaystyle\frac{k}{m}\right) \left( \displaystyle\frac{k}{%
m+k}\right) ^{\psi }}& =o_{P}\left( 1\right) .
\end{align*}%
Using Lemma \ref{lemma3}, it follows that%
\begin{align}
& \max_{1\leq k<\infty }\frac{\left\vert \displaystyle\frac{k}{m}%
\sum_{i=2}^{m}\left( \displaystyle\frac{\epsilon _{i,1}\overline{y}_{i-1}^{2}%
}{1+\overline{y}_{i-1}^{2}}+\displaystyle\frac{\epsilon _{i,2}\overline{y}%
_{i-1}}{1+\overline{y}_{i-1}^{2}}\right) -\displaystyle\frac{k}{m}\left(
a_{1}\sigma _{1}^{2}+a_{2}\sigma _{2}^{2}\right) ^{1/2}W_{m,1}\left(
m\right) \right\vert }{m^{1/2}\left( 1+\displaystyle\frac{k}{m}\right)
\left( \displaystyle\frac{k}{m+k}\right) ^{\psi }}  \label{sip-1} \\
=& O_{P}\left( 1\right) \max_{1\leq k<\infty }\frac{k}{m}\frac{m^{1/2-\zeta }%
}{m^{1/2}\left( 1+\displaystyle\frac{k}{m}\right) \left( \displaystyle\frac{k%
}{m+k}\right) ^{\psi }}  \notag \\
=& O_{P}\left( m^{-\zeta }\right) \max_{1\leq k<\infty }\left( \frac{k}{m+k}%
\right) ^{1-\psi }=o_{P}\left( 1\right) .  \notag
\end{align}%
By the same token, we can show that%
\begin{align}
& \max_{1\leq k<\infty }\frac{\left\vert \displaystyle\sum_{i=m+1}^{m+k}%
\left( \epsilon _{i,1}\left( \displaystyle\frac{\overline{y}_{i-1}^{2}}{1+%
\overline{y}_{i-1}^{2}}-\displaystyle\frac{y_{i-1}^{2}}{1+y_{i-1}^{2}}%
\right) \right) \right\vert }{m^{1/2}\left( 1+\displaystyle\frac{k}{m}%
\right) \left( \displaystyle\frac{k}{m+k}\right) ^{\psi }}  \label{sip-2} \\
=& O_{P}\left( 1\right) \max_{1\leq k<\infty }\frac{1}{m^{1/2}\left( 1+%
\displaystyle\frac{k}{m}\right) \left( \displaystyle\frac{k}{m+k}\right)
^{\psi }}  \notag \\
=& O_{P}\left( m^{-1/2}\right) \max_{1\leq k<\infty }\frac{m}{m+k}\left( 
\frac{k}{m+k}\right) ^{-\psi }=O_{P}\left( m^{-1/2+\psi }\right)
=o_{P}\left( 1\right) ,  \notag
\end{align}%
and 
\begin{align}
& \max_{1\leq k<\infty }\frac{\left\vert \displaystyle\sum_{i=m+1}^{m+k}%
\left( \epsilon _{i,2}\left( \displaystyle\frac{\overline{y}_{i-1}}{1+%
\overline{y}_{i-1}^{2}}-\displaystyle\frac{y_{i-1}}{1+y_{i-1}^{2}}\right)
\right) \right\vert }{m^{1/2}\left( 1+\displaystyle\frac{k}{m}\right) \left( %
\displaystyle\frac{k}{m+k}\right) ^{\psi }}  \label{sip-22} \\
=& O_{P}\left( 1\right) \max_{1\leq k<\infty }\frac{1}{m^{1/2}\left( 1+%
\displaystyle\frac{k}{m}\right) \left( \displaystyle\frac{k}{m+k}\right)
^{\psi }}  \notag \\
=& O_{P}\left( m^{-1/2}\right) \max_{1\leq k<\infty }\frac{m}{m+k}\left( 
\frac{k}{m+k}\right) ^{-\psi }=O_{P}\left( m^{-1/2+\psi }\right)
=o_{P}\left( 1\right) ,  \notag
\end{align}%
having used Lemma \ref{lemma2}. Similarly, by Lemma \ref{lemma3}%
\begin{align}
& \max_{1\leq k<\infty }\frac{\left\vert \displaystyle\sum_{i=m+1}^{m+k}%
\left( \displaystyle\frac{\epsilon _{i,1}\overline{y}_{i-1}^{2}+\epsilon
_{i,2}\overline{y}_{i-1}}{1+\overline{y}_{i-1}^{2}}\right) -\left(
a_{1}\sigma _{1}^{2}+a_{2}\sigma _{2}^{2}\right) ^{1/2}W_{m,2}\left(
k\right) \right\vert }{m^{1/2}\left( 1+\displaystyle\frac{k}{m}\right)
\left( \displaystyle\frac{k}{m+k}\right) ^{\psi }}  \label{sip-3} \\
=& O_{P}\left( 1\right) \max_{1\leq k<\infty }\frac{k^{1/2-\zeta }}{%
m^{1/2}\left( 1+\displaystyle\frac{k}{m}\right) \left( \displaystyle\frac{k}{%
m+k}\right) ^{\psi }}=o_{P}\left( 1\right) .  \notag
\end{align}%
Putting all together, (\ref{str-approx-1}) has been shown for the stationary
case. Considering now the case $E\log \left\vert \beta _{0}+\epsilon
_{0,1}\right\vert \geq 0$, write%
\begin{align*}
& \left( \beta _{0}-\widehat{\beta }_{m}\right) \sum_{i=m+1}^{m+k}\frac{%
y_{i-1}^{2}}{1+y_{i-1}^{2}}+\frac{k}{m}\sum_{i=2}^{m}\epsilon _{i,1} \\
=& \left( \beta _{0}-\widehat{\beta }_{m}\right) \left( \sum_{i=m+1}^{m+k}%
\frac{y_{i-1}^{2}}{1+y_{i-1}^{2}}-k\right) \\
& -k\left( \left( \sum_{i=2}^{m}\frac{y_{i-1}^{2}}{1+y_{i-1}^{2}}\right)
^{-1}-\frac{1}{m}\right) \left( \sum_{i=2}^{m}\left( \frac{\epsilon
_{i,1}y_{i-1}^{2}}{1+y_{i-1}^{2}}+\frac{\epsilon _{i,2}y_{i-1}}{1+y_{i-1}^{2}%
}\right) \right) \\
& -\frac{k}{m}\left( \sum_{i=2}^{m}\frac{\epsilon _{i,1}}{1+y_{i-1}^{2}}%
+\sum_{i=2}^{m}\frac{\epsilon _{i,2}y_{i-1}}{1+y_{i-1}^{2}}\right) \\
=& R_{m,4}\left( k\right) +R_{m,5}\left( k\right) +R_{m,6}\left( k\right) .
\end{align*}%
The arguments are similar to the above. Indeed, using Lemmas \ref{lemma1}-%
\ref{lemma3}%
\begin{align}
& \max_{1\leq k<\infty }\frac{\left\vert R_{m,4}\left( k\right) \right\vert 
}{m^{1/2}\left( 1+\displaystyle\frac{k}{m}\right) \left( \displaystyle\frac{k%
}{m+k}\right) ^{\psi }}  \label{sip-4} \\
=& O_{P}\left( m^{-1/2}\right) \max_{1\leq k<\infty }\frac{km^{-\zeta }}{%
m^{1/2}\left( 1+\displaystyle\frac{k}{m}\right) \left( \displaystyle\frac{k}{%
m+k}\right) ^{\psi }}  \notag \\
=& O_{P}\left( m^{-\zeta }\right) =o_{P}\left( 1\right) .  \notag
\end{align}%
Similarly it can be shown that%
\begin{align}
& \max_{1\leq k<\infty }\frac{\left\vert R_{m,5}\left( k\right) \right\vert 
}{m^{1/2}\left( 1+\displaystyle\frac{k}{m}\right) \left( \displaystyle\frac{k%
}{m+k}\right) ^{\psi }}=O_{P}\left( m^{-\zeta }\right) =o_{P}\left( 1\right)
,  \label{sip-5} \\
& \max_{1\leq k<\infty }\frac{\left\vert R_{m,6}\left( k\right) \right\vert 
}{m^{1/2}\left( 1+\displaystyle\frac{k}{m}\right) \left( \displaystyle\frac{k%
}{m+k}\right) ^{\psi }}=O_{P}\left( m^{-\zeta +\eta }\right) =o_{P}\left(
1\right) ,  \label{sip-6}
\end{align}%
having chosen $\eta <\zeta $ in the last equation. Lemma \ref{lemma3} yields%
\begin{equation}
\max_{1\leq k<\infty }\frac{\left\vert \displaystyle\frac{k}{m}%
\sum_{i=2}^{m}\epsilon _{i,1}-\sigma _{1}\displaystyle\frac{k}{m}%
W_{m,1}\left( m\right) \right\vert }{m^{1/2}\left( 1+\displaystyle\frac{k}{m}%
\right) \left( \displaystyle\frac{k}{m+k}\right) ^{\psi }}=O_{P}\left(
m^{\zeta -1/2}\right) =o_{P}\left( 1\right) ,  \label{sip-7}
\end{equation}%
and 
\begin{equation}
\max_{1\leq k<\infty }\frac{\left\vert \displaystyle\sum_{i=m+1}^{m+k}%
\epsilon _{i,1}-\sigma _{1}\displaystyle W_{m,2}\left( k\right) \right\vert 
}{m^{1/2}\left( 1+\displaystyle\frac{k}{m}\right) \left( \displaystyle\frac{k%
}{m+k}\right) ^{\psi }}=O_{P}\left( m^{\zeta -1/2}\right) =o_{P}\left(
1\right) .  \label{sip-8}
\end{equation}%
similarly, by Lemma \ref{lemma2}%
\begin{align}
& \max_{1\leq k<\infty }\frac{\left\vert \displaystyle\sum_{i=m+1}^{m+k}%
\displaystyle\frac{\epsilon _{i,1}}{1+y_{i-1}^{2}}\right\vert +\left\vert %
\displaystyle\sum_{i=m+1}^{m+k}\displaystyle\frac{\epsilon _{i,2}y_{i-1}}{%
1+y_{i-1}^{2}}\right\vert }{m^{1/2}\left( 1+\displaystyle\frac{k}{m}\right)
\left( \displaystyle\frac{k}{m+k}\right) ^{\psi }}  \label{sip-9} \\
=& O_{P}\left( m^{-\zeta }\right) \max_{1\leq k<\infty }\frac{k^{1/2+\eta }}{%
m^{1/2}\left( 1+\displaystyle\frac{k}{m}\right) \left( \displaystyle\frac{k}{%
m+k}\right) ^{\psi }}  \notag \\
=& O_{P}\left( m^{1/2-\zeta }\right) \max_{1\leq k<\infty }\frac{k^{1/2+\eta
-\psi }}{\left( m+k\right) ^{1/2+\eta -\psi +1/2-\eta }}=O_{P}\left( m^{\eta
-\zeta }\right) =o_{P}\left( 1\right) ,  \notag
\end{align}%
having chosen $\eta <\zeta $. Putting all together, (\ref{str-approx-1}) now
follows for the case $E\log \left\vert \beta _{0}+\epsilon _{0,1}\right\vert
\geq 0$.

Let now 
\begin{equation}
\Gamma \left( t\right) =\left\{ 
\begin{array}{ll}
\mathscr{s}^{1/2}\left\vert W_{2}\left( t\right) -tW_{1}\left( 1\right)
\right\vert & \text{if }E\log \left\vert \beta _{0}+\epsilon
_{0,1}\right\vert <0\text{\ holds} \\ 
\sigma _{1}\left\vert W_{2}\left( t\right) -tW_{1}\left( 1\right) \right\vert
& \text{if }E\log \left\vert \beta _{0}+\epsilon _{0,1}\right\vert \geq 0%
\text{\ holds}%
\end{array}%
\right. ,  \label{gamma-t}
\end{equation}%
where $\left\{ W_{1}\left( k\right) ,1\leq k\leq m\right\} $ and $\left\{
W_{2}\left( k\right) ,1\leq k<\infty \right\} $ are two independent standard
Wiener processes. Using the fact that the distribution of $W_{m,1}\left(
\cdot \right) $ and $W_{m,2}\left( \cdot \right) $ does not depend on $m$,
and exploiting the scale transformation and the continuity of the Wiener
process%
\begin{align*}
& \max_{1\leq k<\infty }\frac{\Gamma _{m}\left( k\right) }{m^{1/2}\left( 1+%
\displaystyle\frac{k}{m}\right) \left( \displaystyle\frac{k}{m+k}\right)
^{\psi }} \\
& \overset{\mathcal{D}}{=}\max_{1\leq k<\infty }\frac{\Gamma \left( %
\displaystyle\frac{k}{m}\right) }{\left( 1+\displaystyle\frac{k}{m}\right)
\left( \frac{\displaystyle k/m}{\displaystyle1\displaystyle+\displaystyle k/m%
}\right) ^{\psi }} \\
& \overset{\mathcal{D}}{=}\sup_{0<t<\infty }\frac{\Gamma \left( t\right) }{%
\left( 1+t\right) \left( \displaystyle\frac{t}{1+t}\right) ^{\psi }}.
\end{align*}%
Computing the covariance kernel, it can be easily verified that%
\begin{equation}
\left\{ \frac{W_{2}\left( t\right) -tW_{1}\left( 1\right) }{1+t},t\geq
0\right\} \overset{\mathcal{D}}{=}\left\{ W\left( \frac{t}{1+t}\right)
,t\geq 0\right\} ,  \label{revesz}
\end{equation}%
where $W\left( \cdot \right) $ denotes a standard Wiener. Hence%
\begin{equation*}
\max_{1\leq k<\infty }\frac{\Gamma _{m}\left( k\right) }{m^{1/2}\left( 1+%
\displaystyle\frac{k}{m}\right) \left( \displaystyle\frac{k}{m+k}\right)
^{\psi }}\overset{\mathcal{D}}{=}\sup_{0<t<\infty }\frac{\mathscr{s}%
\left\vert W\left( \displaystyle\frac{t}{1+t}\right) \right\vert }{\left( %
\displaystyle\frac{t}{1+t}\right) ^{\psi }},
\end{equation*}%
whence Theorem \ref{th-1} follows.
\end{proof}

\begin{proof}[Proof of Theorem \protect\ref{th-2}]
Using exactly the same arguments as in the proof of Theorem \ref{th-1}, it
follows that%
\begin{equation}
\max_{1\leq k\leq m^{\ast }}\frac{\left\vert Z_{m}\left( k\right) -\Gamma
_{m}\left( k\right) \right\vert }{m^{1/2}\left( 1+\displaystyle\frac{k}{m}%
\right) \left( \displaystyle\frac{k}{m+k}\right) ^{\psi }}=o_{P}\left(
1\right) .  \label{str-approx-2}
\end{equation}%
Further, it is immediate to see that%
\begin{align*}
& \max_{1\leq k\leq m^{\ast }}\frac{\Gamma _{m}\left( k\right) }{%
m^{1/2}\left( 1+\displaystyle\frac{k}{m}\right) \left( \displaystyle\frac{k}{%
m+k}\right) ^{\psi }} \\
\overset{\mathcal{D}}{=}& \max_{1\leq k\leq m^{\ast }}\frac{\Gamma \left( %
\displaystyle\frac{k}{m}\right) }{\left( 1+\displaystyle\frac{k}{m}\right)
\left( \frac{\displaystyle k/m}{\displaystyle1\displaystyle+\displaystyle k/m%
}\right) ^{\psi }} \\
\overset{\mathcal{D}}{=}& \sup_{0<t<m_{0}}\frac{\Gamma \left( t\right) }{%
\left( 1+t\right) \left( \displaystyle\frac{t}{1+t}\right) ^{\psi }}.
\end{align*}%
Using (\ref{revesz}), it finally follows%
\begin{equation*}
\sup_{0<t<m_{0}}\frac{\Gamma \left( t\right) }{\left( 1+t\right) \left( %
\displaystyle\frac{t}{1+t}\right) ^{\psi }}\overset{\mathcal{D}}{=}%
\sup_{0<t<m_{0}}\frac{\mathscr{s}\left\vert W\left( \displaystyle\frac{t}{1+t%
}\right) \right\vert }{\left( \displaystyle\frac{t}{1+t}\right) ^{\psi }},
\end{equation*}%
whence the desired result.
\end{proof}

\begin{proof}[Proof of Theorem \protect\ref{de}]
We begin by noting that, repeating \textit{verbatim} the proof of (\ref%
{str-approx-1}) with $\psi =1/2$, it can be shown that%
\begin{equation}
\max_{1\leq k\leq m^{\ast }}\frac{\left\vert Z_{m}\left( k\right) -\Gamma
_{m}\left( k\right) \right\vert }{m^{1/2}\left( 1+\displaystyle\frac{k}{m}%
\right) \left( \displaystyle\frac{k}{m+k}\right) ^{1/2}}=O_{P}\left(
1\right) .  \label{str-approx-3}
\end{equation}%
Recall also that the distribution of $\Gamma _{m}\left( k\right) $ does not
depend on $m$, so that%
\begin{equation*}
\left\{ m^{-1/2}\Gamma _{m}\left( mt\right) ,t\geq 0\right\} \overset{%
\mathcal{D}}{=}\left\{ \Gamma \left( t\right) ,t\geq 0\right\} ,
\end{equation*}%
where $\Gamma \left( t\right) $ is defined in (\ref{gamma-t}); further
recall that, by (\ref{revesz})%
\begin{equation*}
\left\{ \frac{1}{\mathscr{s}}\frac{\Gamma \left( t\right) }{1+t},t\geq
0\right\} \overset{\mathcal{D}}{=}\left\{ \left\vert W\left( \frac{t}{1+t}%
\right) \right\vert ,t\geq 0\right\} .
\end{equation*}

We now report some results concerning $\Gamma _{m}\left( k\right) $. We
begin by studying%
\begin{align}
& \frac{1}{\mathscr{s}}\max_{1\leq k\leq \log m}\frac{\Gamma _{m}\left(
k\right) }{m^{1/2}\left( 1+\displaystyle\frac{k}{m}\right) \left( %
\displaystyle\frac{k}{m+k}\right) ^{1/2}}  \label{de-11} \\
\overset{\mathcal{D}}{=}& \frac{1}{\mathscr{s}}\max_{1/m\leq t\leq \left(
\log m\right) /m}\frac{\Gamma \left( t\right) }{\left( 1+t\right) \left( %
\displaystyle\frac{t}{1+t}\right) ^{1/2}}\overset{\mathcal{D}}{=}%
\max_{1/m\leq t\leq \left( \log m\right) /m}\frac{\left\vert W\left( %
\displaystyle\frac{t}{1+t}\right) \right\vert }{\left( \displaystyle\frac{t}{%
1+t}\right) ^{1/2}}  \notag \\
\overset{\mathcal{D}}{=}& \max_{1/\left( m+1\right) \leq u\leq \left( \log
m\right) /\left( m+\log m\right) }\frac{\left\vert W\left( u\right)
\right\vert }{u^{1/2}}\overset{\mathcal{D}}{=}\max_{1\leq v\leq \log m}\frac{%
\left\vert W\left( v\right) \right\vert }{v^{1/2}}  \notag \\
=& O_{P}\left( \sqrt{\log \log \log m}\right) ,  \notag
\end{align}%
where we have repeatedly used the scale transformation of the Wiener
process, and the Law of the Iterated Logarithm in the last passage. Similarly%
\begin{align}
& \frac{1}{\mathscr{s}}\max_{m/\log m\leq k\leq m^{\ast }}\frac{\Gamma
_{m}\left( k\right) }{m^{1/2}\left( 1+\displaystyle\frac{k}{m}\right) \left( %
\displaystyle\frac{k}{m+k}\right) ^{1/2}}  \label{de-22} \\
\overset{\mathcal{D}}{=}& \frac{1}{\mathscr{s}}\max_{1/\log m\leq t\leq
m^{\ast }/\log m}\frac{\Gamma \left( t\right) }{\left( 1+t\right) \left( %
\displaystyle\frac{t}{1+t}\right) ^{1/2}}\overset{\mathcal{D}}{=}%
\max_{1/\log m\leq t\leq m^{\ast }/\log m}\frac{\left\vert W\left( %
\displaystyle\frac{t}{1+t}\right) \right\vert }{\left( \displaystyle\frac{t}{%
1+t}\right) ^{1/2}}  \notag \\
\overset{\mathcal{D}}{=}& \max_{1/\left( \log m+1\right) \leq u\leq m^{\ast
}/\left( m+m^{\ast }\right) }\frac{\left\vert W\left( u\right) \right\vert }{%
u^{1/2}}\overset{\mathcal{D}}{=}\max_{1\leq v\leq m^{\ast }\left( \log
m+1\right) /\left( m+m^{\ast }\right) }\frac{\left\vert W\left( v\right)
\right\vert }{v^{1/2}}  \notag \\
=& O_{P}\left( \sqrt{\log \log \frac{m^{\ast }\left( \log m+1\right) }{%
m+m^{\ast }}}\right) =O_{P}\left( \sqrt{\log \log \log m}\right) .  \notag
\end{align}%
Finally, by the same token, note that%
\begin{align*}
& \frac{1}{\mathscr{s}}\max_{\log m\leq k\leq m/\log m}\frac{\Gamma
_{m}\left( k\right) }{m^{1/2}\left( 1+\displaystyle\frac{k}{m}\right) \left( %
\displaystyle\frac{k}{m+k}\right) ^{1/2}} \\
\overset{\mathcal{D}}{=}& \frac{1}{\mathscr{s}}\max_{\log m\leq k\leq m/\log
m}\frac{\Gamma _{m}\left( \displaystyle\frac{k}{m}\right) }{\left( 1+%
\displaystyle\frac{k}{m}\right) \left( \displaystyle\frac{k}{m+k}\right)
^{1/2}} \\
\overset{\mathcal{D}}{=}& \frac{1}{\mathscr{s}}\max_{\left( \log m\right)
/m\leq t\leq 1/\log m}\frac{\Gamma \left( t\right) }{\left( 1+t\right)
\left( \displaystyle\frac{t}{1+t}\right) ^{1/2}}\overset{\mathcal{D}}{=}%
\max_{\left( \log m\right) /m\leq t\leq 1/\log m}\frac{\left\vert W\left( %
\displaystyle\frac{t}{1+t}\right) \right\vert }{\left( \displaystyle\frac{t}{%
1+t}\right) ^{1/2}} \\
\overset{\mathcal{D}}{=}& \max_{\left( \log m\right) /\left( m+\log m\right)
\leq t\leq 1/\left( \log m+1\right) }\frac{\left\vert W\left( u\right)
\right\vert }{u^{1/2}}\overset{\mathcal{D}}{=}\max_{1\leq v\leq \left(
m+\log m\right) /\left( \left( 1+\log m\right) \log m\right) }\frac{%
\left\vert W\left( v\right) \right\vert }{v^{1/2}},
\end{align*}%
and using the Law of the Iterated Logarithm it follows that%
\begin{equation}
\lim_{m\rightarrow \infty }\frac{1}{\sqrt{2\log \log \displaystyle\frac{%
m+\log m}{\left( 1+\log m\right) \log m}}}\max_{1\leq v\leq \left( m+\log
m\right) /\left( \left( 1+\log m\right) \log m\right) }\frac{\left\vert
W\left( v\right) \right\vert }{v^{1/2}}=1\text{ \ \ a.s.}  \label{de-3}
\end{equation}%
Given that, as $m\rightarrow \infty $%
\begin{equation*}
\frac{\log \log \displaystyle\frac{m+\log m}{\left( 1+\log m\right) \log m}}{%
\log \log m}=1,
\end{equation*}%
it follows that the term in (\ref{de-3}) dominates, and therefore, putting
all together%
\begin{equation*}
\lim_{m\rightarrow \infty }P\left\{ \max_{1\leq k\leq m^{\ast }}\frac{%
Z_{m}\left( k\right) }{m^{1/2}\left( 1+\displaystyle\frac{k}{m}\right)
\left( \displaystyle\frac{k}{m+k}\right) ^{1/2}}=\max_{1\leq v\leq \left(
m+\log m\right) /\left( \left( 1+\log m\right) \log m\right) }\frac{%
\left\vert W\left( v\right) \right\vert }{v^{1/2}}\right\} =1.
\end{equation*}

Using the Darling-Erd\H{o}s theorem (\citealp{darling1956limit}), it follows
that%
\begin{align*}
& \lim_{m\rightarrow \infty }P\left\{ \gamma \left( \log \frac{m+\log m}{%
\left( 1+\log m\right) \log m}\right) \max_{1\leq v\leq \left( m+\log
m\right) /\left( \left( 1+\log m\right) \log m\right) }\frac{\left\vert
W\left( v\right) \right\vert }{v^{1/2}}\right. \\
\leq & \left. x+\delta \left( \log \frac{m+\log m}{\left( 1+\log m\right)
\log m}\right) \right\} =\exp \left( -\exp \left( -x\right) \right) .
\end{align*}%
The desired result follows upon noting that, by elementary arguments%
\begin{align*}
\left\vert \gamma \left( \frac{m+\log m}{\left( 1+\log m\right) \log m}%
\right) -\gamma \left( m\right) \right\vert & \rightarrow 0, \\
\left\vert \delta \left( \frac{m+\log m}{\left( 1+\log m\right) \log m}%
\right) -\delta \left( m\right) \right\vert & \rightarrow 0,
\end{align*}%
as $m\rightarrow \infty $.
\end{proof}

\begin{proof}[Proof of Theorem \protect\ref{monitor-short}]
We start with the proof of (\ref{weighted-bar}), considering the case $E\log
\left\vert \beta _{0}+\epsilon _{0,1}\right\vert <0$. As in the proof of
Theorem \ref{th-1}, it suffices to show that%
\begin{equation*}
\max_{1\leq k\leq m^{\ast }}\frac{Z_{m}^{\ast }\left( k\right) }{\overline{g}%
_{m,\psi }\left( k\right) }\overset{\mathcal{D}}{\rightarrow }\frac{1}{%
\overline{c}_{\alpha ,\psi }}\sup_{0\leq t\leq 1}\frac{\left\vert W\left(
t\right) \right\vert }{t^{\psi }},
\end{equation*}%
where%
\begin{align*}
Z_{m}^{\ast }\left( k\right) =& \left\vert \left( \sum_{i=2}^{m}\frac{%
\overline{y}_{i-1}^{2}}{1+\overline{y}_{i-1}^{2}}\right) ^{-1}\left(
\sum_{i=2}^{m}\frac{\epsilon _{i,1}\overline{y}_{i-1}^{2}}{1+\overline{y}%
_{i-1}^{2}}+\sum_{i=2}^{m}\frac{\epsilon _{i,2}\overline{y}_{i-1}}{1+%
\overline{y}_{i-1}^{2}}\right) \left( \sum_{i=m+1}^{m+k}\frac{\overline{y}%
_{i-1}^{2}}{1+\overline{y}_{i-1}^{2}}\right) \right. \\
& +\left. \sum_{i=m+1}^{m+k}\frac{\epsilon _{i,1}\overline{y}_{i-1}^{2}}{1+%
\overline{y}_{i-1}^{2}}+\sum_{i=m+1}^{m+k}\frac{\epsilon _{i,2}\overline{y}%
_{i-1}}{1+\overline{y}_{i-1}^{2}}\right\vert ,
\end{align*}%
where recall that $\overline{y}_{i}$\ is the stationary solution of (\ref%
{rca}). Using the proofs of Lemmas \ref{lemma1} and \ref{lemma2}, it follows
by routine calculations that%
\begin{equation*}
\left\vert \left( \sum_{i=2}^{m}\frac{\overline{y}_{i-1}^{2}}{1+\overline{y}%
_{i-1}^{2}}\right) ^{-1}-\frac{1}{ma_{3}}\right\vert =O_{P}\left( m\right) ,
\end{equation*}%
and 
\begin{equation*}
\left\vert \sum_{i=2}^{m}\frac{\epsilon _{i,1}\overline{y}_{i-1}^{2}}{1+%
\overline{y}_{i-1}^{2}}+\sum_{i=2}^{m}\frac{\epsilon _{i,2}\overline{y}_{i-1}%
}{1+\overline{y}_{i-1}^{2}}\right\vert =O_{P}\left( m^{1/2}\right) ,
\end{equation*}%
where $a_{3}$ is defined in (\ref{a3}). The approximations in \citet{aue2014}
yield%
\begin{equation*}
\left( m^{\ast }\right) ^{-1/2+\psi }\max_{1\leq k\leq m^{\ast }}\frac{1}{%
k^{\psi }}\left\vert \sum_{i=m+1}^{m+k}\left( \frac{\overline{y}_{i-1}^{2}}{%
1+\overline{y}_{i-1}^{2}}-a_{3}\right) \right\vert \overset{\mathcal{D}}{%
\rightarrow }a_{4}\sup_{0\leq t\leq 1}\frac{\left\vert W\left( t\right)
\right\vert }{t^{\psi }},
\end{equation*}%
where%
\begin{equation*}
a_{4}=\sum_{h=-\infty }^{\infty }E\left[ \left( \frac{\overline{y}_{0}^{2}}{%
1+\overline{y}_{0}^{2}}-a_{3}\right) \left( \frac{\overline{y}_{h}^{2}}{1+%
\overline{y}_{h}^{2}}-a_{3}\right) \right] .
\end{equation*}%
Thus we conclude%
\begin{align*}
& \left( m^{\ast }\right) ^{-1/2+\psi }\left( \sum_{i=2}^{m}\frac{\overline{y%
}_{i-1}^{2}}{1+\overline{y}_{i-1}^{2}}\right) ^{-1}\left\vert \sum_{i=2}^{m}%
\frac{\epsilon _{i,1}\overline{y}_{i-1}^{2}}{1+\overline{y}_{i-1}^{2}}%
+\sum_{i=2}^{m}\frac{\epsilon _{i,2}\overline{y}_{i-1}}{1+\overline{y}%
_{i-1}^{2}}\right\vert \max_{1\leq k\leq m^{\ast }}\frac{1}{k^{\psi }}%
\sum_{i=m+1}^{m+k}\frac{\overline{y}_{i-1}^{2}}{1+\overline{y}_{i-1}^{2}} \\
=& O_{P}\left( m^{-1/2}\right) \left( m^{\ast }\right) ^{-1/2+\psi
}\max_{1\leq k\leq m^{\ast }}\frac{1}{k^{\psi }}\sum_{i=m+1}^{m+k}\frac{%
\overline{y}_{i-1}^{2}}{1+\overline{y}_{i-1}^{2}} \\
=& O_{P}\left( m^{-1/2}\right) \left( m^{\ast }\right) ^{-1/2+\psi
}k^{1-\psi } \\
& +O_{P}\left( m^{-1/2}\right) \left( m^{\ast }\right) ^{-1/2+\psi
}\max_{1\leq k\leq m^{\ast }}\frac{1}{k^{\psi }}\sum_{i=m+1}^{m+k}\left( 
\frac{\overline{y}_{i-1}^{2}}{1+\overline{y}_{i-1}^{2}}-a_{3}\right) \\
=& O_{P}\left( \left( \frac{m^{\ast }}{m}\right) ^{1/2}+m^{-1/2}\right)
=o_{P}\left( 1\right) .
\end{align*}

Using \citet{aue2014}, we can define two independent Wiener processes $%
\left\{ W_{m,1}\left( x\right) ,x\geq 0\right\} $ and $\left\{ W_{m,2}\left(
x\right) ,x\geq 0\right\} $ such that%
\begin{align*}
& \max_{1\leq k\leq m^{\ast }}\frac{1}{\left( m^{\ast }\right) ^{1/2-\psi
}k^{\psi }}\left\vert \sum_{i=m+1}^{m+k}\frac{\epsilon _{i,1}\overline{y}%
_{i-1}^{2}}{1+\overline{y}_{i-1}^{2}}+\sum_{i=m+1}^{m+k}\frac{\epsilon _{i,2}%
\overline{y}_{i-1}}{1+\overline{y}_{i-1}^{2}}-\left( a_{1}^{1/2}\sigma
_{1}W_{m,1}\left( k\right) +a_{2}^{1/2}\sigma _{2}W_{m,2}\left( k\right)
\right) \right\vert \\
=& o_{P}\left( 1\right) .
\end{align*}%
By the scale transformation and the continuity of the Wiener process, it
follows that%
\begin{eqnarray*}
&&\max_{1\leq k\leq m^{\ast }}\frac{1}{\left( m^{\ast }\right) ^{1/2-\psi
}k^{\psi }}\left\vert a_{1}^{1/2}\sigma _{1}W_{m,1}\left( k\right)
+a_{2}^{1/2}\sigma _{2}W_{m,2}\left( k\right) \right\vert \\
&&\overset{\mathcal{D}}{=}\max_{1/m^{\ast }\leq k/m^{\ast }\leq 1}\left( 
\frac{k}{m^{\ast }}\right) ^{-\psi }\left\vert a_{1}^{1/2}\sigma
_{1}W_{m,1}\left( \frac{k}{m^{\ast }}\right) +a_{2}^{1/2}\sigma
_{2}W_{m,2}\left( \frac{k}{m^{\ast }}\right) \right\vert \\
&&\overset{\mathcal{D}}{\rightarrow }\sup_{0\leq u\leq 1}\frac{1}{t^{\psi }}%
\left\vert a_{1}^{1/2}\sigma _{1}W_{1}\left( t\right) +a_{2}^{1/2}\sigma
_{2}W_{2}\left( t\right) \right\vert ,
\end{eqnarray*}%
where $W_{1}$ and $W_{2}$ are independent Wiener processes. Since%
\begin{equation*}
\left\{ a_{1}^{1/2}\sigma _{1}W_{1}\left( t\right) +a_{2}^{1/2}\sigma
_{2}W_{2}\left( t\right) ,t\geq 0\right\} \overset{\mathcal{D}}{=}\left\{ %
\mathscr{s}W\left( t\right) ,t\geq 0\right\} ,
\end{equation*}%
where $\left\{ W\left( t\right) ,t\geq 0\right\} $ is a Wiener process, the
result follows.

Next we consider the case $\psi =1/2$, and we establish a Darling-Erd\H{o}s
limiting law for $Z_{m}^{\ast }\left( k\right) $. Firstly note that%
\begin{align*}
&\max_{1\leq k\leq m^{\ast }}\frac{1}{k^{1/2}}\left( \sum_{i=2}^{m}\frac{%
\overline{y}_{i-1}^{2}}{1+\overline{y}_{i-1}^{2}}\right) ^{-1}\left(
\sum_{i=2}^{m}\frac{\epsilon _{i,1}\overline{y}_{i-1}^{2}}{1+\overline{y}%
_{i-1}^{2}}+\sum_{i=2}^{m}\frac{\epsilon _{i,2}\overline{y}_{i-1}}{1+%
\overline{y}_{i-1}^{2}}\right) \sum_{i=m+1}^{m+k}\frac{\overline{y}_{i-1}^{2}%
}{1+\overline{y}_{i-1}^{2}} \\
=&O_{P}\left( \left( \frac{m^{\ast }}{m}\right) ^{1/2}\right) ,
\end{align*}%
\begin{equation*}
\frac{1}{\left( \log \log m^{\ast }\right) ^{1/2}}\max_{1\leq k\leq m^{\ast
}}\frac{1}{k^{1/2}}\left\vert \sum_{i=m+1}^{m+k}\frac{\epsilon _{i,1}%
\overline{y}_{i-1}^{2}}{1+\overline{y}_{i-1}^{2}}+\sum_{i=m+1}^{m+k}\frac{%
\epsilon _{i,2}\overline{y}_{i-1}}{1+\overline{y}_{i-1}^{2}}\right\vert
=O_{P}\left( 1\right) ,
\end{equation*}%
and

\begin{equation*}
\frac{1}{\left( \log \log m^{\ast }\right) ^{1/2}}\max_{\log m^{\ast }\leq
k\leq m^{\ast }/\log m^{\ast }}\frac{1}{k^{1/2}}\left\vert \sum_{i=m+1}^{m+k}%
\frac{\epsilon _{i,1}\overline{y}_{i-1}^{2}}{1+\overline{y}_{i-1}^{2}}%
+\sum_{i=m+1}^{m+k}\frac{\epsilon _{i,2}\overline{y}_{i-1}}{1+\overline{y}%
_{i-1}^{2}}\right\vert \overset{\mathcal{P}}{\rightarrow }c>0.
\end{equation*}%
The above entails that we need to consider the maximum of $Z_{m}^{\ast
}\left( k\right) $ over $\log m^{\ast }\leq k\leq m^{\ast }/\log m^{\ast }$.
We have%
\begin{align*}
&\max_{\log m^{\ast }\leq k\leq m^{\ast }/\log m^{\ast }}\frac{1}{k^{1/2}}%
\left( \sum_{i=2}^{m}\frac{\overline{y}_{i-1}^{2}}{1+\overline{y}_{i-1}^{2}}%
\right) ^{-1}\left( \sum_{i=2}^{m}\frac{\epsilon _{i,1}\overline{y}_{i-1}^{2}%
}{1+\overline{y}_{i-1}^{2}}+\sum_{i=2}^{m}\frac{\epsilon _{i,2}\overline{y}%
_{i-1}}{1+\overline{y}_{i-1}^{2}}\right) \sum_{i=m+1}^{m+k}\frac{\overline{y}%
_{i-1}^{2}}{1+\overline{y}_{i-1}^{2}} \\
=&O_{P}\left( \left( \frac{m^{\ast }}{m}\right) ^{1/2}\left( \log m^{\ast
}\right) ^{-1/2}\right) .
\end{align*}%
By the Law of the Iterated Logarithm, it follows that%
\begin{equation*}
\max_{1\leq k\leq \log m^{\ast }}\frac{1}{k^{1/2}}\left\vert
\sum_{i=m+1}^{m+k}\frac{\epsilon _{i,1}\overline{y}_{i-1}^{2}}{1+\overline{y}%
_{i-1}^{2}}+\sum_{i=m+1}^{m+k}\frac{\epsilon _{i,2}\overline{y}_{i-1}}{1+%
\overline{y}_{i-1}^{2}}\right\vert =O_{P}\left( \left( \log \log \log
m^{\ast }\right) ^{1/2}\right) .
\end{equation*}%
Hence we need to show that, for all $-\infty <x<\infty $%
\begin{align*}
&\lim_{m\rightarrow \infty }P\left\{ \gamma \left( \log m^{\ast }\right) 
\frac{1}{\mathscr{s}}\max_{1\leq k\leq m^{\ast }}\frac{1}{k^{1/2}}\left\vert
\sum_{i=m+1}^{m+k}\frac{\epsilon _{i,1}\overline{y}_{i-1}^{2}}{1+\overline{y}%
_{i-1}^{2}}+\sum_{i=m+1}^{m+k}\frac{\epsilon _{i,2}\overline{y}_{i-1}}{1+%
\overline{y}_{i-1}^{2}}\right\vert \leq x+\delta \left( \log m^{\ast
}\right) \right\} \\
=&\exp \left( -\exp \left( -x\right) \right) ,
\end{align*}%
which follows from the same logic as the proof of Theorem \ref{de}.

We now turn to the case $E\log \left\vert \beta _{0}+\epsilon
_{0,1}\right\vert \geq 0$. Following the proof of Lemma \ref{lemma1}, it can
be shown that%
\begin{equation}
\left\vert \left( \sum_{i=2}^{m}\frac{y_{i-1}^{2}}{1+y_{i-1}^{2}}\right)
^{-1}-\frac{1}{m}\right\vert =O_{P}\left( m^{-3/2}\right) ,  \label{sh1}
\end{equation}%
\begin{equation}
\left\vert \left( \sum_{i=2}^{m}\frac{\epsilon _{i,1}y_{i-1}^{2}}{%
1+y_{i-1}^{2}}+\sum_{i=2}^{m}\frac{\epsilon _{i,2}y_{i-1}}{1+y_{i-1}^{2}}%
\right) -\sum_{i=2}^{m}\epsilon _{i,1}\right\vert =O_{P}\left( 1\right) ,
\label{sh2}
\end{equation}%
\begin{equation}
\max_{1\leq k<\infty }\left\vert \sum_{i=m+1}^{m+k}\frac{y_{i-1}^{2}}{%
1+y_{i-1}^{2}}-k\right\vert =O_{P}\left( c^{m}\right) ,  \label{sh3}
\end{equation}%
and%
\begin{equation}
\max_{1\leq k<\infty }\left\vert \left( \sum_{i=m+1}^{m+k}\frac{\epsilon
_{i,1}y_{i-1}^{2}}{1+y_{i-1}^{2}}+\sum_{i=m+1}^{m+k}\frac{\epsilon
_{i,2}y_{i-1}}{1+y_{i-1}^{2}}\right) -\sum_{i=m+1}^{m+k}\epsilon
_{i,1}\right\vert =O_{P}\left( c^{m}\right) ,  \label{sh4}
\end{equation}
\ \ with some $0<c<1$. Thus we get%
\begin{align*}
&\left( \sum_{i=2}^{m}\frac{\overline{y}_{i-1}^{2}}{1+\overline{y}_{i-1}^{2}}%
\right) ^{-1}\left\vert \sum_{i=2}^{m}\frac{\epsilon _{i,1}\overline{y}%
_{i-1}^{2}}{1+\overline{y}_{i-1}^{2}}+\sum_{i=2}^{m}\frac{\epsilon _{i,2}%
\overline{y}_{i-1}}{1+\overline{y}_{i-1}^{2}}\right\vert \left( m^{\ast
}\right) ^{-1/2+\psi }\max_{1\leq k\leq m^{\ast }}\frac{1}{k^{\psi }}%
\sum_{i=m+1}^{m+k}\frac{\overline{y}_{i-1}^{2}}{1+\overline{y}_{i-1}^{2}} \\
=&O_{P}\left( m^{-1/2}\right) \left( m^{\ast }\right) ^{-1/2+\psi
}\max_{1\leq k\leq m^{\ast }}\frac{1}{k^{\psi }}\left( k+c^{m}\right) \\
=&O_{P}\left( \left( \frac{m^{\ast }}{m}\right) ^{1/2}\right) =o_{P}\left(
1\right) .
\end{align*}%
Our assumptions entail that%
\begin{equation*}
\left( m^{\ast }\right) ^{-1/2+\psi }\max_{1\leq k\leq m^{\ast }}\frac{1}{%
k^{\psi }}\left\vert \sum_{i=m+1}^{m+k}\epsilon _{i,1}\right\vert \overset{%
\mathcal{D}}{\rightarrow }\sigma _{1}\sup_{0\leq t\leq 1}\frac{\left\vert
W\left( t\right) \right\vert }{t^{\psi }},
\end{equation*}%
where $\left\{ W\left( t\right) ,t\geq 0\right\} $ is a Wiener process; this
completes the proof of (\ref{weighted-bar}). Considering the case $\psi =1/2$%
, we can follows the same arguments as above, using (\ref{sh1})-(\ref{sh4}).
We obtain%
\begin{equation*}
\left\vert \max_{1\leq k<m^{\ast }}\frac{\left\vert Z_{m}^{\ast }\left(
k\right) \right\vert }{k^{1/2}}-\max_{\log m^{\ast }\leq k<m^{\ast }/\log
m^{\ast }}\frac{1}{k^{1/2}}\left\vert \sum_{i=m+1}^{m+k}\epsilon
_{i,1}\right\vert \right\vert =o_{P}\left( \left( \log \log m^{\ast }\right)
^{-1/2}\right) .
\end{equation*}%
Hence we need to prove only that%
\begin{equation*}
\lim_{m\rightarrow \infty }P\left\{ \gamma \left( \log m^{\ast }\right) 
\frac{1}{\sigma _{1}}\max_{\log m^{\ast }\leq k<m^{\ast }/\log m^{\ast }}%
\frac{1}{k^{1/2}}\left\vert \sum_{i=m+1}^{m+k}\epsilon _{i,1}\right\vert
\leq x+\delta \left( \log m^{\ast }\right) \right\} =\exp \left( -\exp
\left( -x\right) \right) ,
\end{equation*}%
which is shown in \citet{csorgo1997}.
\end{proof}

\begin{proof}[Proof of Theorem \protect\ref{thgombay}]
Since the distributions of $W_{m,1}\left( \cdot \right) $\ and $%
W_{m,2}\left( \cdot \right) $\ do not depend on $m$, Corollary \ref{gombay2}
entails that, for all $-\infty <x<\infty $%
\begin{equation*}
P\left\{ \max_{h_{m^{\ast }}\leq k\leq m^{\ast }}\frac{Z_{m}\left( k\right) 
}{g_{m,0.5}\left( k\right) }\geq x\right\} =P\left\{ \max_{h_{m^{\ast }}\leq
k\leq m^{\ast }}\frac{\left\vert W_{1}\left( k\right) -\displaystyle\frac{k}{%
m}W_{2}\left( m\right) \right\vert }{m^{1/2}\left( 1+\displaystyle\frac{k}{m}%
\right) \left( \displaystyle\frac{k}{m+k}\right) ^{1/2}}\geq x\right\}
+o\left( 1\right) ,
\end{equation*}%
as $m^{\ast }\rightarrow \infty $. In turn, this entails that we can
construct critical values $\widehat{c}_{\alpha ,0.5}$ based on%
\begin{equation*}
P\left\{ \max_{h_{m^{\ast }}\leq k\leq m^{\ast }}\frac{\left\vert
W_{1}\left( k\right) -\displaystyle\frac{k}{m}W_{2}\left( m\right)
\right\vert }{m^{1/2}\left( 1+\displaystyle\frac{k}{m}\right) \left( %
\displaystyle\frac{k}{m+k}\right) ^{1/2}}\geq \widehat{c}_{\alpha
,0.5}\right\} =\alpha .
\end{equation*}%
Using the scale transformation for Wiener processes%
\begin{align*}
& \max_{h_{m^{\ast }}\leq k\leq m^{\ast }}\frac{\left\vert W_{1}\left(
k\right) -\displaystyle\frac{k}{m}W_{2}\left( m\right) \right\vert }{%
m^{1/2}\left( 1+\displaystyle\frac{k}{m}\right) \left( \displaystyle\frac{k}{%
m+k}\right) ^{1/2}} \\
\overset{\mathcal{D}}{=}& \max_{h_{m^{\ast }}/m^{\ast }\leq k/m^{\ast }\leq
1}\frac{\left( 1+\displaystyle\frac{k}{m}\right) \left\vert W\left( %
\displaystyle\frac{k/m}{1+k/m}\right) \right\vert }{\left( 1+\displaystyle%
\frac{k}{m}\right) \left( \displaystyle\frac{k/m}{1+k/m}\right) ^{1/2}} \\
\overset{\mathcal{D}}{=}& \max_{h_{m^{\ast }}/m^{\ast }\leq \tau \leq 1}%
\frac{\left\vert W\left( \displaystyle\frac{\tau }{1+\tau }\right)
\right\vert }{\left( \displaystyle\frac{\tau }{1+\tau }\right) ^{1/2}} \\
\overset{\mathcal{D}}{=}& \max_{h_{m^{\ast }}/\left( m^{\ast }+h_{m^{\ast
}}\right) \leq u\leq 1/2}\frac{\left\vert W\left( u\right) \right\vert }{%
u^{1/2}}\overset{\mathcal{D}}{=}\max_{1\leq v\leq \phi _{m}}\frac{\left\vert
W\left( v\right) \right\vert }{v^{1/2}} \\
\overset{\mathcal{D}}{=}& \max_{1\leq v\leq \exp \left( \log \phi
_{m}\right) }\frac{\left\vert W\left( v\right) \right\vert }{v^{1/2}},
\end{align*}%
where $W\left( \cdot \right) $\ is a standard Wiener process and $\phi
_{m}=\left( m^{\ast }+h_{m^{\ast }}\right) /\left( 2h_{m^{\ast }}\right) $.
Using equation (18) in \citet{vostrikova}, we have 
\begin{equation}
P\left\{ \max_{1\leq v\leq \exp \left( \log \phi _{m}\right) }\frac{%
\left\vert W\left( v\right) \right\vert }{v^{1/2}}\geq \widehat{c}_{\alpha
,0.5}\right\} \simeq \frac{\widehat{c}_{\alpha ,0.5}\exp \left( -\frac{1}{2}%
\widehat{c}_{\alpha ,0.5}^{2}\right) }{\left( 2\pi \right) ^{1/2}}\left(
\log \phi _{m}+\frac{4-\log \phi _{m}}{\widehat{c}_{\alpha ,0.5}^{2}}%
+O\left( \widehat{c}_{\alpha ,0.5}^{-4}\right) \right) ,  \notag
\end{equation}%
as $\widehat{c}_{\alpha ,0.5}\rightarrow \infty $. This proves the claim.
\end{proof}

\begin{proof}[Proof of Theorem \protect\ref{th-kirch}]
The proof follows, with minor modifications, the proofs of the results
above, so we only outline its main passages. We begin with part \textit{(i)}%
, and consider the case $E\log \left\vert \beta _{0}+\epsilon
_{0,1}\right\vert <0$. Lemma \ref{lemma1} entails that we need to prove only
that%
\begin{align}
& \max_{1\leq k<\infty }\frac{1}{g_{m,\psi }\left( k\right) }\max_{1\leq
\ell \leq k}\left\vert \sum_{i=m+\ell }^{m+k}\left( \frac{\epsilon _{i,1}%
\overline{y}_{i-1}^{2}}{1+\overline{y}_{i-1}^{2}}+\frac{\epsilon _{i,2}%
\overline{y}_{i-1}}{1+\overline{y}_{i-1}^{2}}\right) \right.  \label{eq12} \\
& \left. -\frac{k-\ell }{m}\sum_{i=2}^{m}\left( \frac{\epsilon _{i,1}%
\overline{y}_{i-1}^{2}}{1+\overline{y}_{i-1}^{2}}+\frac{\epsilon _{i,2}%
\overline{y}_{i-1}}{1+\overline{y}_{i-1}^{2}}\right) \right\vert  \notag \\
& \overset{\mathcal{D}}{\rightarrow }\sup_{0<x<\infty }\frac{\sup_{0\leq
t\leq x}\left\vert \left( W_{2}\left( x\right) -W_{2}\left( t\right) \right)
-\left( x-t\right) W_{1}\left( 1\right) \right\vert }{\left( 1+x\right)
\left( x/\left( 1+x\right) \right) ^{\psi }},  \notag
\end{align}%
where recall that $\overline{y}_{i}$ is the stationary solution of (\ref{rca}%
). The approximations in Lemma \ref{lemma3} imply%
\begin{align*}
& \max_{1\leq k<\infty }\frac{1}{g_{m,\psi }\left( k\right) }\max_{1\leq
\ell \leq k}\left\vert \sum_{i=m+\ell }^{m+k}\left( \frac{\epsilon _{i,1}%
\overline{y}_{i-1}^{2}}{1+\overline{y}_{i-1}^{2}}+\frac{\epsilon _{i,2}%
\overline{y}_{i-1}}{1+\overline{y}_{i-1}^{2}}\right) -\frac{k-\ell }{m}%
\sum_{i=2}^{m}\left( \frac{\epsilon _{i,1}\overline{y}_{i-1}^{2}}{1+%
\overline{y}_{i-1}^{2}}+\frac{\epsilon _{i,2}\overline{y}_{i-1}}{1+\overline{%
y}_{i-1}^{2}}\right) \right. \\
& \left. -\mathscr{s}\left( W_{m,2}\left( k\right) -W_{m,2}\left( \ell
\right) -\frac{k-\ell }{m}W_{m,1}\left( m\right) \right) \right\vert \\
=& o_{P}\left( 1\right) .
\end{align*}%
By the scale transformation and the continuity of the Wiener process we have%
\begin{align*}
& \max_{1\leq k<\infty }\left( \frac{k+m}{k}\right) ^{\psi }\frac{m}{m+k}%
\max_{1\leq \ell \leq k}\left\vert \left( W_{m,2}\left( k\right)
-W_{m,2}\left( \ell \right) -\frac{k-\ell }{m}W_{m,1}\left( m\right) \right)
\right\vert \\
& \overset{\mathcal{D}}{\rightarrow }\sup_{0<x<\infty }\left( \frac{1+x}{x}%
\right) ^{\psi }\frac{1}{1+x}\sup_{0\leq t\leq x}\left\vert \left(
W_{2}\left( x\right) -W_{2}\left( t\right) \right) -\left( x-t\right)
W_{1}\left( 1\right) \right\vert ,
\end{align*}%
whence the desired result.

Next we consider the nonstationary case. Applying Lemma \ref{lemma2}\textit{%
(ii)}, we can replace (\ref{eq12}) with the proof of%
\begin{align*}
&\max_{1\leq k<\infty }\frac{1}{g_{m,\psi }\left( k\right) }\max_{1\leq \ell
\leq k}\left\vert \sum_{i=m+\ell }^{m+k}\frac{\epsilon _{i,1}y_{i-1}^{2}}{%
1+y_{i-1}^{2}}-\frac{k-\ell }{m}\sum_{i=2}^{m}\frac{\epsilon
_{i,1}y_{i-1}^{2}}{1+y_{i-1}^{2}}\right\vert \\
&\overset{\mathcal{D}}{\rightarrow }\sup_{0<x<\infty }\left( \frac{1+x}{x}%
\right) ^{\psi }\frac{1}{1+x}\sup_{0\leq t\leq x}\left\vert \left(
W_{2}\left( x\right) -W_{2}\left( t\right) \right) -\left( x-t\right)
W_{1}\left( 1\right) \right\vert ,
\end{align*}%
which follows from the approximations in Lemma \ref{lemma3}\textit{(ii)}.

We conclude our proof by showing that 
\begin{equation}
P\left\{ \sup_{0<x<\infty }\left( \frac{1+x}{x}\right) ^{\psi }\frac{1}{1+x}%
\sup_{0\leq t\leq x}\left\vert \left( W_{2}\left( x\right) -W_{2}\left(
t\right) \right) -\left( x-t\right) W_{1}\left( 1\right) \right\vert <\infty
\right\} =1.  \label{finite}
\end{equation}%
When $0<x\leq 1$, using Theorem 2.1 in \citet{garsia},\footnote{%
See also Lemma 4.1 in \citealp{csorgo1993}.} it follows that there exists a
random variable $\zeta $ such that $E\left\vert \zeta \right\vert
^{p}<\infty $ for all $p\geq 1$ and%
\begin{align*}
&\left( \frac{1+x}{x}\right) ^{\psi }\frac{1}{1+x}\sup_{0\leq t\leq
x}\left\vert \left( W_{2}\left( x\right) -W_{2}\left( t\right) \right)
\right\vert \\
\leq &\left\vert \zeta \right\vert \left( \frac{1+x}{x}\right) ^{\psi }\frac{%
1}{1+x}\sup_{0\leq t\leq x}\left\vert \left( \left( x-t\right) \log \left( 
\frac{1}{x-t}\right) \right) ^{1/2}\right\vert
\end{align*}%
a.s., whence%
\begin{align}
&\left( \frac{1+x}{x}\right) ^{\psi }\frac{1}{1+x}\sup_{0\leq t\leq
x}\left\vert \left( W_{2}\left( x\right) -W_{2}\left( t\right) \right)
-\left( x-t\right) W_{1}\left( 1\right) \right\vert  \label{garsia} \\
\leq &\left\vert \zeta \right\vert \left( \frac{1+x}{x}\right) ^{\psi }\frac{%
1}{1+x}\sup_{0\leq t\leq x}\left\vert \left( \left( x-t\right) \log \left( 
\frac{1}{x-t}\right) \right) ^{1/2}\right\vert +W_{1}\left( 1\right) \left( 
\frac{1+x}{x}\right) ^{\psi }\frac{1}{1+x}\sup_{0\leq t\leq x}\left\vert
x-t\right\vert .  \notag
\end{align}%
It is now easy to see that%
\begin{equation*}
\left\vert \zeta \right\vert \left( \frac{1+x}{x}\right) ^{\psi }\frac{1}{1+x%
}\sup_{0\leq t\leq x}\left\vert \left( \left( x-t\right) \log \left( \frac{1%
}{x-t}\right) \right) ^{1/2}\right\vert =O_{P}\left( 1\right) .
\end{equation*}%
As far as the second term in (\ref{garsia}) is concerned, it immediately
follows that%
\begin{equation*}
\left\vert W_{1}\left( 1\right) \right\vert \left( \frac{1+x}{x}\right)
^{\psi }\frac{1}{1+x}\sup_{0\leq t\leq x}\left\vert x-t\right\vert
=O_{P}\left( 1\right) \left( \frac{1+x}{x}\right) ^{\psi -1}=O_{P}\left(
1\right) .
\end{equation*}%
Also, by the Law of the Iterated Logarithm, it follows that%
\begin{align*}
&\sup_{1\leq x<\infty }\left( \frac{1+x}{x}\right) ^{\psi }\frac{1}{1+x}%
\sup_{0\leq t\leq x}\left\vert \left( W_{2}\left( x\right) -W_{2}\left(
t\right) \right) -\left( x-t\right) W_{1}\left( 1\right) \right\vert \\
\leq &\sup_{1\leq x<\infty }\frac{1}{x}2\sup_{0\leq t\leq x}\left\vert
W_{2}\left( x\right) \right\vert +\left\vert W_{1}\left( 1\right)
\right\vert <\infty ,
\end{align*}%
a.s., whence (\ref{finite}) follows.

Finally, as far as parts \textit{(ii)} and \textit{(iii)} of the theorem are
concerned, the proofs are based on the observation that%
\begin{equation*}
\max_{1\leq k\leq m^{\ast }}\frac{1}{g_{m,\psi }^{\ast }\left( k\right) }%
\max_{1\leq \ell \leq k}\frac{k-\ell }{m}\left\vert \sum_{i=2}^{m}\frac{%
\epsilon _{i,1}y_{i-1}^{2}}{1+y_{i-1}^{2}}+\frac{\epsilon _{i,2}y_{i-1}}{%
1+y_{i-1}^{2}}\right\vert =o_{P}\left( 1\right) ,
\end{equation*}%
and the same using $\overline{g}_{m,\psi }\left( k\right) $; hence, the
arguments above can be repeated with minor modifications.
\end{proof}

\begin{proof}[Proof of Corollary \protect\ref{sig-est}]
The result is shown e.g. in the proof of Theorem 3.4 in \citet{HT2016}, or
Corollary 3.1 in \citet{horvath2022changepoint}.
\end{proof}

\medskip

In order to prove the next set of results, we will make use of the following
decomposition, valid under (\ref{alternative})%
\begin{align}
Z_{m}\left( k\right) =& \left\vert \sum_{i=m+1}^{m+k^{\ast }}\frac{\epsilon
_{i,1}y_{i-1}^{2}+\epsilon _{i,2}y_{i-1}}{1+y_{i-1}^{2}}-\left( \widehat{%
\beta }_{m}-\beta _{0}\right) \sum_{i=m+1}^{m+k^{\ast }}\frac{y_{i-1}^{2}}{%
1+y_{i-1}^{2}}\right.  \label{z-delay} \\
& +\left( \beta _{A}-\beta _{0}\right) \sum_{i=m+k^{\ast }+1}^{m+k}\frac{%
y_{i-1}^{2}}{1+y_{i-1}^{2}}+\sum_{i=m+k^{\ast }+1}^{m+k}\frac{\epsilon
_{i,1}y_{i-1}^{2}+\epsilon _{i,2}y_{i-1}}{1+y_{i-1}^{2}}  \notag \\
& -\left. \left( \widehat{\beta }_{m}-\beta _{0}\right) \sum_{i=m+k^{\ast
}+1}^{m+k}\frac{y_{i-1}^{2}}{1+y_{i-1}^{2}}\right\vert ,  \notag
\end{align}%
with the convention that $\sum_{a}^{b}=0$ whenever $b<a$.

\begin{proof}[Proof of Theorem \protect\ref{power}]
Based on the results above, the theorem follows from standard arguments -
see e.g. Theorem 4.1 in \citet{horvath2022changepoint}.
\end{proof}

\begin{proof}[Proof of Theorem \protect\ref{th-x-1}]
We begin by decomposing the detector $Z_{m}^{X}\left( k\right) $ as%
\begin{equation}
Z_{m}^{X}\left( k\right) =\left\vert \left( \mathbf{b}_{0}-\widehat{\mathbf{b%
}}_{m}\right) ^{\intercal }\sum_{i=m+1}^{m+k}\left( y_{i-1},\mathbf{x}%
_{i}^{\intercal }\right) \frac{y_{i-1}}{1+y_{i-1}^{2}}+\sum_{i=m+1}^{m+k}%
\frac{\left( \epsilon _{i,1}y_{i-1}+\epsilon _{i,2}\right) y_{i-1}}{%
1+y_{i-1}^{2}}\right\vert .  \label{dec-p}
\end{equation}%
Henceforth, the proof is very similar to the proof of Theorem \ref{th-1},
and we only report the main passages where the two proofs differ.

Consider first the case $E\log \left\vert \beta _{0}+\epsilon
_{0,1}\right\vert <0$; using Lemmas \ref{beta-x}-\ref{beta-x-3} instead of
Lemmas \ref{lemma1}-\ref{lemma3}, it can be shown along the same lines as in
the proof of Theorem \ref{th-1} that%
\begin{equation*}
\max_{1\leq k<\infty }\frac{Z_{m}^{X}\left( k\right) }{m^{1/2}\left( 1+%
\displaystyle\frac{k}{\mathscr{s}_{x}^{2}m}\right) \left( \displaystyle\frac{%
k}{k+\mathscr{s}_{x}^{2}m}\right) ^{\psi }}\overset{\mathcal{D}}{\rightarrow 
}\sup_{0<t<\infty }\frac{\left\vert -\mathscr{s}_{x,1}tW_{1}\left( 1\right) +%
\mathscr{s}_{x,2}W_{2}\left( t\right) \right\vert }{\left( 1+\displaystyle%
\frac{t}{\mathscr{s}_{x}^{2}}\right) \left( \displaystyle\frac{t}{t+%
\mathscr{s}_{x}^{2}}\right) ^{\psi }},
\end{equation*}%
where $\left\{ W_{1}\left( t\right) ,t\geq 0\right\} $ and $\left\{
W_{2}\left( t\right) ,t\geq 0\right\} $\ are two independent standard Wiener
processes. It is not hard to see that%
\begin{equation*}
\sup_{0<t<\infty }\frac{\left\vert -\mathscr{s}_{x,1}tW_{1}\left( 1\right) +%
\mathscr{s}_{x,2}W_{2}\left( t\right) \right\vert }{\left( 1+\displaystyle%
\frac{t}{\mathscr{s}_{x}^{2}}\right) \left( \displaystyle\frac{t}{t+%
\mathscr{s}_{x}^{2}}\right) ^{\psi }}\overset{\mathcal{D}}{=}%
\sup_{0<u<\infty }\frac{\left\vert -\mathscr{s}_{x,1}\mathscr{s}%
_{x}^{2}uW_{1}\left( 1\right) +\mathscr{s}_{x,2}W_{2}\left( \mathscr{s}%
_{x}^{2}u\right) \right\vert }{\left( 1+u\right) \left( \displaystyle\frac{u%
}{1+u}\right) ^{\psi }},
\end{equation*}%
and that, for $u\leq v$%
\begin{align*}
& E\left( \left( -\mathscr{s}_{x,1}\mathscr{s}_{x}^{2}uW_{1}\left( 1\right) +%
\mathscr{s}_{x,2}W_{2}\left( \mathscr{s}_{x}^{2}u\right) \right) \left( -%
\mathscr{s}_{x,1}\mathscr{s}_{x}^{2}vW_{1}\left( 1\right) +\mathscr{s}%
_{x,2}W_{2}\left( \mathscr{s}_{x}^{2}v\right) \right) \right) \\
=& \mathscr{s}_{x,1}^{2}\left( \mathscr{s}_{x}^{2}\right) ^{2}uv+\mathscr{s}%
_{x,2}^{2}\mathscr{s}_{x}^{2}u=\frac{\mathscr{s}_{x,2}^{4}}{\mathscr{s}%
_{x,1}^{2}}\left( uv+u\right) .
\end{align*}%
Hence it follows that%
\begin{equation*}
\left\{ \frac{-\mathscr{s}_{x,1}\mathscr{s}_{x}^{2}uW_{1}\left( 1\right) +%
\mathscr{s}_{x,2}W_{2}\left( \mathscr{s}_{x}^{2}u\right) }{1+u},u\geq
0\right\} \overset{\mathcal{D}}{=}\left\{ \frac{\mathscr{s}_{x,2}^{2}}{%
\mathscr{s}_{x,1}}W\left( u\right) ,u\geq 0\right\} ,
\end{equation*}%
where $\left\{ W\left( u\right) ,u\geq 0\right\} $\ is an independent
standard Wiener process. This completes the proof when $E\log \left\vert
\beta _{0}+\epsilon _{0,1}\right\vert <0$.

When $E\log \left\vert \beta _{0}+\epsilon _{0,1}\right\vert >0$, consider (%
\ref{dec-p}) again. Seeing as%
\begin{equation}
\left\Vert \sum_{i=m+1}^{\infty }\frac{\mathbf{x}_{i}y_{i-1}}{1+y_{i-1}^{2}}%
\right\Vert =O_{P}\left( 1\right) ,  \label{decline}
\end{equation}%
it is easy to see that%
\begin{align*}
& \left\vert \left( \mathbf{b}_{0}-\widehat{\mathbf{b}}_{m}\right)
^{\intercal }\left( \sum_{i=m+1}^{m+k}\left( y_{i-1},\mathbf{x}%
_{i}^{\intercal }\right) \frac{y_{i-1}}{1+y_{i-1}^{2}}-\sum_{i=m+1}^{m+k}%
\left( y_{i-1},\mathbf{0}_{p}^{\intercal }\right) \frac{y_{i-1}}{%
1+y_{i-1}^{2}}\right) \right\vert \\
\leq & \left\Vert \mathbf{b}_{0}-\widehat{\mathbf{b}}_{m}\right\Vert
\left\Vert \sum_{i=m+1}^{m+k}\left( y_{i-1},\mathbf{x}_{i}^{\intercal
}\right) \frac{y_{i-1}}{1+y_{i-1}^{2}}-\sum_{i=m+1}^{m+k}\left( y_{i-1},%
\mathbf{0}_{p}^{\intercal }\right) \frac{y_{i-1}}{1+y_{i-1}^{2}}\right\Vert
\\
=& O_{P}\left( m^{-1/2}\right) ,
\end{align*}%
where $\mathbf{0}_{p}$ is a $p$-dimensional vector of zeros. Similarly,
using (\ref{decline}) and standard algebra, it follows that 
\begin{equation*}
\left\Vert \left( \mathbf{b}_{0}-\widehat{\mathbf{b}}_{m}\right) ^{\intercal
}-\left( \left( \sum_{i=2}^{m}\frac{y_{i-1}^{2}}{1+y_{i-1}^{2}}\right)
^{-1}\left( \sum_{i=2}^{m}\frac{\epsilon _{i,1}y_{i-1}^{2}}{1+y_{i-1}^{2}}%
\right) ,\mathbf{0}_{p}^{\intercal }\right) \right\Vert =O_{P}\left(
m^{-1/2}\right) ,
\end{equation*}%
and by (\ref{lemma2-4})%
\begin{equation*}
\max_{1\leq k<\infty }\frac{1}{k}\left\vert \left( \sum_{i=2}^{m}\frac{%
y_{i-1}^{2}}{1+y_{i-1}^{2}}\right) ^{-1}\left( \sum_{i=2}^{m}\frac{\epsilon
_{i,1}y_{i-1}^{2}}{1+y_{i-1}^{2}}\right) \left( \sum_{i=m+1}^{m+k}\frac{%
y_{i-1}^{2}}{1+y_{i-1}^{2}}-k\right) \right\vert =O_{P}\left( m^{-1/2-\zeta
}\right) ,
\end{equation*}%
for some $\zeta >0$, and similarly to (\ref{lemma1-3})-(\ref{lemma1-4})%
\begin{equation*}
\left\vert \left( \sum_{i=2}^{m}\frac{y_{i-1}^{2}}{1+y_{i-1}^{2}}\right)
^{-1}\left( \sum_{i=2}^{m}\frac{\epsilon _{i,1}y_{i-1}^{2}}{1+y_{i-1}^{2}}%
\right) -\frac{1}{m}\sum_{i=2}^{m}\epsilon _{i,1}\right\vert =O_{P}\left(
m^{1/2-\zeta }\right) .
\end{equation*}%
Thus, putting all together%
\begin{equation*}
\max_{1\leq k<\infty }\frac{1}{k}\left\vert \left( \mathbf{b}_{0}-\widehat{%
\mathbf{b}}_{m}\right) ^{\intercal }\sum_{i=m+1}^{m+k}\left( y_{i-1},\mathbf{%
x}_{i}^{\intercal }\right) \frac{y_{i-1}}{1+y_{i-1}^{2}}-\frac{k}{m}%
\sum_{i=2}^{m}\epsilon _{i,1}\right\vert =O_{P}\left( m^{-1/2-\zeta }\right)
.
\end{equation*}%
Finally, similarly to (\ref{lemma2-5}) it can be shown 
\begin{equation*}
\max_{1\leq k<\infty }k^{-1/2-\eta }\left\vert \sum_{i=m+1}^{m+k}\frac{%
\epsilon _{i,1}y_{i-1}^{2}}{1+y_{i-1}^{2}}-\sum_{i=m+1}^{m+k}\epsilon
_{i,1}\right\vert =O_{P}\left( m^{-\zeta }\right) ,
\end{equation*}%
for some $\zeta >0$ and for all $\eta >0$. Putting all together and
recalling that, when $E\log \left\vert \beta _{0}+\epsilon _{0,1}\right\vert
>0$, $\mathscr{s}_{x}^{2}=1$, it follows that%
\begin{equation*}
\max_{1\leq k<\infty }\frac{\left\vert Z_{m}^{X}\left( k\right) -\left\vert %
\displaystyle\sum_{i=m+1}^{m+k}\epsilon _{i,1}-\displaystyle\frac{k}{m}%
\displaystyle\sum_{i=2}^{m}\epsilon _{i,1}\right\vert \right\vert }{%
m^{1/2}\left( 1+\displaystyle\frac{k}{m}\right) \left( \displaystyle\frac{k}{%
k+m}\right) ^{\psi }}=o_{P}\left( 1\right) .
\end{equation*}%
Hereafter, the final result follows by making appeal to the proof of Theorem %
\ref{th-1}.
\end{proof}

\begin{proof}[Proof of Theorem \protect\ref{th-x-2}]
The proof follows immediately from the approximations derived above, and
from the proof of Theorem \ref{de}.
\end{proof}

\begin{proof}[Proof of Theorem \protect\ref{kirch-x}]
The proof follows by the same arguments as the proof of Theorem \ref%
{th-kirch}, and we therefore omit it to save space.
\end{proof}

\begin{proof}[Proof of Theorem \protect\ref{alt-covariates}]
The proof follows from the same arguments as in the case with no covariates.
\end{proof}

\end{document}